\documentclass[a4paper,UKenglish,cleveref, 
numberwithinsect, pdfa]{lipics-v2021}
\pdfoutput=1
\hideLIPIcs%
\title{Towards Finite Satisfiability of PDL with Loops}
\titlerunning{Towards Finite Satisfiability of PDL with Loops}

\author{Bartosz Jan Bednarczyk}
{Institute of Logic and Computation, TU Wien, Austria 
\and Computer Science Department, University of Wrocław, Poland 
\and \url{https://bartoszjanbednarczyk.github.io}}{bartek@cs.uni.wroc.pl}
{https://orcid.org/0000-0002-8267-7554}
{}

\author{Mikołaj Swoboda}
{Institute of Logic and Computation, TU Wien, Austria 
\and Computer Science Department, University of Wrocław, Poland}{mikolaj.swoboda@cs.uni.wroc.pl}
{https://orcid.org/0009-0007-8955-3179}
{}

\authorrunning{B. J. Bednarczyk and M. Swoboda} 
\Copyright{Bartosz Jan Bednarczyk and Mikołaj Swoboda} 

\ccsdesc[500]{Theory of computation~Logic and verification}

\keywords{
    Propositional Dynamic Logic, 
    LoopPDL, 
    Complexity, 
    Finite Satisfiability Problem,
    Decidability,
    Regular Trees and Cacti
}

\funding{This work was supported by Polish NCN grants 2024/53/N/ST6/01390 (B.~Bednarczyk)
  and 2021/41/B/ST6/00996 (M.~Swoboda).}

\supplement{The authors developed almost all of the mathematical content of this paper the old-school way, with whiteboards and markers.
The only exception is the formal definitions of the tree automata in Appendices~\ref{app:toolbox}, \ref{app:encready-automaton}, and~\ref{app:foldable-cactus-automaton}, which were generated, and then simplified, with the assistance of Claude (Opus 5), starting from the authors' low-level descriptions of their states and transition functions.
Such definitions are usually omitted in the literature, but we chose to be pedantic and spell them out, driven by the ambition to make the paper as self-contained as possible.
Claude also assisted extensively with text editing and reviewing, as well as with vibe-coding our figures.
The authors carefully reviewed all of its output and take full responsibility for~the~content.}

\usepackage{lineno} 

\usepackage{ellipsis} 
\usepackage[abbreviations,british]{foreign} 
\usepackage{hyperref} 
\usepackage[usenames,dvipsnames]{xcolor}
\usepackage{float}

\definecolor{darkmidnightblue}{rgb}{0.0, 0.2, 0.4}
\definecolor{persianplum}{rgb}{0.44, 0.11, 0.11}

\hypersetup{
  colorlinks  = true, 
  citecolor   = persianplum, 
  urlcolor    = persianplum, 
  linkcolor   = persianplum
}

\usepackage{tikz}
\usepackage{todonotes}
\usepackage{xspace}

\usepackage{mathtools}
\usepackage{amsmath}
\usepackage{amsthm}
\usepackage{stmaryrd} 
\theoremstyle{plain}
\newtheorem*{theorem*}{Theorem} 
\newtheorem*{lemma*}{Lemma} 
\newtheorem*{observation*}{Observation} 
\newtheorem*{proposition*}{Proposition} 

\makeatletter
\def\desclabel#1#2{\begingroup
\def\@currentlabel{#1}%
\protected@edef\cref@currentlabel{[desclabel][][]#1}
#1\label{#2}\endgroup
}
\makeatother
\crefname{desclabel}{condition}{conditions}
\Crefname{desclabel}{Condition}{Conditions}

\newenvironment{namedlist}[1][(Cycle)]
  {\begin{list}{}{%
     \setlength{\labelsep}{0.5em}%
     \settowidth{\labelwidth}{#1}%
     \setlength{\leftmargin}{\labelwidth}%
     \addtolength{\leftmargin}{\labelsep}%
     \setlength{\itemsep}{0pt}%
     \setlength{\parsep}{0pt}%
     \setlength{\topsep}{4pt plus 1pt minus 1pt}%
     }}
  {\end{list}}

\DeclareMathAlphabet{\mathdutchcal}{U}{dutchcal}{m}{n}
\SetMathAlphabet{\mathdutchcal}{bold}{U}{dutchcal}{b}{n}
\DeclareMathAlphabet{\mathdutchbcal}{U}{dutchcal}{b}{n}

\definecolor{ao(english)}{rgb}{0.0, 0.5, 0.0}

\usepackage{fontawesome5} 
\usepackage{xurl} 

\usetikzlibrary{arrows, arrows.meta, decorations.markings, shapes, calc}
\colorlet{jaune}{yellow!80!green}

\colorlet{vert}{green!45!black}

\colorlet{bleu}{blue!70!black}

\colorlet{rouge}{red!80!black}

\tikzstyle{minirond}=[draw, circle, minimum height=2mm, inner sep=0pt]
\tikzstyle{ptrond}=[draw, circle, minimum height=2.5mm]
\tikzstyle{medrond}=[draw, circle, minimum height=5mm]
\tikzstyle{rond}=[draw, circle, minimum height=7mm]

\tikzstyle{carre}=[draw,minimum width=6mm,minimum height=6mm]
\tikzstyle{medcarre}=[draw,minimum width=4mm,minimum height=4mm]
\tikzstyle{ptcarre}=[draw,minimum width=2.8mm,minimum height=2.8mm]
\tikzstyle{minicarre}=[draw,minimum width=1.5mm,minimum height=1.5mm,inner sep=0pt]
\tikzstyle{rouge}=[draw=red,fill=red!20!white]
\tikzstyle{vert}=[draw=green!80!black,fill=green!80!black!20!white]
\tikzstyle{jaune}=[draw=yellow!60!red,fill=yellow!60!red!30!white]
\tikzstyle{bleu}=[draw=blue,fill=blue!40!white]
\tikzstyle{gris}=[draw=black!80!white,fill=black!40!white]
\tikzstyle{rougef}=[draw=red,fill=red!60!white]
\tikzstyle{vertf}=[draw=green!80!black,fill=green!80!black!60!white]
\tikzstyle{jaunef}=[draw=yellow!80!black,fill=yellow!80!black!60!white]
\tikzstyle{bleuf}=[draw=blue,fill=blue!70!white]
\tikzstyle{rjaune}=[style=rond,style=jaune]
\tikzstyle{rbleu}=[style=rond,style=bleu]
\tikzstyle{rvert}=[style=rond,style=vert]
\tikzstyle{rrouge}=[style=rond,style=rouge]
\tikzstyle{rgris}=[style=rond,style=gris]
\tikzstyle{cjaune}=[style=carre,style=jaune]
\tikzstyle{cbleu}=[style=carre,style=bleu]
\tikzstyle{cvert}=[style=carre,style=vert]
\tikzstyle{crouge}=[style=carre,style=rouge]
\tikzstyle{cgris}=[style=carre,style=gris]
\tikzstyle{rjaunef}=[style=rond,style=jaunef]
\tikzstyle{rbleuf}=[style=rond,style=bleuf]
\tikzstyle{rvertf}=[style=rond,style=vertf]
\tikzstyle{rrougef}=[style=rond,style=rougef]%

\usetikzlibrary{shadows}
\tikzset{
diagonal fill/.style 2 args={fill=#2, path picture={
\fill[#1, sharp corners] (path picture bounding box.south west) -|
                         (path picture bounding box.north east) -- cycle;}},
reversed diagonal fill/.style 2 args={fill=#2, path picture={
\fill[#1, sharp corners] (path picture bounding box.north west) |- 
                         (path picture bounding box.south east) -- cycle;}}
}

\usetikzlibrary{hobby,backgrounds,calc,trees}

\usepackage{csquotes} 
\usepackage{cite} 
\usepackage[most]{tcolorbox}
\newcommand{\complexityclass}[1]{\textnormal{\textsc{#1}}} 
\newcommand{\ExpTime}{\complexityclass{ExpTime}} 
\newcommand{\deff}{\coloneqq}
\newcommand{\N}{\mathbb{N}} 
\renewcommand{\iff}{\leftrightarrow} 
\newcommand{\cupdot}{\mathbin{\mathaccent\cdot\cup}} 

\newcommand{\powerset}[1]{\mathcal{P}(#1)}

\makeatletter 
\newcommand{\hathat}[1]{%
\begingroup%
  \let\macc@kerna\z@%
  \let\macc@kernb\z@%
  \let\macc@nucleus\@empty%
  \hat{\mathchoice%
    {\raisebox{.2ex}{\vphantom{\ensuremath{\displaystyle #1}}}}%
    {\raisebox{.2ex}{\vphantom{\ensuremath{\textstyle #1}}}}%
    {\raisebox{.16ex}{\vphantom{\ensuremath{\scriptstyle #1}}}}%
    {\raisebox{.14ex}{\vphantom{\ensuremath{\scriptscriptstyle #1}}}}%
    \smash{\hat{#1}}}%
\endgroup%
}
\makeatother

\newcommand{\Logic}[1]{\ensuremath{\mathrm{#1}}} 
\newcommand{\PDL}{\Logic{PDL}}   
\newcommand{\CPDL}{\Logic{CPDL}}   
\newcommand{\LoopPDL}{\Logic{LPDL}}   
\newcommand{\LoopCPDL}{\Logic{LCPDL}}   
\newcommand{\ourLoopCPDL}{\ensuremath{\mathrm{LCPDL}_{\scriptscriptstyle +}}}   

\newcommand{\str}[1]{\mathfrak{#1}} 
\def\mystr#1{\expandafter\def\csname str#1\endcsname{\str{#1}}} 
\def\mystrall#1{\ifx#1\mystrall\else\mystr#1\expandafter\mystrall\fi}
\mystrall ABCDEFGHIJKLMNOPQRSTUVWXYZ\mystrall

\newcommand{\dom}[1]{\mathrm{#1}} 
\def\mydom#1{\expandafter\def\csname dom#1\endcsname{\dom{#1}}} 
\def\mydomall#1{\ifx#1\mydomall\else\mydom#1\expandafter\mydomall\fi}
\mydomall ABCDEFGHIJKLMNOPQRSTUVWXYZ\mydomall

\def\myrm#1{\expandafter\def\csname rm#1\endcsname{\mathrm{#1}}}
\def\myrmall#1{\ifx#1\myrmall\else\myrm#1\expandafter\myrmall\fi}
\myrmall ABCDEFGHIJKLMNOPQRSTUVWXYZ\myrmall

\newcommand{\pred}[1]{\mathrm{#1}} 
\def\mypred#1{\expandafter\def\csname p#1\endcsname{\pred{#1}}}
\def\mypredall#1{\ifx#1\mypredall\else\mypred#1\expandafter\mypredall\fi}
\mypredall ABCDEFGHIJKLMNOPQRSTUVWXYZabcdefghjklmnopqrstuvwxyz \mypredall

\newcommand{\el}[1]{\mathit{#1}} 
\def\myel#1{\expandafter\def\csname el#1\endcsname{\el{#1}}}
\def\myelall#1{\ifx#1\myelall\else\myel#1\expandafter\myelall\fi}
\myelall abcdefghijkmnopqrstuvwxyz\myelall

\newcommand{\aut}[1]{\mathdutchcal{#1}} 
\def\myaut#1{\expandafter\def\csname aut#1\endcsname{\aut{#1}}} 
\def\myautall#1{\ifx#1\myautall\else\myaut#1\expandafter\myautall\fi}
\myautall ABCDEFGHIJKLMNOPQRSTUVWXYZ\myautall

\newcommand{\mldiamond}[1]{\langle #1 \rangle} 
\newcommand{\mlbox}[1]{\left[ #1 \right]} 
\newcommand{\AP}{\mathbb{P}} 
\newcommand{\PR}{\mathbb{A}} 
\newcommand{\sem}[1]{\llbracket#1\rrbracket} 
\newcommand{\univprog}{\star} 
\newcommand{\inv}[1]{\overline{#1}} 
\newcommand{\lop}{\mathrm{loop}} 
\newcommand{\wdem}{\Rrightarrow}  
\newcommand{\ldem}{\circlearrowleft}  

\newcommand{\homo}[1]{\mathfrak{#1}} 
\newcommand{\homof}{\homo{f}} 
\newcommand{\homoh}{\homo{h}} 

\newcommand{\tr}{\mathrm{tr}}
\newcommand{\vtr}{\mathrm{vtr}}
\newcommand{\Pref}[1]{\mathsf{pref}(#1)} 
\newcommand{\Seg}[1]{\mathsf{seg}(#1)}   

\usepackage[most]{tcolorbox}
\newcommand{\enc}{\homo{en}}
\newcommand{\dec}{\homo{de}}
\newcommand{\NULL}{\pred{NULL}}

\newtcolorbox{procedurebox}[1]{enhanced, breakable,
  drop shadow={black!50!white},
  colback=white, colframe=darkmidnightblue,
  boxrule=0.5pt, left=0.5em, right=0.5em, top=0.2em, bottom=0.2em, boxsep=1pt,
  before upper={\noindent\colorbox{pink}{\bfseries\,#1\,}\par\smallskip}}
\newcommand{\why}[1]{\textcolor{darkgray}{\scriptsize $\vartriangleright$~\emph{#1}}}

\newcommand{\posbool}[1]{\autB^{+}(#1)} 
\newcommand{\dirs}{\mathrm{Dir}} 
\newcommand{\Inf}{\mathrm{Inf}} 
\newcommand{\Lang}{\mathcal{L}} 
\newcommand{\ind}{\mathrm{ind}} 
\newcommand{\treeT}{\autT}
\newcommand{\stay}{\mathrm{stay}}
\newcommand{\up}{\mathrm{up}}
\nolinenumbers

\begin{document}
\maketitle

\setcounter{page}{0}
\begin{abstract}
Propositional Dynamic Logic (PDL) is a well-established modal logic of programs. Among its many extensions, the loop operator stands out for capturing the cyclic behaviour of programs: a world satisfies $\lop(\pi)$ precisely when it can return to itself along a path matching the regular expression~$\pi$. While the satisfiability problem for the resulting logic~$\LoopPDL$ was settled already in the 1980s, the decidability of its \emph{finite} satisfiability problem has remained open ever since.
The core difficulty is inherent to the finite setting: requiring and forbidding cycles at once is easily reconciled over arbitrary models, but over finite ones it defeats the existing finite-model-theoretic techniques.

We make the first substantial progress towards closing this gap. We prove that finite satisfiability is $\ExpTime$-complete for $\LoopPDL$ formul{\ae} (also in the presence of the converse operator) in negation normal form, under the restriction that every \emph{negated} non-atomic loop has the form~$\lop(\pi^+)$ for some arbitrarily complex program~$\pi$---a generalised form of self-reachability; positive occurrences of loops remain unrestricted. At the heart of our approach lies a novel way of representing intended finite models as their infinite, cactus-like unravellings: these are recognisable by two-way tree automata and satisfy a tailored $\omega$-regular foldability condition that lets us finitise the accepted~structures.
\end{abstract}

\clearpage

\section{Introduction}\label{sec:introduction}

\noindent \textbf{PDL.}\,
Propositional Dynamic Logic ($\PDL$) is a well-established modal logic for reasoning about program behaviour. Introduced by Fischer and Ladner~\cite{FischerL79} in the late 1970s, it has since inspired extensive research on the complexity, expressiveness, proof theory, and model theory of dynamic logics; see the surveys by Harel et al.~\cite{HarelEtAlTextbook} and Troquard and Balbiani~\cite{TroquardB23}.
From the perspective of graph databases, $\PDL$ can also be viewed as a unifying framework combining ontology-based data access with expressive graph query languages such as regular path queries.
In essence, $\PDL$ extends classical multi-modal logic with modalities indexed by \emph{programs} $\pi$ (\ie regular path expressions), built from atomic programs (actions) $\pe \in \PR$ and tests $\pP?$ for atomic propositions $\pP \in \AP$.
Semantically, a world $\ela$ of a Kripke structure $\strM$ satisfies $\mldiamond{\pi}\varphi$ if it reaches a world satisfying $\varphi$ along a path whose labels match~$\pi$, where a test $\psi?$ is read as a self-loop on every world satisfying~$\psi$.
Fischer and Ladner showed that $\PDL$ enjoys the finite model property (FMP)~\cite[Thm.~3.2]{FischerL79}---every satisfiable formula has a finite model. Their work, together with that of Pratt~\cite{Pratt80}, pins down its finite and unrestricted satisfiability problems as $\ExpTime$-complete~\cite[Thm.~4.4]{FischerL79}. The lower bound already applies to $\PDL$ with $\mldiamond{\pe}$ and $\mldiamond{\pe^*}$ alone, for a single $\pe \in \PR$, hence to all logics considered here.
Analogous results hold for $\CPDL$ ($\PDL$ with converse)~and~for~many~generalisations~\cite[Thm.~6]{CalvaneseOS16}~\cite[Thm.~3.1]{BednarczykK22}.

\vspace{0.25em}
\noindent \textbf{Extensions of PDL.}\,
Despite its expressive power, $\PDL$ cannot capture cyclic behaviour of programs.  
To~overcome this limitation, Danecki~\cite{DaneckiLoopPDL84} extended $\PDL$ in 1984 with the $\lop(\pi)$ operator, yielding the logic $\LoopPDL$ (also known as \emph{LoopPDL}).
The semantics of this operator is as expected: a world satisfies $\lop(\pi)$ if it can reach itself via a path matching~$\pi$. We note that, in contrast to the $\mathsf{Self}$ operator from the description logic $\mathcal{SROIQ}$~\cite{HorrocksKS06}, programs in loops need not be atomic.
Danecki also established $\ExpTime$-completeness of the satisfiability problem for $\LoopPDL$~\cite[Thm.~1]{DaneckiLoopPDL84} using an automata-theoretic argument based on the ``cactus model property'' of the logic.
This line of work was further generalised by Harel~\cite[p.~182]{Harel83}, who enriched $\PDL$ with program intersection~$\cap$ (parallel execution of programs), a construct that was studied intensively in the early 2000s~\cite{Lange04,LangeL05,Lutz05,GollerLL07,GollerLL09} and also more recently~\cite{FigueiraFB23,BednarczykK25,FigueiraFN26}. 

\vspace{0.25em}
\noindent  \textbf{Finite-model reasoning.}\,
Even tiny fragments of $\LoopPDL$ fail to enjoy the FMP. Indeed, consider $\psi \deff \mlbox{\pr^*} \left( \mldiamond{\pr}\top \land \neg \lop(\pr^+) \right)$,
stating that all reachable worlds have successors, but none lies on a cycle. Clearly $\psi$ has a model (\eg an infinite chain) but no finite models, as in any finite one the chain of $\pr$-successors must eventually produce an $\pr$-cycle.
This motivates the study of the finite satisfiability problem for $\LoopPDL$ and its fragments.
Decidability of finite satisfiability for $\LoopPDL$ (and its generalisations with the $\cap$ operator) has been posed as an open problem in numerous recent works~\cite[p.~94]{BednarczykK25}\cite[p.~26]{FigueiraFN26}\cite[Open~question~1]{FigueiraFB23}\cite[Sec.~3.8]{GoellerThesis}\cite[§7]{GollerLL09}.
Surprisingly, we are not aware of any positive results even for the unimodal logic extended with reachability~$\pr^+$ and self-reachability~$\lop(\pr^+)$.
Informally, the main obstacle can be summarised by Gottlob's famous motto: ``\emph{All the evil in computer science comes from cycles}''.

\vspace{0.5em}
\noindent \textbf{Related work.}
We briefly survey the positive results for rather restricted, somewhat artificial fragments of $\LoopPDL$ that can be extracted from the literature. 
\begin{itemize}\itemsep0em
\item The \emph{atomic-loop fragment} of $\LoopPDL$, \ie $\PDL$ extended with the $\lop$ operator restricted to atomic programs (an $\ExpTime$-complete fragment of $\mathcal{ZOI}$ with the FMP~\cite[Thm.~6]{CalvaneseOS16}).
\item The \emph{transitive fragment} of $\LoopPDL$, where programs are either atomic or interpreted as transitive but not as transitive closures (expressible in the two-variable guarded fragment with transitive guards, which has decidable finite satisfiability~\cite[Thm.~6.11]{KieronskiTendera18}).
\item The \emph{query fragment} of $\LoopPDL$, \ie $\LoopPDL$-formul{\ae} in negation normal form in which loops appear only under negation (decidable even in the broader setting of CRPQ entailment modulo ontologies, but without any complexity bound~\cite[Thm.~1.1]{Gutowski_2026}).
\end{itemize}
To the best of our knowledge, no other fragments of our logic are known to have a decidable finite satisfiability problem, leaving the state of the art highly unsatisfactory. Moreover, none of the fragments discussed supports the dangerous combination of simultaneously enforcing and forbidding cycles: in the first, cycles are limited to self-loops; in the second, they become cliques (over atomic programs!) due to transitivity; and the last one cannot enforce cycles.

\vspace{0.5em}
\noindent \textbf{Our motivation.}\,
Our primary goal is to settle the finite satisfiability problem for $\LoopPDL$, and we see two reasons why it deserves attention.
First, the question has been open for four decades and posed repeatedly in the recent works cited above, yet even its most basic fragments (\eg the aforementioned unimodal logic with ``self-reachability'') have withstood analysis.
Second, and most important to us, the interaction between enforced and forbidden cycles plays out over finite models in ways that the standard finite-model techniques---filtrations, mosaics, and the like---were never designed to control.
Coming to grips with it is, we believe, the real prize: it forces the development of finite-model methods that ought to prove useful well beyond $\LoopPDL$, \eg~for the family of guarded logics with transitive closure~\cite{Michaliszyn09,FigueiraFN26} and more.

\vspace{0.5em}
\noindent \textbf{Our results.}\,
We make the first substantial progress towards deciding finite satisfiability for $\LoopPDL$. Throughout, input formul{\ae} are assumed to be in negation normal form (so tests are atomic).
We study $\ourLoopCPDL$, the fragment of $\LoopCPDL$ (\ie $\LoopPDL$ with converse) in which every \emph{negated} non-atomic loop has the form $\lop(\pi^+)$, a generalised self-reachability condition; here $\pi$ may be arbitrarily complex, and positive loops remain unrestricted.
Restricting to~$\pi^+$ makes our fragment capture precisely the \emph{concatenation-closed} regular languages, a very natural first step towards the ultimate decidability goal---yet, unlike the aforementioned sublogics, one that retains the difficult interaction between cycles.
Moreover, the programs allowed in negated loops generalise (modulo trivial rewriting) $\circ$-free regular expressions, as well as the atoms of the practically relevant \emph{simple} regular path expressions~\cite{FigueiraGKMNT20}. Our~main~result~is:

\begin{theorem*}[Main result]
    The finite satisfiability problem for $\ourLoopCPDL$ is $\ExpTime$-complete.
\end{theorem*}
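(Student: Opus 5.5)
The lower bound is inherited from $\PDL$: as recalled in the introduction, finite satisfiability is already $\ExpTime$-hard for formul{\ae} built from $\mldiamond{\pe}$ and $\mldiamond{\pe^*}$ over a single action, and these lie in $\ourLoopCPDL$. The work is therefore in the upper bound. We plan to reduce finite satisfiability of an input $\varphi$ to emptiness of a two-way alternating parity tree automaton whose size is exponential, but whose number of states is only polynomial in $|\varphi|$. Emptiness of such automata is decidable in $\ExpTime$ (via the standard Vardi translation to one-way nondeterministic automata), which gives the bound.

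\textbf{Step 1: cactus unravellings.} We adapt Danecki's cactus model property. Given a finite model $\strM$ of $\varphi$, we unravel it into an infinite \emph{cactus-like} structure. Every positive loop $\lop(\pi)$ satisfied at a world is witnessed by an explicit cycle, which is kept as a \emph{cycle gadget} attached to that world. All other edges are unravelled tree-like. Because the cactus maps homomorphically onto $\strM$, every path in the cactus is mapped to a path in $\strM$. Hence negated loops $\neg\lop(\pi^+)$ and box formul{\ae} remain true, while diamonds and positive loops are witnessed locally. We then encode such cacti as labelled trees over an alphabet of size exponential in $|\varphi|$. Each node carries a type (a Hintikka-style set of subformul{\ae}) together with transition-monoid or automaton-state information for the programs occurring in loops. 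A two-way alternating automaton checks consistency of this encoding: diamonds are handled by guessing paths, boxes by universal traversal, and positive loops by running the NFA of $\pi$ around the gadget. This much is essentially Danecki's automaton, extended with converse in the usual two-way fashion.

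\textbf{Step 2: foldability (the main obstacle).} The accepted cacti are infinite, and an arbitrary one need not come from a finite model. Consider the example $\psi$: infinite $\pr$-chains with no $\pr^+$-cycle are perfectly good cacti. We therefore add an $\omega$-regular \emph{foldability} condition. Along every infinite branch, the cactus must eventually reach a node $v$ whose type coincides with that of some ancestor $u$. Redirecting the edge into $v$ back to $u$ (folding) must not create a forbidden $\pi^+$-cycle. This is exactly where the restriction to $\lop(\pi^+)$ is used. Since $\sem{\pi^+}$ is closed under concatenation, a folded cycle would be a concatenation of segments, each itself in $\sem{\pi^+}$ or composable into one. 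So it suffices to record, in the label, which states of the automaton for $\pi^+$ can reach $u$ from itself along the folded part. This amounts to a summary of an idempotent-like closure, and it yields a polynomial state space.

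\textbf{Step 3: finitisation and soundness.} Concretely, we would try a parity condition of the following form. Each branch must infinitely often pass through ``reset'' nodes that are safe fold targets, meaning that no $\pi$ occurring in a negated loop has a $\pi$-path from the target back to itself through the fold. We then prove two directions. First, a finite model yields an accepted foldable cactus, by choosing witnesses via the finite model's own repetitions. Second, a regular accepted foldable cactus (one exists whenever the language is nonempty, by Rabin's theorem) folds into a finite Kripke structure satisfying $\varphi$. The second direction is the hard one. We must show that folding at all reset points simultaneously does not create cycles that combine several folds. We would argue by induction on the number of fold edges a cycle uses, using concatenation closure to merge the segments into a single $\pi^+$-witness that is already excluded by the recorded summaries.
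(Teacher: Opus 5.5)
Your architecture is the paper's: the lower bound from $\PDL$, cactus unravellings, tree encodings checked by a polynomial-state two-way alternating automaton, an $\omega$-regular foldability condition, Rabin's basis theorem, and folding a regular accepted cactus back into a finite model. But Steps~2--3 lack the idea that makes the hard direction work.

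Your condition only certifies that a \emph{single} fold on a branch is harmless in isolation, and the induction you propose for combining folds has nothing to act on. Undoing the folds turns a new cycle into a sequence of walks of the original cactus, each ending at some~$v$ while the next starts at the corresponding~$u$. That is a different vertex which merely shares its type, and the automaton's run is in arbitrary states at these seams. These pieces are in general neither cycles nor $\pi^+$-witnesses. Concatenation-closedness only glues complete witnesses with matching endpoints, so it cannot reassemble them into something ``already excluded''. Nor is there a homomorphism into a finite structure to push the cycle into, since producing one is the goal.

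Two further problems. First, testing only cycles from the fold target back to itself misses forbidden cycles at other vertices of the new cycle, because $\pi^+$ is not closed under rotation: with $\pi \deff \pe \circ \ps$, a new cycle reading $\pe\ps$ from the source of the redirected edge reads $\ps\pe$ from its target~$u$. Second, redirecting the edge into~$v$ destroys the witnesses of positive loops whose anchored cycles run through the cut. This is why the paper folds only the first edge $(\elw, \elw i)$ of a demand's branch onto a similar edge $(\elv, \elv i)$ and adds a quasi-back edge to restore the witnessing cycle.

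The missing ingredient is a fold-invariant about the \emph{history} of walks that a forbidden cycle would drive to infinity. The paper's vertex trace $\vtr_{q,q'}(\elw)$ collects, over all walks ending at~$\elw$, pairs $(m,p)$. Here $m$ counts the bad $(q,q')$-walks already stacked up (walks whose ends both fail $\ldem_{q,q'}$), and $p$ is the current state. Folds are performed only between edges whose endpoints agree on types and traces. So a walk through a new edge can be rerouted, at its last new edge, via the twin edge at the similar vertex, and traces only shrink (Lemma~\ref{lemma:invariant}).

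Soundness then needs no merging of segments. A forbidden cycle at a vertex failing $\ldem_{q,q'}$ is itself a bad walk, and repeating it makes $m$ unbounded there. This contradicts the bound on traces of the original cactus (Lemma~\ref{lemma:trace-finiteness}). That bound comes from the paper's \emph{global} foldability condition (no bad $\omega$-walk anywhere in the cactus, rather than a safe fold point on each branch), via regularity and a context-free pumping argument. The same bound yields finitely many extended types, so pigeonhole provides a permissible fold on every sufficiently deep path.

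Concatenation-closedness is used elsewhere than you expect. It makes unravellings of finite models satisfy the global condition: two segment boundaries of a bad $\omega$-walk with the same image enclose a concatenation of bad walks, whose image is a forbidden cycle. Finally, this global condition is what a polynomial-state automaton can check, by guessing a bad $\omega$-walk under a B\"uchi condition and complementing.
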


Our decision procedure rests on a single guiding idea---``\emph{reason in the infinite, but be ready to return to the finite}''.
The closest precursor we know of is Bojańczyk's work on finite satisfiability of the two-way $\mu$-calculus~\cite{Bojanczyk02,BojanczykThesis}, a great inspiration for us (even though his logic can neither enforce nor forbid cycles).
The key idea is to represent finite models by infinite trees recognisable by two-way parity tree automata, intended to capture the cactus-like unravellings of finite structures.
Since not every accepted tree can be obtained by unravelling a finite model, we impose an additional $\omega$-regular \emph{foldability} condition, certifying precisely when a tree folds back into a finite structure.
Unravelling a finite model into an accepted, foldable tree is immediate; folding an accepted tree back into a finite model is the crux, and demands a careful analysis of regular cactus-like models.
The main ingredient is a new notion of \emph{trace}: an invariant maintained throughout the finitisation, recording roughly how many obstacles a world sees along its incoming paths that could prevent its branch from folding.\\
\text{\faHandPointRight}~\emph{Remaining technical details and further guiding examples are deferred~to~the~appendix.}

\section{Preliminaries}\label{sec:preliminaries}

\noindent \textbf{PDL.}\,
We recall the syntax and semantics of $\LoopCPDL$.
Throughout, we fix disjoint sets of \emph{atomic propositions} $\AP \deff \{ \pP, \pQ, \pR, \ldots \}$ and \emph{atomic programs} (actions) $\PR \deff \{ \pe, \pr, \ps, \ldots \}$.
By convention, fraktur letters denote structures, and the corresponding Roman letters denote their domains.
A~\emph{Kripke structure} is a tuple $\strA \deff (\domA, \{ \pe^{\strA} \mid \pe \in \PR \}, \{ \pP^{\strA} \mid \pP \in \AP \})$, where $\domA \deff \{ \elu, \elv, \ldots \}$ is a set of \emph{worlds}, each $\pe^{\strA} \subseteq \domA \times \domA$ is a \emph{transition relation}, and each $\pP^{\strA} \subseteq \domA$~is a \emph{unary relation}.
The~\emph{formul{\ae}} $\varphi, \psi$ and \emph{programs} $\pi, \rho$ of $\LoopCPDL$ are defined by the~grammar:

\begin{align*}
\pi, \rho &\Coloneqq \pe \mid \inv{\pe} \mid \pi \circ \rho \mid \pi \cup \rho \mid \pi^{*} \mid \pP{?} \mid \top{?}, &\\
\varphi, \psi &\Coloneqq \pP \mid \neg \pP \mid \varphi \vee \psi \mid \varphi \land \psi \mid \mldiamond{\pi}\varphi \mid \mlbox{\pi}\varphi \mid \lop(\pi) \mid \neg \lop(\pi). & (\pP \in \AP, \pe \in \PR)
\end{align*}
\noindent We define $\ourLoopCPDL$ by replacing ``$\neg \lop(\pi)$'' in the grammar with ``$\neg \lop(\pe) \mid \neg \lop(\pi^+)$''.
\begin{table}[H]
  \centering
  \begin{tabular}{@{}l@{\ \ }c@{\ \ }r@{}}
    \hline
    Name & Syntax & Semantics $\sem{\cdot}_{\strA}$ of $\cdot$ in a structure $\strA$ \\\hline
    Program/Formula & $\pi$ / $\varphi$ & $\sem{\pi}_{\strA} \subseteq \domA \times \domA$ / $\sem{\varphi}_{\strA} \subseteq \domA$,\\
    Action/Unary Predicate & $\pe$ / $\pP$ & $\pe^{\strA} \subseteq \domA \times \domA$ / $\pP^{\strA} \subseteq \domA$,\\
    Composition/Union & $\pi \oplus \rho$ & $\sem{\pi}_{\strA}\; \oplus \;\sem{\rho}_{\strA}$ for $\oplus \in \{ \circ, \cup \}$,\\
    Kleene star/plus & $\pi^{*}$ / $\pi^{+}$ & (reflexive and) transitive closure of $\sem{\pi}_{\strA}$\\
    Converse action & $\inv{\pe}$ & $\{ (\elu, \elv) \mid (\elv, \elu) \in \pe^{\strA} \}$,\\
    Test / Top / Falsum & $\varphi?$ / $\top$ / $\bot$ & $\{ (\elu, \elu) \mid \elu \in \sem{\varphi}_{\strA} \}$ / $\domA$ / $\emptyset$,\\
    Negation/Conjunction & $\neg \varphi$ / $\varphi \land \psi$ & $\domA \setminus \sem{\varphi}_{\strA}$ / $\sem{\varphi}_{\strA} \cap \sem{\psi}_{\strA}$,\\
    Diamond modality & $\mldiamond{\pi}\varphi$ & $\{ \elu \mid \exists{\elv}\, (\elu, \elv) \in \sem{\pi}_{\strA} \; \land \; \elv \in \sem{\varphi}_{\strA}\}$,\\
    Loop operator & $\lop(\pi)$ & $\{ \elu \mid (\elu, \elu) \in \sem{\pi}_{\strA}\}$,\\\hline
    Automaton program & $\autA_{q,q'}$ & $\{ (\elu, \elv) \mid \exists \text{ accepting run of } \autA_{q, q'} \text{ from } \elu \text{ to } \elv \text{ in } \strA \},$\\
    Universal program & $\univprog$ & $\domA \times \domA$.\\\hline%
  \end{tabular}
  \caption{Semantics of $\LoopCPDL$. Omitted cases: $\varphi \vee \psi \equiv \neg(\neg \varphi \land \neg \psi)$ and $\mlbox{\pi}\varphi \equiv \neg \mldiamond{\pi} \neg \varphi$.}\label{tab:semantics}
\end{table}

\vspace{-\baselineskip}
\noindent Let $\Sigma \deff \{ \pe, \inv{\pe}, \top?, \pP? \mid \pe \in \PR, \pP \in \AP \}$; note that programs are just regular expressions over~$\Sigma$.
We lift $\inv{\cdot}$ to $\Sigma$, putting $\inv{(\inv{\pe})} \deff \pe$ and $\inv{(\pP?)} \deff \pP?$ for all $\pe \in \PR$ and $\pP \in \AP$.
The \emph{length}~$|\varphi|$ of a formula~$\varphi$ is the total number of symbols in~$\varphi$.
We write $\strA, \elw \models \varphi$ if $\elw \in \sem{\varphi}_{\strA}$, and call $(\strA, \elw)$ a  \emph{model} of~$\varphi$.
A~formula is \emph{(finitely) satisfiable} if it has a model $(\strA, \elw)$ with (finite)~$\strA$.
In the \emph{(finite) satisfiability problem} for a logic~$\Logic{L}$, we ask if an input $\Logic{L}$-formula is (finitely)~satisfiable.

\vspace{0.35em}
\noindent \textbf{Walks and NFAs.}\,
A~\emph{walk} in $\strA$ is a finite word $\xi \deff \ela_0, \sigma_1, \ela_1, \ldots, \sigma_{n}, \ela_{n}$ in $\domA (\Sigma \cdot \domA)^*$ with $(\ela_i, \ela_{i+1}) \in \sem{\sigma_{i+1}}_{\strA}$ for all $i < n$. 
Here $\xi$ \emph{starts from}~$\ela_0$ and \emph{ends on}~$\ela_{n}$.
Its \emph{reversal} $\inv{\xi}$ is defined as $\ela_n, \inv{\sigma_n}, \ela_{n-1}, \ldots, \inv{\sigma_1}, \ela_0$.
A~\emph{cycle} is a walk with $\ela_{n} = \ela_0$. 
An $\omega$-walk is an infinite word in $\domA (\Sigma \cdot \domA)^\omega$, whose every prefix of odd length is a walk.
A~walk is \emph{stationary} if all of its worlds are equal to~$\ela_0$.
A~\emph{semiautomaton}~\cite{Ginzburg} is a pair $\autA \deff (Q, \Delta)$ where $Q$ is a finite set of states and $\Delta \subseteq Q \times \Sigma \times Q$ is a finite transition relation.
Given states $q, q'$ of $\autA$, we write $\autA_{q, q'}$ for the NFA obtained from $\autA$ by setting the initial and the final state to $q$ and $q'$, respectively.
A~\emph{run} of $\autA_{q, q'}$ over $\strA$ is a finite sequence of configurations $(\ela_0, q_0), (\ela_1, q_1), \ldots, (\ela_\ell, q_\ell)$ in $\domA \times Q$, interleaved with letters $\sigma_1, \ldots, \sigma_\ell \in \Sigma$, such that $(q_i, \sigma_{i+1}, q_{i+1}) \in \Delta$ and $(\ela_i, \ela_{i+1}) \in \sem{\sigma_{i+1}}_{\strA}$ hold for all $i < \ell$.
Such a run \emph{starts from}~$\ela_0$, \emph{ends on}~$\ela_\ell$, and is \emph{accepting} if $q_0 = q$ and $q_\ell = q'$.
With every accepting run of $\autA_{q, q'}$ starting from~$\ela_0$ we associate a corresponding \emph{witnessing walk} $\ela_0, \sigma_1, \ela_1, \ldots, \sigma_\ell, \ela_\ell$ for $\autA_{q, q'}$ and~$\ela_0$.
If $\ela_\ell = \ela_0$, we call such a walk a \emph{witnessing cycle} for~$\ela_0$.
If $\xi$ and $\zeta$ are walks such that $\xi$ ends on the same world on which $\zeta$ starts, we write $\xi\zeta$ for their concatenation.
An NFA is \emph{concatenation-closed} if $\xi\zeta$ is an accepting walk from $\ela$ to~$\elc$ whenever $\xi$ is an accepting walk from $\ela$ to $\elb$ and $\zeta$ is one from $\elb$ to $\elc$. Modulo the presence of the empty word, an NFA is concatenation-closed if it corresponds to $\pi^+$ for a~program~$\pi$.
At the opposite extreme, an NFA is \emph{stationary} if each of its accepted words has at most one non-test letter. The name is justified: all witnessing \emph{cycles} of such an NFA~are~stationary.

\vspace{0.35em}
\noindent \textbf{Trees.}\,
A \emph{tree} $\domT$ is a prefix-closed subset of $\{0,1,\dots, \rmN\}^*$ for a positive $\rmN \in \N$ ($\rmN{+}1$ is called the \emph{arity} of $\domT$). Its members are called \emph{vertices}, and the empty sequence $\varepsilon$ is called the \emph{root}.
The \emph{children} of a vertex~$\elv$ are the vertices of the form $\elv i$ with $0 \leqslant i \leqslant \rmN$, and its \emph{descendants} are those of the form $\elv w$ with $w \in \{0,1,\dots,\rmN\}^{+}$; the notions of a~\emph{parent} and of an~\emph{ancestor} are defined dually.
By a~\emph{neighbour} of $\elv$ we mean either a child or the parent of~$\elv$.
The \emph{depth} of a vertex $\elv$ is just $|\elv|$.
A~(possibly infinite) sequence of vertices $\elv_0, \elv_1, \dots$ in which every $\elv_{i+1}$ is a neighbour of $\elv_i$ is called a \emph{path} (not to be confused with a walk).
Such a path is \emph{downward} if every $\elv_{i+1}$ is a child of $\elv_i$, and it is a \emph{$j$-path} if, for a single~$j \in \N$, every $\elv_{i+1}$ is the $j$-th child of $\elv_i$ (we will often speak of $0$-paths).
A~path~$\xi$ is \emph{anchored} if it has the form $\xi = \elv, \elv j0^{0}, \elv j0^{1}, \ldots, \elv j0^{k}$, for some vertex~$\elv$, non-zero direction~$j$, and $k \in \N$, and its final vertex $\elw \deff \elv j0^{k}$ has no~$0$-child.
We then call~$\elw$ an \emph{anchored descendant} of~$\elv$, and~$\elv$ the \emph{anchor} of~$\elw$.
Each anchored descendant has a unique anchor, and every vertex has at most one anchored descendant in each non-zero direction~$j$.
A~Kripke structure~$\strT$ is \emph{tree-like} if its domain is a tree and, for every $\pe \in \PR$, each pair $(\elu, \elv) \in \pe^{\strT}$ consists of two neighbours or satisfies $\elu = \elv$.
Whenever $\domT$ is a tree, we say that $\strS$ is a \emph{subtree} of $\strT$ if, for some vertex~$\elv \in \domT$, the map $w \mapsto \elv w$ is an isomorphism between $\strS$ and the substructure of $\strT$ induced by $\elv$ and its descendants; and we call $\strT$ \emph{regular} if it has only finitely many~subtrees~up~to~isomorphism.

\vspace{-0.5em}
\begin{figure}[h]
  \centering
  \begin{tikzpicture}[
      x=1cm, y=1cm, scale=1.03, every node/.style={transform shape},
      world/.style   = {circle, draw=black!65, fill=white, line width=0.5pt,
                        inner sep=0pt, minimum size=4.6pt},
      root/.style    = {world, minimum size=5.8pt, line width=0.7pt, draw=black!80},
      tedge/.style   = {->, >=stealth, draw=black!55, semithick},
      zedge/.style   = {->, >=stealth, draw=blue!55!black, thick},
      bedge/.style   = {->, >=stealth, draw=red!70!black, thick,
                        dash pattern=on 3pt off 2pt},
      wlab/.style    = {font=\scriptsize, inner sep=1.5pt, text=black!70},
      elab/.style    = {font=\tiny, inner sep=1.3pt, fill=white, midway},
      leg/.style     = {font=\scriptsize, inner sep=0pt, anchor=west},
      arcL/.style    = {out=196, in=164, looseness=1.2},
      arcR/.style    = {out=-16, in=16,  looseness=2.0},
      spath/.style   = {line width=4.5pt, draw=orange!35, line cap=round, line join=round},
      tag/.style     = {font=\tiny, text=black!75, inner sep=1.5pt, align=center},
      bush/.style    = {draw=black!30, fill=black!6, rounded corners=1pt, line width=0.4pt,
                        drop shadow={shadow xshift=1.5pt, shadow yshift=-1.5pt,
                                   opacity=0.28, fill=black!60}},
    ]
    \def\bushat#1{\draw[bush] (#1.south) -- ++(-0.38,-0.56)
                              -- ++(0.76,0) -- cycle;}
    \def\bushsmall#1{\draw[bush] (#1.south) -- ++(-0.30,-0.45)
                                 -- ++(0.60,0) -- cycle;}

    \node[root]  (r)  at (0,0)     {};
    \node[world] (a1) at (0,-0.90) {};
    \node[world] (a2) at (0,-1.80) {};
    \node[world] (a3) at (0,-2.70) {};

    \node[wlab, anchor=south] at ([yshift=2pt]r.north) {$\varepsilon$};
    \node[wlab, anchor=east]  at ([xshift=-3pt]a1.west) {$1$};
    \node[wlab, anchor=east]  at ([xshift=-3pt]a2.west) {$10$};
    \node[wlab, anchor=north] at ([yshift=-2pt]a3.south) {$100$};

    \draw[tedge] (r)  -- node[elab] {$\pe$}        (a1);
    \draw[zedge] (a2) -- node[elab] {$\pr$}  (a1);
    \draw[zedge] (a2) -- node[elab] {$\pr$}        (a3);
    \draw[bedge] (a3) to[arcL] node[elab] {$\pe$} (r);
    \begin{scope}[on background layer]
      \draw[spath] (r) -- (a1) -- (a2) -- (a3);
    \end{scope}
    \node[tag, anchor=south west] at ([shift={(3pt,1pt)}]r.north east) {anchor};
    \node[tag, anchor=north] at ([yshift=-8pt]a3.south) {anchored descendant\;\;\\(no $0$-child)};

    \node[world] (t1) at (1.15,-1.35) {};
    \draw[tedge] (a1) -- node[elab] {$\pr$} (t1);   \bushat{t1}

    \node[world] (c1) at (1.15,-2.20) {};
    \node[world] (c2) at (1.15,-3.00) {};
    \draw[tedge] (a2) -- node[elab] {$\pe$} (c1);
    \draw[tedge] (c1) -- node[elab] {$\pr$} (c2);
    \draw[bedge] (c1) to[arcR] node[elab] {$\pe$} (c2);

    \node[world] (t0) at (1.90,-0.85) {};
    \draw[tedge] (r) -- node[elab] {$\pe$} (t0);   \bushsmall{t0}

    \node[world] (b1) at (3.15,-0.90) {};
    \node[world] (b2) at (3.15,-1.80) {};
    \node[world] (b3) at (3.15,-2.70) {};

    \node[wlab, anchor=south west] at ([shift={(-1pt,-1pt)}]b1.north east) {$\elv$};
    \node[wlab, anchor=east] at ([xshift=-3pt]b2.west) {$\elv i$};
    \node[wlab, anchor=north] at ([yshift=-2pt]b3.south) {$\elv i 0$};

    \draw[tedge] (r)  -- node[elab] {$\pr$} (b1);
    \draw[tedge] (b1) -- node[elab] {$\pe$} (b2);
    \draw[zedge] (b3) -- node[elab] {$\pe$} (b2);
    \draw[bedge] (b1) to[arcL] node[elab] {$\pr$} (b3);

    \node[world] (t2) at (4.30,-1.35) {};
    \draw[tedge] (t2) -- node[elab] {$\pr$} (b1);   \bushat{t2}
    \node[world] (t3) at (4.30,-2.55) {};
    \draw[tedge] (b2) -- node[elab] {$\pe$} (t3);   \bushat{t3}

    \node[world] (d1) at (5.85,-0.90) {};
    \node[world] (d2) at (5.85,-1.80) {};
    \node[world] (d3) at (5.85,-2.70) {};

    \draw[tedge] (r)  -- node[elab] {$\pe$} (d1);
    \draw[zedge] (d1) -- node[elab] {$\pr$} (d2);
    \draw[zedge] (d3) -- node[elab] {$\pr$} (d2);
    \node[tag, anchor=north] at ([yshift=-3pt]d3.south) {anchored path,\\no back edge};

    \node[world] (t5) at (7.00,-1.35) {};
    \draw[tedge] (d1) -- node[elab] {$\pe$} (t5);   \bushat{t5}
    \node[world] (t4) at (7.00,-2.55) {};
    \draw[tedge] (t4) -- node[elab] {$\pe$} (d2);   \bushat{t4}

    \begin{scope}[shift={(7.85,0)}]
      \draw[zedge] (0,-0.35) -- (0.55,-0.35);
      \node[leg] at (0.68,-0.35) {$0$-edge (to the $0$-th child)};
      \draw[tedge] (0,-1.00) -- (0.55,-1.00);
      \node[leg] at (0.68,-1.00) {tree edge (other child)};
      \draw[bedge] (0,-1.65) -- (0.55,-1.65);
      \node[leg] at (0.68,-1.65) {back edge};
      \draw[bush] (0.28,-2.10) -- (0,-2.55) -- (0.55,-2.55) -- cycle;
      \node[leg] at (0.68,-2.32) {arbitrary tree-like part};
      \draw[spath] (0.03,-2.95) -- (0.52,-2.95);
      \node[leg] at (0.68,-2.95) {anchored path};
    \end{scope}
  \end{tikzpicture}
  \vspace{-1em}
  \caption{A cactus-like structure $\strC$. The sequence $1, 10, 100$ is an example $0$-path; the sequence $\xi \deff \varepsilon, 1, 10, 100$ is an anchored path with $\varepsilon$ being the anchor of $100$~and~$\xi, \varepsilon$ an anchored cycle.}
  \label{fig:cactus}
\end{figure}

\noindent \textbf{Cacti.}\,
$\strC$ is \emph{cactus-like} if it is obtained from a tree-like structure~$\strT$ by \emph{adding}, for every $\pe \in \PR$ and every pair consisting of an anchored descendant~$\elw$ and its anchor~$\elv$, at most one of $(\elv, \elw)$ and $(\elw, \elv)$ to $\pe^\strT$.
We call the pairs added in this way \emph{back edges}.
Note that a back edge closes a cycle together with the downward path from~$\elv$ to~$\elw$---one non-zero step followed by a (possibly empty) $0$-path.
Consult Figure~\ref{fig:cactus} for a visualisation.
The notion of regularity lifts from tree-like to cactus-like structures (the isomorphism takes also back edges~into~account). In the forthcoming notion of $\varphi$-cacti, a key role is played by \emph{anchored witnessing walks} and \emph{cycles}: these are the ones whose underlying sequence of vertices, obtained after removing consecutive duplicates (as well as the last vertex in the case of cycles), forms~an~anchored~path.

\vspace{0.25em}
\noindent \textbf{Normal Form.}\,
To avoid nested modalities and to track the partial progress of programs within modalities, we define a Scott-like normal form for $\ourLoopCPDL$ over a slightly extended syntax: modalities may contain automata instead of regular expressions, as well as the universal program~$\univprog$ (consult Table~\ref{tab:semantics}).
Fix a semiautomaton~$\autA \deff (Q, \Delta)$.
A formula~$\varphi$ is in \emph{$\autA$-normal form} if it conforms~to~the~pattern:
\[
\pP_{\varphi} \land \mlbox{\univprog} \left( \textstyle\bigwedge_{i=1}^{N_\varphi} \left( \pP_i \iff \psi_i \right) \land \textstyle\bigwedge_{q,q' \in Q} \left( \wdem_{q,q'} \iff \mldiamond{\autA_{q,q'}}\top \right) \land \textstyle\bigwedge_{q,q' \in Q} \left( \ldem_{q,q'} \,\leftrightarrows_{q,q'}\, \lop(\autA_{q,q'}) \right) \right),
\]
where the decorated letters $\pP$, $\wdem$, and $\ldem$ denote atomic propositions (the latter two are the \emph{walk} and \emph{cycle} demands), each $\psi_i$ is a Boolean combination of atomic propositions, and $\leftrightarrows_{q,q'}$ is either $\to$ or $\iff$. Automata with $\iff$ in place of~$\leftrightarrows_{q,q'}$ are called \emph{bannable} (as only they may issue negative cycle demands).
We identify the bannable NFAs purely syntactically while computing the normal form, and record which of them are concatenation-closed. By design of our logics, bannable NFAs are stationary or concatenation-closed. We prove:

\begin{lemma}\label{lemma:normal-form}
  For every $\ourLoopCPDL$-formula~$\varphi$ one can compute in time polynomial in $|\varphi|$ a semiautomaton~$\autA$ and a formula~$\varphi'$ in $\autA$-normal form that is (finitely)~equisatisfiable~with~$\varphi$.
\end{lemma}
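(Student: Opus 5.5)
I would prove the lemma by a standard Scott-style renaming of subformulae, combined with the Thompson/Glushkov translation of programs into automata, which I would merge into a single semiautomaton. The first step is to list all subformulae and all programs occurring in $\varphi$. Each program $\pi$ in a modality or a loop gets an NFA of size linear in $|\pi|$ via the Thompson construction. The NFA reads letters of $\Sigma$, and tests $\pP?$ stay letters because tests are atomic in negation normal form. I would take $\autA$ to be the disjoint union of all these NFAs; each has its own designated initial state $q_\pi$ and final state $q'_\pi$, so $\autA_{q_\pi,q'_\pi}$ recognises exactly the language of $\pi$. The total size is $O(|\varphi|)$.

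\textbf{Step 1: fresh propositions.} I introduce a fresh proposition $\pP_\chi$ for every subformula $\chi$ of $\varphi$, and set $\pP_\varphi$ to be the one for $\varphi$. For Boolean connectives and literals, $\psi_\chi$ is the corresponding Boolean combination of the propositions of the immediate subformulae, for instance $\pP_{\chi_1}\land\pP_{\chi_2}$. The modal cases need more care, since the normal form only allows $\mldiamond{\autA_{q,q'}}\top$ with target $\top$.

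\textbf{Step 2: handling the target formula.} For $\mldiamond{\pi}\chi$, I add to $\autA$ a copy of the NFA for $\pi \circ \pP_\chi?$, i.e.\ one extra transition on the test letter $\pP_\chi?$ into a fresh final state $q''$. Then $\pP_{\mldiamond{\pi}\chi} \iff \wdem_{q_\pi,q''}$. Box formulas $\mlbox{\pi}\chi$ are handled by the dual, $\pP_{\mlbox{\pi}\chi}\iff\neg\wdem_{q,q''}$. Here $q''$ is reached via the test $\pN_\chi?$, where $\pN_\chi$ is a fresh proposition with $\pN_\chi\iff\neg\pP_\chi$; a Boolean combination may use negated atoms.

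For loops, $\pP_{\lop(\pi)}\iff\ldem_{q_\pi,q'_\pi}$. Loops occurring only positively can instead use the one-directional $\to$, and this is essential. Negated loops $\neg\lop(\pi)$ use $\pP_{\neg\lop(\pi)}\iff\neg\ldem_{q,q'}$ with $\leftrightarrows$ set to $\iff$, which makes those automata bannable. By the grammar of $\ourLoopCPDL$, each such $\pi$ is either an atomic program $\pe$, giving a stationary NFA, or of the form $\rho^+$. For $\rho^+$ I construct the NFA so that it is concatenation-closed. Taking the Thompson NFA for $\rho$ and adding $\varepsilon$-free feedback from final to initial is not quite enough as stated, since $\autA_{q,q'}$ must accept exactly $\rho^+$. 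I would instead use an NFA with a single initial state $q$ and a single final state $q'$, where $q'$ has the same outgoing transitions as $q$. Any walk accepted from $\ela$ to $\elb$, followed by one from $\elb$ to $\elc$, then re-enters the automaton at $q'$ and yields an accepting walk.

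\textbf{Step 3: equisatisfiability.} In one direction, any (finite) model of $\varphi$ expands to a model of $\varphi'$ on the same domain by interpreting every fresh proposition by its intended extension. In the other direction, an induction on $\chi$ shows that $\pP_\chi$ implies $\chi$ at every world; by monotonicity of NNF this is all that is needed. The positive loop case is exactly where the one-directional $\to$ suffices. The negated loop case needs $\iff$, since from $\pP_{\neg\lop}$ we derive $\neg\ldem$ and hence $\neg\lop(\autA_{q,q'})$. The domain is unchanged, so finiteness is preserved, and the whole construction is polynomial (in fact linear).

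\textbf{Main obstacle.} The only delicate point is ensuring syntactically that every bannable automaton is stationary or concatenation-closed, and recording which. This requires a careful NFA construction for $\rho^+$ that is closed under concatenation at the level of walks, including walks through test letters, while still having singleton initial and final states.
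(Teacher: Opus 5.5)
Your proposal is correct and follows essentially the same route as the paper: Scott-style renaming with a complement proposition for box targets, Thompson automata merged into one semiautomaton, $\iff$ only for the negated loops, and a one-sided induction showing $\pP_\chi \subseteq \sem{\chi}$. Two small remarks: your ``main obstacle'' is not really one, since concatenation-closedness depends only on the accepted language, so any NFA for $\rho^+$ already qualifies; the point you do pass over is the empty word, which the paper handles by appending a dummy $\top?$ (e.g.\ rewriting $\rho^+$ as $(\rho\top?)^+$) so that $\varepsilon$-elimination yields a single initial and a distinct single final state for every loop program.
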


\newtcbtheorem{Proviso}{\bfseries Proviso for the rest of the paper}{enhanced,drop shadow={black!50!white},
  coltitle=black, fontupper=\small, fonttitle=\small,
  boxsep=0pt, top=6.5pt, bottom=0.5pt, left=4pt, right=4pt,
  before skip=2pt, after skip=1pt,
  attach boxed title to top left=
  {xshift=1.5em,yshift=-\tcboxedtitleheight/2},
  boxed title style={size=small,colback=pink,boxsep=1pt,top=0.5pt,bottom=0.5pt}
}{Proviso}

\begin{Proviso*}{}{}
  To simplify the presentation, we fix a semiautomaton $\autA \deff (Q, \Delta)$ and an $\ourLoopCPDL$-formula~$\varphi$ in $\autA$-normal form, with its components named as stated above. We use $\Sigma_{\varphi}$ for the set of atomic propositions and atomic programs (with their inverses)~from~$\varphi$.
\end{Proviso*}

\section{Foldable Cacti: A Way of Representing Finite Models}\label{sec:cactus-models}

The classical route to satisfiability in modal logics runs through automata: one represents models by infinite trees (obtained from models via the so-called \emph{unravelling}) and then reduces satisfiability of a formula to non-emptiness of a two-way parity tree automaton (2APTA), as pioneered by Vardi~\cite{VardiW86,Vardi98}.
Two features of our setting resist such representations.
Unravelling removes cycles, but a loop demand may require a cycle of unbounded length; our representation must therefore keep it.
What is more, unravellings are typically infinite, which at first sight makes them useless for finite satisfiability.
We therefore unravel into \emph{cactus-like} structures---trees except for the back edges that turn anchored paths into cycles---and single out, by an appropriate $\omega$-regular foldability condition, precisely those that fold~back~into~a~finite~model.

\begin{definition}[$\varphi$-cactus]\label{def:phi-cactus}
  Fix a bijection $\kappa\colon \{ \ldem_{q,q'}, \wdem_{q,q'} \mid q,q' \in Q \} \to \{ 1, 2, \ldots, 2|Q|^2 \}$, reserving one non-zero direction $\kappa(\delta)$ for each demand~$\delta$.
  A~cactus-like structure~$\strC$ of arity~$2|Q|^2{+}1$ is a \emph{$\varphi$-cactus} if the following hold for all vertices $\elv \in \domC$ and all states $q, q' \in Q$:
    \begin{namedlist}[$(\wdem)\hspace{0.8em}$]
        \item[\desclabel{$(\models)$}{phicactus:bool}] $\varepsilon$ satisfies $\pP_{\varphi}$; every $\elv$ satisfies $\textstyle\bigwedge_{i=1}^{N_\varphi} (\pP_i \iff \psi_i)$ and $\textstyle\bigwedge_{q\in Q} (\ldem_{q,q} \land \wdem_{q,q})$;
        \item[\desclabel{$(\mathbb{F})$}{phicactus:fini}] all $0$-paths in $\strC$ are finite; $\varepsilon$ has no $0$-child; every $\elv \neq \varepsilon$ has a $\Sigma_{\varphi}$-edge to its parent;
\item[\desclabel{$(\wdem)$}{phicactus:walk}] if $\elv$ satisfies $\delta \,\deff\, \wdem_{q,q'}$, it has a witnessing walk $\xi_{\delta}^{\elv}$ for $\autA_{q,q'}$ that is either stationary, or anchored passing through $\elv\cdot\kappa(\delta)$; otherwise it has no witnessing walk for~$\autA_{q,q'}$;
\item[\desclabel{$(\ldem)$}{phicactus:cycle}] if $\elv$ satisfies $\delta \,\deff\, \ldem_{q,q'}$, it has a stationary or anchored witnessing cycle $\xi_{\delta}^{\elv}$ for $\autA_{q,q'}$ passing through $\elv\cdot\kappa(\delta)$; otherwise it has no witnessing cycle, provided $\autA_{q,q'}$~is~bannable;
      \end{namedlist}
\end{definition}

\noindent Comparing the shape of~$\varphi$ with Conditions~\ref{phicactus:bool},~\ref{phicactus:walk}, and~\ref{phicactus:cycle}, we infer that $(\strC,\varepsilon) \models \varphi$ holds for every $\varphi$-cactus~$\strC$. The benefit of $\varphi$-cacti lies in their restricted shape (almost tree-like, and thus amenable to 2APTA running over their tree encodings) and in the way they constrain witnesses (always stationary or anchored).
We next design a suitable unravelling~(see~Figure~\ref{fig:unravelling}).

\vspace{-0.75em}
\begin{figure}[h]
  \centering

  \tikzset{
    wld/.style   = {circle, draw=black!70, fill=white, line width=0.6pt,
                    inner sep=0.5pt, minimum size=16pt, font=\small},
    rt/.style    = {wld, line width=1.1pt, draw=black!90},
    tedge/.style = {->, >=stealth, draw=black!55, semithick},
    zedge/.style = {->, >=stealth, draw=blue!55!black, thick},
    bedge/.style = {->, >=stealth, draw=red!70!black, thick,
                    dash pattern=on 3pt off 2pt},
    elab/.style  = {font=\scriptsize, inner sep=1.4pt, fill=white, midway},
    adr/.style   = {font=\scriptsize, text=black!60, inner sep=1.5pt},
    nam/.style   = {font=\fontsize{34}{36}\selectfont, inner sep=1pt},
    bush/.style  = {draw=black!35, fill=black!8, rounded corners=1pt,
                    line width=0.5pt,
                    drop shadow={shadow xshift=1.5pt, shadow yshift=-1.5pt,
                                 opacity=0.28, fill=black!60}},
    wa/.style    = {fill=green!20!white},
    wb/.style    = {fill=yellow!35!white},
    wc/.style    = {fill=red!15!white},
    aut/.style   = {circle, draw=black!70, fill=white, line width=0.6pt,
                    inner sep=0.5pt, minimum size=15pt, font=\scriptsize},
    panel/.style = {rounded corners=2pt, draw=black!22, fill=black!4,
                    line width=0.5pt},
    alab/.style  = {font=\scriptsize, inner sep=1.4pt, fill=black!4, midway},
    dem/.style   = {font=\scriptsize, text=black!80, inner sep=1.5pt},
    dots/.style  = {font=\scriptsize, text=black!55, inner sep=0pt},
    more/.style  = {font=\normalsize, text=black!55, inner sep=0pt},
  }
  \tikzset{
    unravbox/.style = {x=1cm, y=1cm, scale=.85, baseline={(0,0)},
                       every node/.style={transform shape}},
  }

  \def\bushat#1{\draw[bush] (#1.south) -- ++(-0.45,-0.70)
                            -- ++(0.90,0) -- cycle;
                \node[dots] at ($(#1.south)+(0,-0.44)$) {$\ldots$};}
  \def\bushsmall#1{\draw[bush] ($(#1.south)+(0,-0.06)$) -- ++(-0.30,-0.48)
                               -- ++(0.60,0) -- cycle;}

  \makebox[\textwidth][s]{%
    \begin{tikzpicture}[unravbox]
      \path[use as bounding box] (-0.40,-3.70) rectangle (3.55,0);

      \draw[panel] (-0.40,-3.65) rectangle (3.55,-2.79);
      \node[aut] (q0) at (0.15,-3.22) {$q_0$};
      \node[aut] (q1) at (1.10,-3.22) {$q_1$};
      \node[aut] (q2) at (2.05,-3.22) {$q_2$};
      \node[aut] (q3) at (3.00,-3.22) {$q_3$};
      \draw[tedge] (q0) -- node[alab] {$\pe$} (q1);
      \draw[tedge] (q1) -- node[alab] {$\pr$} (q2);
      \draw[tedge] (q2) -- node[alab] {$\pe$} (q3);

      \node[rt, wa]  (ma) at (1.10,-0.35) {$\ela$};
      \node[wld, wb] (mb) at (2.20,-1.75) {$\elb$};
      \node[wld, wc] (mc) at (0   ,-1.75) {$\elc$};

      \draw[tedge] (ma) -- node[elab] {$\pe$} (mb);
      \draw[tedge] (mb) -- node[elab] {$\pr$} (mc);
      \draw[tedge] (mc) to[bend left=22]  node[elab, pos=0.64] {$\pe$} (ma);
      \draw[tedge] (ma) -- node[elab, pos=0.64] {$\pr$} (mc);
      \draw[tedge] (mb) to[out=-52,in=12,looseness=11] node[elab] {$\pe$} (mb);
    \end{tikzpicture}%
    \hfill
    \begin{tikzpicture}[unravbox]
      \path[use as bounding box] (-0.62,-1.70) rectangle (0.62,0);
      \node[nam] at (0,-1.05) {$\leadsto$};
    \end{tikzpicture}%
    \hfill
    \begin{tikzpicture}[unravbox]
      \path[use as bounding box] (4.30,-4.14) rectangle (11.05,0.30);

      \node[rt, wa]  (ce)  at (7.05, 0)    {$\ela$};
      \node[wld, wb] (c1)  at (5.65,-1.40) {$\elb$};
      \node[wld, wb] (c2)  at (7.50,-1.40) {$\elb$};
      \node[wld, wc] (c3)  at (9.45,-1.40) {$\elc$};
      \node[wld, wc] (c10) at (4.90,-2.80) {$\elc$};
      \node[wld, wc] (c12) at (6.35,-2.80) {$\elc$};

      \node[adr, anchor=east]       at ([xshift=-6pt]ce.west)          {$\varepsilon$};
      \node[adr, anchor=east]       at ([xshift=-4pt]c1.west)          {$1$};
      \node[adr, anchor=east]       at ([xshift=-3pt]c2.west)          {$4$};
      \node[adr, anchor=west]       at ([xshift=3pt]c3.east)           {$7$};
      \node[adr, anchor=north east] at ([shift={(-2pt,-2pt)}]c10.west) {$10$};
      \node[adr, anchor=north west] at ([shift={(2pt,-2pt)}]c12.east)  {$12$};

      \draw[tedge] (ce) -- node[elab] {$\pe$} (c1);
      \draw[tedge] (ce) -- node[elab, pos=0.42] {$\pr$} (c3);
      \draw[tedge] (ce) -- node[elab] {$\pe$} (c2);
      \draw[zedge] (c1) -- node[elab] {$\pr$} (c10);
      \draw[tedge] (c1) -- node[elab] {$\pr$} (c12);
      \draw[bedge] (c10) to[out=176,in=152,looseness=1.15] node[elab, pos=0.72] {$\pe$} (ce);
      \draw[bedge] (c3)  to[bend left=16] node[elab, pos=0.30] {$\pe$} (ce);
      \draw[tedge] (c1)  to[out=158,in=98,looseness=11] node[elab] {$\pe$} (c1);
      \draw[tedge] (c2)  to[out=-52,in=12,looseness=11] node[elab] {$\pe$} (c2);

      \bushat{c10} \bushat{c12} \bushat{c3} \bushat{c2}

      \node[more] (bmore) at (5.63,-2.92) {$\cdots$};
      \draw[tedge] (c1) -- (bmore);
      \bushsmall{bmore}

      \node[more] (cmore) at (10.65,-1.52) {$\cdots$};
      \draw[tedge] (ce) to[bend left=11] (cmore);
      \bushsmall{cmore}

      \node[dem] at (5.15,-4.02) {$\ldem_{q_0,q_3}$};
      \node[dem] at (6.35,-4.02) {$\wdem_{q_1,q_2}$};
      \node[dem] at (7.50,-2.62) {$\wdem_{q_0,q_1}$};
      \node[dem] at (9.45,-2.62) {$\ldem_{q_1,q_3}$};
    \end{tikzpicture}%
  }
  \vspace{-1em}
  \caption{Unravelling a model into a $\varphi$-cactus~$\strC$, for the
  semiautomaton~$\autA$. Each vertex of~$\strC$ carries its $\homof$-image inside and its
  address in grey; tags name realised demands, dots elide further~branches.}
  \label{fig:unravelling}
\end{figure}

\vspace{-1.5em}
\begin{lemma}\label{lemma:model-to-cactus}
  From every model $(\strA, \ela)$ of $\varphi$ one can construct a $\varphi$-cactus~$\strC$ with $(\strC, \varepsilon) \models \varphi$ and a homomorphism $\homof \colon \strC \to \strA$ satisfying $\homof(\varepsilon) = \ela$.
  Moreover, if $\strA$ is finite, then $\strC$ is regular.
\end{lemma}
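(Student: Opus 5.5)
We construct $\strC$ together with $\homof$ top-down, as an unravelling in which each vertex $\elv$ is a copy of the world $\homof(\elv)$ of $\strA$. Every vertex's children are created in the directions $\kappa(\delta)$, and these directions then continue along $0$-paths.

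\textbf{Creating the root and its children.} Start with the root $\varepsilon \mapsto \ela$, labelled by the atomic propositions true at $\ela$ in $\strA$. Since $\varphi$ is in $\autA$-normal form and $(\strA,\ela)\models\varphi$, every world satisfies the $\mlbox{\univprog}$-part. Hence \ref{phicactus:bool} holds as soon as labels are copied; the conjuncts $\ldem_{q,q}\land\wdem_{q,q}$ are satisfied by empty runs.

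For a vertex $\elv$ with image $\elu=\homof(\elv)$, handle each demand $\delta$ true at $\elu$ as follows.
\begin{itemize}
\item If $\delta=\wdem_{q,q'}$, pick a witnessing walk for $\autA_{q,q'}$ from $\elu$ in $\strA$.
\item If $\delta=\ldem_{q,q'}$, pick a witnessing cycle.
\end{itemize}
Choose each walk of minimal length; in particular, if a stationary witness exists, use it. A stationary witness is reproduced at $\elv$ itself, since tests and self-loop actions are copied; for a stationary cycle we still add a child at $\kappa(\delta)$ with an edge back, to meet the ``passing through'' requirement. Otherwise, write the non-stationary witness as $\elu=\elu_0,\sigma_1,\elu_1,\dots$ with consecutive duplicates merged into stationary stretches.
\begin{itemize}
\item Create the child $\elv\kappa(\delta)\mapsto \elu_1$ and then a $0$-path $\elv\kappa(\delta)0^i\mapsto\elu_{i+1}$, joining consecutive vertices by the edges $\sigma$ oriented appropriately (the converse letter gives the reverse orientation).
\item For a cycle, the last world returns to $\elu$. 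Instead of creating a vertex for it, add a back edge from the anchored descendant to the anchor $\elv$.
\end{itemize}
This realises the demand by an anchored walk through $\elv\kappa(\delta)$. The $0$-paths are finite because witnesses are finite, and $\varepsilon$ gets no $0$-child. Every non-root vertex has a $\Sigma_\varphi$-edge to its parent because it was created through such an edge.

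\textbf{Recursion and the homomorphism.} New vertices on a $0$-path receive, recursively, their own demand children in the non-zero directions, so the $0$-path of a witness continues only in direction $0$. Thus arity $2|Q|^2{+}1$ suffices, though a vertex inside a $0$-path must use direction $0$ for the witness it belongs to. Every edge of $\strC$ maps to an edge of $\strA$ by construction, so $\homof$ is a homomorphism, and $(\strC,\varepsilon)\models\varphi$ follows from the remark after Definition~\ref{def:phi-cactus}.

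\textbf{Negative demands (the main obstacle).} If $\elv$ does not satisfy $\wdem_{q,q'}$, any witnessing walk in $\strC$ maps under $\homof$ to a witnessing walk from $\elu$ in $\strA$, since runs transfer along homomorphisms and test labels are preserved. This contradicts the label of $\elu$ being copied. The same argument applies to bannable cycle demands, because cycles map to cycles. So \ref{phicactus:walk} and \ref{phicactus:cycle} hold; the main care needed is only in defining the construction precisely, with back edges at most one per action and pair.

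\textbf{Regularity.} If $\strA$ is finite, make the choices canonical: fix one witness per (world, demand) pair. The subtree rooted at a vertex is then determined by the finite data of its image world, together with its role if it lies on a $0$-path of a witness — namely which witness it belongs to and its position in it. Since there are finitely many such pieces of data, $\strC$ has finitely many subtrees up to isomorphism. Back edges stay internal to the subtree of their anchor, so they are included in the isomorphism, and $\strC$ is regular.
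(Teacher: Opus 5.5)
\textbf{Gap: the extra child for stationary cycle witnesses.} The rest of your construction is the paper's. For each demand whose fixed witness at $\homof(\elv)$ is non-stationary, you attach a branch in direction $\kappa(\delta)$, continued by a $0$-path along the reduction of that witness. You close cycle demands by a single back edge to $\elv$, obtain the ``otherwise'' clauses by pushing walks through the homomorphism $\homof$, and get regularity from the finite data ``image world plus position in a fixed witness''.

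The step that fails, and must be deleted, is the extra child at $\kappa(\delta)$ ``with an edge back'' for a \emph{stationary} cycle witness. Although the wording of \ref{phicactus:cycle} is ambiguous, it has to be read like \ref{phicactus:walk} and \ref{encready:A5}: ``stationary, or anchored and passing through $\elv\cdot\kappa(\delta)$''. A stationary cycle visits no vertex other than $\elv$, so no added child can make it pass through $\elv\kappa(\delta)$. Indeed, if $\autA_{q,q'}$ accepts only the word $\pP?$ (as for a positive $\lop(\pP?)$), no accepting cycle ever leaves $\elv$, so under your reading the lemma would simply be false.

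The step is also harmful. The new child needs an image under $\homof$ and action edges joining it with $\elv$; one such edge is forced anyway by \ref{phicactus:fini}. But when the stationary witness uses only tests, $\homof(\elv)$ may have no neighbour and no self-loop in $\strA$ at all. Then $\homof$ stops being a homomorphism, which is exactly what your treatment of negative demands relies on. Worse, the new cycle $\elv, \elv\kappa(\delta), \elv$ may witness a forbidden cycle, violating the ``otherwise'' clause of \ref{phicactus:cycle}. For example, take the bannable automaton of $\neg\lop(\pe^+)$ when both added edges are $\pe$-edges oriented around the cycle. The fix is to attach no branch when the fixed witness is stationary, as the paper does; with that change your argument coincides with the paper's.

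\textbf{Minor points.} Picking witnesses of minimal length does not force stationary ones to be picked, but nothing depends on this; any fixed witness will do. Also, ``satisfied by empty runs'' does not justify the conjunct $\ldem_{q,q}$ in \ref{phicactus:bool}. Your labels are copied from $\strA$, where the normal form only guarantees $\ldem_{q,q} \to \lop(\autA_{q,q})$; for the normal forms of Lemma~\ref{lemma:normal-form}, $\leftrightarrows_{q,q}$ is always $\to$, since bannable pairs have distinct states. So the proposition $\ldem_{q,q}$ may be false in $\strA$. You should first make $\ldem_{q,q}$ true everywhere in $\strA$. This is harmless for those normal forms, where $\ldem_{q,q}$ occurs in no other conjunct. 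The paper's proof is equally silent on this point.
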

\begin{proof}[Proof sketch]
For all $\elu \in \domA$ and $q, q' \in Q$, fix (if they exist) a witnessing walk $\xi^{\elu}_{\wdem_{q,q'}}$ and cycle $\xi^{\elu}_{\ldem_{q,q'}}$ for $\autA_{q,q'}$ and~$\elu$.
We construct the unravelling $\strC$ with a map $\homof \colon \domC \to \domA$, starting from $\domC \deff \{\varepsilon\}$ and $\homof(\varepsilon) \deff \ela$, by repeatedly processing vertices: a vertex stays \emph{active} until processed, in first-in first-out order; $\strC$ is the limit of this run.
See the appendix for correctness.
\noindent Processing an active vertex~$\elv$ takes two stages, after which $\elv$ becomes processed and every fresh vertex active.
First, $\elv$ inherits from~$\homof(\elv)$ its atomic propositions and its atomic self-loops.
Second, we attach a \emph{branch} for every demand~$\delta$ satisfied by $\homof(\elv)$ whose fixed witness $\xi_{\delta}^{\homof(\elv)}$ is not stationary.
Let $\ela_0, \sigma_1, \ldots, \sigma_{n}, \ela_{n}$ be the \emph{reduction} of that witness, obtained by deleting every world that is equal to its predecessor (with a letter preceding it). Note that $\ela_0 = \homof(\elv)$ and $n \geqslant 1$.
Put $j \deff \kappa(\delta)$, $\elv_0 \deff \elv$ and $\elv_i \deff \elv j 0^{i-1}$, and add the fresh vertices $\elv_1, \ldots, \elv_{m}$ with $\homof(\elv_i) \deff \ela_i$, where $m \deff n$ for a walk demand and $m \deff n{-}1$ for a cycle demand (then $\ela_{n} = \ela_0$, and $\elv_{n}$ denotes~$\elv$).
As consecutive $\ela_i$ differ, each $\sigma_i$ is an action~$\pe$ or its converse~$\inv{\pe}$; for every $i < n$ we add to $\pe^{\strC}$ the unique pair making $(\elv_i, \elv_{i+1}) \in \sem{\sigma_{i+1}}_{\strC}$.
For a cycle demand, the last of these pairs joins~$\elv$ with its anchored descendant~$\elv j 0^{m-1}$ and is thus a back edge.
\end{proof}

\noindent From now on, it will be convenient to let a walk discharge a cycle demand in one~step~rather than spell out the required cycle. We thus pass to the \emph{extension} $\hat{\autA} \deff(Q, \hat{\Delta})$ of $\autA$, where~$\hat{\Delta}$~is $\Delta \cup \{ (q, \ldem_{q,q'}?, q') \mid q, q' \in Q \}$. In models of $\varphi$ this is harmless:~for~all~worlds~$\elu, \elv$ and~$q, q' \in Q$, the NFA $\autA_{q,q'}$ has a witnessing walk from $\elu$ to $\elv$ iff $\hat{\autA}_{q,q'}$ does. Indeed, $\Delta \subseteq \hat{\Delta}$; conversely, replace every fresh step $\ela, \ldem_{p,p'}?, \ela$ by a witnessing cycle for $\autA_{p,p'}$ at $\ela$, which exists~as~$\varphi$~holds.

\vspace{0.5em}
\noindent \textbf{Encoding cacti.}\,
We will decide the existence of $\varphi$-cacti~$\strC$ by testing non-emptiness of a carefully designed 2APTA running over their tree~encodings~$\enc(\strC)$.
These are complete infinite $(2|Q|^2{+}1)$-ary $2^{\Gamma_{\varphi}}$-vertex-labelled trees for
$\Gamma_{\varphi} \deff \Sigma_{\varphi} \cup \Gamma^{\mathrm{dir}}_{\varphi} \cup \{ \NULL, \pred{slf}_{\sigma}, \pred{bck}_{\sigma}, \pred{bck}_{\inv{\sigma}} \mid \sigma \in \Sigma_{\varphi} \cap \PR \}$ and \emph{direction letters} $\Gamma^{\mathrm{dir}}_{\varphi} \deff \{ \pD_{\varepsilon}, \pD_{0}, \ldots, \pD_{2|Q|^2} \}$.
In our tree~encoding~$\enc$, a~vertex~$\elv$ carries its atomic propositions and the letters recording the actions between its parent and itself; its direction letter, \ie $\pD_{i}$ if it is the $i$-th child of its parent and $\pD_{\varepsilon}$ if it is the root; a letter $\pred{slf}_{\sigma}$ for each self-loop $(\elv,\elv) \in \sem{\sigma}_{\strC}$; and, if $\elv$ is an anchored descendant, a letter $\pred{bck}_{\sigma}$ for its $\sigma$-back edge (this determines the edge uniquely, since anchors are unique). Vertices outside the intended domain carry $\NULL$ alone. 
Figure~\ref{fig:encoding} visualises~$\enc$,~which~the~appendix~defines~in~full.

\vspace{-0.75em}
\begin{figure}[h]
  \centering
  \tikzset{
    wld/.style   = {circle, draw=black!70, fill=white, line width=0.7pt,
                    inner sep=0pt, minimum size=7.5pt},
    rt/.style    = {wld, minimum size=9.5pt, line width=1pt, draw=black!85,
                    fill=black!12},
    nul/.style   = {circle, draw=black!45, densely dashed, fill=white,
                    line width=0.6pt, inner sep=0pt, minimum size=6pt},
    tedge/.style = {->, >=stealth, draw=black!55, semithick},
    zedge/.style = {->, >=stealth, draw=blue!55!black, thick},
    bedge/.style = {->, >=stealth, draw=red!70!black, thick,
                    dash pattern=on 3pt off 2pt},
    elab/.style  = {font=\scriptsize, inner sep=1.5pt, fill=white},
    adr/.style   = {font=\footnotesize, text=black!55, inner sep=1.6pt},
    prp/.style   = {font=\footnotesize, text=black!85, inner sep=1.6pt},
    lbl/.style   = {font=\scriptsize, inner sep=1.5pt, text=black!85},
    nlbl/.style  = {font=\scriptsize, inner sep=1.5pt, text=black!50},
    bush/.style  = {draw=black!30, fill=black!7, rounded corners=1pt,
                    line width=0.5pt},
    nbush/.style = {draw=black!40, densely dashed, fill=white,
                    rounded corners=1pt, line width=0.5pt},
    tlink/.style = {draw=black!55, semithick},
    zlink/.style = {draw=blue!55!black, thick},
    slink/.style = {draw=black!40, semithick},
    nlink/.style = {draw=black!45, densely dashed, semithick},
  }
  \def\dirhl#1{\textcolor{orange!75!black}{\mathbf{#1}}}
  \def\dirlet#1{\textcolor{orange!75!black}{#1}}
  \def\bushat#1{\draw[bush] ($(#1.south)+(0,-0.03)$) -- ++(-0.36,-0.58)
                            -- ++(0.72,0) -- cycle;}
  \def\subtree#1#2#3{%
    \coordinate (sbt) at ($(#1)+(#2,#3)$);
    \draw[slink] (#1) -- (sbt);
    \draw[bush] (sbt) -- ++(-0.36,-0.58) -- ++(0.72,0) -- cycle;}
  \def\nullkid#1#2#3{%
    \coordinate (nkd) at ($(#1)+(#2,#3)$);
    \draw[nlink] (#1) -- (nkd);
    \node[nul] at (nkd) {};
    \draw[nbush] ($(nkd)+(0,-0.13)$) -- ++(-0.20,-0.32) -- ++(0.40,0) -- cycle;}
  \makebox[\textwidth]{%
    \begin{tikzpicture}[x=1cm, y=1cm, baseline, every node/.style={transform shape}]
      \node[rt]  (r)    at (1.70, 0)    {};
      \node[wld] (v1)   at (0.35,-1.20) {};
      \node[wld] (v2)   at (3.05,-1.20) {};
      \node[wld] (v20)  at (3.05,-2.40) {};
      \node[wld] (v200) at (3.05,-3.60) {};
      \bushat{v1}
      \subtree{r}{-0.35}{-1.20}
      \subtree{v2}{-1.30}{-0.90}
      \subtree{v20}{-1.30}{-0.90}

      \draw[tedge] (v1)   -- node[elab, pos=0.45] {$\ps$} (r);
      \draw[tedge] (r)    -- node[elab, pos=0.45] {$\pe$} (v2);
      \draw[zedge] (v2)   -- node[elab, midway]   {$\pr$} (v20);
      \draw[zedge] (v200) -- node[elab, midway]   {$\ps$} (v20);
      \draw[tedge] (v20)  to[out=-28,in=28,looseness=9] (v20);
      \node[elab, anchor=west] at ([xshift=13pt]v20.east) {$\pe$};
      \draw[bedge] (v200) to[out=-6,in=-20,looseness=1.55]
                   node[elab, pos=0.35] {$\pe$} (r);

      \node[adr, anchor=south]      at ([yshift=2pt]r.north)                {$\varepsilon$};
      \node[adr, anchor=east]       at ([xshift=-2pt]v1.west)               {$1$};
      \node[adr, anchor=east]       at ([xshift=-2pt]v2.west)               {$2$};
      \node[adr, anchor=east]       at ([xshift=-2pt]v20.west)              {$2\dirhl{0}$};
      \node[adr, anchor=north]      at ([yshift=-2pt]v200.south)            {$2\dirhl{00}$};

      \node[prp, anchor=south west] at ([shift={(3pt,-1pt)}]r.north east)   {$\pP, \ldem_{q,q'}$};
      \node[prp, anchor=north west] at ([shift={(3pt,1pt)}]v2.south east)   {$\pQ$};
      \node[prp, anchor=east]       at ([xshift=-5pt]v200.west)             {$\pP$};
    \end{tikzpicture}%
    \hfill
    \raisebox{-2.5cm}{\LARGE\scalebox{2.5}{$\leadsto$}}%
    \hfill
    \begin{tikzpicture}[x=1cm, y=1cm, baseline, every node/.style={transform shape}]
      \node[rt]  (u)    at (1.45, 0)    {};
      \node[wld] (u1)   at (-0.15,-1.20) {};
      \node[wld] (u2)   at (3.05,-1.20) {};
      \node[wld] (u20)  at (3.05,-2.40) {};
      \node[wld] (u200) at (3.05,-3.60) {};
      \bushat{u1}
      \subtree{u}{-0.10}{-1.30}
      \subtree{u2}{-1.30}{-0.90}
      \subtree{u20}{-1.30}{-0.90}

      \draw[tlink] (u)   -- (u1);
      \draw[tlink] (u)   -- (u2);
      \draw[zlink] (u2)  -- (u20);
      \draw[zlink] (u20) -- (u200);

      \nullkid{u}{0.55}{-0.95}
      \nullkid{u1}{0.72}{-0.85}
      \nullkid{u2}{0.65}{-0.45}
      \nullkid{u20}{0.65}{-0.45}
      \node[nlbl] at (1.65,-0.95) {$\ldots$};
      \node[nlbl] at (0.92,-2.05) {$\ldots$};
      \node[nlbl] at (4.05,-1.65) {$\ldots$};
      \node[nlbl] at (4.05,-2.85) {$\ldots$};
      \draw[nlink] (u200) -- ++(0,-0.42);
      \node[nlbl]              at (3.05,-4.22) {$\ldots$};
      \node[nlbl, anchor=west] at (3.32,-4.22) {$\{\NULL\}$};

      \node[adr, anchor=south]      at ([yshift=2pt]u.north)               {$\varepsilon$};
      \node[adr, anchor=east]       at ([xshift=-2pt]u1.west)              {$1$};
      \node[adr, anchor=east]       at ([xshift=-2pt]u2.west)              {$2$};
      \node[adr, anchor=east]       at ([xshift=-2pt]u20.west)             {$2\dirhl{0}$};
      \node[adr, anchor=east]       at ([xshift=-2pt]u200.west)            {$2\dirhl{00}$};

      \node[lbl, anchor=west] at ([xshift=5pt]u.east)    {$\{\dirlet{\pD_{\varepsilon}}, \pP, \ldem_{q,q'}\}$};
      \node[lbl, anchor=west] at ([xshift=5pt]u1.east)   {$\{\dirlet{\pD_{1}}, \inv{\ps}\}$};
      \node[lbl, anchor=west] at ([xshift=5pt]u2.east)   {$\{\dirlet{\pD_{2}}, \pe, \pQ\}$};
      \node[lbl, anchor=west] at ([xshift=5pt]u20.east)  {$\{\dirlet{\pD_{0}}, \pr, \pred{slf}_{\pe}\}$};
      \node[lbl, anchor=west] at ([xshift=5pt]u200.east) {$\{\dirlet{\pD_{0}}, \inv{\ps}, \pP, \pred{bck}_{\pe}\}$};
    \end{tikzpicture}%
  }
  \vspace{-1em}
  \caption{A visual representation of a $\varphi$-cactus~$\strC$ together with its tree encoding $\enc(\strC)$.}
  \label{fig:encoding}
\end{figure}

\vspace{-1.25em}
\begin{lemma}\label{lemma:cactus-automaton}
  There exist a map~$\enc$ from $\varphi$-cacti to complete infinite $(2|Q|^2{+}1)$-ary $2^{\Gamma_{\varphi}}$-vertex-labelled trees and a 2APTA~$\autT_{\varphi}$ that accepts exactly the trees~$\enc(\strC)$ for $\varphi$-cacti~$\strC$.~Also, both the construction and the non-emptiness test for $\autT_{\varphi}$ can be done in time~exponential~in~$|\varphi|$.
\end{lemma}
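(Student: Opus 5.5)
The plan is to build $\autT_{\varphi}$ as the intersection of a few 2APTA, each checking one aspect of the encoding or one condition of the $\varphi$-cactus definition. Intersection of 2APTA is just a conjunction at the initial state, so the state counts add up. Emptiness then follows from Vardi's reduction of 2APTA to one-way nondeterministic parity automata, which is exponential in the number of states and polynomial in the alphabet.

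Fix $\enc(\strC)$ as described. Vertices of $\domC$ are labelled by their propositions, their direction letter, the letters for the parent edge, the $\pred{slf}_\sigma$ letters and the $\pred{bck}_\sigma$ letters. All other vertices carry $\{\NULL\}$. This map is injective on $\varphi$-cacti, and $\strC$ can be read back from the tree. The components of $\autT_{\varphi}$ are as follows.

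\begin{enumerate}
\item A \emph{well-formedness} automaton, with polynomially many states. It checks that direction letters are consistent, that $\NULL$ vertices are closed downward, that the root is not $\NULL$, and that every non-root non-$\NULL$ vertex carries a parent-edge letter. It also checks that $\pred{bck}$ letters occur only on anchored descendants: non-$\NULL$ vertices whose $0$-child is $\NULL$ and that lie on a $0$-path below a non-zero child.
\item A \emph{finiteness} automaton for \ref{phicactus:fini}. It rejects infinite $0$-paths of non-$\NULL$ vertices with a Büchi-type condition, and checks that the root has a $\NULL$ $0$-child.
\item A \emph{Boolean} automaton for \ref{phicactus:bool}. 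This is a purely local check at each vertex.
\end{enumerate}

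The heart is \ref{phicactus:walk} and \ref{phicactus:cycle}. For positive demands, at each vertex $\elv$ labelled $\wdem_{q,q'}$ the automaton nondeterministically guesses either a stationary witness or an anchored one. A stationary witness is checked with states $(p,\text{here})$: it simulates $\autA$ using only tests and $\pred{slf}$ letters at $\elv$. An anchored witness is checked by moving into direction $\kappa(\delta)$ and then along $0$-children. At each step it simulates $\autA$ on the letter read from the edge labels (parent-edge letters, reversed when walking down), interleaved with stationary steps at each vertex. For cycle demands, it finally uses the $\pred{bck}$ letter at the anchored descendant to return to $\elv$. Consecutive duplicates in the walk are handled by allowing stationary $\autA$-moves at the current vertex. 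Since the walk is anchored, a finite set of states of the form (automaton state, phase) suffices, and the acceptance condition forces termination.

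The step I expect to be the main obstacle is the negative demands. When $\elv$ lacks $\wdem_{q,q'}$, or lacks $\ldem_{q,q'}$ for bannable $\autA_{q,q'}$, we must certify that \emph{no} walk or cycle of $\autA_{q,q'}$ exists anywhere in the cactus. Such walks can wander arbitrarily far and use back edges. I would use alternation and duality: a universal automaton explores all moves of every walk starting at $\elv$ in state $q$ and rejects if state $q'$ is reached at $\elv$ (cycle case) or anywhere (walk case).

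\begin{itemize}
\item \emph{Tree edges} are easy: move to the parent or a child, reading parent-edge letters.
\item \emph{Back edges} are the hard part. From an anchored descendant with $\pred{bck}_\sigma$, the automaton must move up the $0$-path to the non-zero child and then one more step to reach the anchor. Symmetrically, from an anchor it must move down direction $j$ and follow $0$-children to the vertex carrying $\pred{bck}$. Both are finite deterministic navigations implementable with auxiliary states.
\item \emph{Infinite explorations} need no special treatment, since walks are finite. The universal branches simply never reject if they never reach $q'$, so the acceptance condition lets infinite universal branches accept.
\item \emph{The cycle case} needs care: we must recognise returning to the \emph{same} vertex $\elv$, which is not locally identifiable. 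I would handle this as in two-way $\mu$-calculus with loops. Precompute, via the standard loop-summary technique for 2APTA, for each vertex $\elw$ and each pair $(p,p')$ whether some walk from $\elw$ back to $\elw$ goes $p\to p'$. Have the automaton guess these summaries as annotations and verify them by a least-fixpoint style alternating check in both directions: soundness via existential witnesses, completeness via universal refutation over decompositions into subtree excursions and parent excursions, as Vardi does for converse. Negative cycle demands then read off the annotation at $\elv$.
\end{itemize}

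All components have polynomially many states in $|Q|$ and $|\varphi|$, with alphabet $2^{\Gamma_\varphi}$ exponential. Hence the construction is exponential and the emptiness test is exponential.
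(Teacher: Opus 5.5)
Your handling of the shape conditions, the positive demands and the negative walk demands (dualising an existential search for a walk) is sound. The step the paper singles out as the only tricky one, however, does not go through as you propose it: certifying that a vertex lacking a bannable $\ldem_{q,q'}$ has no witnessing cycle.

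Your summaries are indexed by pairs $(p,p')$ of states of $\autA$ and refer to cycles of the cactus. You verify them by splitting a cycle at its visits to the base vertex into stationary steps, subtree excursions and parent excursions, as for trees. On cactus-like structures this is incomplete. Let $\elv$ lie strictly inside the anchored path from an anchor~$\elu$ to its anchored descendant~$\elw$. The cycle that descends from $\elv$ to $\elw$, crosses the back edge to~$\elu$, and descends again to~$\elv$ leaves $\elv$ through its $0$-child and first returns through its parent. It is neither kind of excursion and admits no split at~$\elv$; the cases $\elv = \elu$ and $\elv = \elw$ are analogous, with the back edge at one end. A universal refutation over your decompositions therefore accepts trees in which all vertices of an anchored cycle reading a word of $\autA_{q,q'}$ lack $\ldem_{q,q'}$. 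So it does not accept exactly the encodings of $\varphi$-cacti.

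Moreover, an alternating automaton over the fixed alphabet $2^{\Gamma_{\varphi}}$ cannot ``guess annotations'' and later ``read them off at~$\elv$''. Different copies visiting a vertex guess independently, so summaries must either be in the input or be replaced by an on-the-fly existential least-fixpoint check that you then dualise.

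What is missing is the cactus-specific shape of cycles, together with a way to certify a full traversal of an anchored cycle without ever recognising the return to~$\elv$. The paper does this as follows.
\begin{itemize}
\item It takes the $\ldem$ letters of the input themselves as the summaries, working with exact $\varphi$-cacti.
\item It lets $\hat{\autA}$ absorb every detour into a single $\ldem_{p,p'}?$-test.
\item It checks soundness of the labels by~\ref{encready:A5} and completeness by the local closure conditions \ref{encready:B0}--\ref{encready:C4}.
\item Completeness suffices by Observation~\ref{obs:simple-cycles-are-anchored}: after shortcutting, a cycle is tree-like, a back-edge cycle, or one traversal of an anchored cycle. Such a traversal is certified by \ref{encready:C3}/\ref{encready:C4}: from~$\elv$ one copy descends to the anchored descendant and another, running the reversed automaton, climbs to the anchor, the two being glued by the $\pred{bck}$ letter.
\end{itemize}
Negative cycle demands then amount to the absence of the letter.

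Alternatively, you can rescue your own route by indexing the summaries by the full state space of your tree-walking simulator, including its back-edge navigation modes. The climb from $\elw$ back up to~$\elu$ passes $\elv$ in navigation mode, which restores a split point and makes the tree decomposition complete again. This has to be made explicit and checked.

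A smaller slip concerns positive cycle demands whose anchored path is just $\elv, \elv j$ with $\elv j$ maximal. There the closing step may use the tree edge instead of the back edge, so insisting on a $\pred{bck}$ letter rejects encodings of genuine $\varphi$-cacti; \ref{encready:A5} allows both.
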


The construction of~$\enc$ and~$\autT_{\varphi}$ (deferred to the appendix) is mostly classical, following the logic-to-automata translations for the $\mu$-calculus~\cite{Vardi98} and similar logics. The only tricky part is the verification of the witnessing cycle condition \ref{phicactus:cycle}. Fortunately, in $\varphi$-cacti every non-trivial cycle decomposes into more ``basic'' ones, readily detectable by tree automata~and~$\hat{\autA}$. The $\ldem$ labels witnessing such cycles then let us simulate their traversal, certifying that no forbidden cycle arises anywhere, \ie no cycle witnessing a bannable~$\autA_{q,q'}$~at~an~element~that~fails~$\ldem_{q,q'}$.
\begin{remark}\label{remark:general-loop-pdl}
  Nothing so far in this section is specific to our fragment (taking all $\leftrightarrows$ to be $\iff$). Lemma~\ref{lemma:cactus-automaton} and its proof apply verbatim to full~$\LoopPDL$, giving a modern and far more detailed rendering of Danecki's sketched $\ExpTime$-completeness proof for general satisfiability~\cite{DaneckiLoopPDL84}.
\end{remark}

All the developments so far concern the general satisfiability problem rather than the finite one.
Worse still, Lemma~\ref{lemma:model-to-cactus} does not preserve finiteness: applied to a finite model $(\strA, \ela) \models \varphi$, it returns a cactus~$\strC$ that is almost always infinite.
A natural remedy is to ``fold'' $\strC$ back into a finite model: cut its infinite branches and redirect the resulting dangling edges to ancestors.
Intuitively this should work, as along every infinite path of~$\strC$ some world of~$\strA$ recurs infinitely often.
Unfortunately, it turns out that not every $\varphi$-cactus folds into~a~finite~model~(Fig.~\ref{fig:no-folding}).

\vspace{-0.25em}
\begin{figure}[h]
  \centering

  \tikzset{
    wld/.style     = {circle, draw=black!70, fill=white, line width=0.6pt,
                      inner sep=0.5pt, minimum size=17pt, font=\scriptsize},
    rt/.style      = {wld, line width=1.1pt, draw=black!90},
    aut/.style     = {circle, draw=black!70, fill=white, line width=0.6pt,
                      inner sep=0.5pt, minimum size=17pt, font=\scriptsize},
    tedge/.style   = {->, >=stealth, draw=black!55, semithick},
    cutedge/.style = {->, >=stealth, draw=black!28, semithick},
    bedge/.style   = {->, >=stealth, draw=red!70!black, thick,
                      dash pattern=on 3pt off 2pt},
    cutmark/.style = {draw=red!70!black, line width=1pt, line cap=round},
    elab/.style    = {font=\tiny, inner sep=1.3pt, fill=white, midway},
    alab/.style    = {font=\tiny, inner sep=1.3pt, fill=black!4, midway},
    panel/.style   = {rounded corners=2pt, draw=black!22, fill=black!4,
                      line width=0.5pt},
    dem/.style     = {font=\scriptsize, text=black!80, inner sep=1.5pt},
    bad/.style     = {font=\scriptsize, text=red!70!black, inner sep=1.5pt},
    dots/.style    = {font=\normalsize, text=black!55, inner sep=0pt},
    upar/.style    = {-{Stealth[length=1.4pt, width=1.2pt]}, draw=black!55,
                      line width=0.3pt},
    updots/.style  = {text=black!55, inner sep=0pt, scale=0.5},
    nfbox/.style   = {x=1cm, y=1cm, baseline={(0,0)},
                      every node/.style={transform shape}},
  }

  \makebox[\textwidth][s]{%
    \begin{tikzpicture}[nfbox]
      \node[font=\small, anchor=west, inner sep=0pt] (f) at (0,0.25)
        {$\psi \deff \mlbox{\univprog} \left( \wdem_{q,q'} \land \neg \ldem_{q,q''} \right) \land\, \ldots$};
      \coordinate (P) at ([xshift=0.75cm]f.east);

      \draw[panel] ([yshift=-0.783cm]P) rectangle ([xshift=3.00cm,yshift=0.774cm]P);
      \node[aut] (q)   at ([xshift=0.45cm,yshift=-0.235cm]P) {$q$};
      \node[aut] (qp)  at ([xshift=1.31cm,yshift=0.345cm]P)  {$q'$};
      \node[aut] (qpp) at ([xshift=1.93cm,yshift=-0.235cm]P) {$q''$};

      \draw[tedge] (q)   -- node[alab] {$\pr$} (qp);
      \draw[tedge] (q)   -- node[alab] {$\pr$} (qpp);
      \draw[tedge] (qpp) to[out=-40,in=40,looseness=7] node[alab] {$\pr$} (qpp);
      \node[dem, anchor=north west] at ([xshift=0.10cm,yshift=0.744cm]P) {$\autA$};
    \end{tikzpicture}%
    \hfill
    \begin{tikzpicture}[nfbox]
      \node[rt]   (v0) at (0,0)    {$\varepsilon$};
      \node[wld]  (v1) at (1.26,0) {$1$};
      \node[wld]  (v2) at (2.52,0) {$11$};
      \node[wld]  (v3) at (3.78,0) {$111$};
      \node[dots] (vd) at (4.90,0) {$\cdots$};

      \foreach \v/\arrs in {%
          v0/{-0.1/132,-0.05/108,0/88,0.05/65,0.1/44},
          v1/{-0.15/150,-0.1/132,-0.05/108,0/88,0.05/65},
          v2/{-0.1/132,-0.05/108,0/88,0.05/65,0.1/44},
          v3/{-0.1/132,-0.05/108,0/88,0.05/65,0.1/44}} {
        \foreach \dx/\ang in \arrs
          \draw[upar] ([xshift=\dx cm,yshift=0.04cm]\v.north)
                      -- ++(\ang:0.12);
        \node[updots] at ([yshift=0.2cm]\v.north) {$\ldots$};
      }

      \draw[tedge] (v0) -- node[elab] {$\pr$} (v1);
      \draw[tedge] (v1) -- node[elab] {$\pr$} (v2);
      \draw[tedge] (v2) -- node[elab] {$\pr$} (v3);

      \draw[cutedge] (v3) -- (vd);
      \draw[cutmark] (4.26,-0.17) -- (4.52,0.17);
      \draw[cutmark] (4.26, 0.17) -- (4.52,-0.17);

      \draw[bedge] (v3) to[out=118,in=62,looseness=0.6] node[elab] {$\pr$} (v1);
      \node[bad, anchor=south] at (2.52,0.65)
        {closes an $\pr^{+}$-cycle~\faSkullCrossbones};

      \node[dem, anchor=north] at (2.32,-0.36)
        {every world satisfies $\wdem_{q,q'}$ and $\neg \ldem_{q,q''}$};
    \end{tikzpicture}%
  }
  \vspace{-1em}
  \caption{A satisfiable but not finitely satisfiable $\psi$ in $\autA$-normal form, with an $\autA$ and a $\psi$-cactus.}
  \label{fig:no-folding}
\end{figure}

\vspace{-0.5em}

\noindent Figure~\ref{fig:no-folding} revisits $\mlbox{\pr^*} \left( \mldiamond{\pr}\top \land \neg \lop(\pr^+) \right)$: its $\autA$-normal form~$\psi$ (left, with~$\autA$ in the middle) has no finite model, yet it has the $\psi$-cactus model $(\strC, \varepsilon)$ containing an infinite chain $1^*$ of $\pr$-successors (right).
Cut this chain at a vertex~$1^{n}$ and redirect its outgoing $\pr$-edge to an ancestor.
As no vertex of the chain carries~$\ldem_{q,q''}$, each of them forbids an $\pr^{+}$-cycle, yet the redirected edge closes precisely such a cycle. This contradicts the conjunct $\ldem_{q,q''} \iff \lop(\autA_{q,q''})$ of~$\psi$.
Hence no fold of~$\strC$ is a model of~$\psi$.
This exposes the main obstacle to finite satisfiability~of~$\LoopPDL$: a $\varphi$-cactus may have an infinite branch that no cut can fold, its cycle demands never permitting the resulting cycle.
Ruling out such $\omega$-walks is precisely the job of \emph{foldable} $\varphi$-cacti (only concatenation-closed automata matter here, as cycles~for~stationary~ones~are~treated~``locally'').

\begin{definition}[Foldability]\label{def:foldability}
  A \emph{bad $(q,q')$-walk} is a witnessing walk for $\hat{\autA}_{q,q'}$ whose ends both fail $\ldem_{q,q'}$.
  An $\omega$-walk is \emph{bad} if it factorises as $\xi_1 \xi_2 \ldots$ with all $\xi_i$ or all~$\inv{\xi_i}$~bad $(q,q')$-walks, for some bannable concatenation-closed $\autA_{q,q'}$.
  A structure is \emph{foldable} if~it~lacks~bad~$\omega$-walks.
\end{definition}
For a concatenation-closed $\autA_{q,q'}$, repeating its forbidden cycle keeps it forbidden, so the cycle unfolds into a bad $\omega$-walk.
Moreover, as cycles can be traversed backwards, the definition of bad $\omega$-walks also covers their reversals.
Consequently, a procedure finitising $\varphi$-cacti while maintaining foldability never creates a forbidden cycle, and hence likely preserves satisfaction of~$\varphi$.
Maintaining foldability is the key technical challenge. We meet it with \emph{traces}~(Sec.~\ref{sec:traces}) and the finitisation procedure (Sec.~\ref{sec:soundness}), roughly as in~Fig.~\ref{fig:no-folding} but with the mild twist that some extra edges are needed to regain the witnessing cycles. Before that, we show how foldability applies to finite models and their cacti, and how it yields a decision~procedure~for~$\ourLoopCPDL$.
As some world repeats along every $\omega$-walk of a finite model, closing a cycle,~we~can~prove:
\begin{lemma}\label{lemma:if-finite-then-cactus-foldable}
  For finite $\strA$, the $\varphi$-cactus~$\strC$ constructed in Lemma~\ref{lemma:model-to-cactus} is regular and foldable.
\end{lemma}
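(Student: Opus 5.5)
Regularity is already part of Lemma~\ref{lemma:model-to-cactus}, so the work is foldability. The plan is to transport a bad $\omega$-walk of~$\strC$ along the homomorphism $\homof$ into~$\strA$ and derive a forbidden cycle there.

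The first step is to check that $\homof$ preserves $\hat{\autA}$-witnessing walks together with the cycle-demand labels. Edges (tree edges, self-loops and back edges) map to edges of the same label; this is built into the construction. Atomic propositions are inherited, so in particular $\elv$ satisfies $\ldem_{q,q'}$ iff $\homof(\elv)$ does. Since $(\strA,\ela)\models\varphi$ and the conjunct for bannable $\autA_{q,q'}$ is a biconditional, $\homof(\elv)$ fails $\ldem_{q,q'}$ exactly when $\homof(\elv)$ has no witnessing cycle for $\autA_{q,q'}$. Consequently every bad $(q,q')$-walk of~$\strC$ maps to a witnessing walk of $\hat{\autA}_{q,q'}$ in~$\strA$. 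Using the equivalence between $\hat{\autA}$ and $\autA$ in models of~$\varphi$ stated before Definition~\ref{def:foldability}, its $\ldem$-test steps can be expanded into genuine $\autA$-cycles, which exist because the test holds at the image. The result is a witnessing walk of $\autA_{q,q'}$ in~$\strA$ whose endpoints fail $\ldem_{q,q'}$.

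Now suppose $\strC$ has a bad $\omega$-walk factorising as $\xi_1\xi_2\ldots$, where all $\xi_i$ are bad $(q,q')$-walks for a bannable concatenation-closed $\autA_{q,q'}$. The reversed case is symmetric: the reversal of an $\autA_{q,q'}$-witnessing cycle is a witnessing cycle for the reversed automaton. I would handle it either by reversing the whole argument or by noting that the reversal of a cycle at $\elu$ through the $\xi_i$ is again a cycle at $\elu$ built from $\inv{\xi_i}$ read backwards, which reassembles into forward walks. Let $\elu_i$ be the junction worlds, with $\elu_0$ the start of~$\xi_1$. Since $\strA$ is finite, the pigeonhole principle gives $i<j$ with $\homof(\elu_i)=\homof(\elu_j)$. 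The images of $\xi_{i+1},\ldots,\xi_j$ concatenate, by concatenation-closure, into a single $\autA_{q,q'}$-witnessing walk from $\homof(\elu_i)$ back to itself. This is a witnessing cycle at a world failing $\ldem_{q,q'}$ (being an endpoint of a bad walk), which contradicts the biconditional for the bannable automaton.

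The main obstacle I expect is the bookkeeping in the transport step. Concatenation-closure is stated for $\autA$, while the walks are for~$\hat{\autA}$, so the $\ldem$-test expansion must be carried out before concatenating. The reversed case also needs a careful statement of how reversal interacts with acceptance by $\autA_{q,q'}$. If reversal proves awkward, I would first reverse within $\strA$: the reversed image walk $\inv{\xi_j}\cdots\inv{\xi_{i+1}}$ read backwards is a cycle at the same world, and its forward reading is exactly a concatenation of $\autA_{q,q'}$-walks.
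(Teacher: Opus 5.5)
Your proof is correct and is essentially the paper's argument: pigeonhole on the $\homof$-images of the segment boundaries, concatenation-closedness to get a witnessing cycle at a world failing $\ldem_{q,q'}$, and the bannable biconditional together with~\ref{cactus:prop} for the contradiction, with the reversed case settled, as in the paper, by noting that $\inv{\xi_j}\cdots\inv{\xi_{i+1}}$ is itself a concatenation of bad walks. The only difference is bookkeeping: you expand the $\ldem$-tests into genuine $\autA$-cycles in $\strA$ before concatenating, whereas the paper concatenates directly in $\strC$ and tacitly treats $\hat{\autA}_{q,q'}$ as concatenation-closed, a step your ordering makes explicit.
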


\noindent What is more, foldability is an $\omega$-regular property of tree encodings, and thus we can prove:
\begin{lemma}\label{lemma:foldable-cactus-automaton}
  Lemma~\ref{lemma:cactus-automaton} holds with ``foldable $\varphi$-cactus'' in place of ``$\varphi$-cactus''.
\end{lemma}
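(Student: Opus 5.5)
The plan is to build a second automaton $\autB_{\varphi}$ that accepts exactly the encodings $\enc(\strC)$ of $\varphi$-cacti which contain \emph{no} bad $\omega$-walk, and to take the intersection $\autT_{\varphi} \cap \autB_{\varphi}$. Since 2APTA are closed under intersection at linear cost, the exponential bounds for construction and for the non-emptiness test carry over directly from Lemma~\ref{lemma:cactus-automaton}.

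To obtain $\autB_{\varphi}$, I will first construct a 2APTA $\autB'$ that accepts an encoding iff it \emph{has} a bad $\omega$-walk, and then complement it. Complementing an alternating parity automaton is free: dualise the transitions and shift the priorities. There are polynomially many bannable concatenation-closed pairs $(q,q')$. For each pair and each orientation (all $\xi_i$ bad, or all $\inv{\xi_i}$ bad), the automaton $\autB'$ guesses one such pair and orientation, then guesses the $\omega$-walk on the fly.

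The walk is guessed as follows. The automaton walks through the tree encoding, one vertex at a time, and tracks the current state of $\hat{\autA}$ (run forwards or on reversed letters, depending on the orientation). Each move is a $\Sigma_{\varphi}$-step justified by the labels: parent/child letters, $\pred{slf}_\sigma$, and $\pred{bck}_\sigma$. A back edge is a non-local step between an anchored descendant and its anchor; the automaton simulates it by walking up or down the anchored path, which is a $0$-path preceded by one non-zero step, identified via the direction letters. Test letters $\pP?$ and $\ldem_{p,p'}?$ are checked against the current label. A factorisation point may be declared only when the current state is $q'$ and the current vertex fails $\ldem_{q,q'}$; at that point the automaton resets the state to $q$, and at the first vertex it also checks that $\ldem_{q,q'}$ fails. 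The Büchi condition requires infinitely many factorisation points. This makes $\autB'$ a nondeterministic (existential-only) two-way Büchi automaton of size polynomial in $|Q|$ and $|\Sigma_\varphi|$, hence a 2APTA with index $2$.

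The step I expect to be the main obstacle is the simulation of back edges. The automaton must avoid accepting spurious runs that stay forever inside a single simulated traversal of an anchored path, and must avoid spurious ``infinite'' walks that never make genuine progress. Both are handled by the Büchi condition, which demands infinitely many completed factors. Each completed factor corresponds to a genuine finite bad walk, since every simulated back-edge traversal is a finite $0$-path segment by Condition~\ref{phicactus:fini}.

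One further point needs care: a walk in $\strC$ may repeat a vertex infinitely often while remaining a legitimate $\omega$-walk. A two-way automaton handles this naturally, because it can revisit vertices. Correctness then amounts to a bijection between accepting runs of $\autB'$ and bad $\omega$-walks, and I will check this by unfolding the definitions.
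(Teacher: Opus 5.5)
Your proposal is correct and follows essentially the same route as the paper. The paper also builds a two-way B\"uchi automaton that guesses a bad $\omega$-walk segment by segment: it simulates $\hat{\autA}$ or its reversal, steps to guessed $\sigma$-successors (crossing back edges by climbing or descending the $0$-path of the anchored path), and requires infinitely many segment boundaries at vertices failing $\ldem_{q,q'}$. It then dualises this automaton, broadcasts the result to every vertex, and intersects with $\autT_{\varphi}$.

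Two points you leave implicit are handled explicitly there, and you should add them. First, the broadcast is what lets the bad walk start at an arbitrary vertex rather than at the root; without it you need an initial phase that navigates to a guessed start vertex. Second, dualisation is only sound for root-safe automata, so you must guard every $(\up,\cdot)$ move by the absence of $\pD_{\varepsilon}$. An undefined move at the root acts as $\bot$ both before and after dualising, so without the guard the dual automaton is not the complement.
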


\noindent We can now finally state our main technical contribution: its ``if'' direction is immediate from Lemma~\ref{lemma:if-finite-then-cactus-foldable}, whereas the ``only if'' part is much harder and occupies the~remainder~of~the~paper.
\begin{theorem}\label{thm:main-characterisation}
A regular foldable $\varphi$-cactus exists if and only if $\varphi$ is finitely satisfiable.
\end{theorem}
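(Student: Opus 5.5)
The ``if'' direction is immediate. If $\varphi$ has a finite model $(\strA,\ela)$, then Lemma~\ref{lemma:model-to-cactus} yields a $\varphi$-cactus $\strC$, and Lemma~\ref{lemma:if-finite-then-cactus-foldable} shows it is regular and foldable. Equisatisfiability from Lemma~\ref{lemma:normal-form} transfers the result to the original input. The real work is the converse: from a regular foldable $\varphi$-cactus $\strC$ we must build a finite model of $\varphi$.

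\textbf{Folding construction.} Since $\strC$ is regular, it has finitely many subtrees up to isomorphism. We will cut every branch at a bounded depth and redirect the dangling tree edges to an ancestor with an isomorphic subtree. The bound is chosen with the pigeonhole principle over finitely many \emph{types*}. Such a type must record more than the subtree class. It must also record an invariant, the \emph{trace*}, which summarises for each bannable concatenation-closed $\autA_{q,q'}$ the progress of bad $(q,q')$-walks entering the vertex from above (and, for reversals, leaving it). Roughly, this is the set of pairs $(p,k)$ saying that some incoming walk, ending at a world failing $\ldem_{q,q'}$, reaches this vertex in state $p$ after crossing $k$ obstacles. Here $k$ is capped at a constant derived from regularity and the foldability condition, which via Lemma~\ref{lemma:foldable-cactus-automaton} is witnessed by an accepting run of a parity automaton. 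Two steps follow:
\begin{enumerate}
\item Show that along every infinite branch two vertices $\elu$ strictly above $\elw$ recur with equal subtree class, equal trace, and equal automaton state. Use this pair to cut at $\elw$ and redirect to $\elu$.
\item Define the finite structure $\strF$ as the quotient with these redirected edges. Because a redirected cut can also break anchored witnessing cycles crossing the cut, add the extra edges needed to restore them.
\end{enumerate}

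\textbf{Verification.} Local conditions \ref{phicactus:bool}, positive walk demands \ref{phicactus:walk}, and positive cycle demands \ref{phicactus:cycle} transfer from $\strC$, because witnesses are anchored: each lies within one branch segment, whose copy exists at the redirection target. Stationary bannable automata only see self-loops, which are preserved. Negative walk demands also need an argument. Every walk in $\strF$ lifts to a walk in $\strC$ by unfolding redirections, since the targets have isomorphic subtrees and equal traces. So a forbidden walk in $\strF$ would yield one in $\strC$.

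\textbf{Main obstacle.} The hard part is ruling out forbidden cycles for bannable concatenation-closed $\autA_{q,q'}$. Suppose $\strF$ had such a cycle through a redirected edge. Unfold it repeatedly; by concatenation-closedness every iterate is still a bad $(q,q')$-walk. The aim is to lift these iterates to an infinite concatenation of bad walks in $\strC$, that is, a bad $\omega$-walk, contradicting foldability. The subtle point is that the lifted walk drifts downward. To iterate the lifting one needs each redirection target to ``see'' the same bad prefixes as the cut vertex, which is exactly the equal-trace requirement. I would first try to prove a trace-preservation invariant by induction on the number of redirected edges a walk uses. I expect choosing the right definition of trace and its cap, including the reversed-walk case, to be the main difficulty.
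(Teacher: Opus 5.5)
Your ``if'' direction is correct and matches the paper's (Lemmas~\ref{lemma:model-to-cactus} and~\ref{lemma:if-finite-then-cactus-foldable}). Your architecture for the converse is also essentially that of Lemmas~\ref{lemma:invariant}--\ref{lemma:soundness}: traces as an invariant, a pigeonhole over finitely many types, redirecting an edge to an earlier similar position, extra edges to restore cycle witnesses, and trace preservation by induction on the number of new edges.

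\textbf{The missing lemma.} Everything rests on one fact you do not prove: in a foldable regular $\varphi$-cactus, the number of stacked bad $(q,q')$-walks recorded in vertex traces is \emph{uniformly bounded} (Lemma~\ref{lemma:trace-finiteness}). You need it for the pigeonhole and to dispose of forbidden cycles.
\begin{itemize}
\item If ``capped'' means truncated, the argument collapses. A forbidden cycle at a world $\elx$ failing $\ldem_{q,q'}$, repeated, only drives the capped value up to the cap, which contradicts nothing.
\item If it means a bound that provably holds, nothing you cite supplies it. A co-B\"uchi accepting run of $\autT^{f}_{\varphi}$ only guarantees that each path of the run sees finitely many segment boundaries, not boundedly many.
\item Your fallback, lifting the iterates of the forbidden cycle to one bad $\omega$-walk of $\strC$, fails as stated. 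The rerouting induction turns each finite iterate into \emph{some} walk of $\strC$ with that many bad segments, but it does so by replacing the prefix. So the lifts for different iterates are not extensions of one another, and since $\strC$ is infinite there is no K\"onig-style limit.
\end{itemize}
What is missing is a pumping argument over the regular cactus. The paper generates all walks certifying trace values by a context-free grammar whose nonterminals are indexed by subtree types and automaton states; the grammar handles detours and traversals of anchored cycles in both directions. It then applies Ogden's lemma with the $\sharp$-separators marked. A pumpable factor containing $\sharp$ loops into a bad $\omega$-walk: all segments are bad if it is the left factor, and all reversed segments are bad if it is the right one. This is precisely why Definition~\ref{def:foldability} includes reversals. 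Once the bound is available, no lifting is needed. The inclusion $\vtr^{\strF}_{q,q'} \subseteq \vtr^{\strC}_{q,q'}$, together with the fact that repeating a forbidden cycle at $\elx$ puts $(m,q)$ into $\vtr^{\strF}_{q,q'}(\elx)$ for every $m$, already gives the contradiction.

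\textbf{Further steps that fail as written.}
\begin{itemize}
\item The trace must range over \emph{all} walks of the current structure ending at the vertex, not only those entering from above. The rerouting step replaces the prefix up to the last new edge by an arbitrary walk ending at the twin vertex, and such walks can arrive via back edges, other branches, or earlier new edges.
\item Your argument for negative walk demands breaks as soon as a walk climbs out of the target's subtree. There, the context above the target differs from that above the cut vertex, so the walk does not lift via subtree isomorphism. The working argument inducts on the first new edge and uses that the letters $\wdem_{q,q'}$ are propositions, hence part of the preserved type.
\item Your claim that self-loops are preserved needs a side condition you omit. If the redirection target is the parent of the cut vertex, the redirected edge becomes a self-loop. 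This changes types and may violate a negated atomic loop $\neg\lop(\pe)$.
\end{itemize}
The paper handles the last point by pairing edges $(\elv,\elv i)$ and $(\elw,\elw i)$ into non-zero children in direction $i = \kappa(\delta)$ of a common demand~$\delta$, with the same first letter in the fixed witnesses. It also requires $\elw \notin \elv i 0^{*}$. The common demand and first letter are what let $\elw$'s witness be rerouted through $\elv i$ and closed by the quasi-back edge, while $\elw \notin \elv i 0^{*}$ rules out new self-loops.
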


\noindent With Theorem~\ref{thm:main-characterisation} in hand, we reap the benefits of our automata-based machinery and obtain the following exponential-time decision procedure for finite satisfiability of $\ourLoopCPDL$. 
Its correctness follows from Theorem~\ref{thm:main-characterisation}: every regular foldable $\varphi$-cactus is encoded by a tree accepted by $\autT^{f}_{\varphi}$ and, conversely, if the language of $\autT^{f}_{\varphi}$ is non-empty, then by Rabin's basis 
theorem~\cite[Thm.~3.7]{PerrinPin04InfiniteTrees} it contains a \emph{regular} tree, which encodes a regular foldable $\varphi$-cactus.

\begin{procedurebox}{Deciding finite satisfiability of $\ourLoopCPDL$ in $\ExpTime$}
\begin{namedlist}[3.]
  \item[\textbf{1.}] Given an input $\ourLoopCPDL$-formula~$\varphi'$, compute a semiautomaton~$\autA$ and a formula~$\varphi$ in $\autA$-normal form. \why{Lemma~\ref{lemma:normal-form}; polynomial time, and $|\varphi|$ is polynomial in~$|\varphi'|$.}
  \item[\textbf{2.}] Construct a 2APTA~$\autT^{f}_{\varphi}$ recognising the encodings of foldable $\varphi$-cacti. \why{Lemmas \ref{lemma:cactus-automaton}\&\ref{lemma:foldable-cactus-automaton}.}
  \item[\textbf{3.}] Return \textsc{Yes} if $\autT^{f}_{\varphi}$ has non-empty language, and \textsc{No} otherwise. \why{Lemma~\ref{lemma:foldable-cactus-automaton}; exp.~time~in~$|\varphi|$.}
\end{namedlist}
\end{procedurebox}

\noindent As every step of the presented procedure runs in exponential time in $|\varphi|$, we can conclude:
\begin{theorem}\label{thm:main-result}
  The finite satisfiability problem for $\ourLoopCPDL$ is $\ExpTime$-complete.
\end{theorem}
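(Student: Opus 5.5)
The plan is to split Theorem~\ref{thm:main-result} into the lower and the upper bound.
The lower bound is inherited from $\PDL$. As recalled in the introduction, finite satisfiability is already $\ExpTime$-hard for $\PDL$ with only $\mldiamond{\pe}$ and $\mldiamond{\pe^*}$ over a single action~\cite{FischerL79}. Such formul{\ae} can be put in negation normal form with only a linear blow-up. They use no loops at all, so they are $\ourLoopCPDL$-formul{\ae}, and hardness transfers verbatim.

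For the upper bound we analyse the boxed procedure, assuming Theorem~\ref{thm:main-characterisation}. By Lemma~\ref{lemma:normal-form}, the input~$\varphi'$ is turned in polynomial time into a semiautomaton~$\autA$ and a formula~$\varphi$ in $\autA$-normal form. The formula $\varphi$ is finitely equisatisfiable with~$\varphi'$ and has length polynomial in~$|\varphi'|$.

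By Lemmas~\ref{lemma:cactus-automaton} and~\ref{lemma:foldable-cactus-automaton}, we then build the 2APTA~$\autT^{f}_{\varphi}$ and test its non-emptiness, both in time exponential in~$|\varphi|$. Since $|\varphi|$ is polynomial in~$|\varphi'|$, the whole procedure runs in time exponential in~$|\varphi'|$.

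Correctness reduces to two implications.
\begin{itemize}
\item Suppose $\varphi'$, and hence $\varphi$, is finitely satisfiable. Theorem~\ref{thm:main-characterisation} (concretely, Lemmas~\ref{lemma:model-to-cactus} and~\ref{lemma:if-finite-then-cactus-foldable}) gives a regular foldable $\varphi$-cactus~$\strC$. Then $\enc(\strC)$ is accepted by $\autT^{f}_{\varphi}$, so its language is non-empty.
\item Suppose the language of $\autT^{f}_{\varphi}$ is non-empty. We first convert the 2APTA to an equivalent one-way nondeterministic parity automaton; this is the standard exponential step, already accounted for in the emptiness test. Rabin's basis theorem then yields a regular accepted tree. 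By Lemma~\ref{lemma:foldable-cactus-automaton}, this tree equals $\enc(\strC)$ for some foldable $\varphi$-cactus~$\strC$. The "only if" direction of Theorem~\ref{thm:main-characterisation} then makes $\varphi$, and hence~$\varphi'$, finitely satisfiable.
\end{itemize}

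The one point needing care is that a regular tree encoding really yields a \emph{regular} cactus. I would argue this from the definition of~$\enc$. Each vertex's label records its direction letter and the back-edge letters $\pred{bck}_{\sigma}$. So the labelled subtree rooted at a vertex, read together with its anchors, determines the subcactus below it up to isomorphism, back edges included. Hence finitely many distinct subtrees of $\enc(\strC)$ give finitely many subcacti.

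The only real obstacle is the "only if" half of Theorem~\ref{thm:main-characterisation}: folding a regular foldable cactus into a finite model. Everything else above is bookkeeping.
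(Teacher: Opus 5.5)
Your proposal is correct and follows essentially the paper's own proof. The lower bound is inherited from $\PDL$ (equivalently, global unimodal satisfiability) via the finite model property, and the upper bound goes through Lemma~\ref{lemma:normal-form}, the automaton $\autT^{f}_{\varphi}$, Rabin's basis theorem and Theorem~\ref{thm:main-characterisation}; the regularity of the decoded cactus that you single out is exactly the paper's Claim~\ref{claim:decoding-preserves-regularity}, and your detour through a one-way automaton is harmless, since the paper applies the basis theorem to 2APTA languages directly.
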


\section{Traces: A Folding-Friendly Invariant}\label{sec:traces}

By Definition~\ref{def:foldability}, a prospective fold-operation turning a $\varphi$-cactus
into a finite model is safe exactly when it creates no bad $\omega$-walk. Being infinitary,
this property is hard to maintain fold by fold, so we replace it by a finitary strengthening.
The key idea is to label every world with a summary of the walks reaching it, recording how
far each of them has already progressed towards a bad $\omega$-walk. We call such labels
\emph{traces}, and it is they, rather than foldability itself, that we maintain throughout
the finitisation process, fully defined in the follow-up section. 

\begin{definition}[Trace]\label{def:trace}
  Consider a bannable and concatenation-closed NFA $\autA_{q, q'}$.
  The~\emph{trace} $\tr_{q,q'}(\xi)$ of a walk~$\xi$ in a structure $\strA$ is 
  defined as the
  set of all pairs $(m, p) \in \N \times Q$ such that $\xi$ factorises as
  $\xi_0 \xi_1 \cdots \xi_m \xi_{m+1}$, where $\xi_1, \ldots, \xi_m$ are bad
  $(q, q')$-walks, the last world of $\xi_0$ satisfies $\neg\ldem_{q,q'}$
  and $\xi_{m+1}$ is a witnessing walk for $\hat{\autA}_{q, p}$.
  The \emph{vertex trace} $\vtr^{\strA}_{q,q'}(\elw)$ of a world~$\elw$ of a
  structure~$\strA$ is the union of the traces $\tr_{q, q'}(\xi)$ of all walks~$\xi$
  in~$\strA$ ending~in~$\elw$.
\end{definition}

\vspace{-1.25em}
\begin{figure}[h]
  \centering
  \begin{tikzpicture}[
      x=1cm, y=1cm, every node/.style={transform shape},
      seg/.style     = {rounded corners=1.5pt, draw=black!22, fill=black!5,
                        line width=0.5pt},
      wld/.style     = {circle, draw=black!55, fill=white, line width=0.5pt,
                        inner sep=0pt, minimum size=5pt},
      brd/.style     = {circle, draw=black!85, fill=black!18, line width=0.9pt,
                        inner sep=0.5pt, minimum size=15pt, font=\tiny},
      endp/.style    = {circle, draw=black!85, fill=white, line width=0.9pt,
                        inner sep=0.5pt, minimum size=15pt, font=\tiny},
      mid/.style     = {inner sep=0pt, outer sep=0pt, minimum size=0pt},
      tedge/.style   = {->, >=stealth, draw=black!55, semithick},
      dedge/.style   = {->, >=stealth, draw=black!32, semithick},
      cedge/.style   = {->, >=stealth, draw=red!70!black, semithick},
      bedge/.style   = {->, >=stealth, draw=red!70!black, semithick,
                        dash pattern=on 2.4pt off 1.8pt},
      skull/.style   = {font=\scriptsize, text=red!70!black, inner sep=0.8pt,
                        fill=black!5},
      dem/.style     = {font=\scriptsize, text=black!80, inner sep=1.2pt},
      tag/.style     = {font=\scriptsize, text=black!60, inner sep=1.2pt},
      bad/.style     = {font=\footnotesize, text=red!70!black, inner sep=1.2pt},
      dots/.style    = {font=\footnotesize, text=black!55, inner sep=0pt},
      elab/.style    = {font=\tiny, text=black!55, inner sep=1.3pt,
                        fill=black!5, midway},
    ]
    \def\segbox#1#2{\draw[seg] (#1,-0.60) rectangle (#2,0.48);}
    \def\skullat#1{\node[skull] at #1 {\faSkullCrossbones};}

    \segbox{0.00}{2.35}   \segbox{2.35}{5.55}
    \segbox{7.55}{10.75}  \segbox{10.75}{13.70}

    \node[endp] (s)  at ( 0.48,0) {};
    \node[wld]  (a1) at ( 1.42,0) {};
    \node[brd]  (b0) at ( 2.35,0) {$0$};
    \node[wld]  (c1) at ( 3.42,0) {};
    \node[wld]  (c2) at ( 4.49,0) {};
    \node[brd]  (b1) at ( 5.55,0) {$1$};
    \node[brd]  (b2) at ( 7.55,0) {$m{-}1$};
    \node[wld]  (e1) at ( 8.62,0) {};
    \node[wld]  (e2) at ( 9.69,0) {};
    \node[brd]  (b3) at (10.75,0) {$m$};
    \node[wld]  (f1) at (11.58,0) {};
    \node[wld]  (f2) at (12.40,0) {};
    \node[endp] (x)  at (13.22,0) {$\elw$};

    \draw[tedge] (s)  -- (a1);   \draw[cedge] (a1) -- (b0);
    \draw[tedge] (b0) -- node[elab] {$\ldots$} (c1);
    \draw[tedge] (c1) -- (c2);
    \draw[cedge] (c2) -- (b1);
    \draw[dedge] (b1) -- (6.10,0);
    \node[dots]  at (6.55,0) {$\cdots$};
    \draw[cedge] (7.00,0) -- (b2);
    \draw[tedge] (b2) -- node[elab] {$\ldots$} (e1);
    \draw[tedge] (e1) -- (e2);
    \draw[cedge] (e2) -- (b3);
    \draw[tedge] (b3) -- node[elab] {$\ldots$} (f1);
    \draw[tedge] (f1) -- (f2);   \draw[tedge] (f2) -- (x);

    \draw[bedge] (c2) to[out=148,in=32,looseness=0.88]
                 node[mid, pos=0.5] (mA) {} (b0);
    \skullat{(mA)}
    \draw[bedge] (e2) to[out=148,in=32,looseness=0.88]
                 node[mid, pos=0.5] (mB) {} (b2);
    \skullat{(mB)}

    \node[tag] at ( 1.17,-0.40) {$\xi_0$: any prefix};
    \node[tag] at ( 3.95,-0.40) {$\xi_1$: bad $\hat{\autA}_{q,q'}$-walk};
    \node[tag] at ( 9.15,-0.40) {$\xi_m$: bad $\hat{\autA}_{q,q'}$-walk};
    \node[tag] at (12.23,-0.40) {$\xi_{m+1}$: $\hat{\autA}_{q,p}$-walk};

    \node[dem, anchor=south] at ( 2.35,0.52) {$\neg \ldem_{q,q'}$};
    \node[dem, anchor=south] at ( 5.55,0.52) {$\neg \ldem_{q,q'}$};
    \node[dem, anchor=south] at ( 7.55,0.52) {$\neg \ldem_{q,q'}$};
    \node[dem, anchor=south] at (10.75,0.52) {$\neg \ldem_{q,q'}$};

    \node[bad, anchor=north] at (6.85,-0.66)
      {redirecting a red edge to any earlier numbered world closes a forbidden cycle~\faSkullCrossbones};
  \end{tikzpicture}
  \vspace{-0.6em}
  \caption{A visualisation of a walk~$\xi$ of~$\strA$ ending in a world~$\elw$, witnessing $(m,p) \in \tr_{q,q'}(\xi)$.}
  \label{fig:trace}
\end{figure}

\vspace{-0.5em}

\noindent Intuitively, $(m, p) \in \vtr^{\strA}_{q,q'}(\elw)$ indicates that 
a world~$\elw$ is reached by a walk that has already accumulated $m$ obstacles 
to folding, leaving $\hat{\autA}$ in state~$p$; the larger the~$m$, the closer 
that walk is to being a prefix of a bad $\omega$-walk. Foldability bounds 
these numbers: a foldable regular $\varphi$-cactus has finitely many vertex 
traces, which is what makes them usable in our approach.

\begin{lemma}\label{lemma:trace-finiteness}
  Let $\strC$ be a foldable regular $\varphi$-cactus.
  Then there is a $\rmB \in \N$ such that
  for every concatenation-closed bannable $\autA_{q,q'}$
  the values of $\vtr^\strC_{q,q'}$ only contain pairs $(m,p)$~with~${m \leqslant \rmB}$.
\end{lemma}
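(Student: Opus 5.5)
We argue by contradiction combined with a pumping/König-style argument that builds a bad $\omega$-walk from unbounded traces. Fix a concatenation-closed bannable $\autA_{q,q'}$; since there are finitely many such pairs, it suffices to find a bound for each and take the maximum. Suppose that for every $\rmB$ some vertex trace contains $(m,p)$ with $m > \rmB$. Then for every $m$ there is a walk $\xi$ in $\strC$ factorising as $\xi_0\xi_1\cdots\xi_m\xi_{m+1}$ with $\xi_1,\dots,\xi_m$ bad $(q,q')$-walks. Dropping $\xi_{m+1}$, we obtain arbitrarily long finite concatenations of bad $(q,q')$-walks. We want to turn these into an infinite concatenation $\zeta_1\zeta_2\cdots$ of bad walks (or of reversals of bad walks), contradicting foldability.

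The key tool is regularity. Since $\strC$ has finitely many subtrees up to isomorphism, we attach to each vertex $\elv$ a finite \emph{type}: the isomorphism class of its subtree, together with enough information about its ancestors. The back edges and the single parent edge only connect $\elv$'s subtree to the anchor region above, so we record the direction letter and the relevant labels of the ancestor chain up to the anchor. This is exactly what the tree encoding $\enc$ sees locally.

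The crucial observation is about where bad walks live. In a cactus, any walk between two vertices must traverse the unique tree path between them, possibly detouring through anchored cycles. So a long concatenation $\xi_1\cdots\xi_m$ either stays within a bounded region or moves far away in the tree. We split into two cases.

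\textbf{Case 1: some boundary vertex recurs.} Some vertex $\elu$ (failing $\ldem_{q,q'}$) occurs as the boundary between $\xi_i$ and $\xi_{i+1}$ and again, later, as a boundary. Then the segment $\xi_{i+1}\cdots\xi_j$ is a cycle at $\elu$ that is a concatenation of bad walks. By concatenation-closedness it is itself a bad $(q,q')$-walk, and repeating it forever yields a bad $\omega$-walk, a contradiction.

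\textbf{Case 2: all boundaries are distinct.} The boundaries are then unbounded in the tree, and we use regularity to pump. We choose two boundary vertices $\elu$ and $\elu'$, both failing $\ldem_{q,q'}$, of the same type, with $\elu'$ in the subtree of $\elu$ (or the reverse, which is where the reversal clause of Definition~\ref{def:foldability} enters). The portion of the walk between them, transported along the subtree isomorphism $\elu \mapsto \elu'$, can then be iterated to produce $\zeta\,\zeta'\,\zeta''\cdots$ with each copy a concatenation of bad walks, hence a bad walk by concatenation-closedness.

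The main obstacle is making the pumping legitimate. The segment between $\elu$ and $\elu'$ may leave the subtree of $\elu$ by going upward or through back edges to the anchor, and then the isomorphism does not transport it. To handle this, I would first normalise the walks to "tree-monotone" ones. Because bad walks compose, we can shortcut any excursion above a boundary vertex by splitting the walk at the topmost visited boundary. I would also use a Ramsey/pigeonhole argument over pairs (type, automaton state) along a path, using $|Q|$ and the number of types, to find two boundaries where the segment between them stays below the upper one. Where the segment does go strictly upward, the symmetric argument with reversed walks applies, giving a bad $\omega$-walk of the form $\inv{\zeta_1}\inv{\zeta_2}\cdots$. Either way we contradict foldability, which yields the bound $\rmB$.
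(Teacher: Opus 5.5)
Your Case~1 is fine. In fact it cannot occur in a $\varphi$-cactus at all: by concatenation-closedness the cycle $\xi_{i+1}\cdots\xi_j$ would be a witnessing cycle for $\hat{\autA}_{q,q'}$, hence for the bannable $\autA_{q,q'}$, at a vertex failing $\ldem_{q,q'}$. Case~2, however, is where the entire difficulty lies, and it is not established; the route you sketch does not work as stated.

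Pumping between two \emph{boundary} vertices $\elu, \elu'$ of equal type, with $\elu'$ below $\elu$, is attractive because the run of $\hat{\autA}$ restarts at both, so no state matching is needed. But nothing forces boundaries to be nested. Picture a walk that descends a long branch $\elx_0, \elx_1, \ldots$ and at each $\elx_k$ makes a one-step detour into a side child $\ely_k$, placing a boundary there. No boundary is an ancestor of another, so a pigeonhole over boundaries finds nothing, even though pumping is available between two \emph{non-boundary} vertices $\elx_i, \elx_j$ of equal type and equal run state. Conversely, pigeonholing over arbitrary vertices of an arbitrary long path may pump a portion containing no boundary, which gives an $\omega$-walk with only finitely many bad segments.

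Your proposed normalisation to ``tree-monotone'' walks cannot close this gap, for three reasons:
\begin{itemize}
\item an excursion cannot be cut out of a walk without destroying the run of $\hat{\autA}$ and the factorisation, unless it happens to consist of whole segments;
\item walks genuinely interleave descents with matching ascents, carrying boundaries on both sides;
\item contrary to your ``crucial observation'', back edges let a walk avoid the tree path between two vertices altogether.
\end{itemize}

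The missing idea is to exploit the well-nested structure of walks in a regular cactus. After entering the subtree of a vertex, a walk either never comes back through it or eventually returns through it, and anchored cycles must be treated as units at their anchors. So the configuration to pigeonhole on is the triple (subtree type, run state on entry, run state on return). This is exactly what the nonterminals $\Pref{\cdot}$ and $\Seg{\cdot}$ of the paper's context-free grammar record.

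You then need a \emph{marked} pumping argument: Ogden's lemma with the separators $\sharp$ marked, or equivalently the branch-point pigeonhole from its proof. This guarantees that a repeated nonterminal $A \stackrel{*}{\Rightarrow} zAy$ has a boundary inside $z$ or inside $y$. Pumping the descending part $z$ yields a bad $\omega$-walk. Pumping the matching ascending part $y$, read in reverse, yields one whose reversed segments are bad. Your remark that upward segments are handled ``symmetrically'' gestures at this second case. But without a single repeated configuration shared by $z$ and $y$, and without the marking, you do not reach the contradiction with foldability.
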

\begin{proof}[Proof idea]
  Suppose no such bound exists. Only finitely many automata are present in $\varphi$, so we
  may fix a single concatenation-closed bannable $\autA_{q,q'}$ for which, for
  arbitrarily large $m$, there are vertices $\elv$ and states $p$ with
  $(m,p) \in \vtr^\strC_{q,q'}(\elv)$. For this automaton we collect the walks
  witnessing such trace values into a language $\mathcal{L}$. A~word of
  $\mathcal{L}$ is a sequence over $\Sigma_\varphi$ interspersed with a fresh
  separator $\sharp$, which cuts it into a prefix walk followed by the
  successive bad $(q,q')$-walks it certifies; the number of separators
  therefore is one more than the trace value $m$. 
  In the appendix we give a grammar for $\mathcal{L}$, witnessing that 
  $\mathcal{L}$ is context-free; the construction is technical 
  and, we admit, rather tedious (\eg one needs to care about detours and traversals of anchored cycles).
  It is concluded that $\mathcal{L}$ contains words with arbitrarily many $\sharp$, so
  Ogden's lemma applies with the separators marked: any word carrying enough of
  them decomposes as $sztyx$, where $z$ or $y$ is pumpable and contains at
  least one $\sharp$. Pumping $z$ yields a bad $\omega$-walk with all $\xi_i$
  bad, and pumping $y$ yields one with~all~$\inv{\xi_i}$~bad.
\end{proof}

\section{Soundness: Folding Infinite Cacti into Finite Models}\label{sec:soundness}

Equipped with traces, we are ready to prove the soundness of our approach, 
\ie the ``only if'' direction of Theorem~\ref{thm:main-characterisation}. 
We first introduce \emph{$\varphi$-quasicacti}: intermediate structures 
resulting from the application of fold-operations to $\varphi$-cacti. 
They are no longer cacti, but they are still models of $\varphi$. 
After that we introduce the similarity relation~$\approx$, 
and the fold-operation itself.

\begin{definition}\label{def:quasicactus}
  A structure is \emph{quasicactus-like} if it is obtained from 
  a cactus-like structure by adding to its transition relations 
  any number of edges of two kinds, each for some $\elv, \elw, i$ 
  with $\elv$ an ancestor of~$\elw$ and $i$ non-zero: 
  a \emph{quasi-child edge} $(\elw, \elv i)$, and a \emph{quasi-back edge} 
  joining $\elw$ with an anchored descendant $\elv i 0^{k}$ of~$\elv$.
  A witnessing walk or cycle is \emph{quasi-anchored} if it is anchored, 
  or becomes anchored once a quasi-child edge is read as the step from~$\elv$ 
  to its child~$\elv i$, and a quasi-back edge as a back edge.
  Finally, $\strD$ is a \emph{$\varphi$-quasicactus} if it is quasicactus-like 
  and satisfies \ref{phicactus:bool}--\ref{phicactus:cycle} from 
  Definition~\ref{def:phi-cactus} with ``quasi-anchored''~in~place~of~``anchored''.
  As for $\varphi$-cacti, for all $\elv$ and its realised demands $\delta$, we fix witnessing stationary~or~quasi-anchored~$\xi_{\delta}^{\elv}$.
\end{definition}

\vspace{-1em}
\begin{figure}[h]
  \centering
  \begin{tikzpicture}[
      x=1cm, y=0.87cm, scale=0.95, every node/.style={transform shape},
      world/.style  = {circle, draw=black!65, fill=white, line width=0.5pt,
                       inner sep=0pt, minimum size=4.6pt},
      root/.style   = {world, minimum size=5.8pt, line width=0.7pt, draw=black!80},
      gworld/.style = {world, draw=black!30},
      tedge/.style  = {->, >=stealth, draw=black!55, semithick},
      gedge/.style  = {->, >=stealth, draw=black!25, semithick},
      zedge/.style  = {->, >=stealth, draw=blue!55!black, thick},
      bedge/.style  = {->, >=stealth, draw=red!70!black, thick,
                       dash pattern=on 3pt off 2pt},
      qcedge/.style = {->, >=stealth, draw=violet!80!black, thick},
      qbedge/.style = {->, >=stealth, draw=violet!80!black, thick,
                       dash pattern=on 3pt off 2pt},
      wlab/.style   = {font=\scriptsize, inner sep=1.5pt, text=black!70},
      glab/.style   = {wlab, text=black!40},
      elab/.style   = {font=\tiny, inner sep=1.3pt, fill=white, midway},
      gelab/.style  = {font=\tiny, inner sep=1.3pt, midway, text=black!40},
      qlab/.style   = {elab, fill=violet!20, inner sep=0.8pt},
      slab/.style   = {elab, fill=orange!35, inner sep=0.8pt},
      leg/.style    = {font=\scriptsize, inner sep=0pt, anchor=west},
      ttl/.style    = {font=\small, inner sep=0pt},
      qpath/.style  = {line width=4.2pt, draw=violet!20, line cap=round,
                       line join=round},
      simpath/.style= {line width=4.2pt, draw=orange!35, line cap=round},
      cutm/.style   = {draw=red!70!black, line width=1.0pt, line cap=round},
      cuthalo/.style= {draw=white, line width=2.6pt, line cap=round},
      bush/.style   = {draw=black!30, fill=black!6, rounded corners=1pt,
                       line width=0.4pt,
                       drop shadow={shadow xshift=1.5pt, shadow yshift=-1.5pt,
                                    opacity=0.28, fill=black!60}},
      gbush/.style  = {draw=black!25, fill=black!4, rounded corners=1pt,
                       line width=0.4pt},
      rule/.style   = {draw=black!20, line width=0.6pt},
      slink/.style  = {draw=black!45, semithick},
      arcL/.style   = {out=205, in=155, looseness=0.78},
      arcR/.style   = {out=-26, in=28,  looseness=1.0},
      qc/.style     = {out=115, in=198, looseness=1.0},
      qb/.style     = {out=198, in=-45, looseness=0.9},
    ]
    \def\bushat#1{\draw[bush] (#1.south) -- ++(-0.40,-0.62)
                              -- ++(0.80,0) -- cycle;}
    \def\bushsmall#1{\draw[bush] (#1.south) -- ++(-0.32,-0.50)
                                 -- ++(0.64,0) -- cycle;}
    \def\bushtiny#1{\draw[bush] (#1.south) -- ++(-0.30,-0.44)
                                -- ++(0.60,0) -- cycle;}
    \def\minisub#1#2#3{\coordinate (ms) at ($(#1)+(#2,#3)$);
      \draw[slink] (#1) -- (ms);
      \draw[bush] (ms) -- ++(-0.20,-0.32) -- ++(0.40,0) -- cycle;}
    \def\cutmark#1#2#3{\begin{scope}[shift={#1}, rotate=#2]
        \draw[cuthalo] (-#3,-#3) -- (#3,#3);   \draw[cuthalo] (-#3,#3) -- (#3,-#3);
        \draw[cutm]    (-#3,-#3) -- (#3,#3);   \draw[cutm]    (-#3,#3) -- (#3,-#3);
      \end{scope}}

    \begin{scope}[shift={(0,0.75)}]
      \draw[tedge]  (0,-0.05) -- (0.55,-0.05);
      \node[leg] at (0.68,-0.05) {tree edge};
      \draw[zedge]  (0,-0.53) -- (0.55,-0.53);
      \node[leg] at (0.68,-0.53) {$0$-edge};
      \draw[bedge]  (0,-1.01) -- (0.55,-1.01);
      \node[leg] at (0.68,-1.01) {back edge};
      \draw[bush] (0.28,-1.42) -- (0.02,-1.82) -- (0.54,-1.82) -- cycle;
      \node[leg] at (0.68,-1.62) {subtree};
      \draw[qcedge] (0,-2.10) -- (0.55,-2.10);
      \node[leg] at (0.68,-2.10) {quasi-child edge};
      \draw[qbedge] (0,-2.58) -- (0.55,-2.58);
      \node[leg] at (0.68,-2.58) {quasi-back edge};
      \draw[qpath]  (0.03,-3.06) -- (0.52,-3.06);
      \node[leg] at (0.68,-3.06) {quasi-anch.\ cycle};
      \draw[simpath] (0.03,-3.54) -- (0.52,-3.54);
      \node[leg] at (0.68,-3.54) {similar edges};
      \draw[gbush] (0.28,-3.90) -- (0.02,-4.30) -- (0.54,-4.30) -- cycle;
      \cutmark{(0.28,-4.11)}{0}{0.17}
      \node[leg] at (0.68,-4.11) {cut part};
    \end{scope}

    \begin{scope}[shift={(5.80,0)}]
      \node[ttl] at (0.10,0.70) {\textbf{(a)} a $\varphi$-quasicactus};

      \node[root]  (r)  at ( 0.55, 0)    {};
      \node[world] (a)  at (-1.75,-0.80) {};
      \node[world] (a0) at (-1.75,-1.60) {};
      \node[world] (t)  at (-0.70,-0.90) {};
      \node[world] (s)  at ( 1.95,-0.75) {};
      \draw[tedge] (r)  -- node[elab] {$\pe$} (a);
      \draw[zedge] (a)  -- node[elab] {$\pr$} (a0);
      \draw[bedge] (a0) to[arcL] node[elab] {$\pe$} (a);
      \bushtiny{a0}
      \draw[tedge] (r)  -- node[elab, pos=0.68] {$\pe$} (t);
      \bushsmall{t}
      \draw[tedge] (r)  -- node[elab, pos=0.62] {$\pr$} (s);
      \bushat{s}

      \node[world] (v)   at (0.55,-0.80) {};
      \node[world] (vi)  at (0.55,-1.60) {};
      \node[world] (vi0) at (0.55,-2.40) {};
      \node[world] (vik) at (0.55,-3.20) {};
      \draw[tedge] (r)   -- node[elab] {$\ps$} (v);
      \draw[tedge] (v)   -- node[elab] {$\pr$} (vi);
      \draw[zedge] (vi)  -- (vi0);
      \draw[zedge] (vi0) -- node[qlab] {$\ldots$} (vik);
      \draw[bedge] (vik) to[arcR] node[elab] {$\ps$} (v);

      \node[world] (u)  at (-0.30,-2.02) {};
      \node[world] (w)  at (-1.10,-2.42) {};
      \node[world] (wj) at (-2.05,-2.96) {};
      \draw[tedge] (vi) -- node[elab, pos=0.6] {$\ldots$} (u);
      \draw[zedge] (u)  -- node[elab, pos=0.55] {$\pe$} (w);
      \draw[tedge] (w)  -- node[elab, pos=0.55] {$\pr$} (wj);
      \bushat{wj}

      \minisub{vi}{0.40}{-0.30} \minisub{u}{0.42}{-0.32} \minisub{w}{0.05}{-0.52}

      \draw[qcedge] (w)   to[qc] node[qlab, pos=0.42] {$\pr$}  (vi);
      \draw[qbedge] (vik) to[qb] node[qlab, pos=0.55] {$\ps$} (w);
      \begin{scope}[on background layer]
        \draw[qpath] (w) to[qc] (vi) -- (vi0) -- (vik) to[qb] (w);
      \end{scope}

      \node[wlab, anchor=south]      at ([yshift=0.5pt]r.north)  {$\varepsilon$};
      \node[wlab, anchor=east]       at ([xshift=-1.5pt]v.west)  {$\elv$};
      \node[wlab, anchor=west]       at ([xshift=1.5pt]vi.east)  {$\elv i$};
      \node[wlab, anchor=north]      at ([yshift=-2pt]vik.south) {$\elv i 0^{k}$};
      \node[wlab, anchor=east]       at ([xshift=-1.5pt]w.west)  {$\elw$};
    \end{scope}

    \draw[rule] (8.75,0.90) -- (8.75,-3.72);

    \begin{scope}[shift={(11.60,0)}]
      \node[ttl] at (0.10,0.70) {\textbf{(b)} the fold-operation};

      \node[root]  (br)  at ( 0.55, 0)    {};
      \node[world] (ba)  at (-1.75,-0.80) {};
      \node[world] (ba0) at (-1.75,-1.60) {};
      \node[world] (bt)  at (-0.70,-0.90) {};
      \node[world] (bs)  at ( 1.95,-0.75) {};
      \draw[tedge] (br)  -- node[elab] {$\pe$} (ba);
      \draw[zedge] (ba)  -- node[elab] {$\pr$} (ba0);
      \draw[bedge] (ba0) to[arcL] node[elab] {$\pe$} (ba);
      \bushtiny{ba0}
      \draw[tedge] (br)  -- node[elab, pos=0.68] {$\pe$} (bt);
      \bushsmall{bt}
      \draw[tedge] (br)  -- node[elab, pos=0.62] {$\pr$} (bs);
      \bushat{bs}

      \node[world] (bv)   at (0.55,-0.80) {};
      \node[world] (bvi)  at (0.55,-1.60) {};
      \node[world] (bvi0) at (0.55,-2.40) {};
      \node[world] (bvik) at (0.55,-3.20) {};
      \draw[tedge] (br)   -- node[elab] {$\ps$} (bv);
      \draw[tedge] (bv)   -- node[slab] {$\pr$} (bvi);
      \draw[zedge] (bvi)  -- (bvi0);
      \draw[zedge] (bvi0) -- node[elab] {$\ldots$} (bvik);
      \draw[bedge] (bvik) to[arcR] node[elab] {$\ps$} (bv);

      \node[world]  (bu)  at (-0.30,-2.02) {};
      \node[world]  (bw)  at (-1.10,-2.42) {};
      \node[gworld] (bwi) at (-2.05,-2.96) {};
      \draw[tedge] (bvi) -- node[elab, pos=0.6] {$\ldots$} (bu);
      \draw[zedge] (bu)  -- node[elab, pos=0.55] {$\pe$} (bw);
      \draw[gedge] (bw)  -- node[gelab, fill=orange!35, inner sep=0.8pt, pos=0.34] {$\pr$} (bwi);
      \draw[gbush] (bwi.south) -- ++(-0.40,-0.62) -- ++(0.80,0) -- cycle;
      \cutmark{(-1.84,-2.84)}{28}{0.13}
      \cutmark{(-2.05,-3.42)}{0}{0.22}

      \minisub{bvi}{0.40}{-0.30} \minisub{bu}{0.42}{-0.32} \minisub{bw}{0.05}{-0.52}

      \begin{scope}[on background layer]
        \draw[simpath] (bv) -- (bvi);
        \draw[simpath] (bw) -- (bwi);
      \end{scope}
      \draw[qcedge] (bw)   to[qc] node[elab, pos=0.42] {$\pr$}  (bvi);
      \draw[qbedge] (bvik) to[qb] node[elab, pos=0.55] {$\ps$} (bw);

      \node[wlab, anchor=south]      at ([yshift=0.5pt]br.north)  {$\varepsilon$};
      \node[wlab, anchor=east]       at ([xshift=-1.5pt]bv.west)  {$\elv$};
      \node[wlab, anchor=west]       at ([xshift=1.5pt]bvi.east)  {$\elv i$};
      \node[wlab, anchor=north]      at ([yshift=-2pt]bvik.south) {$\elv i 0^{k}$};
      \node[wlab, anchor=east]       at ([xshift=-1.5pt]bw.west)  {$\elw$};
      \node[glab, anchor=south east] at ([shift={(-0.5pt,0.5pt)}]bwi.north west) {$\elw i$};
    \end{scope}
  \end{tikzpicture}
  \vspace{-1em}
  \caption{\textbf{(a)} An example quasicactus. \textbf{(b)} A visualisation of the fold operation for~$(\elv, \elv i) \approx (\elw, \elw i)$.}
  \label{fig:quasicactus}
\end{figure}

\noindent The extra edges added to quasicacti during the folding process serve two purposes:
quasi-child edges mimic the removed child edge, whereas quasi-back edges
mimic the removed back edge and allow us to recover the witnessing cycle.
Every $\varphi$-cactus is trivially a $\varphi$-quasicactus and, by design,
$(\strD, \varepsilon) \models \varphi$~for~every~$\varphi$-quasicactus~$\strD$.
\noindent We next define the similarity relation~$\approx$.
By the \emph{type} $\lambda^{\strD}(\elw)$ of a world~$\elw$ in~$\strD$ we mean the 
set of propositions $\pP$ and self-loops $\lop(\pe)$ over $\Sigma_\varphi$ satisfied by $\elw$ in~$\strD$.
Similarity is what will let us select a vertex for removal: once a vertex is gone, every world
that used it as a witness must be able to receive a replacement. This is why we compare not only
vertices, but also~the~edges~leading~to~them~from~their~parents.

\begin{definition}[Similarity]\label{def:similar}
  In a $\varphi$-quasicactus~$\strD$, vertices $\elv, \elw$ are \emph{similar}, written
  $\elv \approx \elw$, if $\lambda^{\strD}(\elv) = \lambda^{\strD}(\elw)$ and
  $\vtr^{\strD}_{q,q'}(\elv) = \vtr^{\strD}_{q,q'}(\elw)$ for every 
  concatenation-closed bannable~$\autA_{q,q'}$.
  Fix a demand~$\delta$ satisfied by both~$\elv$ and~$\elw$, put $i \deff \kappa(\delta)$,
  and let the \emph{initial transition}~$\sigma$ be an action or its converse.
  The edges $(\elv, \elv i)$ and $(\elw, \elw i)$ are $(\delta, \sigma)$-\emph{similar}, written
  $(\elv, \elv i) \approx_{\delta, \sigma} (\elw, \elw i)$, if $\elv \approx \elw$,
  $\elv i \approx \elw i$, and $\elv, \sigma, \elv i$ is an infix of $\xi_{\delta}^{\elv}$,
  and likewise $\elw, \sigma, \elw i$ of~$\xi_{\delta}^{\elw}$.
\end{definition}

\noindent We now define fold-operations; see Figure~\ref{fig:quasicactus}(b), where~the~freshly~added~edges~are~purple.

\begin{definition}[Folding]\label{def:fold-op}
  Let $\strD$ be a $\varphi$-quasicactus containing a pair of $(\delta, \sigma)$-similar edges
  $(\elv, \elv i) \approx_{\delta, \sigma} (\elw, \elw i)$, where $i \deff \kappa(\delta)$ and
  $\elw \not\in \elv i 0^*$ is a descendant of~$\elv i$.
  The \emph{fold-operation} $((\elv, \elv i), (\elw, \elw i))$
  turns~$\strD$ into the structure~$\strD'$ with domain
  $\domD' \deff \domD \setminus \domD_{\elw i}$, where $\domD_{\elw i}$ is the subtree rooted
  at~$\elw i$: all interpretations are restricted to~$\domD'$ and then extended with
  the quasi-child edge $(\elw, \elv i)$, put into $\sem{\sigma}_{\strD'}$, and,
  whenever $\delta$ is a cycle demand, 
  also with the quasi-back edge $(\elv i 0^{k}, \elw)$,
  put into $\sem{\sigma'}_{\strD'}$ for the closing step
  $\elv i 0^{k}, \sigma', \elv$ of the~witnessing~cycle~$\xi_{\delta}^{\elv}$.
\end{definition}

\noindent Folding preserves types and traces in the sense of the next lemma, making them nice~invariants.

\begin{lemma}\label{lemma:invariant}
  Let $\strD'$ be obtained from $\varphi$-quasicactus $\strD$ by a fold-operation.
  Then $\vtr_{q, q'}^{\strD'}(\elu) \subseteq \vtr_{q, q'}^{\strD}(\elu)$
  and $\lambda^{\strD'}(\elu) = \lambda^\strD(\elu)$ for 
  all concatenation-closed bannable~$\autA_{q, q'}$~and~all~$\elu \in \domD'$.
\end{lemma}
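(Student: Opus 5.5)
The plan is to treat types and traces separately, and for traces to translate every walk of~$\strD'$ into a walk of~$\strD$ with a trace at least as large.

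\emph{Types.} The type $\lambda^{\strD'}(\elu)$ consists of the propositions of~$\elu$ and its atomic self-loops~$\lop(\pe)$. Propositions are untouched, since $\domD'$ carries the restricted interpretation. Atomic self-loops could change only if a fresh edge were a loop $(\elu,\elu)$, or if a loop at~$\elu$ were deleted. Neither happens. The quasi-child edge $(\elw,\elv i)$ joins distinct vertices, because $\elw$ is a proper descendant of~$\elv i$. The quasi-back edge $(\elv i0^k,\elw)$ joins distinct vertices, because $\elw\notin\elv i0^*$. Deleting $\domD_{\elw i}$ removes only edges incident to removed vertices. Hence $\lambda^{\strD'}(\elu)=\lambda^{\strD}(\elu)$.

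\emph{Traces.} The key step is a simulation: each walk $\xi$ of $\strD'$ ending in~$\elu$ is mapped to a walk $\hat\xi$ of~$\strD$ ending in~$\elu$ with $\tr_{q,q'}(\xi)\subseteq\tr_{q,q'}(\hat\xi)$. The construction replaces the fresh edges by $\strD$-walks that are, in effect, test-equivalent.

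\begin{itemize}
\item A step $\elw,\sigma,\elv i$ along the quasi-child edge is replaced by a walk from~$\elw$ to~$\elv i$ in~$\strD$. For this I would not use $\elw$ itself but the similarity: $\elv\approx\elw$ and $\elv i\approx\elw i$, together with $\lambda$ and $\vtr$.
\item Concretely, I would prove directly the equivalent statement $\vtr^{\strD'}(\elu)\subseteq\vtr^{\strD}(\elu)$ by induction on the length of a walk in~$\strD'$. For each prefix I would track the set of pairs $(m,p)$ it witnesses, together with the additional information of whether the last bad segment's start is open.
\item The main point is a ``state transfer'': if a walk of~$\strD'$ arrives at~$\elw$ with some trace information $(m,p)$ and then takes $\sigma$ to $\elv i$, then in $\strD$ the pair $(m,p)\in\vtr^{\strD}(\elw)=\vtr^{\strD}(\elv)$. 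Some walk of~$\strD$ reaches~$\elv$ with $(m,p)$ and can take the original edge $\elv,\sigma,\elv i$, reaching the same automaton states at~$\elv i$. The same argument applies at the quasi-back edge, using that $\elv i0^k,\sigma',\elv$ is an edge of~$\strD$ and $\elv\approx\elw$.
\item Between fresh edges, the walk uses only original edges of~$\strD$ restricted to~$\domD'$. Since traces are determined by the pairs reachable at a vertex and the labels of edges read afterwards (including $\ldem$ tests, which are preserved because the labels at a vertex are unchanged), the pairs propagate identically.
\end{itemize}

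To make the transfer sound I would strengthen the invariant from single pairs to the full configuration data needed by Definition~\ref{def:trace}. This is the set of $(m,p)$ pairs, where a pair with $p$ at the current world is extended by reading letters of~$\hat{\autA}$, and $m$ increments when the current world fails $\ldem_{q,q'}$ and $p=q'$ (restarting at~$q$). Because both the increment and the restart depend only on the current world's labels and the state, the set $\vtr(\elu)$ is closed under these local updates along edges. It then suffices to show that $\vtr^{\strD}$, viewed as a labelling of~$\domD'$, is a post-fixpoint of the update along every edge of~$\strD'$. Note that $\vtr^{\strD'}$ is the least such labelling, seeded at each world by the pairs $(0,q)$ for worlds failing $\ldem_{q,q'}$.

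Establishing that $\vtr^{\strD'}$ is this least fixpoint is where I expect the main obstacle. Concatenation-closure is needed so that a completed bad segment can restart from state~$q$ at its endpoint, and the factorisation bookkeeping must be checked against the edge-local update. The remaining verifications are the post-fixpoint property along the two fresh edges, which are exactly the similarity equalities above, and along the old edges, which is inherited from~$\strD$.
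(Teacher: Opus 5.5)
Your argument is correct, but it is organised differently from the paper's.

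\textbf{The paper's route.} The paper inducts on the number of fresh edges in a witnessing walk~$\xi$ of~$\strD'$. It isolates the \emph{last} fresh edge and applies the induction hypothesis to the prefix up to it, which witnesses some $(i-1,q_k)$ at its endpoint. It then transfers that pair to the twin vertex via~$\approx$, crosses the twin edge of~$\strD$, and appends the fresh-edge-free suffix. This gives four cases (two fresh edges, two traversal directions), plus the trivial case where the edge lies in~$\xi_0$.

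\textbf{Your route.} You characterise $\vtr_{q,q'}$ as the least labelling that contains $(0,q)$ at every world failing $\ldem_{q,q'}$ and is closed under two rules: reading a letter of~$\hat{\autA}$ along an edge, and the restart $(m,q')\mapsto(m+1,q)$ at worlds failing $\ldem_{q,q'}$. You then check that $\vtr^{\strD}$ restricted to~$\domD'$ is closed under these rules in~$\strD'$. Old edges, self-loops, tests and restarts are inherited from~$\strD$, since labels and types are unchanged. Each fresh edge is discharged by a similarity equality.

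\textbf{Comparison.} Your version buys uniformity: the check is local to single edges, with no need to locate the fresh edge inside the factorisation or to pick an induction measure. The paper's version produces explicit rerouted walks in~$\strD$, the same device it reuses for negative walk demands in Lemma~\ref{lemma:quasicactus}. Your treatment of types is also more complete than the paper's one-liner, since you check that no atomic self-loop is created or destroyed.

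\textbf{Points to fix.}
\begin{enumerate}
\item The per-walk claim you start from, a single $\hat\xi$ with $\tr_{q,q'}(\xi)\subseteq\tr_{q,q'}(\hat\xi)$, is strictly stronger than the vertex-trace inclusion, not equivalent to it. Nothing you do establishes it: the pair-wise transfer gives each pair its own walk of~$\strD$. That is all the lemma needs, so drop the claim.
\item The step you flag as the main obstacle is routine and uses no concatenation-closedness. Given a factorisation $\xi_0\xi_1\cdots\xi_{m+1}$, seed at the last world of~$\xi_0$, step letter by letter along each segment, and restart at each segment end, which fails $\ldem_{q,q'}$ by badness. Conversely, each rule is realised in~$\strD$ by extending or re-factorising a witness walk, or by the trivial walk for seeds. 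The restart is built into Definition~\ref{def:trace}, since each bad segment carries its own run. Concatenation-closedness appears in the hypothesis only because traces are defined just for such automata.
\item Check each fresh edge in both traversal directions. The converse traversal of the quasi-child edge, from $\elv i$ to~$\elw$ along~$\inv{\sigma}$, needs $\elv i\approx\elw i$ together with the $\strD$-edge between $\elw i$ and~$\elw$. This is legitimate because $\vtr^{\strD}$ is computed in~$\strD$, where $\elw i$ still exists.
\end{enumerate}
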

\begin{proof}[Proof idea]
  Types are untouched, as folding never alters atomic propositions. For traces we
  do induction on the number of new edges of a witnessing walk $\xi$, inspecting the last
  one traversed by $\xi$: the prefix of $\xi$ before it uses fewer new edges, hence
  already exists in $\strD$ by the induction hyp.  
  Every new edge duplicates one incident to an $\approx$-equivalent vertex,
  so the run reroutes through that vertex and continues, yielding the same trace
  in~$\strD$;~see~Figure~\ref{fig:soundness}(a).
\end{proof}

\vspace{-1.25em}
\begin{figure}[h]
  \centering
  \begin{tikzpicture}[
      x=1cm, y=1cm, scale=0.95, every node/.style={transform shape},
      world/.style  = {circle, draw=black!60, fill=white, line width=0.5pt,
                       inner sep=0pt, minimum size=4.6pt},
      bworld/.style = {circle, draw=black!80, fill=white, line width=0.8pt,
                       inner sep=0.5pt, minimum size=13.5pt, font=\tiny},
      tedge/.style  = {->, >=stealth, draw=black!55, semithick},
      qcedge/.style = {->, >=stealth, draw=violet!80!black, thick},
      cedge/.style  = {->, >=stealth, draw=red!70!black, semithick},
      cring/.style  = {draw=red!70!black, semithick},
      fade/.style   = {->, >=stealth, draw=red!70!black, semithick,
                       dash pattern=on 2.6pt off 2pt},
      sim/.style    = {draw=black!45, semithick,
                       dash pattern=on 1.8pt off 1.8pt},
      halo/.style   = {line width=4.2pt, draw=teal!22, line cap=round,
                       line join=round},
      elab/.style   = {font=\tiny, inner sep=1.3pt, fill=white, midway},
      qlab/.style   = {elab, fill=violet!20, inner sep=0.8pt},
      wlab/.style   = {font=\tiny, text=black!70, inner sep=1.2pt},
      tlab/.style   = {font=\tiny, text=black!70, inner sep=1.2pt},
      rlab/.style   = {font=\tiny, text=red!70!black, inner sep=1.2pt},
      dem/.style    = {font=\scriptsize, text=black!80, inner sep=1.2pt},
      tag/.style    = {font=\tiny, text=black!60, inner sep=1.2pt,
                       align=center},
      dots/.style   = {font=\footnotesize, text=black!55, inner sep=1pt},
      ttl/.style    = {font=\small, inner sep=0pt},
      rule/.style   = {draw=black!20, line width=0.6pt},
    ]

    \begin{scope}[shift={(0,0)}]
      \node[ttl] at (2.56,0.75)
        {\textbf{(a)} rerouting a new edge (Lemma~\ref{lemma:invariant})};

      \node[dots]   (p0) at (0.18, 0.00) {$\cdots$};
      \node[world]  (p1) at (0.74, 0.00) {};
      \node[world]  (p2) at (1.32, 0.00) {};
      \node[bworld] (w)  at (2.05, 0.00) {$\elw$};
      \draw[tedge] (p0) -- (p1);  \draw[tedge] (p1) -- (p2);
      \draw[tedge] (p2) -- (w);

      \node[dots]   (r0) at (0.18,-1.75) {$\cdots$};
      \node[world]  (r1) at (0.74,-1.75) {};
      \node[world]  (r2) at (1.32,-1.75) {};
      \node[bworld] (v)  at (2.05,-1.75) {$\elv$};
      \draw[tedge] (r0) -- (r1);  \draw[tedge] (r1) -- (r2);
      \draw[tedge] (r2) -- (v);

      \node[bworld] (vi) at (3.15,-0.88) {$\elv i$};
      \node[world]  (s1) at (3.85,-0.88) {};
      \node[world]  (s2) at (4.45,-0.88) {};
      \node[bworld] (u)  at (5.15,-0.88) {$\elu$};
      \draw[qcedge] (w) -- node[qlab, pos=0.46] {$\sigma$} (vi);
      \draw[tedge]  (v) -- node[elab, pos=0.46] {$\sigma$} (vi);
      \draw[tedge] (vi) -- (s1);  \draw[tedge] (s1) -- (s2);
      \draw[tedge] (s2) -- (u);

      \draw[sim] (w) -- node[elab] {$\approx$} (v);

      \begin{scope}[on background layer]
        \draw[halo] (r0) -- (v) -- (vi) -- (u);
      \end{scope}

      \node[tag, anchor=north] at (1.03,-0.18) {$\xi$ in $\strD'$};
      \node[tag, anchor=south] at (1.03,-1.57) {$\zeta$ in $\strD$};
      \node[tag, anchor=south] at (4.15,-0.70) {suffix of $\xi$};
      \node[tag, text=violet!65!black, anchor=south west] at (2.50,-0.22)
        {new edge};
      \node[tag, anchor=north west] at (2.62,-1.38) {twin edge};

      \node[tlab, anchor=south] at (w.north) {$(i{-}1,q_k)$};
      \node[tlab, anchor=north] at (v.south) {$(i{-}1,q_k)$};
      \node[tlab, anchor=north] at (u.south) {$(m,q'')$};
    \end{scope}

    \draw[rule] (5.75,0.95) -- (5.75,-2.15);

    \begin{scope}[shift={(6.45,0)}]
      \node[ttl] at (3.75,0.75)
        {\textbf{(b)} negative cycle demands (Lemma~\ref{lemma:quasicactus})};

      \draw[cring, decoration={markings,
              mark=at position 0.09 with {\arrow{stealth}},
              mark=at position 0.26 with {\arrow{stealth}},
              mark=at position 0.42 with {\arrow{stealth}},
              mark=at position 0.59 with {\arrow{stealth}},
              mark=at position 0.76 with {\arrow{stealth}},
              mark=at position 0.92 with {\arrow{stealth}}},
            postaction={decorate}] (1.60,-0.95) circle (0.78);
      \node[world]  at (1.21,-0.27) {};   \node[world] at (1.99,-0.27) {};
      \node[world]  at (2.38,-0.95) {};   \node[world] at (1.99,-1.63) {};
      \node[world]  at (1.21,-1.63) {};
      \node[bworld] (x) at (0.82,-0.95) {$\elx$};
      \node[rlab, align=center, fill=white, inner sep=1.8pt] at (1.60,-0.95)
        {$\xi$\\$\autA_{q,q'}$};
      \node[dem, anchor=east] at ([xshift=-3pt]x.west) {$\neg \ldem_{q,q'}$};

      \draw[->, >=stealth, draw=black!40] (2.50,-0.95) -- (3.08,-0.95);
      \node[tag, anchor=south] at (2.79,-0.84) {unroll};

      \node[bworld] (x1) at (3.46,-0.95) {$\elx$};
      \node[world]  (m1) at (3.99,-0.95) {};
      \node[world]  (m2) at (4.51,-0.95) {};
      \node[bworld] (x2) at (5.04,-0.95) {$\elx$};
      \node[world]  (m3) at (5.57,-0.95) {};
      \node[world]  (m4) at (6.09,-0.95) {};
      \node[bworld] (x3) at (6.62,-0.95) {$\elx$};
      \draw[cedge] (x1) -- (m1);  \draw[cedge] (m1) -- (m2);
      \draw[cedge] (m2) -- (x2);  \draw[cedge] (x2) -- (m3);
      \draw[cedge] (m3) -- (m4);  \draw[cedge] (m4) -- (x3);
      \draw[fade]  (x3) -- (7.28,-0.95);
      \node[dots]  at (7.52,-0.95) {$\cdots$};

      \node[rlab, anchor=south] at (4.25,-0.80) {$\xi$};
      \node[rlab, anchor=south] at (5.83,-0.80) {$\xi$};
      \node[rlab, anchor=south] at (7.08,-0.80) {$\xi$};
      \node[tlab, anchor=north] at (x2.south) {$(1,q)$};
      \node[tlab, anchor=north] at (x3.south) {$(2,q)$};
      \node[tlab, anchor=north] at (7.52,-1.19) {$(m,q)$};

      \node[tag, anchor=north] at (5.48,-1.55)
        {$\xi^{\omega}$: a bad $\omega$-walk};
    \end{scope}
  \end{tikzpicture}
  \vspace{-1em}
  \caption{\textbf{(a)} The last fresh edge of a walk~$\xi$ in $\strD'$ is replaced
  by its twin at the $\approx$-equivalent~$\elv$, giving the highlighted walk
  of~$\strD$. \textbf{(b)} A forbidden cycle unrolls into a bad
  $\omega$-walk with~infinite~trace.}
  \label{fig:soundness}
\end{figure}

\noindent The following lemma proves that fold-operations preserve foldability and ``$\varphi$-quasicactushood''.
\begin{lemma}\label{lemma:quasicactus}
  Let $\strD_0, \strD_1, \ldots, \strD_{n}$ be structures such that $\strD_0$ is a regular
  foldable $\varphi$-cactus~$\strC$ and each~$\strD_{i+1}$ is obtained from~$\strD_i$ by a
  fold-operation. Then all $\strD_i$ are foldable~$\varphi$-quasicacti.
\end{lemma}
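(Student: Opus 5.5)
We argue by induction on~$i$; the base case holds because a $\varphi$-cactus is trivially a $\varphi$-quasicactus and $\strC$ is foldable by assumption. For the step, let $\strD' \deff \strD_{i+1}$ arise from the foldable $\varphi$-quasicactus $\strD \deff \strD_i$ by the fold-operation $((\elv,\elv i),(\elw,\elw i))$ with $(\delta,\sigma)$-similar edges. We must show three things: $\strD'$ is quasicactus-like, it satisfies \ref{phicactus:bool}--\ref{phicactus:cycle} with ``quasi-anchored'', and it is foldable.

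\textbf{Shape and Conditions \ref{phicactus:bool}, \ref{phicactus:fini}.} Quasicactus-likeness is immediate. Removing the subtree $\domD_{\elw i}$ keeps prefix-closure. The new edge $(\elw,\elv i)$ is a quasi-child edge, since $\elv$ is an ancestor of~$\elw$. The edge $(\elv i0^k,\elw)$ is a quasi-back edge, since $\elv i0^k$ is an anchored descendant of~$\elv$. Old quasi-edges whose endpoint lies in $\domD_{\elw i}$ disappear together with it. Condition~\ref{phicactus:bool} concerns only propositions, which are unchanged, and $0$-paths only get shorter. The parent edges survive because $\elw i$ is removed together with all its descendants.

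\textbf{Positive demands and the negative walk conditions.} Take $\elu \in \domD'$ with a realised demand~$\delta'$. If its fixed witness $\xi^{\elu}_{\delta'}$ avoids $\domD_{\elw i}$, it persists. Since witnesses are quasi-anchored, the only witness that can enter $\domD_{\elw i}$ is the one of $\elw$ for $\delta$, and it enters through $\elw i$. We replace it by the step $\elw,\sigma,\elv i$ followed by the remainder of $\xi^{\elv}_\delta$ after its initial transition. The runs match because $\elv \approx \elw$ and $\elv i \approx \elw i$ give equal types, so tests and self-loops agree. The state reached after reading~$\sigma$ can be taken to be the same, since both walks read $\sigma$ first from a world of the same type. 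For a cycle demand, the closing step $\elv i0^k,\sigma',\elv$ is replaced by the quasi-back edge to~$\elw$. Reading the new edges as child and back edges shows the new witness is quasi-anchored. For negative walk demands ($\neg\wdem$) and negative stationary cycle demands, Lemma~\ref{lemma:invariant} is not enough, because it only controls traces of concatenation-closed automata. Instead we reuse its rerouting argument: induct on the number of new edges, and replace the last new edge by its twin at the similar vertex. Any accepting walk of $\autA_{q,q'}$ in $\strD'$ from~$\elu$ then yields one in~$\strD$ from the same $\elu$. Stationary cycles involve no new edges at all, since they contain no non-test steps.

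\textbf{Forbidden concatenation-closed cycles and foldability (main obstacle).} The crux is to show that $\strD'$ has no bad $\omega$-walk; this also covers the remaining part of~\ref{phicactus:cycle}. As in Figure~\ref{fig:soundness}(b), a forbidden cycle at a world failing $\ldem_{q,q'}$ for a concatenation-closed bannable $\autA_{q,q'}$ unrolls into a bad $\omega$-walk. Suppose then that $\strD'$ has a bad $\omega$-walk $\xi_1\xi_2\cdots$ (or its reversal). Its prefixes witness pairs $(m,p)\in\vtr^{\strD'}_{q,q'}$ at the endpoints of $\xi_m$ for every~$m$. By Lemma~\ref{lemma:invariant}, applied along the chain $\strD_0,\ldots,\strD_{i+1}$, these pairs also lie in $\vtr^{\strC}_{q,q'}$, so the traces are unbounded. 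This contradicts Lemma~\ref{lemma:trace-finiteness} for the regular foldable~$\strC$. The reversed case is delicate, because traces record forward walks only. For this case we rely on the symmetry built into Lemma~\ref{lemma:trace-finiteness}, whose pumping handles reversed bad factors through its $y$-part. If that symmetry is unavailable, we would instead apply the rerouting argument directly to reversed walks, exploiting that similarity fixes the vertex traces on both ends of every new edge.
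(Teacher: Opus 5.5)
Your architecture is the same as the paper's: induction on~$i$, the easy shape conditions, repairing the witness of~$\elw$ for~$\delta$ through the new edges, stationary automata via the absence of new self-loops, and foldability plus concatenation-closed forbidden cycles via Lemmas~\ref{lemma:invariant} and~\ref{lemma:trace-finiteness}. However, your step for negative walk demands fails as described. Rerouting the \emph{last} new edge, as in Lemma~\ref{lemma:invariant}, means relocating the \emph{prefix} of the walk to the twin vertex. The only tool similarity gives for that is equality of vertex traces, which concerns walks \emph{ending} at a vertex, and only for concatenation-closed bannable automata; it yields some walk ending at~$\elv$ whose start is uncontrolled. A walk demand is anchored at the start instead. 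If $\xi$ leaves~$\elu$, reaches~$\elw$ in state~$p$ and then crosses $(\elw,\elv i)$, nothing gives a walk of~$\strD$ from the same~$\elu$ reaching~$\elv$ in state~$p$. So your conclusion ``one in~$\strD$ from the same~$\elu$'' does not follow. The argument has to be mirrored, as the paper does. Phrase the induction hypothesis as ``a witnessing walk with at most $n$ new edges forces $\wdem_{q,q'}$ at its first world in~$\strD'$''. Take the \emph{first} new edge $\elx,\tau,\elx'$ of~$\xi$, crossed via $(p_1,\tau,p_2)$. The hypothesis applied to the suffix gives $\strD',\elx'\models\wdem_{p_2,q'}$, hence $\strD,\elx'\models\wdem_{p_2,q'}$ since types are preserved. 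The twin edge at the similar vertex, together with $\elv\approx\elw$, gives $\strD,\elx\models\wdem_{p_1,q'}$; this works because types contain the walk-demand propositions, which are exact in~$\strD$. Prepending the new-edge-free prefix of~$\xi$ up to~$\elx$ to a witness of this demand yields a witness at~$\elu$ in~$\strD$.

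Second, you do not actually handle the reversed case of foldability. Lemma~\ref{lemma:trace-finiteness} only bounds trace values in~$\strC$. Whatever symmetry its proof has internally, you still have to produce unbounded trace values in~$\strD'$, and your fallback via rerouting is not an argument. The missing step is in fact immediate. Let $\xi_1\xi_2\cdots$ be an $\omega$-walk of~$\strD'$ with all $\inv{\xi_j}$ bad $(q,q')$-walks. For each~$m$, the walk $\inv{\xi_m}\cdots\inv{\xi_1}$ ends at the first world~$\elx$ of~$\xi_1$ and factorises into $m$ bad $(q,q')$-walks. Hence $(m,q)\in\vtr^{\strD'}_{q,q'}(\elx)$ for all~$m$, and Lemmas~\ref{lemma:invariant} and~\ref{lemma:trace-finiteness} conclude exactly as in the forward case. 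One smaller inaccuracy: a stationary witnessing cycle may contain non-test steps, namely self-loops. It avoids the new edges because each of them joins two distinct vertices, as $\elw\notin\elv i0^{*}$.
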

\begin{proof}[Proof sketch]
Induction on~$i$; the base case is immediate, as every $\varphi$-cactus is a
$\varphi$-quasicactus. Assume the claim for~$\strD_i$ and abbreviate
$\strD \deff \strD_i$ and $\strD' \deff \strD_{i+1}$. We only treat the 
``only if'' part of Condition~\ref{phicactus:cycle}; the remaining ones are 
immediate (see appendix). We name the vertices as in Definition~\ref{def:fold-op}.
Fix~$\autA_{q,q'}$. If it is not bannable, a new cycle is
harmless: an implication cannot be falsified by affirming its consequent. Otherwise it is
stationary, and no new cycle arises, as the fold adds no self-loop ($\elw \not\in \elv i 0^*$)
and alters no type; or concatenation-closed, where we use traces.
Iterating Lemma~\ref{lemma:invariant} gives
$\vtr^{\strD'}_{q,q'} \subseteq \vtr^{\strC}_{q,q'}$~on~$\domD'$, where
Lemma~\ref{lemma:trace-finiteness} ensures their finiteness.
Foldability of~$\strD'$ follows.
No vertex~$\elv$ failing~$\ldem_{q,q'}$ carries a witnessing cycle: 
being a bad $(q,q')$-walk, 
it~could~be~repeated,~implying~infiniteness~of~$\vtr^{\strD'}_{q,q'}(\elv)$.~See~Figure~\ref{fig:soundness}(b).\qedhere
\end{proof}

\noindent We are ready for the final result of the paper, establishing soundness of our algorithm.

\begin{lemma}[Soundness]\label{lemma:soundness}
  If a foldable regular $\varphi$-cactus $\strC$ exists, then $\varphi$~has~a~finite~model.
\end{lemma}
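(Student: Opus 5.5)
The plan is to start from $\strD_0 \deff \strC$ and apply fold-operations (Definition~\ref{def:fold-op}) repeatedly, always choosing them so that after finitely many steps every remaining vertex lies at bounded depth. The resulting structure $\strD_n$ is then finite. By Lemma~\ref{lemma:quasicactus} it is a foldable $\varphi$-quasicactus, and since $(\strD,\varepsilon)\models\varphi$ holds for every $\varphi$-quasicactus, $(\strD_n,\varepsilon)$ is a finite model of~$\varphi$. Lemma~\ref{lemma:quasicactus} therefore carries the correctness; the proof has to supply existence and termination of a suitable sequence of folds.

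Existence of similar edge pairs comes first. By Lemma~\ref{lemma:trace-finiteness} there is a bound $\rmB$ such that all vertex traces of $\strC$ use only pairs $(m,p)$ with $m\leqslant\rmB$. Iterating Lemma~\ref{lemma:invariant} shows that $\vtr^{\strD_j}_{q,q'}(\elu)\subseteq\vtr^{\strC}_{q,q'}(\elu)$ at every stage, with types unchanged. Hence the number of $\approx$-classes in any $\strD_j$ is at most some $K$, determined by $2^{|\Sigma_\varphi|}$ and $2^{(\rmB+1)|Q|}$ per automaton. The number of possible ``edge profiles'' $(\delta,\sigma,[\elv]_\approx,[\elv i]_\approx)$ is therefore also finite; call this bound $K'$.

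Now take any downward path $\elu_0,\elu_1,\dots$ in $\strD_j$ that makes at least $K'{+}1$ non-zero steps $(\elu,\elu i)$, each lying on the fixed witness of the demand $\kappa^{-1}(i)$. Here I also need that every tree edge to a non-zero child is the initial step of the corresponding witness, which holds by construction in Lemma~\ref{lemma:model-to-cactus} and survives folding. By pigeonhole, two of these steps are $(\delta,\sigma)$-similar. The lower one, $(\elw,\elw i)$, has $\elw$ a descendant of $\elv i$. Because a non-zero step lies between them, $\elw\notin\elv i0^*$. So a fold is applicable. Since all $0$-paths are finite by~\ref{phicactus:fini}, and regularity of~$\strC$ bounds their length uniformly by the number of subtree types, every sufficiently deep vertex has an ancestor path with more than $K'$ non-zero steps. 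Folds only delete subtrees and add edges, so this uniform bound persists.

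The obstacle is termination: the domain is infinite, so finitely many folds must remove all deep vertices. For this I would use regularity. Choose a depth $d$ such that every vertex of depth $\geqslant d$ has a foldable pair above it located within depth $d$. Fold at the topmost available pairs first, processing the tree level by level in breadth-first order. Each fold deletes the whole subtree $\domD_{\elw i}$, and new folds only target vertices of depth at most $d$. Since the tree has finite arity, only finitely many vertices have depth at most $d$, so after finitely many folds no vertex of depth greater than $d$ remains. The delicate point is that folds change traces only by shrinking them (Lemma~\ref{lemma:invariant}). Similarity may therefore be lost or gained between stages, and I must argue that a pair chosen at a stage is similar in the current $\strD_j$, not merely in~$\strC$. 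I would handle this by recomputing the pigeonhole in each $\strD_j$, using the uniform bound $K'$. A finite breadth-first scan then always succeeds, because every path through the remaining depth-$(d{+}1)$ vertices still contains $K'{+}1$ eligible non-zero steps above depth~$d$.
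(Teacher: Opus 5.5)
Your overall route is the paper's: fold repeatedly, bound the number of $\approx$-classes uniformly via Lemmas~\ref{lemma:trace-finiteness} and~\ref{lemma:invariant} while recomputing similarity in each $\strD_j$, use the $t{+}1$ bound on anchored paths to find many non-zero steps on every deep root path, and terminate because each fold deletes a vertex of bounded depth, of which there are finitely many. Your termination argument is fine; the breadth-first order is not needed.

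There is a gap in your claim that every tree edge $(\elu,\elu i)$ with $i\neq 0$ is the initial step of the fixed witness $\xi^{\elu}_{\kappa^{-1}(i)}$. This does not hold ``by construction in Lemma~\ref{lemma:model-to-cactus}''. Here $\strC$ is an arbitrary foldable regular $\varphi$-cactus: in the decision procedure it is decoded from a regular tree supplied by Rabin's basis theorem, not obtained by unravelling a finite model, which is exactly what you are trying to produce. Definition~\ref{def:phi-cactus} allows a $\kappa(\delta)$-child at a vertex that fails $\delta$, or whose fixed witness is stationary. Such an edge can never belong to a $(\delta,\sigma)$-similar pair, so an infinite branch built from such edges admits no fold at all.

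The paper therefore first prunes. It deletes the subtree at $\elv\kappa(\delta)$ whenever $\xi^{\elv}_{\delta}$ does not exist or is not anchored, and checks that the result is still a foldable regular $\varphi$-cactus. Walks of the pruned structure are walks of $\strC$, so the negative clauses and foldability survive. An anchored witness of a surviving $\elw$ for $\delta'$ visits, besides $\elw$, only vertices $\elw\kappa(\delta')0^{k}$. Of these, only the subtree at $\elw\kappa(\delta')$ could have been pruned, and it was not. You need this preprocessing. Folds then preserve the property, since they only remove the child $\elw i$ and leave all other children and witnesses untouched.

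The pigeonhole count also fails as written. With $K'{+}1$ eligible steps you only get two similar steps $(\elv,\elv i)$ and $(\elw,\elw i)$. They may be consecutive non-zero steps of the path with only $0$-steps between them. In that case $\elw\in\elv i0^{*}$ (possibly $\elw=\elv i$), which Definition~\ref{def:fold-op} explicitly forbids, so your sentence ``because a non-zero step lies between them'' is unjustified.

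The fix is to require $2K'{+}1$ non-zero steps, i.e.\ depth $(t{+}1)(2K'{+}1)$. Then some profile occurs three times, and you fold the first occurrence with the third, the second supplying the non-zero step in between. This is exactly why the paper's bound $M$ contains the factor $2(\cdots)+1$ and why it asks for an extended type occurring at least three times.
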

\begin{proof}[Proof idea]
  For a vertex $\elv$ let a map $C_\elv$ send each concatenation-closed bannable 
  $\autA_{q,q'}$ to $\vtr^\strC_{q,q'}(\elv)$; by Lemma~\ref{lemma:trace-finiteness}
  only finitely many arise, so an edge $(\elv, \elv i)$ has one of finitely many
  \emph{extended types} $(\lambda^\strC(\elv), \lambda^\strC(\elv i), C_\elv, C_{\elv i}, i, \sigma)$,
  with $i, \sigma$ read off the witnessing walk of~$\elv$, \ie exactly the data making edges similar.
  For a suitable $\rmM$, every path from~$\varepsilon$ to depth $\rmM$ in $\strC$ has two edges
  $(\elv, \elv i),\,(\elw, \elw i)$ of equal extended type with $\elw\notin\elv i0^*$ below $\elv i$, so
  $((\elv, \elv i), (\elw, \elw i))$ is a permissible fold. After this fold,
  the model remains a quasicactus, some depth-$\rmM$ vertex is no more,
  and $\rmM$ enables further such folds, soon yielding a finite model.
\end{proof}

\section{Conclusions}\label{sec:conclusions}

We pushed the decidability frontier for finite satisfiability 
of $\PDL$ with loops, giving a modern account of the general case and, via $\varphi$-cacti and traces, new decidability results in the finite.
Decidability of the finite satisfiability problem for the full logic remains
open. The obstacles we are aware of, and the places in the
proof where we exploit concatenation-closedness, are discussed 
closely in the appendix. We believe our techniques will eventually 
settle the problem, and that they will find applications, \eg in
guarded~fragments~with~transitive~closure~\cite{Michaliszyn09,FigueiraFN26}.
\clearpage
\bibliography{references}
\clearpage
\appendix
\makeatletter
\edef\thetheorem{\expandafter\noexpand\thesection\@thmcountersep\@thmcounter{theorem}}
\makeatother
\tableofcontents
\clearpage
\section{Additional Information to the Reviewers}\label{app:to-reviewers}

\subsection{Relation to an Earlier Informal Version of This Work}\label{app:workshop-version}

A preliminary account of this work, \emph{Baby Steps towards Finite Satisfiability for LoopPDL}, was presented at the 39th International Workshop on Description Logics (DL~2026)~\cite{BednarczykS26}. The proceedings of that workshop are informal: contributions there are explicitly not treated as formal publications, so the present paper is not a dual submission.
That workshop paper was a simplified proof of concept: loops were restricted to the shapes $\lop(\pe)$ and $\lop(\pe^{+})$ for atomic~$\pe$, and the converse operator was not treated. Passing to $\ourLoopCPDL$ changed the construction throughout: it rests on the more general framework of $\varphi$-cacti, on a more involved foldability condition, and on a more refined notion of trace. Beyond~the~guiding~idea, the~two~papers~share~little.

\subsection{Comparison with Bojańczyk's Work}\label{app:bojanczyk}

\noindent Our approach was inspired by Bojańczyk's solution to finite satisfiability for the $\mu$-calculus with backward modalities~\cite{Bojanczyk02,BojanczykThesis}, so we state precisely what we borrowed.

\vspace{0.25em}
\noindent \textbf{The logics are incomparable.}\,
His logic is invariant under two-way bisimulation and ours is not: a world with an $\pe$-self-loop and a world on a two-element $\pe$-cycle are bisimilar, yet only the former satisfies $\lop(\pe)$. Hence no formula of Bojańczyk's logic defines $\lop(\pe)$, already over finite structures. Conversely, $\LoopCPDL$ has no fixed points beyond the regular patterns of~$\pi^{*}$. Hence, these two logics are incomparable in terms of their expressive power, and to the best of our knowledge, the decidability of one does not imply that of the other.

\vspace{0.25em}
\noindent \textbf{Proof scheme.}\,
Bojańczyk represents a finite graph by its two-way tree unravelling, recognises such trees by an alternating two-way parity automaton, and singles out those that loop back into a finite graph by \emph{bounded signature}~\cite[Lem.~5.2.10]{BojanczykThesis}. We unravel into a $\varphi$-cactus~(Lemma~\ref{lemma:model-to-cactus}), encode cacti as automaton-recognisable trees~(Lemma~\ref{lemma:cactus-automaton}), and single out the finite ones by foldability~(Theorem~\ref{thm:main-characterisation}). Reasoning in the infinite while carrying a certificate that one may return to the finite is Bojańczyk's idea, and we gladly~acknowledge~the~debt.

\vspace{0.25em}
\noindent \textbf{The technique.}\,
Bojańczyk can afford tree models precisely because the two-way $\mu$-calculus cannot enforce cycles. Moreover, his folding operation may create arbitrary cycles, needing only to keep infinite plays within the parity condition. Ours must create none that $\varphi$ forbids while keeping every cycle it enforces---and creating cycles is the very mechanism of finitisation. Finally, Bojańczyk's traces are plays of the acceptance game, and the associated measure counts the occurrences of an odd rank not yet cancelled by a smaller one, that is, progress towards violating the parity condition. Ours are walks in the structure, and $\vtr^{\strA}_{q, q'}(\elx)$ counts the bad $(q,q')$-walks already stacked up, that is, progress towards a bad $\omega$-walk. The analogy ends there, but we keep the word \emph{trace}, and speak of folding where Bojańczyk speaks of looping, to honour that influence.

\section{Appendix for Section~\ref{sec:preliminaries}}\label{app:preliminaries}

\subsection{Additional Definitions and Remarks on the Semantics of $\ourLoopCPDL$}

\begin{enumerate}\itemsep0em
    \item For additional background on complexity theory and formal languages, consult Sipser's textbook~\cite[Chapters~1,~4--5, and~10.3]{sipser13}.

    \item We write $\N$ for the set of non-negative integers, $\cupdot$ for the disjoint union of sets, and either $\powerset{\rmS}$ or $2^{\rmS}$ for the power set of a set~$\rmS$.

    \item A \emph{homomorphism} between pointed Kripke structures $(\strA, \ela)$ and $(\strB, \elb)$ is any map $\homof\colon \domA \to \domB$ such that: (i) for any element $\elv \in \domA$ and any atomic proposition $\pP \in \AP$ we have $\elv \in \pP^{\strA}$ implies $\homof(\elv) \in \pP^{\strB}$, and (ii) for any pair of elements $\elu, \elv \in \domA$ and any action $\pe \in \PR$ we have $(\elu,\elv) \in \pe^{\strA}$ implies $(\homof(\elu),\homof(\elv)) \in \pe^{\strB}$. As a consequence, we also have that $(\elu,\elv) \in \inv{\pe}^{\strA}$ implies $(\homof(\elu),\homof(\elv)) \in \inv{\pe}^{\strB}$. By an \emph{isomorphism} we mean a bijective homomorphism $\homof$ such that its inverse $\homof^{-1}$ is a homomorphism as well. As homomorphisms preserve the semantics of atomic programs and tests, they also carry witnessing walks and cycles for $\autA_{q,q'}$ and~$\ela$ to witnessing walks and cycles for $\autA_{q,q'}$ and~$\homof(\ela)$.
    
    For isomorphic tree-like or cactus-like Kripke structures, we silently assume that the isomorphism $\homof$ preserves the inner structure of the tree: for every vertex $\elv$ and every direction $i$, if $\homof(\elv) = \elw$ then $\homof(\elv \cdot i) = \elw \cdot i$.  

    \item A \emph{nondeterministic finite automaton} (NFA) is a tuple $\autA \deff (Q, \Delta, q_0, q_f)$, where $(Q, \Delta)$ is a semiautomaton and $q_0, q_f$, called the initial and final states, respectively, belong to~$Q$.

    \item Let $\strA$ be a Kripke structure and let $\domB \subseteq \domA$.
    The \emph{substructure} of~$\strA$ \emph{induced by}~$\domB$ is the Kripke structure~$\strB$ with domain~$\domB$ in which $\pP^{\strB} \deff \pP^{\strA} \cap \domB$ and $\pe^{\strB} \deff \pe^{\strA} \cap (\domB \times \domB)$ for all $\pP \in \AP$ and all $\pe \in \PR$.

    \item In the introduction we claimed that every formula $\neg \lop(\pi)$ with $\circ$-free~$\pi$ belongs to $\ourLoopCPDL$. Here is a proof.
    Indeed, $\lop(\pi \cup \rho) \equiv \lop(\pi) \vee \lop(\rho)$, and $\lop(\pi^{*}) \equiv \top$ because $\sem{\pi^{*}}_{\strA}$ is reflexive.
    Writing $\pi$ as a union of components, $\neg\lop(\pi)$ is thus equivalent to~$\bot$ if some component is starred, and to a conjunction of formulae $\neg\lop(\sigma)$ with $\sigma \in \Sigma$ otherwise.
    The latter are in the fragment, as $\lop(\pP?) \equiv \pP$ and $\lop(\inv{\pe}) \equiv \lop(\pe)$.

    \item The syntax of $\ourLoopCPDL$ can easily be extended with $\neg\lop(\inv{\pe})$ for $\pe \in \PR$, since $\neg\lop(\inv{\pe}) \equiv \neg\lop(\pe)$. It can likewise be extended with $\neg\lop(\pi_1^+ \cup \cdots \cup \pi_n^+)$ for any finite list of programs $\pi_1, \ldots, \pi_n$; this, however, adds neither expressive power nor succinctness, as $\neg\lop(\pi_1^+ \cup \cdots \cup \pi_n^+) \equiv \neg\lop(\pi_1^+) \wedge \cdots \wedge \neg\lop(\pi_n^+)$. Finally, one could allow Kleene's star operator, but this would not increase the expressive power either, as $\neg\lop(\pi^*) \equiv \bot$.
    
    \item In contrast to $\PDL$, we allow only atomic propositions in tests, rather than arbitrary formul{\ae}. For $\PDL$ this costs no expressive power, as every test $\varphi?$ can be replaced by $\pP?$ for a fresh atomic proposition~$\pP$, provided one adds the defining equivalence $\mlbox{\univprog}(\pP \iff \varphi)$. For our logic, however, this is problematic: one could write $\mlbox{\lop(\pi)?}\bot$ in place of $\neg\lop(\pi)$, making the resulting extension of $\ourLoopCPDL$ as expressive as the full $\LoopCPDL$.
\end{enumerate}

\subsection{Proof of Lemma~\ref{lemma:normal-form}}

\begin{proof}
  The construction is the routine Scott-style renaming, so we only record the details that are worth checking.
  It proceeds in three steps: we name every subformula of~$\varphi$ by a fresh atomic proposition, then push the naming through the modalities so that the only remaining modal atoms are of the shapes $\mldiamond{\rho}\top$ and $\lop(\rho)$, and finally replace the programs~$\rho$ by automata modalities.
  The one step that is not routine is the last one: because the cycle demands are stated with $\leftrightarrows_{q,q'}$ rather than with~$\iff$, the fresh propositions no longer denote exactly the subformulae they name.
  We therefore replace, in Step~3, the usual ``faithful naming'' argument by a one-sided one, and it is exactly there that the restriction of $\ourLoopCPDL$ to negated loops of the shapes $\neg\lop(\pe)$ and $\neg\lop(\pi^+)$ is used.
  Two conventions are used throughout.
  First, by the syntax of $\ourLoopCPDL$, the input formula~$\varphi$ contains neither the universal program~$\univprog$ nor automata modalities, negation occurs in~$\varphi$ only in front of atomic propositions and of loops, and every negated loop is of the form $\neg\lop(\pe)$ or $\neg\lop(\pi^+)$.
  Second, $\univprog$ is interpreted as $\domB \times \domB$, so $\strB, \elu \models \mlbox{\univprog}{\Theta}$ holds if and only if $\Theta$ is satisfied at every world of~$\strB$; in particular, the truth of such a conjunct does not~depend~on~the~distinguished~world.

  \smallskip\noindent\textbf{Step 1: naming subformulae.}\,
  For every subformula~$\chi$ of~$\varphi$ take a fresh atomic proposition~$\pP_{\chi}$, and for every subformula of the form $\mlbox{\pi}\chi_1$ take one further fresh proposition~$\pQ_{\chi}$, meant to denote the complement of~$\pP_{\chi_1}$.
  Let $\Theta$ be the conjunction of the following \emph{defining equivalences}, taken over all subformulae~$\chi$ of~$\varphi$:
  \begin{itemize}\itemsep0em
    \item $\pP_{\chi} \iff \chi$, if $\chi$ is an atomic proposition or the negation of one;
    \item $\pP_{\chi} \iff (\pP_{\chi_1} \land \pP_{\chi_2})$, if $\chi = \chi_1 \land \chi_2$, and dually for~$\vee$;
    \item $\pP_{\chi} \iff \mldiamond{\pi \circ \pP_{\chi_1}?}{\top}$, if $\chi = \mldiamond{\pi}\chi_1$;
    \item $(\pQ_{\chi} \iff \neg \pP_{\chi_1}) \,\land\, (\pP_{\chi} \iff \neg\mldiamond{\pi \circ \pQ_{\chi}?}{\top})$, if $\chi = \mlbox{\pi}\chi_1$;
    \item $\pP_{\chi} \iff \lop(\pi)$, if $\chi = \lop(\pi)$, and $\pP_{\chi} \iff \neg\lop(\pi)$, if $\chi = \neg\lop(\pi)$; in the latter case $\pi$ is, by the syntax of $\ourLoopCPDL$, of the form $\pe$ or~$\rho^{+}$.
  \end{itemize}
  The third and the fourth item rely on the equivalences $\mldiamond{\pi}\pP \equiv \mldiamond{\pi \circ \pP?}{\top}$ and $\mlbox{\pi}\pP \equiv \neg\mldiamond{\pi \circ \pQ?}{\top}$, the latter under the proviso that $\pQ$ is interpreted as the complement of~$\pP$.
  The detour through~$\pQ_{\chi}$ is needed because tests in $\LoopCPDL$ are restricted to atomic propositions.

  We claim that $\varphi$ and $\varphi_1 \deff \pP_{\varphi} \land \mlbox{\univprog}{\Theta}$ are (finitely) equisatisfiable.
  For the forward direction, take a model $(\strA, \ela)$ of~$\varphi$ and expand~$\strA$ to~$\strA_1$ by interpreting each fresh proposition as the set it is meant to denote, that is, $\pP_{\chi}^{\strA_1} \deff \sem{\chi}_{\strA}$ and $\pQ_{\chi}^{\strA_1} \deff \domA \setminus \sem{\chi_1}_{\strA}$.
  All defining equivalences then hold at every world of~$\strA_1$, so $(\strA_1, \ela) \models \varphi_1$; as the domain is untouched, $\strA_1$ is finite whenever $\strA$ is.
  Conversely, let $(\strB, \elb) \models \varphi_1$.
  Then every defining equivalence holds at every world of~$\strB$, and a routine induction on the structure of~$\chi$ yields $\pP_{\chi}^{\strB} = \sem{\chi}_{\strB}$ for all subformulae~$\chi$ of~$\varphi$.
  Taking $\chi \deff \varphi$ and recalling that $\elb \in \pP_{\varphi}^{\strB}$, we conclude that $(\strB, \elb) \models \varphi$.
  Note that $\varphi_1$ is of polynomial size: it has at most twice as many fresh propositions as $\varphi$ has distinct subformulae, and each defining equivalence is of constant size except for the program that it carries---and programs occupy pairwise disjoint positions in~$\varphi$.

  \smallskip\noindent\textbf{Step 2: from programs to a single semiautomaton.}\,
  Every defining equivalence produced above is of the form $\pP \iff \psi$, where $\psi$ is a Boolean combination of atomic propositions and of atoms of the shape $\mldiamond{\rho}\top$ or $\lop(\rho)$ for a $\CPDL$-program~$\rho$.
  Moreover, as observed at the end of Step~1, the total size of the programs~$\rho$ arising in this way does not exceed~$|\varphi|$.
  Appending a dummy test~$\top?$ where needed (in particular, we replace $\pi^+$ with $(\pi\top?)^+$), we may assume that no such~$\rho$ matches the empty word.
  Hence Thompson's construction~\cite{Thompson68}, followed by the standard elimination of $\varepsilon$-transitions, turns each~$\rho$ in polynomial time into an equivalent NFA without $\varepsilon$-transitions, with a single initial state~$q_0^{\rho}$, a single final state~$q_f^{\rho} \neq q_0^{\rho}$, and polynomially many states in~$|\rho|$.
  Let $\autA \deff (Q, \Delta)$ be the disjoint union of these automata, with their state spaces renamed apart; note that $|Q|$ and $|\Delta|$ are polynomial in~$|\varphi|$.
  As the automata do not share states, inspecting the semantics gives $\sem{\autA_{q_0^{\rho}, q_f^{\rho}}}_{\strB} = \sem{\rho}_{\strB}$ for every such~$\rho$ and every Kripke structure~$\strB$.
  We may thus replace, inside the defining equivalences, every atom $\mldiamond{\rho}\top$ by~$\mldiamond{\autA_{q_0^{\rho}, q_f^{\rho}}}\top$ and every atom $\lop(\rho)$ by $\lop(\autA_{q_0^{\rho}, q_f^{\rho}})$, without affecting satisfaction in any structure.

  It remains to compute the symbols~$\leftrightarrows_{q,q'}$.
  By the first convention a negated loop of~$\varphi$ has the form $\neg\lop(\pe)$ or $\neg\lop(\pi^{+})$. The automaton built for~$\pe$ accepts the single word~$\pe$, hence is stationary; the one built for~$\pi^{+}$ denotes a transitive relation in every structure, hence is concatenation-closed. Appending dummy tests $\top?$ is harmless. We therefore put:
  \begin{center}
    $(\mathsf{Exact})$\quad for every subformula $\neg\lop(\rho)$ of~$\varphi$, the symbol $\leftrightarrows_{q_0^{\rho}, q_f^{\rho}}$ denotes~$\iff$.
  \end{center}
  We set all the remaining $\leftrightarrows$ symbols to $\to$.
  This concludes the second step.

  \smallskip\noindent\textbf{Step 3: naming the automata atoms.}\,
  Finally, take fresh atomic propositions $\wdem_{q,q'}$ and $\ldem_{q,q'}$ for \emph{all} pairs of states $q, q' \in Q$, and let $\Theta'$ be the result of replacing, in the defining equivalences, every atom $\mldiamond{\autA_{q,q'}}\top$ by~$\wdem_{q,q'}$ and every atom $\lop(\autA_{q,q'})$ by~$\ldem_{q,q'}$.
  Adding the walk and cycle demands
  $\textstyle\bigwedge_{q,q' \in Q} ( \wdem_{q,q'} \iff \mldiamond{\autA_{q,q'}}\top )$ and $\textstyle\bigwedge_{q,q' \in Q} ( \ldem_{q,q'} \leftrightarrows_{q,q'} \lop(\autA_{q,q'}) )$
  to~$\Theta'$ inside the universal program, and keeping $\pP_{\varphi}$ as the top-level conjunct, we arrive at the desired formula~$\varphi'$.
  Every conjunct of~$\Theta'$ is of the shape $\pP \iff \psi$ with $\psi$ a Boolean combination of atomic propositions, so $\varphi'$ is indeed in $\autA$-normal form, and it is of polynomial size by the size estimates of Steps~1 and~2.

  The two families of demands are not symmetric: a walk demand is an equivalence, whereas a cycle demand is one only at the pairs at which $\leftrightarrows_{q,q'}$ denotes~$\iff$, and elsewhere merely makes $\ldem_{q,q'}$ \emph{sufficient} for $\lop(\autA_{q,q'})$.
  The fresh propositions therefore need not denote the formulae they name; we obtain one inclusion only, and it is the one we need.

  \smallskip\noindent\emph{From $\varphi$ to $\varphi'$.}\,
  Recall that $\varphi$ and $\varphi_1$ are equivalent, so we work directly with the latter.
  Take any model $(\strA_1, \ela) \models \varphi_1$. Expand it to~$\strA_2$ by interpreting $\wdem_{q,q'}$ and $\ldem_{q,q'}$ as $\sem{\mldiamond{\autA_{q,q'}}\top}_{\strA_1}$ and $\sem{\lop(\autA_{q,q'})}_{\strA_1}$, for all $q, q' \in Q$, respectively.
  All demands then hold as equivalences, hence also as implications, and every fresh proposition denotes exactly the extension of the atom it replaces in~$\Theta$; thus $(\strA_2, \ela) \models \varphi'$, over the same domain.

  \smallskip\noindent\emph{From $\varphi'$ to $\varphi$ (assuming the claim holds).}\,
  Let $(\strB, \elb) \models \varphi'$, so that every conjunct under $\mlbox{\univprog}$ holds at every world of~$\strB$.
  All extensions below are taken in~$\strB$, and we omit the corresponding sub- and superscripts.
  We claim that every subformula~$\chi$ of~$\varphi$ satisfies (i)~$\pP_{\chi} \subseteq \sem{\chi}$, and (ii)~$\domB \setminus \sem{\chi_1} \subseteq \pQ_{\chi}$ in case $\chi = \mlbox{\pi}\chi_1$; the two inclusions point in opposite directions because $\pQ_{\chi}$ names a complement.
  Taking $\chi \deff \varphi$ in~(i) and recalling that $\elb \in \pP_{\varphi}$, we obtain $(\strB, \elb) \models \varphi$.
  Only the last case used the right-to-left implication of a cycle demand, which is why negated loops are restricted: otherwise $\ldem_{q,q'}$ could be a proper subset of $\sem{\lop(\pi)}$, and~(i) would~fail.

  \smallskip\noindent\emph{Inductive Proof of the Claim.}\,
  The proof is by induction on the syntactic complexity of~$\chi$, which is in negation normal form.
  If $\chi$ is an atomic proposition or the negation of one, then $\pP_{\chi} \iff \chi$ gives $\pP_{\chi} = \sem{\chi}$.
  If $\chi = \chi_1 \land \chi_2$, then $\pP_{\chi} = \pP_{\chi_1} \cap \pP_{\chi_2} \subseteq \sem{\chi_1} \cap \sem{\chi_2} = \sem{\chi}$ by~(i), and dually for~$\vee$.
  The two modal cases share the monotonicity of $\sem{\mldiamond{\pi \circ \pP?}{\top}}$ in the extension of the test~$\pP$.
  If $\chi = \mldiamond{\pi}\chi_1$, the walk demand gives $\pP_{\chi} = \sem{\mldiamond{\pi \circ \pP_{\chi_1}?}{\top}}$, which by~(i) and monotonicity is contained in $\sem{\mldiamond{\pi}\chi_1}$.
  If $\chi = \mlbox{\pi}\chi_1$, then (i) gives $\pQ_{\chi} = \domB \setminus \pP_{\chi_1} \supseteq \domB \setminus \sem{\chi_1}$, which is~(ii); monotonicity yields $\sem{\mldiamond{\pi \circ \pQ_{\chi}?}{\top}} \supseteq \sem{\mldiamond{\pi}\neg\chi_1}$, and taking complements gives~(i).
  Finally, let $\chi$ be $\lop(\pi)$ or $\neg\lop(\pi)$, and write $q, q'$ for the initial and the final state of the automaton built for~$\pi$ in Step~2, so that $\sem{\lop(\autA_{q,q'})} = \sem{\lop(\pi)}$.
  In the positive case $\pP_{\chi} = \ldem_{q,q'} \subseteq \sem{\lop(\pi)}$ by the left-to-right implication of the cycle demand, whichever symbol $\leftrightarrows_{q,q'}$ denotes.
  In the negative one $\pi$ is of the form $\pe$ or~$\rho^{+}$, so $(\mathsf{Exact})$ makes that demand an equivalence; hence $\ldem_{q,q'} = \sem{\lop(\pi)}$ and $\pP_{\chi} = \domB \setminus \ldem_{q,q'} = \sem{\neg\lop(\pi)}$.

  Each of the three steps leaves the domain of the given model unchanged and hence preserves finiteness, so $\varphi$ and~$\varphi'$ are (finitely) equisatisfiable. Moreover, $\varphi'$ is of polynomial size, and all the steps take polynomial time~in~$|\varphi|$.
\end{proof}

\section{Appendix for Section~\ref{sec:cactus-models}}\label{app:cactus-models}

As in Condition~\ref{phicactus:fini}, a vertex~$\elv$ of a structure~$\strC$ has a \emph{$\Sigma_{\varphi}$-edge} to a vertex~$\elw$ if $(\elv, \elw) \in \sem{\sigma}_{\strC}$ for some possibly inverted action $\sigma \in \Sigma_{\varphi}$; as $\Sigma_{\varphi}$ is closed under~$\inv{\cdot}$, the direction of such an edge is immaterial.

\subsection{Proof of Lemma~\ref{lemma:model-to-cactus}}
Fix a model $(\strA, \ela) \models \varphi$, and take $\strC$ and $\homof$ as constructed in the main body of the paper, with the notation used there.
Let $\rmD \deff \{ \wdem_{q,q'}, \ldem_{q,q'} \mid q, q' \in Q \}$ be the set of demands, and recall that $\kappa$ maps~$\rmD$ bijectively to $\{1, 2, \ldots, 2|Q|^2\}$.
Two features of the construction of $\strC$ are used throughout: every vertex is created exactly once---the root at the outset, every other vertex inside a single branch---and, the run being fair, every vertex is eventually processed.

\noindent\textbf{Reductions.}\,
Every walk~$\zeta$ in~$\strA$ factorises uniquely as $\zeta = \beta_0, \sigma_1, \beta_1, \ldots, \sigma_{n}, \beta_{n}$, where its \emph{$i$-th block} $\beta_i \deff \ela_i, \tau_i^{1}, \ela_i, \ldots, \tau_i^{k_i}, \ela_i$ repeats a single world and $\ela_i \neq \ela_{i+1}$ for all~$i < n$.
We call $\ela_0, \sigma_1, \ldots, \sigma_{n}, \ela_{n}$ the \emph{reduction} of~$\zeta$; it is what the construction of a branch obtains by removing consecutive duplicate worlds.
Note that the letters inside a block are \emph{self-loop letters}, \ie $(\ela_i, \ela_i) \in \sem{\tau_i^{j}}_{\strA}$, that the letters between blocks are actions or converse actions but never tests, and that $\zeta$ is stationary precisely when $n = 0$.
A non-stationary cycle~has~$n \geqslant 2$.

\noindent\textbf{Two useful facts.}\,
For all vertices $\elv \in \domC$, all $\pP \in \AP$, all $\sigma \in \Sigma$, and all states $q, q' \in Q$:
\begin{namedlist}
    \item[\desclabel{$(\mathsf{Prop})$}{cactus:prop}] $\elv \in \pP^{\strC}$ if and only if $\homof(\elv) \in \pP^{\strA}$;
    \item[\desclabel{$(\mathsf{Loop})$}{cactus:loop}] $(\elv, \elv) \in \sem{\sigma}_{\strC}$ if and only if $(\homof(\elv), \homof(\elv)) \in \sem{\sigma}_{\strA}$.
\end{namedlist}

\noindent
First, $\elv$ inherits from~$\homof(\elv)$ its atomic propositions and its atomic self-loops.
Second, inspecting the construction, we see that nothing else produces either.
This yields~\ref{cactus:prop} as well as~\ref{cactus:loop}: the case $\sigma = \pP{?}$ is~\ref{cactus:prop} itself, while $\sigma = \pe$ and $\sigma = \inv{\pe}$ both reduce to the equivalence of $(\elv, \elv) \in \pe^{\strC}$ and $(\homof(\elv), \homof(\elv)) \in \pe^{\strA}$, already handled by the preservation of atomic self-loops.

\begin{claim}
$\strC$ is cactus-like.
\end{claim}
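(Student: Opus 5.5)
We verify the definition of cactus-like directly: we exhibit a tree-like structure~$\strT$ from which $\strC$ arises by adding back edges. The idea is to sort every pair added to some $\pe^{\strC}$ during the construction into one of three kinds. A pair may be a self-loop $(\elv,\elv)$ inherited in the first stage of processing. It may join $\elv_i$ and $\elv_{i+1}$ inside a branch with $i+1 \leqslant m$. Or it may be the closing pair of a cycle branch, joining $\elv_{m} = \elv j 0^{m-1}$ with $\elv_n = \elv$. We then let $\strT$ consist of $\domC$ with all atomic propositions, all self-loops, and all pairs of the second kind, and we check that $\strC$ is obtained from $\strT$ by adding the third kind.

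First we check that $\domC$ is a tree. Every fresh vertex $\elv j 0^{i-1}$ is created together with all its prefixes $\elv, \elv j, \ldots, \elv j 0^{i-2}$, and $\elv$ already exists. Moreover, $j = \kappa(\delta) \neq 0$ is used at most once per processed vertex, so fresh vertices are genuinely fresh and each vertex is created exactly once, as recorded above. Next we check that $\strT$ is tree-like. Self-loops satisfy $\elu=\elv$, and a second-kind pair joins $\elv_i$ with $\elv_{i+1}$. This is the parent $\elv$ with its child $\elv j$ when $i=0$, and otherwise $\elv j0^{i-1}$ with its child $\elv j 0^{i}$. So every pair consists of two neighbours.

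The main point is the third kind. For a cycle demand the reduction has $n \geqslant 2$, as noted above, so $m = n-1 \geqslant 1$ and $\elv_m = \elv j 0^{m-1}$ lies on the non-zero step $j$ followed by a $0$-path. We must show that $\elv_m$ has no $0$-child, so that it is indeed an anchored descendant with anchor~$\elv$. A $0$-child of $\elv j 0^{m-1}$ could only be created inside the same branch, since branches created by processing other vertices start with a non-zero direction from those vertices. This branch stops at index~$m$, so no such child exists.

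Finally we check that for every $\pe$ and every anchor/anchored-descendant pair at most one of the two orientations is added. This step needs care, because the same pair must not receive two back edges. An anchored descendant determines its anchor~$\elv$ and direction~$j$ uniquely, and direction~$j$ is used by exactly one demand $\delta=\kappa^{-1}(j)$, processed once at~$\elv$. Hence exactly one branch, and within it a single closing step $\sigma_n$, can contribute the pair $\{\elv_m,\elv\}$. That step adds exactly one of $(\elv_m,\elv)$ or $(\elv,\elv_m)$ to exactly one $\pe^{\strC}$. We also note that $\elv_m$ and $\elv$ are never neighbours when $m\geqslant 2$, and for $m=1$ the closing pair is the same unordered pair as the tree edge $(\elv,\elv j)$. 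In that case we simply regard the second pair as the back edge; the definition allows this, since back edges are only required to join an anchored descendant with its anchor.
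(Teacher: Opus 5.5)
Your proof is correct and follows essentially the same route as the paper's: the same split of the added pairs into self-loops, parent--child steps inside a branch, and the closing step of a cycle branch; maximality of $\elv j 0^{m-1}$ because only its own branch could create its $0$-child; and uniqueness of the back edge from the injectivity of $\kappa$ together with the single closing step. Your explicit choice of $\strT$ and your remark on the case $m=1$, where the closing pair also joins neighbours, only spell out what the paper leaves tacit.
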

\begin{proof}
We first show that $\domC$ is a tree of arity $|\rmD|{+}1 = 2|Q|^2{+}1$, as Definition~\ref{def:phi-cactus} requires, \ie a prefix-closed subset of $\{0, 1, \ldots, |\rmD|\}^{*}$.
This holds at the outset, where $\domC = \{\varepsilon\}$.
It is preserved by every processing step: processing a vertex~$\elv$ adds, for each demand~$\delta$ handled there, the vertices $\elv \kappa(\delta) 0^{i-1}$ with $1 \leqslant i \leqslant m$.
These lie in $\{0, 1, \ldots, |\rmD|\}^{*}$, and their proper prefixes are either prefixes of~$\elv$, and so already present, or added alongside them.
We next handle the transition relations.
Every pair entering a relation~$\pe^{\strC}$ is a self-loop or joins $\elv_i$ and~$\elv_{i+1}$ for some $i < n$ inside a branch.
In the second case, $\elv_{i+1}$ is a child of~$\elv_i$ whenever $i < m$.
A tree-like structure may contain self-loops and pairs of neighbours.
Only the pair joining $\elv_{m} = \elv \kappa(\delta) 0^{m-1}$ and $\elv_{n} = \elv$ for a cycle demand thus remains, and we claim that it is a back edge.
Its lower endpoint is an anchored descendant of~$\elv$.
Indeed, the direction $\kappa(\delta)$ is non-zero, and $\elv \kappa(\delta) 0^{m-1}$ is maximal.
For maximality, recall that every vertex is created once.
A $0$-child of $\elv \kappa(\delta) 0^{m-1}$ could therefore only be the $(m{+}1)$-st vertex of the branch for~$\delta$ at~$\elv$, which stops at the $m$-th.
Finally, that branch contributes exactly one of $(\elv_{m}, \elv)$ and $(\elv, \elv_{m})$: the former if $\sigma_{n} = \pe$, the latter if $\sigma_{n} = \inv{\pe}$.
Moreover, $\kappa$ being injective, no other branch ends at~$\elv \kappa(\delta) 0^{m-1}$.
Hence, we are done.
\end{proof}

\begin{claim}
    $\strC$ satisfies Condition~\ref{phicactus:fini}.
\end{claim}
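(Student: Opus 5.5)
We must verify the three clauses of Condition~\ref{phicactus:fini}: every $0$-path in $\strC$ is finite, the root $\varepsilon$ has no $0$-child, and every vertex $\elv \neq \varepsilon$ has a $\Sigma_{\varphi}$-edge to its parent. The plan is to read all three off the single mechanism that creates vertices, namely the attachment of branches during processing. We rely on the fact, already noted, that every vertex is created exactly once: the root at the outset, and every other vertex inside a unique branch.

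First I will classify vertices by their last direction. Processing $\elv$ for a demand $\delta$ creates only $\elv\kappa(\delta)0^{i-1}$ for $1 \leqslant i \leqslant m$. Such a vertex ends in $0$ exactly when $i \geqslant 2$, and its parent $\elv\kappa(\delta)0^{i-2}$ belongs to the same branch. Hence a vertex of the form $\elu 0$ is never the first vertex of a branch, and its parent $\elu$ was created by the very branch that created it.

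\textbf{The root.} The vertex $\varepsilon 0 = 0$ would have to be created by a branch attached at some processed $\elv$ as $\elv\kappa(\delta)0^{i-1}$. But $\kappa(\delta) \neq 0$, so the word $0$ is never produced, and the root has no $0$-child.

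\textbf{Finiteness of $0$-paths.} Take a $0$-path $\elu, \elu0, \elu00, \ldots$. By the classification above, all vertices from $\elu0$ onward lie in the single branch that created $\elu0$. That branch has at most $m \leqslant n$ vertices, where $n$ is the length of the reduction of a fixed witness, a finite walk. So the $0$-path has length at most $n+1$.

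\textbf{Edges to parents.} Let $\elu \neq \varepsilon$, created as $\elv_i$ with $1 \leqslant i \leqslant m$ in the branch for $\delta$ at $\elv$. Its parent is exactly $\elv_{i-1}$, with $\elv_0 = \elv$. The construction added the unique pair making $(\elv_{i-1},\elv_i) \in \sem{\sigma_i}_{\strC}$, where $\sigma_i$ is an action or converse action, since consecutive worlds of a reduction differ. It remains to check that $\sigma_i \in \Sigma_{\varphi}$. The letter $\sigma_i$ occurs in a witnessing walk of some $\autA_{q,q'}$, so it labels a transition of $\Delta$. Every such label comes from the programs of $\varphi$ (after Thompson's construction), and $\Sigma_{\varphi}$ is closed under $\inv{\cdot}$, which gives a $\Sigma_{\varphi}$-edge between $\elu$ and its parent.

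The only step needing care is this last letter-membership check, i.e.\ that transitions of $\autA$ use only letters from $\Sigma_{\varphi}$. I would settle it by inspecting the normal-form construction of Lemma~\ref{lemma:normal-form}. Any dummy $\top?$ letters it introduces are tests, and tests never occur between distinct worlds of a reduction, so they do not matter here.
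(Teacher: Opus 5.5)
Your proposal is correct and follows essentially the same route as the paper. Both arguments use the uniqueness of vertex creation to confine every $0$-path to the single branch that created it, whose length is bounded by that branch's finite~$m$. Both see that the root acquires no $0$-child because every created vertex contains the non-zero letter $\kappa(\delta)$. Both read the parent edge off the step $\sigma_i$ of the reduced witness, using that the letters of~$\Delta$ lie in $\Sigma_{\varphi}$ and that $\Sigma_{\varphi}$ is closed under~$\inv{\cdot}$.
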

\begin{proof}
    The first two parts rest on one observation.
    Fresh vertices appear only when a vertex~$\elu$ is processed, which adds, for each demand~$\delta$ handled there, the vertices $\elu \kappa(\delta) 0^{i-1}$ with $1 \leqslant i \leqslant m$.
    Only the first of them is a child of~$\elu$; each further one is the $0$-child of its predecessor.
    Hence the children of a vertex~$\elv$ are the vertices $\elv \kappa(\delta)$ for the demands~$\delta$ handled at~$\elv$, together with the $0$-child of~$\elv$ inside the branch in which~$\elv$ was created; in particular, the root, created outside every branch, has no $0$-child.
    Every vertex is created, and hence processed, once, so $\elv$ handles each demand at most once and lies in at most one branch.
    This confines every $0$-path to a single branch, and branches are finite: $m$ is bounded by the length of the reduction of the fixed witness~$\xi^{\homof(\elu)}_{\delta}$.
    For the last part, let $\elv \neq \varepsilon$, put $\elu_0 \deff \elu$ and $\elu_i \deff \elu \kappa(\delta) 0^{i-1}$ for the vertex~$\elu$ and the demand~$\delta$ whose branch created~$\elv$, and let $\elv = \elu_{i+1}$ with $i < m$; the parent of~$\elv$ is~$\elu_i$.
    That branch puts $(\elu_i, \elu_{i+1}) \in \sem{\sigma_{i+1}}_{\strC}$, where $\sigma_{i+1}$ is an action or its converse labelling a transition of~$\Delta$, hence occurring in~$\varphi$; so $\inv{\sigma_{i+1}} \in \Sigma_{\varphi}$.
    Thus $(\elv, \elu_i) \in \sem{\inv{\sigma_{i+1}}}_{\strC}$, that is, $\elv$ has a $\Sigma_{\varphi}$-edge to its parent.
\end{proof}

\begin{claim}
$\homof$ is a homomorphism with $\homof(\varepsilon) = \ela$.
\end{claim}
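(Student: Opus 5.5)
The claim has three parts: $\homof(\varepsilon)=\ela$, preservation of atomic propositions, and preservation of every transition pair. The first holds by the initialisation of the construction, where $\homof(\varepsilon)\deff\ela$. Propositions are handled by \ref{cactus:prop}: if $\elv\in\pP^{\strC}$ then $\homof(\elv)\in\pP^{\strA}$, and we only need the left-to-right direction of that fact. The work is therefore in the transition relations, and I would proceed by a case analysis on how a pair $(\elu,\elw)$ entered $\pe^{\strC}$.

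By the inspection already made in the proof that $\strC$ is cactus-like, every pair of $\pe^{\strC}$ arises in one of two ways. It is either a self-loop inherited in the first processing stage, or a pair added in the second stage inside the branch for some demand $\delta$ at some processed vertex $\elv$.

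In the self-loop case, $(\elu,\elu)\in\pe^{\strC}$ gives $(\homof(\elu),\homof(\elu))\in\pe^{\strA}$ directly by \ref{cactus:loop}.

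In the branch case, take the reduction $\ela_0,\sigma_1,\ldots,\sigma_n,\ela_n$ of the fixed witness $\xi^{\homof(\elv)}_{\delta}$, and the vertices $\elv_0=\elv,\elv_1,\ldots$ as in the construction. I would first check that $\homof(\elv_i)=\ela_i$ for every $0\leqslant i\leqslant n$:
\begin{itemize}
  \item for $i=0$ this is the equation $\ela_0=\homof(\elv)$, which holds because the witness starts at $\homof(\elv)$;
  \item for $1\leqslant i\leqslant m$ it holds by the definition of $\homof$ on the fresh vertices;
  \item for a cycle demand with $i=n$, we have $\elv_n=\elv$, and $\homof(\elv)=\ela_0=\ela_n$ because the witness is a cycle.
\end{itemize}
The added pair is the unique one making $(\elv_i,\elv_{i+1})\in\sem{\sigma_{i+1}}_{\strC}$, where $\sigma_{i+1}\in\{\pe,\inv{\pe}\}$. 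The reduction is itself a walk in $\strA$: deleting repeated worlds only removes self-loop blocks, and consecutive worlds of the reduction are joined by the original letters between blocks. Hence $(\ela_i,\ela_{i+1})\in\sem{\sigma_{i+1}}_{\strA}$. If $\sigma_{i+1}=\pe$, the added pair is $(\elv_i,\elv_{i+1})$ and it is mapped into $\pe^{\strA}$. If $\sigma_{i+1}=\inv{\pe}$, the added pair is $(\elv_{i+1},\elv_i)$ and is mapped to $(\ela_{i+1},\ela_i)\in\pe^{\strA}$.

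The main obstacle is a bookkeeping subtlety, not a deep step. Each vertex must receive a single, well-defined $\homof$-value, and the back edge of a cycle demand must not be mapped incorrectly. The first point follows from the fact, already established, that every vertex is created exactly once, so $\homof$ is assigned once per vertex. The second is settled by the equation $\ela_n=\ela_0$ used for the case $i=n$ above. Together with the first two parts, this case analysis shows that $\homof$ is a homomorphism with $\homof(\varepsilon)=\ela$.
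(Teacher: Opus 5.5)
Your proof is correct and follows the paper's argument. It uses the initialisation for $\homof(\varepsilon)=\ela$, \ref{cactus:prop} for propositions, and the same split into inherited self-loops (handled by \ref{cactus:loop}) and branch pairs oriented by $\sigma_{i+1}\in\{\pe,\inv{\pe}\}$. You also spell out two facts the paper uses silently: that $\homof(\elv_i)=\ela_i$ for all $i\leqslant n$ (including $\elv_n=\elv$ for cycle demands), and that the reduction is itself a walk in $\strA$.
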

\begin{proof}
The equality is part of the initialisation, and atomic propositions are preserved by~\ref{cactus:prop}.
For the transition relations, take any pair $(\elu, \elw) \in \pe^{\strC}$.
By the classification above, it is either a self-loop, handled by~\ref{cactus:loop}, or the unique pair realising $(\elv_i, \elv_{i+1}) \in \sem{\sigma_{i+1}}_{\strC}$ inside a branch.
In the second case, $\sigma_{i+1} = \pe$ by construction of $\strC$ yields $(\elu, \elw) = (\elv_i, \elv_{i+1})$ with $(\homof(\elv_i), \homof(\elv_{i+1})) \in \pe^{\strA}$.
The case of $\sigma_{i+1} = \inv{\pe}$ is symmetric by swapping the vertices.
\end{proof}

\begin{claim}
    $\strC$ satisfies Conditions \ref{phicactus:bool}, \ref{phicactus:walk}, and \ref{phicactus:cycle}. Thus $(\strC, \varepsilon) \models \varphi$.
\end{claim}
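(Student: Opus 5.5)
We verify the three conditions one at a time, using \ref{cactus:prop}, \ref{cactus:loop}, the fact that $\homof$ is a homomorphism, and the fact that $\strA$ satisfies every conjunct of~$\varphi$ under~$\mlbox{\univprog}$ at every world. The final assertion $(\strC, \varepsilon) \models \varphi$ then follows from the remark after Definition~\ref{def:phi-cactus}, since $\strC$ is already known to be cactus-like and to satisfy~\ref{phicactus:fini}.

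\textbf{Condition~\ref{phicactus:bool}.} Each required conjunct is a Boolean combination of atomic propositions. By \ref{cactus:prop}, such a combination holds at~$\elv$ iff it holds at~$\homof(\elv)$. Since $(\strA, \ela) \models \varphi$, every world of~$\strA$ satisfies all $\pP_i \iff \psi_i$, and $\homof(\varepsilon) = \ela$ satisfies~$\pP_{\varphi}$. For $\ldem_{q,q}$ and $\wdem_{q,q}$ we use the zero-length run $(\homof(\elv), q)$: it witnesses both $\mldiamond{\autA_{q,q}}\top$ and $\lop(\autA_{q,q})$ at~$\homof(\elv)$. Where $\leftrightarrows_{q,q}$ is only $\to$, we rely on the normal form putting $\ldem_{q,q}$ into the definition of~$\varphi$. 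Failing that, we may assume without loss of generality that $\strA$ interprets cycle demands maximally, as in the expansion in the proof of Lemma~\ref{lemma:normal-form}.

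\textbf{Positive halves of \ref{phicactus:walk} and \ref{phicactus:cycle}.} Suppose $\elv$ satisfies a demand~$\delta$. By \ref{cactus:prop}, so does~$\homof(\elv)$, so the fixed witness $\zeta \deff \xi^{\homof(\elv)}_{\delta}$ exists. Every vertex is eventually processed, so $\elv$ is processed. Write $\zeta$ in block form $\beta_0, \sigma_1, \beta_1, \ldots, \sigma_n, \beta_n$.
\begin{itemize}
\item If $n = 0$, every letter of~$\zeta$ is a self-loop letter at~$\homof(\elv)$. By \ref{cactus:loop}, the same letter sequence is a stationary walk at~$\elv$. It is accepted by the same run states, since acceptance depends only on the letters.
\item If $n \geqslant 1$, processing~$\elv$ created the branch $\elv_0, \ldots, \elv_m$ with $\homof(\elv_i) = \ela_i$ and $(\elv_i, \elv_{i+1}) \in \sem{\sigma_{i+1}}_{\strC}$. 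We lift each block~$\beta_i$ to a stationary block at~$\elv_i$ via \ref{cactus:loop}. Concatenating these lifts with the branch steps gives a walk in~$\strC$ with the same letter sequence as~$\zeta$, hence again a witnessing walk for~$\autA_{q,q'}$.
\end{itemize}
After removing duplicates, the vertex sequence of this walk is $\elv, \elv\kappa(\delta), \ldots, \elv\kappa(\delta)0^{m-1}$, which is an anchored path through $\elv \cdot \kappa(\delta)$. For a cycle demand, the final step is the back edge returning to $\elv_n = \elv$, so we obtain an anchored witnessing cycle.

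\textbf{Negative halves.} Suppose $\elv$ fails $\wdem_{q,q'}$, or fails $\ldem_{q,q'}$ with $\autA_{q,q'}$ bannable. By \ref{cactus:prop}, $\homof(\elv)$ fails the same proposition. Because $\varphi$ holds in~$\strA$, the corresponding walk or cycle demand is an equivalence (for cycles this is exactly bannability), so $\homof(\elv)$ has no witnessing walk, respectively no witnessing cycle, for $\autA_{q,q'}$. Homomorphisms carry witnessing walks and cycles to witnessing walks and cycles, as noted in the preliminaries. Hence any witness at~$\elv$ in~$\strC$ would map under~$\homof$ to a forbidden witness at~$\homof(\elv)$, which is impossible.

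The main obstacle is the first bullet point in Condition~\ref{phicactus:bool}, namely the reflexive demands $\ldem_{q,q}$ under~$\to$, together with the bookkeeping in the lifting step. The lifting needs blocks to be transported faithfully via \ref{cactus:loop}, including test letters, for which \ref{cactus:loop} reduces to \ref{cactus:prop}.
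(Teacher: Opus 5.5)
Your proposal is correct and follows the paper's proof essentially step by step: you lift the fixed witness at $\homof(\elv)$ block by block via~\ref{cactus:loop} onto the branch, closing with its back edge for cycle demands, and for the negative halves you push any witness at $\elv$ forward along the homomorphism~$\homof$ (anchoredness also needs $\elv\kappa(\delta)0^{m-1}$ to have no $0$-child, which the cactus-likeness claim establishes and you should cite). On $\ldem_{q,q}$ under~$\to$, which the paper's proof passes over as ``immediate from~\ref{cactus:prop}'', drop your first justification---the normal-form pattern contains no such conjunct---and keep the fallback, which matches the exactness assumption made in Appendix~\ref{appendix:encoding-ready-trees}; it is harmless here because in the output of Lemma~\ref{lemma:normal-form} initial and final states differ, so no $\psi_i$ mentions any $\ldem_{q,q}$, and setting all of them true everywhere preserves~$\varphi$.
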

\begin{proof}
All conjuncts of~$\varphi$ except~$\pP_{\varphi}$ sit under~$\mlbox{\univprog}$, so every world of~$\strA$ satisfies the conjunction inside that box; we use this silently.
Fix $\elv \in \domC$ and put $\elb \deff \homof(\elv)$.
Condition~\ref{phicactus:bool} is immediate from~\ref{cactus:prop}: it gives $\varepsilon \in \pP_{\varphi}^{\strC}$ from $\homof(\varepsilon) = \ela \in \pP_{\varphi}^{\strA}$, and lets $\elv$ inherit $\bigwedge_{i} (\pP_i \iff \psi_i)$ from~$\elb$.

We next establish~\ref{phicactus:walk}. 
Let $\elv$ satisfy $\wdem_{q,q'}$, abbreviated~$\delta$.
Then so does~$\elb$ by~\ref{cactus:prop}, and hence $\elb$ satisfies $\mldiamond{\autA_{q,q'}}\top$.
The construction thus fixed a witnessing walk~$\xi^{\elb}_{\delta}$ for $\autA_{q,q'}$ and~$\elb$; let $\ela_0, \sigma_1, \ldots, \sigma_{n}, \ela_{n}$ be its reduction.
Let $\elv_0 \deff \elv$ and $\elv_i \deff \elv \kappa(\delta) 0^{i-1}$ be the vertices of the branch for~$\delta$ at~$\elv$, so that $\homof(\elv_i) = \ela_i$.
Let $\hat{\xi}$ arise from~$\xi^{\elb}_{\delta}$ by replacing every world of its $i$-th block by~$\elv_i$.
For $n = 0$ there is no such branch, and $\hat{\xi}$ is the stationary walk at~$\elv$.
In either case $\hat{\xi}$ is a walk in~$\strC$.
Its steps inside the $i$-th block are self-loops at~$\elv_i$ by~\ref{cactus:loop}.
Its step to the $(i{+}1)$-st block is~$\sigma_{i+1}$, realised between $\elv_i$ and~$\elv_{i+1}$ by construction.
As $\hat{\xi}$ carries the letters of~$\xi^{\elb}_{\delta}$, it is a witnessing walk for $\autA_{q,q'}$ and~$\elv$.
For $n \geqslant 1$, deleting consecutive duplicates from its vertices leaves the anchored path $\elv, \elv\kappa(\delta), \ldots, \elv\kappa(\delta)0^{n-1}$, whose last vertex is maximal as shown above.
Hence $\hat{\xi}$ is stationary for $n = 0$ and anchored otherwise, as required.
For the converse direction, suppose $\elv$ has a witnessing walk for $\autA_{q,q'}$.
Then its $\homof$-image witnesses $\mldiamond{\autA_{q,q'}}\top$ at~$\elb$.
Thus $\elb$ satisfies~$\wdem_{q,q'}$, and~\ref{cactus:prop} carries the satisfaction of $\wdem_{q,q'}$ back to~$\elv$.

Condition~\ref{phicactus:cycle} is the same argument, read with ``cycle'' for ``walk'', with $\ldem_{q,q'}$ for~$\wdem_{q,q'}$, and with $\lop(\autA_{q,q'})$ for~$\mldiamond{\autA_{q,q'}}\top$, up to one point: a non-stationary witnessing cycle has $n \geqslant 2$, so its branch is non-empty, and the last step of~$\hat{\xi}$, leading from~$\elv_{n-1} = \elv \kappa(\delta) 0^{m-1}$ to~$\elv_{n} = \elv$, is precisely the back edge that the branch contributes.
Thus $\hat{\xi}$ returns to~$\elv$. Deleting consecutive duplicates and its final vertex leaves the anchored path $\elv, \elv\kappa(\delta), \ldots, \elv\kappa(\delta)0^{n-2}$.
\end{proof}

Note that the proof above, in particular the part concerning Condition~\ref{phicactus:cycle}, remains valid for arbitrary automata, not merely bannable ones; hence the entire construction of $\varphi$-cacti transfers verbatim from our fragment $\ourLoopCPDL$ to the full logic $\LoopCPDL$. The reason is that the domain of a $\varphi$-cactus is infinite and that we add a fresh branch for every demand, which renders the interaction between distinct cycles almost irrelevant.

\begin{claim}
If $\strA$ is finite, then $\strC$ is regular.
\end{claim}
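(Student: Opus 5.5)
The plan is to show that the isomorphism type of the subtree rooted at a vertex~$\elv$ depends only on finitely many data. When~$\strA$ is finite, there are only finitely many candidate data, so~$\strC$ has finitely many subtrees up to isomorphism.

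First, fix the datum. A vertex~$\elv$ is either the root or was created inside the branch for a demand~$\delta$ at some vertex~$\elu$, as the vertex $\elu\kappa(\delta)0^{i-1}$. Set $\mathrm{st}(\elv) \deff (\homof(\elv), \bot)$ for the root and for vertices with no $0$-child. Otherwise set
\[
\mathrm{st}(\elv) \deff \bigl(\homof(\elv), \homof(\elu), \delta, i\bigr).
\]
Here $i \leqslant m$, and $m$ is bounded by the length of the reduction of the fixed witness $\xi^{\homof(\elu)}_{\delta}$. Since the witnesses were fixed once per pair (world, demand), there are finitely many such $\xi$, and $\mathrm{st}$ ranges over a finite set. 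A cleaner choice is to record the pair $(\xi^{\homof(\elu)}_{\delta}, i)$ itself, i.e.\ the remaining suffix of the reduction that the branch still has to lay out below~$\elv$, together with the closing back-edge letter for cycle demands.

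Second, prove the key claim: if $\mathrm{st}(\elv) = \mathrm{st}(\elw)$, then the substructures induced by the subtrees at~$\elv$ and at~$\elw$ are isomorphic via $\elv x \mapsto \elw x$, including back edges. I will argue by induction on the length of~$x$ that $\elv x \in \domC$ iff $\elw x \in \domC$, that both have the same $\homof$-image, and that corresponding vertices carry the same $\mathrm{st}$.

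\begin{itemize}
\item The children of a vertex in direction $\kappa(\delta')$ are determined by whether $\homof$ of it realises $\delta'$ with a non-stationary fixed witness. This depends only on the $\homof$-image, and gives new branches whose entire content is determined by that witness.
\item The $0$-child, and the position of the vertex inside its own branch, are recorded in $\mathrm{st}$.
\item Propositions and self-loops are inherited from the $\homof$-image by \ref{cactus:prop} and \ref{cactus:loop}.
\item Edges between neighbours are given by the letters $\sigma_{i+1}$ of the reduction, which are again determined by the recorded witness.
\item A back edge connects an anchored descendant to its anchor. Both endpoints lie in the subtree of the anchor, inside a branch rooted at that anchor, and its letter $\sigma_n$ comes from that branch's witness.
\end{itemize}

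The main obstacle is back edges whose anchor lies above the subtree root. Suppose~$\elv$ sits in the middle of a branch of~$\elu$. Then the anchored descendant at the end of that branch lies below~$\elv$, while its back edge goes to $\elu \notin$ subtree. Since the subtree is the induced substructure, such an edge is simply dropped, so it does not affect the isomorphism type and causes no problem.

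Having checked this, the number of subtrees up to isomorphism is at most the number of values of $\mathrm{st}$, which is finite, so~$\strC$ is regular.

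One subtlety remains. The FIFO processing order must not influence the shape of~$\strC$. I will note that each vertex's children depend only on its $\homof$-image and the fixed witnesses, never on timing.
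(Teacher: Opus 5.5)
Your proof is correct and follows the paper's argument. Your datum $\mathrm{st}(\elv)$ is the paper's ``kind'': the pair of $\homof(\elv)$ and a residue recording the world, demand and position of the branch that created $\elv$. Like you, the paper shows that equal kinds propagate to corresponding children and that $\elv x \mapsto \elw x$ is an isomorphism, with back edges covered because an anchored descendant lies in the subtree of its anchor.

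Your only deviation is collapsing the residue to $\bot$ at the root and at maximal vertices. This coarsening is harmless, since what it forgets (the edge to the parent and the back edge to a proper ancestor) lies outside the induced subtree.
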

\begin{proof}
\newcommand{\vkind}{\mathrm{kd}} 
\newcommand{\vres}{\mathrm{rs}}   
The proof idea is simple: we assign to every vertex of~$\strC$ a kind from a finite set, and show that vertices of equal kind have isomorphic subtrees.
Formally, the \emph{kind} of $\elv \in \domC$ is $\vkind(\elv) \deff (\homof(\elv), \vres(\elv))$.
Here the \emph{residue}~$\vres(\elv)$ is~$\varepsilon$ if $\elv$ is the root, and the triple $(\homof(\elu), \delta, i)$ if $\elv$ was created as the $i$-th vertex of the branch attached to~$\elu$ for~$\delta$.
As every vertex is created once, this is well defined.
Intuitively, the first component of $\vkind(\elv)$ records the origin of~$\elv$, and the second its reason for existing.
As $\strA$ is finite, only finitely many kinds occur: the index~$i$ is bounded by the length of $\xi^{\homof(\elu)}_{\delta}$, one of finitely many words fixed in advance.

As a preparatory step, we show that vertices of equal kind have children in the same directions, and that corresponding children again have equal kinds.
Indeed, let $\vkind(\elu) = \vkind(\elu')$.
The children of~$\elu$ in a non-zero direction are the vertices $\elu\kappa(\delta')$, one for each demand~$\delta'$ satisfied by~$\homof(\elu)$ with a non-stationary fixed witness.
A $0$-child exists precisely when $\vres(\elu) = (\elb', \delta', i)$ with $i$ below the length of the branch for~$(\elb', \delta')$.
Both conditions depend only on~$\vkind(\elu)$, as do the residues of the children, which are read off the corresponding~fixed~witness.

It remains to show that vertices of equal kind have isomorphic subtrees.
Take $\elv, \elv' \in \domC$ with $\vkind(\elv) = \vkind(\elv')$, and let $\homoh$ map $\elv \elw$ to~$\elv' \elw$.
By the preparatory step and induction on~$|\elw|$, this is a kind-preserving bijection between the subtrees rooted at~$\elv$ and at~$\elv'$.
It is moreover an isomorphism.
Indeed, it leaves the first component of the kind unchanged, so atomic propositions and self-loops are preserved by~\ref{cactus:prop} and~\ref{cactus:loop}.
The transition to each child and the back edge contributed by each cycle demand are read off the corresponding fixed witness, and hence agree at every vertex~$\elu$ of the subtree and at~$\homoh(\elu)$.
Since an anchored descendant lies in the subtree of its anchor, this covers every back edge inside the subtree.
Hence $\strC$ has at most as many subtrees up to isomorphism as $\vkind$ has values.
\end{proof}

Lemma~\ref{lemma:model-to-cactus} now follows from the established claims, concluding the proof.

\subsection{The Encoding Function and the Encoding-Ready Trees}\label{appendix:encoding-ready-trees}

Recall that a $\Sigma$-labelled tree is a pair $\treeT \deff (\domT, \ell)$ 
consisting of a tree $\domT$ and a labelling $\ell \colon \domT \to \Sigma$.
Throughout this section we work with \emph{exact} $\varphi$-cacti, 
\ie those in which $\ldem_{q,q'}$ holds precisely at the vertices having a 
witnessing cycle for $\autA_{q,q'}$ (note that by Condition~\ref{phicactus:walk}, 
walk demands are exact anyway). This is without loss of generality: a cycle 
demand with non-bannable $\autA_{q,q'}$ occurs only positively in~$\varphi$ and 
no automaton tests it, so in any model we may raise it to its exact value 
$\sem{\lop(\autA_{q,q'})}$, again obtaining a model of~$\varphi$.
Lemma~\ref{lemma:model-to-cactus} then yields an exact $\varphi$-cactus, 
since $\elv$ has a witnessing cycle for $\autA_{q,q'}$, resp.\ satisfies 
$\ldem_{q,q'}$, exactly when $\homof(\elv)$ does.
A motivation for this is to establish 
Remark~\ref{remark:general-loop-pdl} almost for free.

\begin{definition}[Encoding]\label{def:encoding}
Let $\strC \deff (\domC, \varepsilon)$ be a $\varphi$-cactus.
Its \emph{encoding} $\enc(\strC) \deff (\domT, \ell)$ is the complete 
infinite $(2|Q|^2{+}1)$-ary $2^{\Gamma_{\varphi}}$-labelled tree in which, 
for every vertex $\elv \in \domT$, every proposition 
$\pP \in \Sigma_{\varphi} \cap \AP$, every action 
$\pe \in \Sigma_{\varphi} \cap \PR$, and every 
possibly inverted action $\sigma \in \Sigma_{\varphi}$:
\begin{itemize}
  \item $\NULL \in \ell(\elv)$ iff $\elv \notin \domC$;
  \item $\pD_{\varepsilon} \in \ell(\elv)$ iff $\elv = \varepsilon$, and, for every 
  direction $0 \leqslant i \leqslant 2|Q|^2$, we have $\pD_{i} \in \ell(\elv)$ iff 
  $\elv \in \domC \setminus \{\varepsilon\}$ is the $i$-th child of its parent;
  \item $\pP \in \ell(\elv)$ iff $\elv \in \domC$ and $\elv \in \pP^{\strC}$;
  \item $\sigma \in \ell(\elv)$ iff $\elv \in \domC \setminus \{\varepsilon\}$ 
  and $(\elw, \elv) \in \sem{\sigma}_{\strC}$ for the parent $\elw$ of~$\elv$;
  \item $\pred{slf}_{\pe} \in \ell(\elv)$ iff $\elv \in \domC$ and $(\elv, \elv) \in \pe^{\strC}$;
  \item $\pred{bck}_{\sigma} \in \ell(\elv)$ iff $\elv \in \domC$ 
  is an anchored descendant whose anchor~$\elu$ satisfies $(\elv, \elu) \in \sem{\sigma}_{\strC}$.
\end{itemize}
\end{definition}

\noindent Thus a vertex outside the domain~$\domC$ is labelled $\{\NULL\}$ and nothing else, direction letter included, while a vertex of~$\domC$ carries its atomic propositions, exactly one direction letter---$\pD_{\varepsilon}$ at the root and $\pD_{i}$ at an $i$-th child---the letters recording the actions from its parent to it, one letter $\pred{slf}_{\pe}$ per atomic self-loop, and, if it is an anchored descendant, the letter of its back edge---unambiguous, as anchors are unique. Consult Figure~\ref{fig:encoding} for a visualisation.

We now isolate the properties characterising the trees of the form $\enc(\strC)$; we call such trees \emph{encoding-ready}.
Conditions~\ref{encready:A0}--\ref{encready:A5} transcribe Definition~\ref{def:phi-cactus} and are read off the shape of a $\varphi$-cactus.
Conditions~\ref{encready:B0}--\ref{encready:C4} close the demand labels under the ways in which witnessing walks and cycles arise; they are what turns the ``otherwise'' clauses of~\ref{phicactus:walk} and~\ref{phicactus:cycle} into a property that a tree automaton can enforce.
Three auxiliary notions are needed first.

\vspace{0.25em}
\noindent \textbf{$\NULL$-vertices and lifted tree vocabulary.}\,
A vertex $\elv$ is a \emph{$\NULL$-vertex} if $\NULL \in \ell(\elv)$.
We lift the notions of \emph{anchor}, \emph{anchored descendant}, \emph{anchored path}, and \emph{stationary path} from Section~\ref{sec:preliminaries} to labelled trees, reading ``$\elw$ has no $0$-child'' as ``$\elw$ is a non-$\NULL$-vertex whose $0$-child is a $\NULL$-vertex'', and so on.

\vspace{0.25em}
\noindent \textbf{Witnessing paths.}\,
Put $\Gamma_{\varphi}^{0} \deff (\Sigma_{\varphi} \cap \AP) \cup \{ \pred{slf}_{\pe} \mid \pe \in \Sigma_{\varphi} \cap \PR \}$ and $\Gamma_{\varphi}^{1} \deff \{ \pe, \inv{\pe} \mid \pe \in \Sigma_{\varphi} \cap \PR \}$:
a letter of $\Gamma^{0}_{\varphi}$ present at a vertex records a move that stays there, one of $\Gamma^{1}_{\varphi}$ a move that crosses the tree edge to the parent.
We \emph{read} such letters as letters of~$\Sigma$: a proposition~$\pP$ is read as the test~$\pP{?}$, a letter $\pred{slf}_{\pe}$ as $\pe$ or as $\inv{\pe}$ (a self-loop offers both), and a letter of $\Gamma^{1}_{\varphi}$ as itself.
For neighbouring non-$\NULL$ vertices $\elx, \ely$ we let $\theta(\elx,\ely) \deff \Gamma^{1}_{\varphi} \cap \ell(\ely)$ if $\ely$ is a child of~$\elx$, and $\theta(\elx,\ely) \deff \{ \inv{\sigma} \mid \sigma \in \Gamma^{1}_{\varphi} \cap \ell(\elx) \}$ if $\elx$ is a child of~$\ely$; these are the letters realising a move from~$\elx$ to~$\ely$.
A path $\elv_0, \elv_1, \ldots, \elv_n$ of non-$\NULL$ vertices is a \emph{witnessing path} for $\autA_{q,q'}$ if $\autA_{q,q'}$ accepts a word $w_0 w_1 \cdots w_n$ in which $w_0$ is a reading of a word over $\Gamma^{0}_{\varphi} \cap \ell(\elv_0)$ and, for each $i \geqslant 1$, $w_i$ is a reading of a word from $\theta(\elv_{i-1},\elv_i) \cdot (\Gamma^{0}_{\varphi} \cap \ell(\elv_i))^{*}$.
Such a path is \emph{stationary} if $n = 0$, \emph{downward} (resp. \emph{upward}) if every $\elv_{i+1}$ is a child (resp. the parent) of~$\elv_i$, and \emph{anchored in direction~$j$} if it is the anchored path $\elv_0, \elv_0 j, \elv_0 j 0, \ldots, \elv_n$. 
The same notions for the extension~$\hat{\autA}_{q,q'}$ are obtained by reading, in addition, each cycle demand $\ldem_{p,p'} \in \Gamma^{0}_{\varphi}$~as~the~test~$\ldem_{p,p'}?$.

\vspace{0.25em}
\noindent \textbf{$\sigma$-successors.}\,
Let $\sigma$ be a possibly inverted action.
A non-$\NULL$ vertex~$\elv$ is a \emph{$\sigma$-successor} of a non-$\NULL$ vertex~$\elw$ if
(i)~$\elv$ is a child of~$\elw$ and $\sigma \in \ell(\elv)$;
(ii)~$\elw$ is a child of~$\elv$ and $\inv{\sigma} \in \ell(\elw)$;
(iii)~$\elv$ is an anchored descendant of~$\elw$ and $\pred{bck}_{\inv{\sigma}} \in \ell(\elv)$; or
(iv)~$\elw$ is an anchored descendant of~$\elv$~and~$\pred{bck}_{\sigma} \in \ell(\elw)$.

\noindent We now define encoding-ready trees. The definition is technical, but it is what lets a tree automaton check the properties of $\varphi$-cacti. For the reader's convenience we interleave the formal definitions with their intuitive explanations.

\begin{definition}[Encoding-Ready Trees]\label{def:encoding-ready-trees}
  A labelled tree $\treeT \deff (\domT, \ell)$ is \emph{encoding-ready} if it satisfies all the conditions below, for all states $q, q', q'' \in Q$ and all non-$\NULL$ vertices $\elv$.
    \begin{namedlist}[$(\mathrm{A}_0)$\hspace{0.4em}]
      \item[\desclabel{$(\mathrm{A}_0)$}{encready:A0}]
      $\treeT$ is a complete infinite $(2|Q|^2{+}1)$-ary $2^{\Gamma_{\varphi}}$-labelled tree; $\ell(\elu) = \{\NULL\}$ for every $\NULL$-vertex~$\elu$, $\ell(\elv)$ meets $\Gamma^{1}_{\varphi}$ if and only if $\elv \neq \varepsilon$, a letter $\pred{bck}_{\sigma}$ lies in $\ell(\elv)$ only if $\elv$ is an anchored descendant of some vertex, and the direction letters are \emph{faithful}: $\ell(\elv)$ meets $\Gamma^{\mathrm{dir}}_{\varphi}$ in exactly one letter, which is $\pD_{\varepsilon}$ for $\elv = \varepsilon$ and $\pD_{i}$ for $\elv$ the $i$-th child~of~its~parent.\\
      \why{The basic shape of our trees, plus the bookkeeping implicit in~$\enc$: outside the domain nothing but $\NULL$ is recorded, and the root has no parent, hence no incoming edge to record, whereas every other vertex records at least the $\Sigma_{\varphi}$-edge to its parent demanded by Condition~\ref{phicactus:fini}. The direction letters are redundant for a reader, who sees the address of~$\elv$; not for a 2APTA, which sees only labels. They are what lets a copy that has just moved up decide whether it came from a $0$-child, and so recognise anchored paths while climbing---as Condition~\ref{encready:C4} and, later, the foldability automaton both require.}

      \item[\desclabel{$(\mathrm{A}_1)$}{encready:A1}]
      all $0$-paths consisting of non-$\NULL$ vertices are finite, and the $0$-child of the root~$\varepsilon$ is a $\NULL$-vertex.
      \why{Condition~\ref{phicactus:fini}; its last part, like the arity fixed in Definition~\ref{def:phi-cactus}, is already part of~\ref{encready:A0}.}

      \item[\desclabel{$(\mathrm{A}_2)$}{encready:A2}]
      all descendants of a $\NULL$-vertex are $\NULL$-vertices.

      \item[\desclabel{$(\mathrm{A}_3)$}{encready:A3}]
      $\pP_{\varphi} \in \ell(\varepsilon)$, and the propositional valuation making true precisely the propositions of $\ell(\elv) \cap \AP$ satisfies $\textstyle\bigwedge_{i=1}^{N_\varphi} (\pP_i \iff \psi_i)$.
      \why{A direct transcription of Condition~\ref{phicactus:bool}.}

      \item[\desclabel{$(\mathrm{A}_4)$}{encready:A4}]
      if $\wdem_{q,q'} \in \ell(\elv)$, then $\elv$ has a stationary witnessing path for $\autA_{q,q'}$, or one anchored in direction $\kappa(\wdem_{q,q'})$.
      \why{A direct transcription of Condition~\ref{phicactus:walk}.}

      \item[\desclabel{$(\mathrm{A}_5)$}{encready:A5}]
      if $\ldem_{q,q'} \in \ell(\elv)$, then $\elv$ has a stationary witnessing path for $\autA_{q,q'}$, or there are states $q_1, q_2 \in Q$ and a possibly inverted action~$\sigma$ with $(q_1,\sigma,q_2) \in \Delta$ such that $\elv$ has a witnessing path for $\autA_{q,q_1}$ anchored in direction $\kappa(\ldem_{q,q'})$ and ending on a vertex~$\elw$ of which $\elv$ is a $\sigma$-successor, as well as a stationary witnessing path for $\autA_{q_2,q'}$.

      \why{A~less~direct~transcription of Condition~\ref{phicactus:cycle}. A non-stationary witnessing cycle at~$\elv$ is anchored, so it descends the branch reserved for the demand, closes with a single step from the last vertex~$\elw$ of that branch back to~$\elv$, and can afterwards only idle at~$\elv$; the three requirements state exactly this split into a downward part, the closing step, and a stationary tail.}

      \item[\desclabel{$(\mathrm{B}_0)$}{encready:B0}]
      $\ldem_{q,q} \in \ell(\elv)$ and $\wdem_{q,q} \in \ell(\elv)$.
      \why{Not moving at all is a witnessing cycle for~$\autA_{q,q}$.}

      \item[\desclabel{$(\mathrm{B}_1)$}{encready:B1}]
      if $\ldem_{q,q'} \in \ell(\elv)$, then $\wdem_{q,q'} \in \ell(\elv)$.
      \why{A witnessing cycle is in particular a witnessing walk.}

      \item[\desclabel{$(\mathrm{B}_2)$}{encready:B2}]
      if $(q, \pP{?}, q') \in \Delta$ and $\pP \in \ell(\elv)$, then $\ldem_{q,q'} \in \ell(\elv)$.
      \why{If $\elv$ is labelled with $\pP$, then $\elv, \pP{?}, \elv$ is a witnessing cycle for $\autA_{q,q'}$ at~$\elv$.}

      \item[\desclabel{$(\mathrm{B}_3)$}{encready:B3}]
      if $(q, \pe, q') \in \Delta$ or $(q, \inv{\pe}, q') \in \Delta$, and $\pred{slf}_{\pe} \in \ell(\elv)$, then $\ldem_{q,q'} \in \ell(\elv)$.
      \why{An $\pe$-self-loop at~$\elv$ makes both $\elv, \pe, \elv$ and $\elv, \inv{\pe}, \elv$ witnessing cycles at~$\elv$.}

      \item[\desclabel{$(\mathrm{B}_4)$}{encready:B4}]
      if $\ldem_{q,q'} \in \ell(\elv)$ and $\wdem_{q',q''} \in \ell(\elv)$, then $\wdem_{q,q''} \in \ell(\elv)$.
      \why{Prefixing a witnessing walk for $\autA_{q',q''}$ with a witnessing cycle for $\autA_{q,q'}$ at its first world yields a witnessing walk for~$\autA_{q,q''}$.}

      \item[\desclabel{$(\mathrm{B}_5)$}{encready:B5}]
      if $\ldem_{q,q'} \in \ell(\elv)$ and $\ldem_{q',q''} \in \ell(\elv)$, then $\ldem_{q,q''} \in \ell(\elv)$.
      \why{Concatenating two witnessing cycles at the same world yields a witnessing cycle.}

      \item[\desclabel{$(\mathrm{C}_1)$}{encready:C1}]
      let $q_1,q_2,q_3,q_4 \in Q$ and let $\sigma, \tau$ be possibly inverted actions with $(q_1,\sigma,q_2), (q_3,\tau,q_4) \in \Delta$.
      If $\elv$ is a $\sigma$-successor of a non-$\NULL$ vertex~$\elw$, $\elw$ is a $\tau$-successor of~$\elv$, and $\ldem_{q_2,q_3} \in \ell(\elv)$, then $\ldem_{q_1,q_4} \in \ell(\elw)$.
      \why{This captures the scenario of stepping from~$\elw$ to~$\elv$ along~$\sigma$, cycling at~$\elv$, and returning along~$\tau$ is a witnessing cycle for $\autA_{q_1,q_4}$ at~$\elw$.}

      \item[\desclabel{$(\mathrm{C}_2)$}{encready:C2}]
      let $q_1,q_2,q_3 \in Q$ and let $\sigma$ be a possibly inverted action with $(q_1,\sigma,q_2) \in \Delta$.
      If $\elv$ is a $\sigma$-successor of a non-$\NULL$ vertex~$\elw$ and $\wdem_{q_2,q_3} \in \ell(\elv)$, then $\wdem_{q_1,q_3} \in \ell(\elw)$.
      \why{Prefixing a witnessing walk starting at~$\elv$ with the step from~$\elw$ to~$\elv$.}

      \item[\desclabel{$(\mathrm{C}_3)$}{encready:C3}]
      let $q_1,q_2,q_3,q_4 \in Q$, let $\sigma$ be a possibly inverted action with $(q_2,\sigma,q_3) \in \Delta$, let $\elw$ be a non-$\NULL$ anchored descendant of its anchor~$\elu$ with $\pred{bck}_{\sigma} \in \ell(\elw)$, and let $\elv$ lie on the anchored path from~$\elu$ to~$\elw$.
      If there is a downward witnessing path for $\hat{\autA}_{q_1,q_2}$ from~$\elv$ to~$\elw$ and one for $\hat{\autA}_{q_3,q_4}$ from~$\elu$ to~$\elv$, then $\ldem_{q_1,q_4} \in \ell(\elv)$.

      \item[\desclabel{$(\mathrm{C}_4)$}{encready:C4}]
      as in~\ref{encready:C3}, but with $\pred{bck}_{\inv{\sigma}} \in \ell(\elw)$ and with the two paths reversed: if there is an upward witnessing path for $\hat{\autA}_{q_1,q_2}$ from~$\elv$ to~$\elu$ and one for $\hat{\autA}_{q_3,q_4}$ from~$\elw$ to~$\elv$, then $\ldem_{q_1,q_4} \in \ell(\elv)$.
      \why{Conditions~\ref{encready:C3} and~\ref{encready:C4} are the two ways of travelling once around the cycle that a back edge closes with its anchored path, in each case letting the vertex~$\elv$ at which the travel starts record the resulting cycle; note that $\elv$ may coincide with~$\elu$ or with~$\elw$. Using $\hat{\autA}$ rather than $\autA$ is essential: it absorbs, into a single test, every detour that the travel makes off the anchored path, such a detour being itself a cycle already recorded at the vertex where it starts. See the bottom row of Figure~\ref{fig:encready}.}
    \end{namedlist}
\end{definition}

\noindent Observe that \ref{encready:A4} and~\ref{encready:A5} provide witnesses for the demands that a vertex carries, whereas \ref{encready:B0}--\ref{encready:C4} do the converse: they force a vertex to carry every demand it can witness.
Moreover, no condition treats a self-loop as a $\sigma$-successor step, and none is needed: the case $\elv = \elw$ of~\ref{encready:C1} follows from \ref{encready:B3} and \ref{encready:B5}, and that of~\ref{encready:C2} from \ref{encready:B3} and~\ref{encready:B4}. 

\begin{figure}[htb]
  \centering
  \begin{tikzpicture}[
      x=1cm, y=1cm, every node/.style={transform shape},
      wld/.style   = {circle, draw=black!75, fill=white, line width=0.7pt,
                      inner sep=0pt, minimum size=6.6pt},
      zedge/.style = {->, >=stealth, draw=blue!55!black, line width=0.75pt,
                      dash pattern=on 0.6pt off 1.6pt},
      bedge/.style = {->, >=stealth, draw=red!75!black, line width=0.85pt,
                      dash pattern=on 3pt off 2pt},
      sedge/.style = {->, >=stealth, draw=orange!85!black, line width=0.95pt},
      wit/.style   = {->, >=stealth, draw=orange!85!black, line width=0.9pt,
                      dash pattern=on 0.7pt off 1.9pt},
      vname/.style = {font=\footnotesize, text=black!75, inner sep=1.5pt},
      hyp/.style   = {font=\scriptsize, inner sep=2.2pt, rounded corners=1.6pt,
                      draw=orange!75!black, fill=orange!10, text=orange!55!black},
      con/.style   = {font=\scriptsize, inner sep=2.2pt, rounded corners=1.6pt,
                      draw=vert, fill=vert!8, text=vert},
      olab/.style  = {font=\scriptsize, text=orange!55!black, inner sep=1.6pt},
      rlab/.style  = {font=\scriptsize, text=red!65!black,    inner sep=1.6pt},
      ptag/.style  = {font=\footnotesize, text=black!85, inner sep=1.5pt},
      leg/.style   = {font=\scriptsize, text=black!75, anchor=west, inner sep=1.5pt},
    ]

  \begin{scope}[shift={(1.55,0)}]
    \node[wld] (w1) at (-0.95,0) {};
    \node[wld] (v1) at ( 0.95,0) {};
    \draw[sedge] (w1) to[bend left=26]  node[olab, above=0pt] {$\sigma$} (v1);
    \draw[sedge] (v1) to[bend left=26]  node[olab, below=0pt] {$\tau$}   (w1);
    \draw[wit] (v1) .. controls +(42:1.15) and +(138:1.15) .. (v1);
    \node[vname, anchor=east]  at (-1.16, 0)    {$\elw$};
    \node[vname, anchor=north] at ( 0.95,-0.14) {$\elv$};
    \node[hyp, anchor=south]   at ( 0.95, 0.70) {$\ldem_{q_2,q_3}$};
    \node[con, anchor=north]   at (-0.95,-0.50) {$\ldem_{q_1,q_4}$};
    \node[ptag] at (0,-1.85) {$(\mathrm{C}_1)$};
  \end{scope}

  \begin{scope}[shift={(4.95,0)}]
    \node[wld] (w2) at (-0.85,0) {};
    \node[wld] (v2) at ( 0.85,0) {};
    \node[wld] (x2) at ( 1.70,0) {};
    \draw[sedge] (w2) -- node[olab, above=0pt] {$\sigma$} (v2);
    \draw[wit, -] (v2) -- (x2);
    \draw[wit]    (x2) -- (2.32,0);
    \node[vname, anchor=east]  at (-1.06, 0)    {$\elw$};
    \node[vname, anchor=north] at ( 0.85,-0.14) {$\elv$};
    \node[hyp, anchor=south]   at ( 0.85, 0.32) {$\wdem_{q_2,q_3}$};
    \node[con, anchor=north]   at (-0.85,-0.42) {$\wdem_{q_1,q_3}$};
    \node[ptag] at (0.50,-1.85) {$(\mathrm{C}_2)$};
  \end{scope}

  \begin{scope}[shift={(8.80,0)}]
    \node[wld] (u3) at (0, 0.95) {};
    \node[wld] (v3) at (0, 0.00) {};
    \node[wld] (w3) at (0,-0.95) {};
    \draw[zedge] (u3) -- (v3);  \draw[zedge] (v3) -- (w3);
    \draw[bedge] (w3) to[out=195,in=165,looseness=1.35] (u3);
    \node[rlab, anchor=east] at (-0.9,0) {$\sigma$};
    \node[rlab, anchor=west] at (0.14, 0.95) {$q_3$};
    \node[rlab, anchor=west] at (0.14,-0.95) {$q_2$};
    \draw[sedge] (0.38, 0.85) -- node[olab, right=-1pt] {$\hat{\autA}_{q_3,q_4}$} (0.38, 0.10);
    \draw[sedge] (0.38,-0.10) -- node[olab, right=-1pt] {$\hat{\autA}_{q_1,q_2}$} (0.38,-0.85);
    \node[vname, anchor=south] at (0, 1.10) {$\elu$};
    \node[vname, anchor=north] at (0,-1.10) {$\elw$};
    \node[vname, anchor=east]  at (-0.20,0) {$\elv$};
    \node[con, anchor=west]    at ( 0.55,0) {$\ldem_{q_1,q_4}$};
    \node[ptag] at (0.35,-1.85) {$(\mathrm{C}_3)$};
  \end{scope}

  \begin{scope}[shift={(11.90,0)}]
    \node[wld] (u4) at (0, 0.95) {};
    \node[wld] (v4) at (0, 0.00) {};
    \node[wld] (w4) at (0,-0.95) {};
    \draw[zedge] (u4) -- (v4);  \draw[zedge] (v4) -- (w4);
    \draw[bedge] (u4) to[out=165,in=195,looseness=1.35] (w4);
    \node[rlab, anchor=east] at (-0.9,0) {$\sigma$};
    \node[rlab, anchor=west] at (0.14, 0.95) {$q_2$};
    \node[rlab, anchor=west] at (0.14,-0.95) {$q_3$};
    \draw[sedge] (0.38, 0.10) -- node[olab, right=-1pt] {$\hat{\autA}_{q_1,q_2}$} (0.38, 0.85);
    \draw[sedge] (0.38,-0.85) -- node[olab, right=-1pt] {$\hat{\autA}_{q_3,q_4}$} (0.38,-0.10);
    \node[vname, anchor=south] at (0, 1.10) {$\elu$};
    \node[vname, anchor=north] at (0,-1.10) {$\elw$};
    \node[vname, anchor=east]  at (-0.20,0) {$\elv$};
    \node[con, anchor=west]    at ( 0.55,0) {$\ldem_{q_1,q_4}$};
    \node[ptag] at (0.35,-1.85) {$(\mathrm{C}_4)$};
  \end{scope}

  \begin{scope}[shift={(0,-2.65)}]
    \draw[zedge] (0.70,0) -- (1.17,0);  \node[leg] at (1.29,0) {anchor path};
    \draw[bedge] (3.29,0) -- (3.76,0);  \node[leg] at (3.88,0) {back edge};
    \draw[sedge] (5.68,0) -- (6.15,0);  \node[leg] at (6.27,0) {$\sigma$-successor step};
    \draw[wit]   (9.27,0) -- (9.74,0);  \node[leg] at (9.86,0) {witnessing walk or cycle};
    \node[leg, anchor=center] at (6.75,-0.55)
         {a boxed demand at a vertex lies in its label:\;
          \tikz[baseline=-0.55ex]{\node[hyp,font=\scriptsize]{orange};} if assumed,\;
          \tikz[baseline=-0.55ex]{\node[con,font=\scriptsize]{green};} if forced};
  \end{scope}
  \end{tikzpicture}
  \caption{The closure Conditions \ref{encready:C1}--\ref{encready:C4}.
    In $(\mathrm{C}_1)$ and $(\mathrm{C}_2)$ each orange edge between $\elw$
    and~$\elv$ stands for any of the four cases~(i)--(iv) of the definition
    of a $\sigma$-successor; in $(\mathrm{C}_3)$ and $(\mathrm{C}_4)$ the
    vertex $\elv$ may coincide with~$\elu$ or with~$\elw$, and the red arrow
    shows the direction in which the back edge is traversed.
    The transitions $(q_1,\sigma,q_2), (q_3,\tau,q_4) \in \Delta$ required
    by the conditions are not depicted.}
  \label{fig:encready}
\end{figure}

As the next step towards the proof of Lemma~\ref{lemma:cactus-automaton},
we show that the encoding of a $\varphi$-cactus is encoding-ready and, 
conversely, that every encoding-ready tree is the encoding of some 
$\varphi$-cactus (actually such a cactus will be unique). This is done in the 
next two sections. To keep the proofs modular, we first record three simple 
observations, used throughout.

\begin{observation}
Let $\strC$ be a $\varphi$-cactus and let $\treeT \deff (\domT,\ell) \deff \enc(\strC)$.
\begin{namedlist}[$(O_3)$\hspace{0.4em}]
  \item[\desclabel{$(O_1)$}{dict:dom}] The set $\domC$ is precisely the set of non-$\NULL$ vertices of~$\treeT$, and the lifted notions of anchor, anchored descendant, and anchored path in~$\treeT$ agree with the original ones in~$\strC$.
  \item[\desclabel{$(O_2)$}{dict:edge}] Let $\elu, \elv \in \domC$ and let $\sigma$ be a possibly inverted action with underlying action~$\pe$. Then $(\elv,\elv) \in \sem{\sigma}_{\strC}$ if and only if $\pred{slf}_{\pe} \in \ell(\elv)$; and, for $\elu \neq \elv$, we have $(\elu,\elv) \in \sem{\sigma}_{\strC}$ if and only if $\elv$ is a $\sigma$-successor of~$\elu$ in~$\treeT$.
  \item[\desclabel{$(O_3)$}{dict:path}] A path $\elv_0, \ldots, \elv_n$ of non-$\NULL$ vertices is a witnessing path for $\autA_{q,q'}$ (respectively $\hat{\autA}_{q,q'}$) in~$\treeT$ if and only if $\strC$ has a witnessing walk for $\autA_{q,q'}$ (respectively $\hat{\autA}_{q,q'}$) from~$\elv_0$ to~$\elv_n$ that uses no back edge and whose sequence of worlds, after deleting consecutive duplicates, is $\elv_0, \ldots, \elv_n$. 
\end{namedlist}
\end{observation}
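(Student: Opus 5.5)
The three items should be proved in the order stated, since \ref{dict:path} relies on the other two. Each follows by unfolding Definition~\ref{def:encoding} and the definitions of $\sigma$-successors and witnessing paths, using that $\strC$ is cactus-like.

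For \ref{dict:dom}, the first clause of Definition~\ref{def:encoding} says that $\NULL \in \ell(\elv)$ exactly when $\elv \notin \domC$, so $\domC$ is the set of non-$\NULL$ vertices. For the lifted notions, note that the direction letters and the tree structure are identical in both settings. The only clause that differs is ``$\elw$ has no $0$-child'', which was rewritten as ``the $0$-child of $\elw$ is a $\NULL$-vertex''. Since $\domC$ is prefix-closed, $\elw 0 \in \domC$ holds iff $\elw 0$ is non-$\NULL$. Hence anchored paths, anchors and anchored descendants coincide.

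For \ref{dict:edge}, the self-loop part is the $\pred{slf}_{\pe}$ clause. I would observe that $(\elv,\elv) \in \sem{\inv{\pe}}$ iff $(\elv,\elv) \in \pe^{\strC}$. For $\elu \neq \elv$, every non-reflexive pair in $\pe^{\strC}$ joins either two tree neighbours or an anchor with its anchored descendant (a back edge). These cases are disjoint: a neighbour pair cannot be an anchor/anchored-descendant pair, because the latter are at depth distance at least one along a non-zero step followed by $0$-steps, and when that distance is exactly one the edge is a tree edge, not the added back edge. The cactus definition adds at most one orientation per action, so the $\pred{bck}$ letter pins it down. I would then match the four cases: child with $\sigma \in \ell(\elv)$ is (i); parent via $\inv{\sigma}$ on $\ell(\elu)$ is (ii); and the two back-edge orientations are (iii) and (iv), using the uniqueness of anchors from \ref{dict:dom}.

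The main care goes into one subtle point. When $\elv = \elu j$ is itself an anchored descendant of $\elu$ (the $0$-path is empty), a pair $(\elu,\elv)$ may come from the tree edge or from a back edge. Both are recorded, by $\sigma \in \ell(\elv)$ and by $\pred{bck}$ respectively, and both yield the successor relation via (i) or (iii)/(iv). So the ``iff'' still holds, since it is a disjunction.

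For \ref{dict:path}, I would argue in both directions by decomposing a walk into blocks as in the proof of Lemma~\ref{lemma:model-to-cactus}. A walk using no back edge whose reduction is $\elv_0,\ldots,\elv_n$ consists of self-loop letters at each $\elv_i$, interleaved with single non-test steps between neighbours. The self-loop letters are exactly the readings of $\Gamma^0_\varphi \cap \ell(\elv_i)$: propositions become tests and $\pred{slf}_{\pe}$ becomes $\pe$ or $\inv{\pe}$, by \ref{dict:edge}. A single neighbour step is exactly a letter of $\theta(\elv_{i-1},\elv_i)$, because without back edges the only non-reflexive edges between neighbours are tree edges, recorded at the child. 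Hence the accepted word $w_0w_1\cdots w_n$ and the walk's label sequence correspond letter by letter, and the automaton run transfers. For $\hat{\autA}$ the extra letters $\ldem_{p,p'}?$ are tests, read from $\ell$ exactly as propositions, so the same argument applies. The converse direction simply realises each read letter as the corresponding step in $\strC$.
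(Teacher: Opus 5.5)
Your proposal is correct and follows the paper's argument essentially step by step: $(O_1)$ comes from the $\NULL$ clause of the encoding together with the reading of ``no $0$-child''; $(O_2)$ matches the neighbour/back-edge dichotomy of cactus-like structures with clauses (i)--(iv); and $(O_3)$ converts between walks and paths through the block decomposition, identifying the looping letters at $\elv$ with $\Gamma^{0}_{\varphi} \cap \ell(\elv)$ and the tree-edge letters with $\theta$. One correction: drop the claim in $(O_2)$ that the neighbour case and the anchor/anchored-descendant case are disjoint, because it is false when $\elu j$ has no $0$-child (exactly the situation you then treat as your ``subtle point''), and it is not needed, since each of (i)--(iv) is by itself equivalent to a membership in $\sem{\sigma}_{\strC}$.
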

\begin{proof}
  Indeed, \ref{dict:dom} is immediate from the first item of Definition~\ref{def:encoding} and from the reading of ``has no $0$-child''.
  For~\ref{dict:edge}, note that, as $\strC$ is cactus-like, a pair $(\elu,\elv) \in \pe^{\strC}$ with $\elu \neq \elv$ either joins two neighbours or joins an anchor with one of its anchored descendants. Cases~(i)--(iv) of the definition of a $\sigma$-successor are precisely these four possibilities, and each is covered by the corresponding item of Definition~\ref{def:encoding}.

For~\ref{dict:path}, note first that, by Definition~\ref{def:encoding} and~\ref{dict:edge}, the letters looping at a non-$\NULL$ vertex~$\elv$, that is those $\gamma \in \Gamma_{\varphi}$ with $(\elv,\elv) \in \sem{\gamma}_{\strC}$, are precisely the elements of $\Gamma^{0}_{\varphi} \cap \ell(\elv)$: the tests $\pP{?}$ with $\pP \in \ell(\elv)$, the possibly inverted actions~$\sigma$ with $\pred{slf}_{\pe} \in \ell(\elv)$ for the underlying~$\pe$, and, for~$\hat{\autA}$, the tests $\ldem_{p,p'}{?}$ with $\ldem_{p,p'} \in \ell(\elv)$. Similarly, the letters~$\gamma$ with $(\elu,\elv) \in \sem{\gamma}_{\strC}$ for a tree edge $\elu \neq \elv$ are precisely those of $\theta(\elu,\elv)$. Both notions of a witness are therefore an accepted word together with a sequence of worlds realising its letters in the transition relations of~$\strC$; they differ only in that a walk records one world per letter, whereas a path records one vertex per edge traversed. The conversion between them therefore amounts just to inserting and deleting consecutive duplicates, possibly after changing the letters of $\Gamma^{0}_{\varphi}$ into their corresponding readings and vice versa.
\end{proof}

\subsection{From Phi-Cacti to Encoding-Ready Trees}\label{appendix:from-phi-cacti-to-encoding-ready-trees}

We now show that the encoding of a $\varphi$-cactus yields an encoding-ready 
tree. The proof is routine: the definition of a $\varphi$-cactus was tailored 
precisely to make this so. It thus suffices to go through the conditions of 
Definition~\ref{def:encoding-ready-trees} one by one and verify that the encoding 
satisfies each of them; no hidden difficulties arise.

\begin{lemma}\label{lemma:from-cacti-to-encoding-ready-trees}
  Let $\strC$ be a $\varphi$-cactus. Then $\enc(\strC)$ is an encoding-ready tree.
  Moreover, if $\strC$ is regular, then $\enc(\strC)$ is regular as well.
\end{lemma}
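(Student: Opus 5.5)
We verify the conditions of Definition~\ref{def:encoding-ready-trees} one at a time for $\treeT \deff \enc(\strC)$, translating throughout between $\treeT$ and $\strC$ via the dictionary~\ref{dict:dom}--\ref{dict:path}. The shape conditions come directly from the encoding and the cactus axioms:
\begin{itemize}\itemsep0em
  \item \ref{encready:A0} follows from the items of Definition~\ref{def:encoding}. Faithfulness of direction letters is by construction. Every non-root vertex meets $\Gamma^1_\varphi$ by the last clause of~\ref{phicactus:fini}, and the root does not, having no parent. The $\pred{bck}$ letters appear only at anchored descendants.
  \item \ref{encready:A1} and~\ref{encready:A2} are~\ref{phicactus:fini} together with prefix-closedness of~$\domC$.
  \item \ref{encready:A3} is~\ref{phicactus:bool}.
  \item \ref{encready:B0} is also~\ref{phicactus:bool}.
\end{itemize}

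For the witnessing conditions, \ref{encready:A4} and~\ref{encready:A5}, we take the fixed witnesses $\xi^{\elv}_\delta$ from~\ref{phicactus:walk} and~\ref{phicactus:cycle}.
\begin{itemize}\itemsep0em
  \item For a walk demand, the witness is stationary or anchored through $\elv\kappa(\delta)$. An anchored walk uses no back edge, since it only descends the branch. Applying~\ref{dict:path} turns it into a stationary or anchored witnessing path.
  \item For a cycle demand with a non-stationary anchored witness, we split the cycle at its unique back-edge step $\elw,\sigma,\elv$. The part before that step uses no back edge, so~\ref{dict:path} makes it a path for $\autA_{q,q_1}$ anchored in direction $\kappa(\ldem_{q,q'})$. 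The closing step is a $\sigma$-successor step by~\ref{dict:edge}. The tail is a stationary path for $\autA_{q_2,q'}$.
\end{itemize}

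The closure conditions \ref{encready:B1}--\ref{encready:C4} all rely on one fact. If a vertex carries $\ldem_{q,q'}$ it has a witnessing cycle by~\ref{phicactus:cycle}, and if it has a witnessing cycle it carries $\ldem_{q,q'}$. For bannable automata the converse direction is~\ref{phicactus:cycle} itself. For non-bannable ones we use the standing convention of this section that cacti are exact. Each condition then describes how to assemble a new witness from old ones:
\begin{itemize}\itemsep0em
  \item \ref{encready:B1}: a cycle is a walk.
  \item \ref{encready:B2} and~\ref{encready:B3}: a single test step, or a single self-loop step read as $\pe$ or $\inv{\pe}$ via~\ref{dict:edge}.
  \item \ref{encready:B4} and~\ref{encready:B5}: concatenation.
  \item \ref{encready:C1}: step to~$\elv$, cycle there, step back, using~\ref{dict:edge} for both steps.
  \item \ref{encready:C2}: prefix a walk by one step. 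For walk demands, the ``otherwise no walk'' clause of~\ref{phicactus:walk} yields $\wdem_{q_1,q_3}$.
\end{itemize}

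For~\ref{encready:C3} and~\ref{encready:C4} we convert the two downward or upward witnessing paths for $\hat{\autA}$ into walks of~$\strC$ via~\ref{dict:path}. We join them with the back-edge step $\sigma$ between $\elw$ and~$\elu$, again via~\ref{dict:edge}. This gives a witnessing cycle at~$\elv$ for $\hat{\autA}_{q_1,q_4}$. Because $(\strC,\varepsilon)\models\varphi$ and every world satisfies the normal-form conjuncts, each $\ldem_{p,p'}?$ step can be replaced by an actual cycle, as in the remark introducing~$\hat{\autA}$. This produces a genuine witnessing cycle for $\autA_{q_1,q_4}$, hence $\ldem_{q_1,q_4}\in\ell(\elv)$.

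I expect the main obstacle to be exactly this handling of $\hat{\autA}$ together with exactness, that is, making sure that ``has a witnessing cycle'' really forces the label for \emph{every} pair $(q,q')$. The argument here leans on the exactness assumption made at the start of the section, and I would state that dependence explicitly.

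For regularity, suppose $\strC$ is regular. Each label $\ell(\elu)$ of a non-$\NULL$ vertex is determined by the subtree of~$\strC$ rooted at~$\elu$ together with the incoming edge from its parent and its direction. The $\pred{bck}$ letters are internal to the anchor's subtree. So the isomorphism types of labelled subtrees of $\enc(\strC)$ are bounded by the number of subtree types of~$\strC$ times the finitely many possible parent-edge labels and directions, plus the single all-$\NULL$ subtree.
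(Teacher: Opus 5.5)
Your verification that $\enc(\strC)$ is encoding-ready follows the paper's proof essentially step by step. You use the dictionary \ref{dict:dom}--\ref{dict:path}, split the anchored witnessing cycle at its closing step for \ref{encready:A5}, and get the closure conditions from exactness of the demands together with the replacement of $\ldem_{p,p'}?$-steps by actual cycles. Two small points on this part:
\begin{itemize}
\item Deriving \ref{encready:B0} directly from \ref{phicactus:bool} is a fine shortcut.
\item \ref{encready:B1} and \ref{encready:B4} need the exactness of walk demands from \ref{phicactus:walk} just as \ref{encready:C2} does.
\item When the anchored path has length one, the closing step joins a parent and its child. So ``unique back-edge step'' should read ``last non-stationary step''; \ref{dict:edge} covers this case anyway.
\end{itemize}

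The regularity argument has a gap. You claim that each label $\ell(\elu)$ is determined by the subtree of $\strC$ rooted at $\elu$ together with its incoming parent edge and its direction. This fails for anchored descendants that are $0$-children. Their letters $\pred{bck}_{\sigma}$ record the back edge to the anchor, which is a proper ancestor of the parent.

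Your remark that these letters are ``internal to the anchor's subtree'' is true but does not rescue the count. For every non-root $\elv$, the vertex $\mathrm{end}(\elv)$ at the bottom of the $0$-path from $\elv$ is an anchored descendant of $\mathrm{anc}(\elv)$, a proper ancestor of $\elv$. Hence its $\pred{bck}$-letters depend on an edge outside the subcactus rooted at $\elv$.

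So two vertices $\elv, \elv'$ can agree on everything you record: isomorphic subcacti, equal sets $\ell(\cdot) \cap \Gamma^{1}_{\varphi}$, and equal direction letters. Yet $\mathrm{end}(\elv)$ and $\mathrm{end}(\elv')$ may carry different back-edge letters. Then $\elv$ and $\elv'$ root different subtrees of $\enc(\strC)$, and your stated bound is false.

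The fix is what the paper does: add $\ell(\mathrm{end}(\elv)) \cap \Gamma^{\mathrm{bck}}_{\varphi}$ as a fourth component. This suffices because $\mathrm{end}(\elv)$ is the only vertex below $\elv$ whose anchor can lie outside the subcactus at $\elv$. If an anchored descendant $\elw$ below $\elv$ has its anchor strictly above $\elv$, then $\elv$ lies on the anchored path to $\elw$, which forces $\elw = \mathrm{end}(\elv)$. The resulting quadruple takes finitely many values and determines the labelled subtree, so finiteness is restored.
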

\begin{proof}
  Write $\treeT \deff (\domT, \ell) \deff \enc(\strC)$ and fix a non-$\NULL$ vertex~$\elv$.
  
  Conditions~\ref{encready:A0}--\ref{encready:A3} are read off Definition~\ref{def:encoding} and Definition~\ref{def:phi-cactus}, with~\ref{dict:dom} matching the lifted tree vocabulary to the original one: \ref{encready:A0} collects the arity fixed in Definition~\ref{def:phi-cactus}, the ``only if'' halves of the items of Definition~\ref{def:encoding}, as discussed after that definition (its second item places at every vertex of~$\domC$ exactly the direction letter that~\ref{encready:A0} demands, and none outside~$\domC$), and the last part of~\ref{phicactus:fini} (every $\elv \neq \varepsilon$ has $(\elv, \elw) \in \sem{\sigma}_{\strC}$ for its parent~$\elw$ and some possibly inverted action $\sigma \in \Sigma_{\varphi}$, so the fourth item of Definition~\ref{def:encoding} places $\inv{\sigma} \in \Gamma^{1}_{\varphi}$ in~$\ell(\elv)$); \ref{encready:A1} is the rest of~\ref{phicactus:fini}; \ref{encready:A2} holds as $\domC$ is prefix-closed; and \ref{encready:A3} is~\ref{phicactus:bool}, since $\ell(\elv) \cap \AP$ is the set of propositions satisfied by~$\elv$ in~$\strC$.

  For~\ref{encready:A4} and~\ref{encready:A5}, let $\delta \in \ell(\elv)$ be a demand and let $\xi^{\elv}_{\delta}$ be the witness for it granted by~\ref{phicactus:walk} or~\ref{phicactus:cycle}.
  If $\xi^{\elv}_{\delta}$ is stationary, then so is the witnessing path that~\ref{dict:path} returns, and we are done; so let $\xi^{\elv}_{\delta}$ be anchored.
  Deleting consecutive duplicates from~$\xi^{\elv}_{\delta}$, and its final vertex if $\delta$ is a cycle demand, leaves the anchored path $\elv, \elv\kappa(\delta), \ldots, \elw$, whose consecutive vertices are a parent and a child.
  For $\delta$ equal to  $\wdem_{q,q'}$ this shows that $\xi^{\elv}_{\delta}$ uses no back edge, so~\ref{dict:path} turns it into the required witnessing path.
  For $\delta$ equal to $\ldem_{q,q'}$ it splits $\xi^{\elv}_{\delta}$ into a back-edge-free walk from~$\elv$ down to~$\elw$, a single step back to~$\elv$, and a stationary walk at~$\elv$; by~\ref{dict:path} the outer two parts are witnessing paths for $\autA_{q,q_1}$ and $\autA_{q_2,q'}$, where $q_1, q_2$ are the states of the run before and after the closing step, and that step realises some $(q_1,\sigma,q_2) \in \Delta$ with $(\elw,\elv) \in \sem{\sigma}_{\strC}$, \ie with $\elv$ a $\sigma$-successor of~$\elw$ by~\ref{dict:edge}.

  Conditions~\ref{encready:B0}--\ref{encready:C4} all follow the same pattern: their hypotheses exhibit, through~\ref{dict:edge} and~\ref{dict:path}, a witnessing walk or cycle in~$\strC$, and their conclusions are then forced by the exactness of the demands.
  For~\ref{encready:B0}--\ref{encready:B5} the witnesses are the empty walk, a witnessing cycle read as a walk, the cycles $\elv,\pP{?},\elv$ and $\elv,\sigma,\elv$ for $\sigma \in \{\pe,\inv{\pe}\}$, and the two concatenations of witnesses at~$\elv$, the tests and self-loops involved being available by~\ref{dict:edge} and the second item of Definition~\ref{def:encoding}.
  For~\ref{encready:C1} we splice the two steps supplied by~\ref{dict:edge} around a witnessing cycle for $\autA_{q_2,q_3}$ at~$\elv$; \ref{encready:C2} is the same with a single step and a walk.
  For~\ref{encready:C3} and~\ref{encready:C4}, we observe that \ref{dict:path} turns the two witnessing paths into witnessing walks for~$\hat{\autA}$, and the letter $\pred{bck}_{\sigma}$, respectively $\pred{bck}_{\inv{\sigma}}$, realises the step $(q_2,\sigma,q_3)$ across the back edge by Definition~\ref{def:encoding}; splicing the three pieces gives a witnessing walk for $\hat{\autA}_{q_1,q_4}$ from~$\elv$ to~$\elv$ and hence, as $(\strC,\varepsilon) \models \varphi$, a witnessing cycle for $\autA_{q_1,q_4}$ at~$\elv$.

  Finally, let $\strC$ be regular and assign to every $\elv \in \domC$ the quadruple consisting of the isomorphism type of the subcactus rooted at~$\elv$, the set $\ell(\elv) \cap \Gamma^{1}_{\varphi}$, the direction letter of~$\elv$, and the letter $\pred{bck}_{\sigma}$ carried by the last vertex of the $0$-path issuing from~$\elv$, if any.
  Only finitely many quadruples occur, and by Definition~\ref{def:encoding} a quadruple determines the subtree of~$\treeT$ rooted at~$\elv$: the subcactus at~$\elv$ fixes the label of every vertex below~$\elv$, except for the letters of $\Gamma^{1}_{\varphi}$ and the direction letter at~$\elv$ itself, and for the $\pred{bck}$-letter of the last vertex of the $0$-path from~$\elv$, the only vertex below~$\elv$ whose anchor may lie outside that subcactus.
  The direction letters of the proper descendants of~$\elv$ need no bookkeeping: each of them is determined by the address of that descendant inside the subtree.
  Hence $\treeT$ has finitely many subtrees up to isomorphism.
\end{proof}

\subsection{From Encoding-Ready Trees to Phi-Cacti}\label{appendix:from-encoding-ready-trees-to-phi-cacti}

To prove the converse of Lemma~\ref{lemma:from-cacti-to-encoding-ready-trees}, we introduce a \emph{decoding} map~$\dec$, which intuitively turns encoding-ready trees back into $\varphi$-cacti.
Note that $\dec$ discards the direction letters. It may: by~\ref{encready:A0} they merely spell out the addresses of the vertices, so they are~recoverable~from~$\domT$.

\begin{definition}[Decoding]\label{def:decoding}
  Let $\treeT \deff (\domT, \ell)$ be an encoding-ready tree.
  Its \emph{decoding} $\dec(\treeT)$ is the structure~$\strC$ whose domain $\domC$ consists of the non-$\NULL$ vertices of~$\treeT$.
  For every $\pP \in \Sigma_{\varphi} \cap \AP$ we put $\pP^{\strC} \deff \{ \elv \in \domC \mid \pP \in \ell(\elv) \}$, and for every action $\pe \in \Sigma_{\varphi} \cap \PR$ we let $\pe^{\strC}$ consist of:
  all pairs $(\elv,\elv)$ with $\pred{slf}_{\pe} \in \ell(\elv)$;
  all pairs $(\elw,\elv)$ with $\pe \in \ell(\elv)$ and all pairs $(\elv,\elw)$ with $\inv{\pe} \in \ell(\elv)$, where $\elw$ is the parent of~$\elv$;
  and all pairs $(\elv,\elu)$ with $\pred{bck}_{\pe} \in \ell(\elv)$ and all pairs $(\elu,\elv)$ with $\pred{bck}_{\inv{\pe}} \in \ell(\elv)$, where $\elu$ is the anchor of~$\elv$.
\end{definition}

We next establish several properties of~$\dec$ that will eventually 
lead to the converse of Lemma~\ref{lemma:from-cacti-to-encoding-ready-trees}. 
We first show that the decoding of an encoding-ready tree is cactus-like.

\begin{claim}\label{claim:decoding-gives-unique-cactus-like-structures}
  Let $\treeT$ be an encoding-ready tree. Then $\dec(\treeT)$ is the unique cactus-like structure $\strC$ satisfying $\enc(\strC) = \treeT$ and \ref{phicactus:fini}.
\end{claim}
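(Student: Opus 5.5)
We prove the claim in three steps: $\dec(\treeT)$ is cactus-like, it satisfies \ref{phicactus:fini} and $\enc(\dec(\treeT)) = \treeT$, and it is the only such structure.

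\emph{Step 1: cactus-like.} By \ref{encready:A0} the only labels are $\NULL$ or meaningful, and by \ref{encready:A2} the non-$\NULL$ vertices form a prefix-closed subset of $\{0,\ldots,2|Q|^2\}^*$, hence a tree $\domC$. Consider the pairs placed into $\pe^{\strC}$ by Definition~\ref{def:decoding}. Self-loops and parent--child pairs are allowed in a tree-like structure, so the pairs of the first two kinds form a tree-like $\strT$. Now take a pair arising from $\pred{bck}_{\pe}$ or $\pred{bck}_{\inv{\pe}}$ at $\elv$. By \ref{encready:A0}, $\elv$ is then an anchored descendant in the lifted sense, and this pair joins $\elv$ with its unique anchor. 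We still have to check that at most one of the two orientations is added per action. Here I expect the subtle point. The labels $\pred{bck}_{\pe}$ and $\pred{bck}_{\inv{\pe}}$ could both lie in $\ell(\elv)$, and nothing in Definition~\ref{def:encoding-ready-trees} visibly forbids this. I would first check whether \ref{encready:A0} is meant to exclude it. If it is not, I would read ``cactus-like'' as tolerating it, noting that the encoding of a genuine cactus never produces both letters. Otherwise I would add this exclusion to \ref{encready:A0} as an explicit bookkeeping condition.

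\emph{Step 2: \ref{phicactus:fini} and $\enc(\strC)=\treeT$.} The first two parts of \ref{phicactus:fini} are exactly \ref{encready:A1}. For the last part, \ref{encready:A0} guarantees that every $\elv \neq \varepsilon$ carries some $\sigma \in \Gamma^1_\varphi$, and this $\sigma$ yields a $\Sigma_\varphi$-edge to the parent. For $\enc(\dec(\treeT)) = \treeT$, we compare labels item by item along Definition~\ref{def:encoding}:
\begin{itemize}
\item $\NULL$: matches by the choice of domain.
\item Direction letters: match by faithfulness in \ref{encready:A0}.
\item Propositions and $\pred{slf}$-letters: match directly by construction.
\item Parent-edge letters: $\sigma \in \ell(\elv)$ iff $(\elw,\elv) \in \sem{\sigma}_{\strC}$. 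The back edges and self-loops cannot create a spurious parent--child pair, since an anchor is never the parent of its anchored descendant. It is the direction $j$ followed by a $0$-path; when the $0$-path is empty, the anchor is the parent. This case needs care: the pair is then both a tree pair and a back edge, so equality of labels requires that the $\pred{bck}$-letter and the child letter are not conflated. I would handle this by observing that $\enc$ records each separately from the same relation, so we get $\supseteq$. For $\subseteq$, I would argue that the definition of cactus-like treats these as the same pair.
\item $\pred{bck}$-letters: handled analogously, using uniqueness of anchors.
\end{itemize}

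\emph{Step 3: uniqueness.} Suppose $\strC'$ is cactus-like with $\enc(\strC') = \treeT$ and \ref{phicactus:fini}. Its domain is the set of non-$\NULL$ vertices, so it equals $\domC$. Every pair in its relations is a self-loop, a neighbour pair, or a back edge, because $\strC'$ is cactus-like. Each kind is recorded faithfully by $\enc$ (as in \ref{dict:edge}), so every relation of $\strC'$ is determined by the labels, exactly as in Definition~\ref{def:decoding}. Hence $\strC' = \dec(\treeT)$.

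The main obstacle is the bookkeeping in Steps~1--2 around the degenerate back edge (anchor equal to parent) and the at-most-one-orientation clause. I would settle it first by fixing precisely what \ref{encready:A0} permits.
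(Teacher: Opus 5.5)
Your three steps are the paper's own proof. The paper obtains \ref{phicactus:fini} from \ref{encready:A0}--\ref{encready:A3} as you do. It asserts cactus-likeness because every decoded pair is a self-loop, a neighbour pair or an anchor--descendant pair, gets $\enc(\dec(\treeT))=\treeT$ by an item-by-item comparison, and derives uniqueness from $\dec(\enc(\strC'))=\strC'$ for cactus-like $\strC'$. That last part is your Step~3, and it is fine.

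The two spots you flag are exactly what the paper's one-line comparison skips, and nothing in Definition~\ref{def:encoding-ready-trees} or the alphabet filtering controls them. But your proposal does not close them, and no argument can, because the claim as stated is false. Take a $\varphi$-cactus in which a maximal non-zero child $\elv$ of $\elu$ has $(\elu,\elv)\in\pe^{\strC}$, \eg the branch built in Lemma~\ref{lemma:model-to-cactus} for a walk demand whose witness makes one step. In $\enc(\strC)$, delete $\pred{bck}_{\inv{\pe}}$ from $\ell(\elv)$. The tree stays encoding-ready:
\begin{itemize}
\item $\elv$ is still a $\pe$-successor of $\elu$ by clause~(i), and $\elu$ an $\inv{\pe}$-successor of $\elv$ by clause~(ii);
\item witnessing paths never read $\pred{bck}$-letters;
\item the deletion only removes instances of the hypotheses of \ref{encready:C3} and~\ref{encready:C4}.
\end{itemize}
Yet $\dec$ puts $(\elu,\elv)$ into $\pe$, so the last item of Definition~\ref{def:encoding} restores $\pred{bck}_{\inv{\pe}}$ at $\elv$. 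Hence no cactus-like structure encodes to this tree. So your ``$\subseteq$'' step, which appeals to cactus-likeness treating the tree pair and the back edge as one pair, cannot work: the defect lies in the labels, not in the structure. Similarly, an anchored descendant that is not a child of its anchor may carry both $\pred{bck}_{\pe}$ and $\pred{bck}_{\inv{\pe}}$. Then $\dec(\treeT)$ violates the ``at most one'' clause, and reading cactus-like as tolerating this changes the definition rather than proving the claim.

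The repair is your third option, made precise. Add to \ref{encready:A0}:
\begin{itemize}
\item[(1)] at an anchored descendant $\elv$ that is a child of its anchor, $\pred{bck}_{\pe}\in\ell(\elv)$ iff $\inv{\pe}\in\ell(\elv)$, and $\pred{bck}_{\inv{\pe}}\in\ell(\elv)$ iff $\pe\in\ell(\elv)$;
\item[(2)] no other anchored descendant carries both $\pred{bck}_{\pe}$ and $\pred{bck}_{\inv{\pe}}$.
\end{itemize}
Every $\enc(\strC)$ of a cactus-like $\strC$ satisfies both, and a depth-one check added to $\autT_{\mathrm{bst}}$ enforces them. Maximality is read off the $0$-child, and being a child of the anchor off a non-zero direction letter.

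With these, your comparison closes. A pair between $\elv$ and its parent $\elw$ enters $\dec(\treeT)$ either from a $\Gamma^{1}_{\varphi}$-letter at $\elv$, or from a $\pred{bck}$-letter at $\elv$, the latter only when $\elw$ is the anchor of $\elv$. Condition~(1) makes the two sources agree, both in the parent-letter item and in the $\pred{bck}$-letter item. Condition~(2) gives the ``at most one'' clause.

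Two small points:
\begin{itemize}
\item Your sentence that an anchor is never the parent of its anchored descendant is false, as you note right after.
\item Non-emptiness of $\domC$ needs \ref{encready:A3}, which gives $\pP_{\varphi}\in\ell(\varepsilon)$, together with $\ell(\elu)=\{\NULL\}$ at $\NULL$-vertices from \ref{encready:A0}. Your $\NULL$ item also needs the latter.
\end{itemize}
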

\begin{proof}
  Let $\strC$ be the decoding of~$\treeT$.
  By~\ref{encready:A2} and~\ref{encready:A3}, the domain $\domC$ is a non-empty prefix-closed subset of~$\domT$, and hence a tree of the arity fixed in~\ref{encready:A0}. By~\ref{encready:A1}, all its $0$-paths are finite and $\varepsilon$ has no $0$-child. By~\ref{encready:A0}, every $\elv \neq \varepsilon$ carries a letter $\sigma \in \Gamma^{1}_{\varphi}$, for which Definition~\ref{def:decoding} puts $(\elw, \elv) \in \sem{\sigma}_{\strC}$, where $\elw$ is the parent of~$\elv$; so $(\elv, \elw) \in \sem{\inv{\sigma}}_{\strC}$ is a $\Sigma_{\varphi}$-edge to the parent. Hence~\ref{phicactus:fini} holds.
  Every pair added by Definition~\ref{def:decoding} is a self-loop, joins two neighbours, or joins an anchor to its anchored descendant. Hence $\strC$ is cactus-like.
  Comparing Definition~\ref{def:decoding} and  Definition~\ref{def:encoding} item by item yields $\enc(\strC) = \treeT$; for the direction letters, which $\dec$ discards, this uses the last part of~\ref{encready:A0}, pinning them in~$\treeT$ to exactly the values that~$\enc$ assigns.
  The same comparison shows that $\dec(\enc(\strC')) = \strC'$ holds for every cactus-like~$\strC'$, whence $\strC$ is unique.
\end{proof}

We now check that $\dec$ preserves regularity.
\begin{claim}\label{claim:decoding-preserves-regularity}
  Let $\treeT$ and $\strC$ be as in 
  Claim~\ref{claim:decoding-gives-unique-cactus-like-structures}, 
  and suppose that $\treeT$ is regular.~Then~so~is~$\strC$.
\end{claim}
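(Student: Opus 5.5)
The plan is to show that isomorphic subtrees of~$\treeT$ decode to isomorphic subcacti of~$\strC$. Regularity of~$\treeT$ then bounds the number of isomorphism types of subcacti of~$\strC$. Fix a non-$\NULL$ vertex~$\elv$ and let $\treeT_{\elv}$ be the subtree of~$\treeT$ rooted at~$\elv$. Let $\strC_{\elv}$ be the substructure of~$\strC$ induced by $\elv$ and its descendants in~$\domC$, shifted by $w \mapsto \elv w$.

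The central step is to check that $\strC_{\elv}$ is determined, up to isomorphism, by the labelled tree $\treeT_{\elv}$. By~\ref{encready:A2}, the domain of $\strC_{\elv}$ is exactly the set of non-$\NULL$ vertices of $\treeT_{\elv}$. Propositions and self-loops at a vertex~$\elu$ are read off $\ell(\elu)$ by Definition~\ref{def:decoding}. A tree edge between a vertex $\elu$ of $\strC_{\elv}$ with $\elu \neq \elv$ and its parent is read off the $\Gamma^{1}_{\varphi}$-letters of $\ell(\elu)$; the edge joining $\elv$ to its own parent lies outside $\strC_{\elv}$ and is simply ignored.

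Back edges need the most care, and I expect them to be the main obstacle, namely the case of an anchored descendant whose anchor lies outside the subtree. Let $\elx$ be an anchored descendant carrying some $\pred{bck}$-letter. Suppose first that $\elx$ lies strictly below~$\elv$ and is not on the $0$-path issuing from~$\elv$. Then its anchor is a proper descendant of~$\elv$, or $\elv$ itself. This anchor is recoverable inside $\treeT_{\elv}$: climb from~$\elx$ along $0$-parent steps until a non-$0$ direction is reached, and take the parent one step further up. The direction letters from~\ref{encready:A0} make this visible inside the subtree, and in any case it is visible from the addresses. Suppose instead that $\elx = \elv 0^k$ lies on the $0$-path from~$\elv$. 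Then its anchor is an ancestor of~$\elv$, outside $\strC_{\elv}$, so the pair is not in the induced substructure and causes no inconsistency. This is exactly why the encoding places $\pred{bck}$ at the lower endpoint.

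With this established, take two vertices whose subtrees in~$\treeT$ are isomorphic as labelled trees. The address-shifting map between them is then an isomorphism of the decoded subcacti, and it respects the inner tree structure as the paper's convention requires. Hence the number of subcacti of~$\strC$ up to isomorphism is at most the number of subtrees of~$\treeT$ up to isomorphism, which is finite, and so $\strC$ is regular. If the subcactus notion is meant to include back edges to outside anchors, I would instead compare the pairs $(\treeT_{\elv},\text{direction letter of } \elv)$, which still range over a finite set.
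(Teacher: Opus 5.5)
Your proposal is correct and takes the paper's route: by Definition~\ref{def:decoding}, the subcactus rooted at $\elv$ is determined by the labelled subtree of $\treeT$ rooted at $\elv$, so isomorphic subtrees give isomorphic subcacti, and you merely spell out the back-edge bookkeeping that the paper leaves implicit. Your closing hedge is unnecessary: the direction letter of $\elv$ already belongs to $\ell(\elv)$, and hence to $\treeT_{\elv}$.
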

\begin{proof}
  By Definition~\ref{def:decoding}, the subcactus of~$\strC$ rooted at a 
  vertex~$\elv$ is determined by the subtree of~$\treeT$ rooted at~$\elv$. 
  Hence $\strC$ has, up to isomorphism, at most as many subcacti as $\treeT$ 
  has subtrees. Since $\treeT$ is regular, the latter number is finite, 
  and hence so is the former.
\end{proof}

We next verify the easier conditions in the definition of $\varphi$-cacti.
\begin{claim}\label{claim:decoding-preserves-easy-properties}
  Let $\treeT$ and $\strC$ be as in Claim~\ref{claim:decoding-gives-unique-cactus-like-structures}.
  Then $\strC$ satisfies~\ref{phicactus:bool}, as well as the ``if'' 
  parts of~\ref{phicactus:walk} and~\ref{phicactus:cycle}.
\end{claim}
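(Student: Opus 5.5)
The plan is to check the three parts separately, using the dictionary between labels and edges, much as in the proof of Lemma~\ref{lemma:from-cacti-to-encoding-ready-trees}. Since Claim~\ref{claim:decoding-gives-unique-cactus-like-structures} gives $\enc(\strC) = \treeT$, Observations~\ref{dict:dom}--\ref{dict:path} apply to the pair $(\strC,\treeT)$: they were stated for an arbitrary cactus-like $\strC$ together with its encoding, and their proofs never used the remaining $\varphi$-cactus conditions. Throughout, a vertex $\elv$ satisfies a proposition in $\strC$ exactly when that proposition lies in $\ell(\elv)$; this includes the demand letters $\wdem_{q,q'}$ and $\ldem_{q,q'}$.

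\textbf{Condition~\ref{phicactus:bool}.} The first two conjuncts are direct transcriptions of~\ref{encready:A3}: $\pP_\varphi \in \ell(\varepsilon)$ gives $\varepsilon \in \pP_\varphi^{\strC}$, and the propositional valuation at $\elv$ is $\ell(\elv)\cap\AP$, which satisfies the equivalences $\pP_i \iff \psi_i$. The conjuncts $\ldem_{q,q} \land \wdem_{q,q}$ come straight from~\ref{encready:B0}.

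\textbf{``If'' part of~\ref{phicactus:walk}.} Suppose $\wdem_{q,q'} \in \ell(\elv)$. Then~\ref{encready:A4} gives a witnessing path for $\autA_{q,q'}$ that is either stationary or anchored in direction $\kappa(\wdem_{q,q'})$. Observation~\ref{dict:path} turns this path into a witnessing walk in $\strC$ whose vertex sequence, after deleting consecutive duplicates, is that path. Such a walk is stationary in the first case, and in the second it is anchored and passes through $\elv\kappa(\delta)$, because the anchored path starts $\elv, \elv\kappa(\delta), \ldots$. By~\ref{dict:dom}, anchoredness in $\treeT$ and in $\strC$ agree.

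\textbf{``If'' part of~\ref{phicactus:cycle}.} Suppose $\ldem_{q,q'} \in \ell(\elv)$ and apply~\ref{encready:A5}. In the stationary case we conclude as before. Otherwise we have three pieces: a witnessing path for $\autA_{q,q_1}$ anchored in direction $\kappa(\ldem_{q,q'})$ and ending at $\elw$; a transition $(q_1,\sigma,q_2) \in \Delta$ with $\elv$ a $\sigma$-successor of $\elw$; and a stationary witnessing path for $\autA_{q_2,q'}$ at $\elv$. Observation~\ref{dict:path} converts the two paths into walks. Since $\elw \neq \elv$ (the anchored path is non-trivial), Observation~\ref{dict:edge} turns the $\sigma$-successor relation into $(\elw,\elv)\in\sem{\sigma}_{\strC}$. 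Concatenating the first walk, the step $\elw,\sigma,\elv$, and the stationary walk gives an accepting run of $\autA_{q,q'}$ from $\elv$ back to $\elv$. After deleting duplicates and the final vertex, its vertex sequence is exactly the anchored path, so the cycle is anchored and passes through $\elv\kappa(\ldem_{q,q'})$.

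\textbf{Main obstacle.} The step to watch is the $\sigma$-successor step in the last case. One must confirm that the closing edge from the anchored descendant $\elw$ to its anchor $\elv$ is a genuine edge of $\strC$, coming from cases (iii)/(iv) via the $\pred{bck}$ letters, and not a tree edge. This holds because $\elw$ is the anchored descendant of $\elv$ and the anchored path has length at least one. I would also check that the first anchored vertex is $\elv\kappa(\delta)$. Everything else is bookkeeping through Observations~\ref{dict:dom}--\ref{dict:path}.
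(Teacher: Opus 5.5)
Your proposal is correct and follows essentially the paper's route: transfer \ref{encready:A3}, \ref{encready:A4} and \ref{encready:A5} to $\strC$ through Observations~\ref{dict:dom}--\ref{dict:path}, and you rightly invoke \ref{encready:B0} for the conjuncts $\ldem_{q,q} \land \wdem_{q,q}$, which the paper's one-line appeal to \ref{encready:A3} glosses over. One correction to your closing remark: when the anchored path is the single edge $\elv, \elv\kappa(\delta)$, the closing step may be the tree edge (case~(ii) of a $\sigma$-successor) rather than a back edge, but this is harmless because your actual argument only uses \ref{dict:edge}, which covers both cases uniformly.
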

\begin{proof}
  Observe that~\ref{dict:dom}--\ref{dict:path} apply to $\strC$ and~$\treeT$.
  Condition~\ref{phicactus:bool} is then precisely~\ref{encready:A3}.
  The ``if'' parts of~\ref{phicactus:walk} and~\ref{phicactus:cycle} follow, 
  by~\ref{dict:path}, from~\ref{encready:A4} and~\ref{encready:A5} respectively: 
  a stationary or anchored witnessing path translates into a witnessing walk 
  of the same shape. For~\ref{encready:A5}, the closing step to the 
  $\sigma$-successor~$\elv$ of~$\elw$, together with the stationary tail, 
  turns that walk into an anchored witnessing cycle through 
  $\elv \cdot \kappa(\ldem_{q,q'})$.
\end{proof}

It remains to establish the ``only if'' directions of~\ref{phicactus:walk}
and~\ref{phicactus:cycle}: a vertex enjoying a witnessing cycle or a path 
must also carry the corresponding demand. We first treat the witnesses that 
move only between equal or neighbouring worlds. Such a witness decomposes 
into ``basic'' pieces that the closure 
Conditions~\ref{encready:B0}--\ref{encready:C2} detect, and an
induction on its length reassembles the resulting labels.
This is where we make essential use of the extended automaton~$\hat{\autA}$,
which, intuitively, can absorb any detour off the path into a single test.
Call a witnessing walk or cycle in a cactus-like structure \emph{tree-like} if
any two consecutive worlds on it are equal or are neighbours.~We~prove:

\begin{claim}\label{claim:tree-like-witnesses-are-labelled}
  Let $\treeT$ and $\strC$ be as in
  Claim~\ref{claim:decoding-gives-unique-cactus-like-structures}, let
  $q, q' \in Q$, and let $\elv \in \domC$.
  If $\elv$ has a tree-like witnessing walk for~$\hat{\autA}_{q,q'}$, then
  $\wdem_{q,q'} \in \ell(\elv)$; if it has a tree-like witnessing cycle
  for~$\hat{\autA}_{q,q'}$, then $\ldem_{q,q'} \in \ell(\elv)$.
\end{claim}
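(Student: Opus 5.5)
We argue by strong induction on the length $\ell$ (number of letters) of the tree-like witnessing walk or cycle $\xi$, proving both statements simultaneously. Reduce first to $\hat{\autA}$ by noting that the closure conditions \ref{encready:B0}--\ref{encready:C2} were stated for $\autA$ but concern labels $\ldem_{p,p'}$. A letter $\ldem_{p,p'}?$ read at a vertex $\elx$ with $\ldem_{p,p'} \in \ell(\elx)$ is handled like a self-loop letter: it gives $\ldem_{p,p'} \in \ell(\elx)$ trivially. We therefore work with the semiautomaton $\hat{\autA}$ throughout and use \ref{encready:B4}, \ref{encready:B5} to compose.

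\emph{Base cases.} If $\ell = 0$, then $q = q'$ and \ref{encready:B0} gives both labels. If $\xi$ is stationary with a single letter, the letter is a test $\pP?$ (use \ref{encready:B2}), a self-loop letter $\pe$ or $\inv{\pe}$ with $\pred{slf}_{\pe} \in \ell(\elv)$ (use \ref{encready:B3}, via \ref{dict:edge}), or $\ldem_{q,q'}?$ itself. This yields $\ldem_{q,q'}$, and \ref{encready:B1} yields $\wdem_{q,q'}$.

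\emph{Walks, inductive step.} Split off the first letter of $\xi$: $\xi = (\elv,\gamma,\elx)\,\xi'$ with the run passing $q \xrightarrow{\gamma} q_2$, so that $\xi'$ is a shorter tree-like witnessing walk for $\hat{\autA}_{q_2,q'}$ from $\elx$. By induction, $\wdem_{q_2,q'} \in \ell(\elx)$. If $\elx = \elv$, the first step is a one-letter stationary cycle, so $\ldem_{q,q_2} \in \ell(\elv)$ by the base case, and \ref{encready:B4} gives $\wdem_{q,q'}$. If $\elx \neq \elv$, then $\elx$ is a neighbour of $\elv$ and a $\gamma$-successor of $\elv$ by \ref{dict:edge}, so \ref{encready:C2} gives $\wdem_{q,q'} \in \ell(\elv)$.

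\emph{Cycles, inductive step.} This is the main obstacle, since a cycle cannot simply be peeled one letter at a time; it has to be cut at returns to $\elv$. If $\xi$ revisits $\elv$ at some intermediate position, write $\xi = \xi_1\xi_2$ with both parts shorter, nonempty cycles at $\elv$, for $\hat{\autA}_{q,p}$ and $\hat{\autA}_{p,q'}$. Induction and \ref{encready:B5} then finish. The same applies when the first letter is a stationary step (split it off as a one-letter cycle). Otherwise $\xi$ leaves $\elv$ with its first letter $\sigma$ to a neighbour $\elw$ and returns only at the last letter $\tau$. Because the structure restricted to tree moves is a tree (back edges are excluded by tree-likeness), the last step must come from the same neighbour $\elw$. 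Hence $\xi = (\elv,\sigma,\elw)\,\zeta\,(\elw,\tau,\elv)$ with $\zeta$ a tree-like cycle at $\elw$ of length $\ell-2$, for $\hat{\autA}_{q_2,q_3}$, where $(q,\sigma,q_2),(q_3,\tau,q') \in \hat{\Delta}$ are genuine $\Delta$-transitions since $\sigma,\tau$ are actions. Induction gives $\ldem_{q_2,q_3} \in \ell(\elw)$, and \ref{encready:C1} (with the roles of $\elv,\elw$ as there, via \ref{dict:edge}) yields $\ldem_{q,q'} \in \ell(\elv)$.

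The point needing care is the tree argument that a walk leaving $\elv$ towards $\elw$ without revisiting $\elv$ must stay in the component of $\elw$ after removing $\elv$. It must therefore re-enter $\elv$ through $\elw$. This uses only that tree-like steps move between equal or neighbouring vertices of the underlying tree $\domC$.
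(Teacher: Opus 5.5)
Your proof is correct and follows the paper's argument essentially step for step. Both use induction on the length of the witness: \ref{encready:B0}, \ref{encready:B2}, \ref{encready:B3} and the fresh $\ldem$-tests handle one-letter witnesses, \ref{encready:B4} and \ref{encready:C2} handle walks, and \ref{encready:B5} and \ref{encready:C1} handle cycles, with the tree structure of $\domC$ guaranteeing that an excursion from $\elv$ comes back through the neighbour it left to. The only difference is cosmetic: you split at every intermediate return to $\elv$ before applying \ref{encready:C1}, whereas the paper applies \ref{encready:C1} to the first excursion and then \ref{encready:B5}. One small omission: in your walk step via \ref{encready:C2}, note, as you did for cycles, that a step between distinct worlds is not a test, so its transition lies in $\Delta$ and not merely in $\hat{\Delta}$.
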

\begin{proof}
  As already discussed, Observations
  \ref{dict:dom}--\ref{dict:path} apply to $\strC$ and~$\treeT$; we use
  \ref{dict:dom} and~\ref{dict:edge} silently, the former also to know that
  every world occurring on a witness is a non-$\NULL$ vertex.
  We argue by induction on the number~$n$ of letters of the witness~$\xi$,
  proving both statements simultaneously for all vertices and all pairs of
  states. One observation recurs: a step of~$\xi$ between two \emph{distinct}
  worlds carries a possibly inverted action, whereas all transitions of
  $\hat{\Delta} \setminus \Delta$ are tests; such a step therefore realises a
  transition of~$\Delta$ itself, as required by
  Conditions~\ref{encready:C1}~and~\ref{encready:C2}.

  \smallskip\noindent\emph{Base case $n = 0$.}
  Then $\xi = \elv$ and $q = q'$, so both statements are~\ref{encready:B0}.

  \smallskip\noindent\emph{Base case $n = 1$.}
  Then $\xi = \elv, \sigma, \elw$ with $(q,\sigma,q') \in \hat{\Delta}$ and
  $(\elv, \elw) \in \sem{\sigma}_{\strC}$.
  \begin{itemize}
  \item Suppose first that $\elv = \elw$, so that $\xi$ is a cycle.
  If $\sigma$ is a test $\pP{?}$ with $(q,\pP{?},q') \in \Delta$, then
  $\pP \in \ell(\elv)$ by Definition~\ref{def:decoding},
  and~\ref{encready:B2} gives $\ldem_{q,q'} \in \ell(\elv)$.
  If $\sigma$ is one of the fresh transitions of~$\hat{\Delta}$, that is
  $\sigma = \ldem_{q,q'}?$, then $\ldem_{q,q'} \in \ell(\elv)$ already by
  Definition~\ref{def:decoding}.
  If $\sigma$ is a possibly inverted action with underlying action~$\pe$, then
  $\pred{slf}_{\pe} \in \ell(\elv)$ by~\ref{dict:edge},
  and~\ref{encready:B3} gives $\ldem_{q,q'} \in \ell(\elv)$.
  In each case \ref{encready:B1} upgrades $\ldem_{q,q'}$ to $\wdem_{q,q'}$,
  which settles~the~walk~statement~as~well.

  \item Suppose now that $\elv \neq \elw$. Then $\xi$ is not a cycle and only the walk
  statement is at stake. Here $\sigma$ is a possibly inverted action with
  $(q,\sigma,q') \in \Delta$, and $\elw$ is a $\sigma$-successor of~$\elv$
  by~\ref{dict:edge}. As $\wdem_{q',q'} \in \ell(\elw)$
  by~\ref{encready:B0}, Condition~\ref{encready:C2} yields
  $\wdem_{q,q'} \in \ell(\elv)$.
  \end{itemize}

  \smallskip\noindent\emph{Inductive step $n \geqslant 2$.}
  Write $\xi = \elv_0, \sigma_1, \elv_1, \ldots, \sigma_n, \elv_n$ with
  $\elv_0 = \elv$, and fix an accepting run
  $(\elv_0, q_0), \ldots, (\elv_n, q_n)$ of $\hat{\autA}_{q,q'}$ along~$\xi$,
  so that $q_0 = q$ and $q_n = q'$. Every infix of~$\xi$ is tree-like,
  and is witnessed by the corresponding infix of that run; 
  we~use~this~without~further~mention.

  \smallskip\noindent\emph{The case of a walk.}
  If $\elv_1 = \elv_0$, then $\elv_0, \sigma_1, \elv_0$ is a witnessing cycle
  for $\hat{\autA}_{q_0,q_1}$ and~$\elv_0$, and
  $\elv_1, \sigma_2, \ldots, \sigma_n, \elv_n$ is a witnessing walk for
  $\hat{\autA}_{q_1,q_n}$ and~$\elv_1 = \elv_0$. Their lengths are
   both below~$n$, so the induction hypothesis gives
  $\ldem_{q_0,q_1} \in \ell(\elv)$ and $\wdem_{q_1,q_n} \in \ell(\elv)$,
  and~\ref{encready:B4} gives $\wdem_{q,q'} \in \ell(\elv)$.
  If $\elv_1 \neq \elv_0$, then $\sigma_1$ is a possibly inverted action with
  $(q_0,\sigma_1,q_1) \in \Delta$, and $\elv_1$ is a $\sigma_1$-successor
  of~$\elv_0$ by~\ref{dict:edge}. The same suffix now gives
  $\wdem_{q_1,q_n} \in \ell(\elv_1)$, and~\ref{encready:C2} gives
  $\wdem_{q,q'} \in \ell(\elv)$.

  \smallskip\noindent\emph{The case of a cycle.} Here $\elv_n = \elv_0 = \elv$.
  If $\elv_1 = \elv_0$, split $\xi$ after its first letter: both
  $\elv_0, \sigma_1, \elv_0$ and $\elv_1, \sigma_2, \ldots, \sigma_n, \elv_n$
  are witnessing cycles at~$\elv$, of lengths below $n$. The induction
  hypothesis gives $\ldem_{q_0,q_1}, \ldem_{q_1,q_n} \in \ell(\elv)$,
  and~\ref{encready:B5} concludes.
  Otherwise $\elu \deff \elv_1$ is a neighbour of~$\elv$. Let $k$ be least with
  $k > 1$ and $\elv_k = \elv$; it exists and satisfies $k \leqslant n$, as
  $\elv_n = \elv$. Observe that $\elv_{k-1} = \elu$, which is a consequence of 
  $\strC$ being cactus-like and $\xi$ being tree-like.
  Consequently $\elv_1, \sigma_2, \ldots, \sigma_{k-1}, \elv_{k-1}$ is a
  witnessing cycle for $\hat{\autA}_{q_1,q_{k-1}}$ and~$\elu$, of length
  smaller than $n$, so $\ldem_{q_1,q_{k-1}} \in \ell(\elu)$ by the induction
  hypothesis.
  The steps $\sigma_1$ and $\sigma_k$ join the distinct worlds $\elv$
  and~$\elu$, hence are possibly inverted actions with
  $(q_0, \sigma_1, q_1), (q_{k-1}, \sigma_k, q_k) \in \Delta$; and
  by~\ref{dict:edge} the vertex $\elu$ is a $\sigma_1$-successor of~$\elv$,
  while $\elv$ is a $\sigma_k$-successor of~$\elu$. Condition~\ref{encready:C1}
  therefore yields $\ldem_{q_0,q_k} \in \ell(\elv)$.
  Finally, $\elv_k, \sigma_{k+1}, \ldots, \sigma_n, \elv_n$ is a witnessing
  cycle at~$\elv$ of length $n{-}k < n$, so $\ldem_{q_k,q_n} \in \ell(\elv)$ by
  the induction hypothesis, and~\ref{encready:B5} gives
  $\ldem_{q,q'} \in \ell(\elv)$.
\end{proof}

\noindent Two remarks. First, since $\Delta \subseteq \hat{\Delta}$, the claim
applies in particular to witnesses for~$\autA_{q,q'}$. Second, tree-likeness is
used only in the case of a cycle: the argument for walks peels off a single
step and never inspects the shape of the remaining witness, so the walk
statement holds for arbitrary witnessing walks. 
Together with Claim~\ref{claim:decoding-preserves-easy-properties} 
we can now conclude:

\begin{corollary}\label{cor:witnessing-walks-works-well-in-decoding}
  Let $\treeT$ and $\strC$ be as in Claim~\ref{claim:decoding-gives-unique-cactus-like-structures}.
  Then $\strC$ satisfies~\ref{phicactus:walk}.
\end{corollary}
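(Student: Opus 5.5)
The plan is to split Condition~\ref{phicactus:walk} into its two halves and dispatch each with a result already established.

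\emph{The ``if'' part.} Claim~\ref{claim:decoding-preserves-easy-properties} already gives it: every vertex~$\elv$ with $\wdem_{q,q'} \in \ell(\elv)$ has, by~\ref{encready:A4}, a stationary witnessing path or one anchored in direction $\kappa(\wdem_{q,q'})$. Observation~\ref{dict:path} then turns that path into a stationary or anchored witnessing walk of~$\strC$ passing through $\elv\cdot\kappa(\wdem_{q,q'})$.

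\emph{The ``only if'' part.} Here I must show that whenever $\elv$ has any witnessing walk~$\xi$ for $\autA_{q,q'}$, then $\wdem_{q,q'} \in \ell(\elv)$. Since $\Delta \subseteq \hat{\Delta}$, $\xi$ is also a witnessing walk for $\hat{\autA}_{q,q'}$. I would not appeal to Claim~\ref{claim:tree-like-witnesses-are-labelled} verbatim, because $\xi$ may use back edges and so need not be tree-like. Instead I would use the second remark after that claim: in its proof, the walk statement is handled by induction on the number of letters, always peeling off the first step. If that step is a self-loop, the proof uses~\ref{encready:B4} together with the cycle base case $n=1$, which involves only a single letter and is trivially tree-like. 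Otherwise it uses~\ref{encready:C2}.

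The key point to check is that~\ref{encready:C2} is stated for arbitrary $\sigma$-successors, which by definition include cases~(iii) and~(iv), i.e.\ steps across back edges. By~\ref{dict:edge}, every step of~$\xi$ between distinct worlds is such a $\sigma$-successor step, whether along a tree edge or a back edge. Moreover, such a step realises a transition of~$\Delta$, since the fresh transitions of~$\hat{\Delta}$ are tests. So the induction goes through for arbitrary walks, with base case~\ref{encready:B0}.

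The only real obstacle is this bookkeeping: confirming that nothing in the walk half of the induction invokes tree-likeness or the cycle hypothesis beyond single-letter cycles. Once that is verified, the corollary follows by combining the two halves.
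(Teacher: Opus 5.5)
Your proposal is correct and follows the paper's own argument. The ``if'' half comes from Claim~\ref{claim:decoding-preserves-easy-properties} via~\ref{encready:A4} and~\ref{dict:path}. The ``otherwise'' half uses the paper's remark after Claim~\ref{claim:tree-like-witnesses-are-labelled}: the walk induction only peels off a first step, handled either by~\ref{encready:B4} with a one-letter cycle or by~\ref{encready:C2}, whose $\sigma$-successor clauses~(iii)--(iv) cover back-edge steps through~\ref{dict:edge}, so tree-likeness is never needed.
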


Two further kinds of basic cycles are needed. The first travels once around
the cycle that a back edge closes with its anchor path, in either direction,
and interleaves its steps with single tests of cycle demands (these tests
will absorb the detours that an arbitrary cycle makes off the anchor path).
Call a witnessing cycle
$
  \xi \deff \elv_0, \ldem_{p_0,p_0'}?, \elv_0, \sigma_1, \elv_1,
  \ldem_{p_1,p_1'}?, \elv_1, \sigma_2, \elv_2, \ldots,
  \sigma_n, \elv_n, \ldem_{p_n,p_n'}?, \elv_n
$
for some~$\hat{\autA}_{q,q'}$ \emph{cactus-like} if
$\elv_0, \elv_1, \ldots, \elv_{n-1}$ or its reverse 
is a cyclic permutation of an anchored path. 
Relying on Conditions~\ref{encready:C3} and~\ref{encready:C4}, we~prove:

\begin{claim}\label{claim:cactus-like-witnesses-are-labelled}
  Let $\treeT$ and $\strC$ be as in
  Claim~\ref{claim:decoding-gives-unique-cactus-like-structures}, let
  $q, q' \in Q$, and let $\elv \in \domC$.
  If $\elv$ has a cactus-like witnessing cycle for~$\hat{\autA}_{q,q'}$, then
  $\ldem_{q,q'} \in \ell(\elv)$.
\end{claim}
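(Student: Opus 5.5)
Let $\xi$ be a cactus-like witnessing cycle for $\hat{\autA}_{q,q'}$ at $\elv=\elv_0$, with accepting run $(\elv_0,q_0),\ldots$ of $\hat{\autA}_{q,q'}$. The cycle goes once around the anchored path $\elu=\elx_0,\elx_1,\ldots,\elx_r=\elw$ together with the back edge between $\elw$ and its anchor $\elu$. It visits these vertices in cyclic order starting at $\elv=\elx_s$, either forwards or backwards. Here $n=r{+}1$, and the whole cycle is the rotation beginning at position~$s$. I would cut $\xi$ at the back-edge step, which is the unique step between non-neighbours (\ie between $\elw$ and $\elu$). Write $q_2,q_3$ for the run states immediately before and after that step.

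Consider first the forward case, where the traversal goes down the anchored path. The part of $\xi$ from $\elv$ to $\elw$ moves only along tree edges $\elx_i\to\elx_{i+1}$, interleaved with single tests $\ldem_{p_i,p'_i}?$. Reading each such test as the letter $\ldem_{p_i,p'_i}\in\Gamma^0_\varphi\cap\ell(\elx_i)$ (which lies in the label by Definition~\ref{def:decoding}), observation~\ref{dict:path} turns this part into a downward witnessing path for $\hat{\autA}_{q,q_2}$ from $\elv$ to $\elw$. The back-edge step goes from $\elw$ to $\elu$ with some possibly inverted action $\sigma$ such that $(q_2,\sigma,q_3)\in\Delta$. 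This transition lies in $\Delta$ and not merely in $\hat\Delta$, because it joins distinct worlds and the fresh transitions are tests. By Definition~\ref{def:decoding} and~\ref{dict:edge}, the step yields $\pred{bck}_{\sigma}\in\ell(\elw)$. The remainder of $\xi$, from $\elu$ back down to $\elv$ including the trailing test at $\elv_n$, is likewise a downward witnessing path for $\hat{\autA}_{q_3,q'}$ from $\elu$ to $\elv$. Condition~\ref{encready:C3}, instantiated with $q_1\deff q$ and $q_4\deff q'$, then gives $\ldem_{q,q'}\in\ell(\elv)$. The backward case is symmetric and uses Condition~\ref{encready:C4}: the traversal climbs from $\elv$ to $\elu$, crosses the back edge from $\elu$ to $\elw$, which produces $\pred{bck}_{\inv{\sigma}}\in\ell(\elw)$, and then climbs from $\elw$ to $\elv$.

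Two boundary cases need separate attention. The first is $\elv=\elu$ or $\elv=\elw$, where one of the two paths is stationary. Both conditions explicitly allow this, and the stationary path still absorbs the single test at $\elv$. The second is that the cut must place each of the tests at $\elv$ (at the start and the end) into the appropriate path. I would attach the initial test to the first path and the final test to the second.

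The main obstacle I expect is bookkeeping: showing that the cyclic-permutation condition really forces the shape ``path, one back-edge step, path''. In particular I must confirm that $n=r+1$ steps give exactly one non-tree step, and that a cycle of length~$2$ cannot occur. The latter holds because the anchored path has at least two vertices, $\elu$ and $\elu j$. A second point to check is that the reading of the $\ldem$-tests matches the definition of witnessing paths for $\hat{\autA}$, where the tests are read as letters of $\Gamma^0_\varphi$.
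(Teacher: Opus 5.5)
\textbf{The gap is the two-vertex anchored path.} Your strategy is the paper's: cut $\xi$ at the step joining $\elw$ and $\elu$, read the two remaining pieces as witnessing paths via~\ref{dict:path}, extract a $\pred{bck}$-letter from the crossing step, and apply~\ref{encready:C3} or~\ref{encready:C4} with $q_1 \deff q$ and $q_4 \deff q'$. It works when the anchored path has at least three vertices, but it fails when the path is just $\elu, \elu j$, that is, when $\elw = \elu j$ is a maximal non-zero child of~$\elu$ (your $r = 1$).

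Your remark that ``a cycle of length~$2$ cannot occur, because the anchored path has at least two vertices'' is off by one. Two vertices is exactly the case $n = 2$, and it does occur. Then both non-test steps of $\xi$ join the neighbours $\elu$ and $\elu j$, so there is no ``unique step between non-neighbours.''

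More seriously, your inference that the crossing step yields $\pred{bck}_{\sigma} \in \ell(\elw)$ breaks down here. By Definition~\ref{def:decoding}, a pair joining $\elu$ and its child $\elu j$ may come from a letter of $\Gamma^{1}_{\varphi}$ in $\ell(\elu j)$, that is, from clause (i) or (ii) of a $\sigma$-successor, rather than from a $\pred{bck}$-letter. Nothing in the definition of a cactus-like cycle forces $\xi$ to use a back edge. For instance, take $\pe, \inv{\pe} \in \ell(\elu j)$, no $\pred{bck}$-letter at $\elu j$, and $\elu j$ maximal. Then $\elu, \pe, \elu j, \pe, \elu$, padded with idle tests $\ldem_{p,p}?$ (available by~\ref{encready:B0}), is cactus-like. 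Yet neither \ref{encready:C3} nor \ref{encready:C4} has its hypothesis met, so your argument yields nothing.

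\textbf{The fix.} In this case every step of $\xi$ joins equal or neighbouring worlds, so $\xi$ is tree-like, and Claim~\ref{claim:tree-like-witnesses-are-labelled} gives $\ldem_{q,q'} \in \ell(\elv)$ directly. This is how the paper disposes of it (``we can assume $k \geqslant 1$''). Once the anchored path has at least three vertices, $\elu$ and $\elw$ are not neighbours. Then the only pairs of $\dec(\treeT)$ joining them come from $\pred{bck}$-letters, only clause~(iii) or~(iv) applies, and the cyclic sequence is a permutation of exactly one of the path or its reverse. From there your argument goes through as written.
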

\begin{proof}
  As in the proof of Claim~\ref{claim:tree-like-witnesses-are-labelled},
  Observations~\ref{dict:dom}--\ref{dict:path} apply to $\strC$ and~$\treeT$,
  and we use~\ref{dict:dom} silently.
  Let $\xi$ be as in the definition, assuming $\elv_0 = \elv_n = \elv$, and take
  $\zeta \deff \elu, \elu j, \elu j0, \ldots, \elu j0^{k}$ to be an anchored path
  such that $\elv_0, \ldots, \elv_{n-1}$ is a cyclic permutation of~$\zeta$ or
  of its reverse. Put $\elw \deff \elu j0^{k}$; thus $\elw$ is an anchored
  descendant of its anchor~$\elu$, and $\elv$ lies on the anchored path~$\zeta$.
  Fix an accepting run of $\hat{\autA}_{q,q'}$ along~$\xi$; again, every
  infix of~$\xi$ is a witnessing walk, witnessed by the corresponding infix
  of that run.
  We can assume~$k \geqslant 1$. Indeed, 
  otherwise $\xi$ is tree-like and the claim follows from
  Claim~\ref{claim:tree-like-witnesses-are-labelled}.
  Then $\elu$ and~$\elw$ are not neighbours. Since $\elv_n = \elv_0$, the
  unordered pairs $\{\elv_i, \elv_{i+1}\}$ with $i < n$ are precisely the $k{+}1$
  parent-child pairs of~$\zeta$ together with $\{\elu, \elw\}$, each occurring
  once. Let $i < n$ be the index with $\{\elv_i, \elv_{i+1}\} = \{\elu, \elw\}$,
  write $\xi = \xi_1 \eta \xi_2$ with $\eta \deff \elv_i, \sigma_{i+1}, \elv_{i+1}$,
  and let $(s, \sigma_{i+1}, t) \in \hat{\Delta}$ be the transition that the
  run takes along~$\eta$. As $\elv_i \neq \elv_{i+1}$, the letter
  $\sigma_{i+1}$ is not a test, so it is a possibly inverted action and
  $(s, \sigma_{i+1}, t) \in \Delta$.
  Moreover, $\xi_1$ and~$\xi_2$ are witnessing walks for $\hat{\autA}_{q,s}$
  and~$\hat{\autA}_{t,q'}$ whose non-test steps join a parent with its child.
  Hence neither uses a back edge, and after deleting consecutive duplicates
  their worlds are $\elv_0, \ldots, \elv_i$ and $\elv_{i+1}, \ldots, \elv_n$,
  respectively. By~\ref{dict:path}, these two paths are witnessing paths
  in~$\treeT$ for $\hat{\autA}_{q,s}$ and~$\hat{\autA}_{t,q'}$.
  \begin{itemize}
    \item Suppose $(\elv_i, \elv_{i+1}) = (\elw, \elu)$. By~\ref{dict:edge},
  $\elu$ is a $\sigma_{i+1}$-successor of~$\elw$. As $\elu$ is an ancestor
  of~$\elw$ but not a neighbour of it, only case~(iv) of the definition of
  a $\sigma$-successor applies, whence $\pred{bck}_{\sigma_{i+1}} \in \ell(\elw)$.
  Further, $\elv_0, \ldots, \elv_{n-1}$ is a cyclic permutation of~$\zeta$
  itself, as only there is $\elw$ cyclically followed by~$\elu$. Hence
  $\elv_0, \ldots, \elv_i$ descends~$\zeta$ from~$\elv$ to~$\elw$, and
  $\elv_{i+1}, \ldots, \elv_n$ descends it from~$\elu$ to~$\elv$.
  Condition~\ref{encready:C3}, applied to~$\sigma_{i+1}$ and the 
  states~$q, s, t, q'$, yields~$\ldem_{q,q'} \in \ell(\elv)$.

  \item For $(\elv_i, \elv_{i+1}) = (\elu, \elw)$ the argument is
  symmetric. Now $\elw$ is a $\sigma_{i+1}$-successor of~$\elu$ only by
  case~(iii), so $\pred{bck}_{\inv{\sigma_{i+1}}} \in \ell(\elw)$. The worlds
  of~$\xi$ follow the reverse of~$\zeta$, so the two paths climb from~$\elv$
  to~$\elu$ and from~$\elw$ to~$\elv$. Condition~\ref{encready:C4} then
  yields $\ldem_{q,q'} \in \ell(\elv)$.\qedhere
  \end{itemize}
\end{proof}

It remains to treat the last kind of basic cycles: those that cross a back edge,
make a single stationary step, and return across the same back edge.
Call a witnessing cycle in~$\strC$ a \emph{back-edge} cycle if it is of the form
$\elw, \sigma_1, \elv, \sigma_2, \elv, \sigma_3, \elw$, where $\elw$ is an
anchored descendant of its anchor~$\elv$, or vice versa.
Such cycles are detected by Condition~\ref{encready:C1}, as we now show.

\begin{claim}\label{claim:back-edge-witnesses-are-labelled}
  Let $\treeT$ and $\strC$ be as in
  Claim~\ref{claim:decoding-gives-unique-cactus-like-structures}, let
  $q, q' \in Q$, and let $\elw \in \domC$.
  If $\elw$ has a back-edge witnessing cycle for~$\hat{\autA}_{q,q'}$, then
  $\ldem_{q,q'} \in \ell(\elw)$.
\end{claim}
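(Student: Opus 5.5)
The plan is to reduce the back-edge cycle directly to Condition~\ref{encready:C1}, in the same style as the base cases of Claim~\ref{claim:tree-like-witnesses-are-labelled}. Observations~\ref{dict:dom}--\ref{dict:path} apply to $\strC$ and~$\treeT$, and we will use them silently.

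First, we write the cycle as $\xi = \elw, \sigma_1, \elv, \sigma_2, \elv, \sigma_3, \elw$. Here $\elw$ and~$\elv$ are an anchored descendant and its anchor, in either order. We fix an accepting run $(\elw,q_0),(\elv,q_1),(\elv,q_2),(\elw,q_3)$ of $\hat{\autA}_{q,q'}$ along~$\xi$, so that $q_0 = q$ and $q_3 = q'$. Since $\elw \neq \elv$ (an anchored descendant is a proper descendant), the steps $\sigma_1$ and~$\sigma_3$ join distinct worlds. Hence they are possibly inverted actions, and the transitions $(q_0,\sigma_1,q_1)$ and $(q_2,\sigma_3,q_3)$ lie in~$\Delta$ itself, because all fresh transitions of~$\hat{\Delta}$ are tests.

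Next, by~\ref{dict:edge}, $(\elw,\elv) \in \sem{\sigma_1}_{\strC}$ makes $\elv$ a $\sigma_1$-successor of~$\elw$, and $(\elv,\elw) \in \sem{\sigma_3}_{\strC}$ makes $\elw$ a $\sigma_3$-successor of~$\elv$. Both hold through cases~(iii)/(iv) of the definition of a $\sigma$-successor. The middle step $\elv, \sigma_2, \elv$ is a stationary witnessing cycle for $\hat{\autA}_{q_1,q_2}$ at~$\elv$, hence tree-like. Claim~\ref{claim:tree-like-witnesses-are-labelled} therefore gives $\ldem_{q_1,q_2} \in \ell(\elv)$; concretely this is the base case $n=1$, via~\ref{encready:B2}, \ref{encready:B3}, or directly from Definition~\ref{def:decoding} when $\sigma_2 = \ldem_{q_1,q_2}?$.

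Finally, we apply Condition~\ref{encready:C1}. Its vertex~$\elw$ is our~$\elw$, its~$\elv$ is our~$\elv$, its states $q_1,q_2,q_3,q_4$ are our $q_0,q_1,q_2,q_3$, and its actions are $\sigma \deff \sigma_1$ and $\tau \deff \sigma_3$. This yields $\ldem_{q_0,q_3} = \ldem_{q,q'} \in \ell(\elw)$.

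The only delicate point is checking that the successor relations go in the direction \ref{encready:C1} requires, since $\elw$ may be either the anchor or the descendant. The notion of $\sigma$-successor is symmetric in this respect, as (iii) and~(iv) cover both orientations, so no case split should be needed. I expect no real obstacle; the argument should be a few lines.
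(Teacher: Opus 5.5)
Your proof is correct and follows the paper's argument step for step. The stationary middle step gives $\ldem_{q_1,q_2}\in\ell(\elv)$ via Claim~\ref{claim:tree-like-witnesses-are-labelled}; the two outer steps are non-test letters that realise transitions of~$\Delta$ and yield the two successor relations by~\ref{dict:edge}; and Condition~\ref{encready:C1} concludes (one small imprecision: when the anchored descendant is a child of its anchor, those outer steps may go through cases~(i)/(ii) rather than~(iii)/(iv), but~\ref{dict:edge} covers all four cases uniformly, so nothing in your argument depends on it).
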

\begin{proof}
  Without any surprise now, 
  let $\xi = \elw, \sigma_1, \elv, \sigma_2, \elv, \sigma_3, \elw$ be such a
  cycle, and fix an accepting run
  $(\elw, q), (\elv, q_1), (\elv, q_2), (\elw, q')$ of $\hat{\autA}_{q,q'}$
  along~$\xi$.
  Its infix $\elv, \sigma_2, \elv$ is a witnessing cycle for
  $\hat{\autA}_{q_1,q_2}$ at~$\elv$, and it is trivially tree-like; hence
  $\ldem_{q_1,q_2} \in \ell(\elv)$ by
  Claim~\ref{claim:tree-like-witnesses-are-labelled}.
  The steps $\sigma_1$ and~$\sigma_3$ join the distinct worlds $\elw$
  and~$\elv$. As in the proof of that claim, they are therefore possibly
  inverted actions with $(q, \sigma_1, q_1), (q_2, \sigma_3, q') \in \Delta$.
  Moreover, by~\ref{dict:edge}, the vertex $\elv$ is a $\sigma_1$-successor
  of~$\elw$, while $\elw$ is a $\sigma_3$-successor of~$\elv$.
  Condition~\ref{encready:C1}~yields~$\ldem_{q,q'} \in \ell(\elw)$.
\end{proof}

\noindent Call a cycle \emph{basic} if it is tree-like, cactus-like, or a
back-edge cycle, and call a cycle an \emph{$\ldem$-cycle} if it is of the form
$\elv, \ldem_{q,q'}?, \elv$ for some $q, q' \in Q$. The following claim shows
that every witnessing cycle can be replaced by a witnessing $\ldem$-cycle.
The three preceding claims already establish this for basic cycles; we extend
it to all witnessing cycles by induction, through a careful analysis of their
structure that allows us to decompose them into simpler ones.

The following observation seems to be a standard property of cacti, but we were unable to find a reference, so we include a proof. It will be useful for the inductive step.

\begin{observation}\label{obs:simple-cycles-are-anchored}
  Let $\strC$ be a cactus-like structure, let $k \geqslant 3$, and let
  $\elv_0, \ldots, \elv_{k-1} \in \domC$ be pairwise distinct vertices such
  that, for every $\ell < k$, the vertices $\elv_\ell$ and
  $\elv_{(\ell+1) \bmod k}$ are neighbours or joined by a back edge. Then
  $\elv_0, \ldots, \elv_{k-1}$ or its reverse is a cyclic permutation of an
  anchored path.
\end{observation}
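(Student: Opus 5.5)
We prove the statement by analysing the \emph{underlying graph} $G$ of~$\strC$: its vertices are those of~$\domC$, and its edges join pairs of neighbours or the two endpoints of a back edge. The hypothesis says that $\elv_0, \ldots, \elv_{k-1}$ is a simple cycle of length $k \geqslant 3$ in~$G$. We first observe that the tree edges form a tree, so this simple cycle must use at least one back-edge pair $\{\elu, \elw\}$, where $\elw \deff \elu j 0^{m}$ is an anchored descendant of its anchor~$\elu$. Rotating and possibly reversing the sequence, we assume $\elv_0 = \elu$ and $\elv_{k-1} = \elw$. Then $\elv_0, \ldots, \elv_{k-1}$ is a simple path in~$G$ from $\elu$ to~$\elw$ that avoids the pair $\{\elu, \elw\}$ as an edge.

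The core step is to show that this path coincides with the anchored path $\zeta \deff \elu, \elu j, \ldots, \elu j 0^{m}$. We do this with a ``block'' argument. Let $\beta_{\elu j}$ be the set of vertices on~$\zeta$ other than~$\elu$. We claim that removing the back-edge pair $\{\elu,\elw\}$ and the tree edge $\{\elu, \elu j\}$ separates the subtree~$T_{\elu j}$ rooted at~$\elu j$ from the rest of~$G$. Indeed, every other edge leaving~$T_{\elu j}$ would have to be a back edge from a vertex inside to an anchor outside. But the anchor of any anchored descendant $\elx \in T_{\elu j}$ is an ancestor of~$\elx$ reached by going up a $0$-path and then one non-zero step. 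If that anchor lies outside~$T_{\elu j}$, then the $0$-path climbs all the way to $\elu j$ and the non-zero step is $j$, which forces $\elx = \elw$, because every vertex has at most one anchored descendant in each direction and the $0$-path below $\elu j$ is unique. Consequently, the path from~$\elu$ must enter~$T_{\elu j}$ through $\elu j$, and once inside it must reach~$\elw$ without leaving.

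Inside~$T_{\elu j}$, we repeat the separation argument inductively along the $0$-spine. At $\elx \deff \elu j 0^{i}$, consider any subtree hanging off a non-zero direction $j'$ of~$\elx$, or the part above. Such a subtree connects to the rest only through~$\elx$: by the same uniqueness-of-anchor reasoning, its unique back edge escaping to an outside vertex lands exactly on~$\elx$. A simple path cannot enter and leave such a pendant block through the single cut vertex~$\elx$. Hence the path must proceed from~$\elx$ to its $0$-child, and so it follows~$\zeta$ exactly. This yields that $\elv_0, \ldots, \elv_{k-1}$ equals~$\zeta$, and $k \geqslant 3$ is consistent with $m \geqslant 1$.

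The main obstacle is the precise bookkeeping of the ``cut vertex'' claim, namely that each block hanging off the spine attaches only at one vertex. We will state it as a lemma: for any vertex~$\elx$ and non-zero direction~$j'$, every $G$-edge with exactly one endpoint in~$T_{\elx j'}$ has its other endpoint equal to~$\elx$. For $0$-children the analogous statement is that the only edges leaving $T_{\elx 0}$ go to its parent~$\elx$, or they are the single back edge of the anchored path through $\elx 0$. We prove both directly from the definition of anchored descendants and the uniqueness of anchors, and then derive the observation from them.
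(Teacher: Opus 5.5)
Your proof is correct, but it takes a different route from the paper's.

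\emph{The paper's route} is a global parity argument. It replaces every back-edge step of the cycle by the anchored path $P_e$ that the back edge closes, which yields a closed walk in the tree. Such a walk traverses every tree edge an even number of times. Since the paths $P_e$ of distinct back edges are pairwise disjoint, every edge of $P_e$ is forced into the cycle, so the cycle contains, and hence equals, $\{e\}\cup P_e$. A back edge that coincides with a tree edge is dismissed separately as a bridge.

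\emph{Your route} is a local cut argument. The subtree $T_{\elu j}$ is attached to the rest of $G$ only by the two edges at $\elu$, and each pendant subtree $T_{\elx j'}$ is attached only at the cut vertex $\elx$. Hence a simple path from $\elu$ to $\elw$ has no choice but to descend the $0$-spine.

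The paper's argument is shorter and never traces the path step by step. Yours is more informative: your cut lemma essentially exhibits the block structure of $G$, with the anchored cycles as its cyclic blocks. The price is somewhat more bookkeeping.

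One point to tidy up: choose the back edge $\{\elu,\elw\}$ so that it joins \emph{non-neighbours}. This is possible precisely because a simple cycle cannot lie inside the tree. The choice guarantees $m\geqslant 1$, so $\{\elu,\elu j\}$ and $\{\elu,\elw\}$ are distinct edges and the path can enter $T_{\elu j}$ at all. If $m=0$, your cut argument would instead show that this pair is a bridge and cannot lie on the cycle. So $m\geqslant 1$ must be secured at the start, rather than noted afterwards as being ``consistent with'' $k\geqslant 3$.
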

\begin{proof}
  Let $G$ be the Gaifman graph of~$\strC$, \ie the undirected simple graph
  with vertex set~$\domC$ whose edges join neighbours (\emph{tree edges}) or
  the two ends of a back edge (again \emph{back edges}). The tree edges span a
  tree.
  Put $\elv_k \deff \elv_0$. The edges $(\elv_\ell, \elv_{\ell+1})$ with
  $\ell < k$ form a cycle~$D$ in~$G$. No $e \in D$ is simultaneously a tree
  edge and a back edge: such an $e$ is the unique edge of~$G$ crossing the cut
  it induces, hence lies on no cycle. Let $B \subseteq D$ be its back edges;
  $B \neq \emptyset$, as trees are acyclic. For $e \in B$, let $P_e$ be the
  edges of the anchored path that~$e$ closes into a cycle; thus
  $e \notin P_e$. The sets~$P_e$ are pairwise disjoint: the lower end of such
  an edge has the form $\elu j 0^i$ with $j \neq 0$, which determines the
  anchor~$\elu$ and the direction~$j$, hence~$e$, as $\elu$ has at most one
  anchored descendant in direction~$j$. Replacing each step of
  $\elv_0, \ldots, \elv_k$ across some $e \in B$ by the corresponding anchored
  path yields a closed walk at~$\elv_0$ in the tree. It traverses each tree
  edge an even number of times, as removing one disconnects the tree. By
  disjointness, each $f \in P_e$ is traversed once inside the inserted paths,
  and once more exactly when $f \in D$; hence $f \in D$. Thus $D$ contains,
  and so equals, the cycle $\{e\} \cup P_e$ for each $e \in B$. Consequently,
  $\elv_0, \ldots, \elv_{k-1}$ or its reverse is a cyclic permutation of that
  anchored path.
\end{proof}

We finally prove that every witnessing cycle is properly detected by $\ldem$ labels.

\begin{claim}\label{claim:all-witnesses-are-labelled}
  Let $\treeT$ and $\strC$ be as in
  Claim~\ref{claim:decoding-gives-unique-cactus-like-structures}, let
  $q, q' \in Q$, and let $\elw \in \domC$.
  If $\elw$ has a witnessing cycle for~$\hat{\autA}_{q,q'}$,
  then $\ldem_{q,q'} \in \ell(\elw)$.
\end{claim}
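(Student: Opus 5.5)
We argue by strong induction on the length (number of letters) of the witnessing cycle~$\xi$, simultaneously for all vertices and all pairs of states. The induction hypothesis lets us replace any shorter witnessing cycle at a vertex $\elu$ for $\hat{\autA}_{p,p'}$ by the single step $\elu, \ldem_{p,p'}?, \elu$, which is legal in~$\hat{\autA}$ and does not increase length.

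The first step is to \emph{contract detours}. Write $\xi = \elv_0, \sigma_1, \ldots, \sigma_n, \elv_n$ with $\elv_0 = \elv_n = \elw$, and fix an accepting run $(\elv_i, q_i)_i$. Suppose some world repeats at indices $0 \leqslant a < b \leqslant n$ with $(a,b) \neq (0,n)$. Then the infix from $a$ to $b$ is a shorter witnessing cycle for $\hat{\autA}_{q_a,q_b}$ at~$\elv_a$, so the induction hypothesis gives $\ldem_{q_a,q_b} \in \ell(\elv_a)$.

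\begin{itemize}
\item If $b - a \geqslant 2$, replacing that infix by $\elv_a, \ldem_{q_a,q_b}?, \elv_a$ gives a strictly shorter witnessing cycle for $\hat{\autA}_{q,q'}$ at~$\elw$ (the decoding puts $\elv_a$ into $\ldem_{q_a,q_b}$). The induction hypothesis then concludes.
\item If $b - a = 1$, the step is already stationary.
\end{itemize}

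We may therefore assume the following normal form: apart from stationary steps, the distinct worlds $\elu_0 = \elw, \elu_1, \ldots, \elu_{k-1}$ visited by the reduction of~$\xi$ are pairwise distinct. In addition, every block of stationary steps at a world has length at most one. This last point follows by merging a longer block into one $\ldem$-test via \ref{encready:B5}-style induction, or simply because blocks of length $\geqslant 2$ are shorter cycles and can be contracted as above.

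It remains to case on $k$, the number of distinct worlds.

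\begin{itemize}
\item \textbf{$k = 1$.} The cycle is stationary, hence tree-like, and Claim~\ref{claim:tree-like-witnesses-are-labelled} applies.
\item \textbf{$k = 2$.} The cycle is $\elw \to \elu \to \elw$, possibly with single stationary steps interleaved. Stationary steps at~$\elw$ are peeled off with~\ref{encready:B5}, after applying the induction hypothesis to the length-one cycle. If $\elw, \elu$ are neighbours, the core cycle is tree-like, so Claim~\ref{claim:tree-like-witnesses-are-labelled} handles it. If they are joined by a back edge, the core has the shape $\elw, \sigma_1, \elu, \sigma_2, \elu, \sigma_3, \elw$; when the middle stationary step is absent we insert the trivial $\ldem_{p,p}?$ test, available by~\ref{encready:B0}. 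This is a back-edge cycle, so Claim~\ref{claim:back-edge-witnesses-are-labelled} applies.
\item \textbf{$k \geqslant 3$.} Consecutive distinct worlds are joined by an action, hence by~\ref{dict:edge} they are neighbours or joined by a back edge. Observation~\ref{obs:simple-cycles-are-anchored} then shows that $\elu_0, \ldots, \elu_{k-1}$ or its reverse is a cyclic permutation of an anchored path. Padding each world with a trivial $\ldem_{p,p}?$ test where no stationary step occurs (using~\ref{encready:B0}) produces exactly the shape of a cactus-like cycle. Claim~\ref{claim:cactus-like-witnesses-are-labelled} then gives $\ldem_{q,q'} \in \ell(\elw)$.
\end{itemize}

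The main obstacle is the contraction step. We must ensure the replaced cycle is genuinely shorter and remains a witnessing cycle at the \emph{same} world~$\elw$ for the \emph{same} pair $(q,q')$. We must also handle the one non-contractible repetition, namely $\elw$ itself at indices $0$ and $n$. A repetition of $\elw$ at an interior index $0 < a < n$ splits $\xi$ into two shorter cycles at~$\elw$, which we glue via~\ref{encready:B5}.

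A second subtlety is that the padding tests must be transitions of~$\hat{\Delta}$ from the current state to itself. They are, since $(p, \ldem_{p,p}?, p) \in \hat{\Delta}$ and $\ldem_{p,p}$ holds everywhere by~\ref{encready:B0}. Finally, the definition of cactus-like cycles must allow the exact interleaving, which it does.
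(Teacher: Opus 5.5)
Your proof is correct and follows essentially the paper's route: strong induction on length, shortcutting every non-trivial repetition of a world by an $\ldem$-test, then settling the case of two distinct worlds with Claim~\ref{claim:back-edge-witnesses-are-labelled}, and the case of at least three with Observation~\ref{obs:simple-cycles-are-anchored} plus Claim~\ref{claim:cactus-like-witnesses-are-labelled}. One point to state explicitly: for $k \geqslant 3$ the cactus-like shape requires every stationary step to be an $\ldem$-test, so an existing stationary step $\elu, \tau, \elu$ (for example a test $\pP?$ or a self-loop) must also be rewritten as $\elu, \ldem_{p,p'}?, \elu$, not only padded where absent; your opening replacement principle (or Claim~\ref{claim:tree-like-witnesses-are-labelled}) supplies this, and it is how the paper replaces all maximal stationary blocks uniformly.
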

\begin{proof}
  We proceed by strong induction on the number~$n$ of letters of the witnessing
  cycle, simultaneously for all vertices and all pairs of states.
  Let $\xi = \elv_0, \sigma_1, \elv_1, \ldots, \sigma_n, \elv_n$ be a witnessing
  cycle for~$\hat{\autA}_{q,q'}$ with $\elv_0 = \elv_n = \elw$, and fix an
  accepting run $(\elv_0, q_0), \ldots, (\elv_n, q_n)$ of~$\hat{\autA}_{q,q'}$
  along~$\xi$. If $\xi$ is tree-like, as is the case whenever $n \leqslant 1$,
  Claim~\ref{claim:tree-like-witnesses-are-labelled} applies. Otherwise, some
  step of~$\xi$ joins two non-neighbours; in particular, $\xi$ is not stationary.

  \smallskip\noindent\emph{Shortcuts.}
  For $i \leqslant j$ with $\elv_i = \elv_j$, call the replacement of the infix
  $\elv_i, \sigma_{i+1}, \ldots, \sigma_j, \elv_j$ by
  $\elv_i, \ldem_{q_i,q_j}?, \elv_i$ \emph{safe} if
  $\ldem_{q_i,q_j} \in \ell(\elv_i)$. A safe replacement again yields a
  witnessing cycle for~$\hat{\autA}_{q,q'}$ at~$\elw$, as
  $(q_i, \ldem_{q_i,q_j}?, q_j) \in \hat{\Delta}$ and $\elv_i$ satisfies
  $\ldem_{q_i,q_j}$ by Definition~\ref{def:decoding}.
  If $j - i \geqslant 2$ and $(i,j) \neq (0,n)$, then $j - i < n$, so the
  replacement is safe by the induction hypothesis and produces a shorter
  witnessing cycle, to which the induction hypothesis applies once more.
  We may thus assume that no such $i$ and~$j$ exist.
  Let $\elu_0, \ldots, \elu_k$ be the reduction of~$\xi$, \ie its sequence of
  worlds without consecutive duplicates. Then $\elu_0 = \elu_k = \elw$,
  $k \geqslant 2$, and $\elu_0, \ldots, \elu_{k-1}$ are pairwise distinct, since
  $\elu_a = \elu_b$ with $a < b < k$ would yield $i < j < n$ with
  $\elv_i = \elv_j$ and $j - i \geqslant 2$. Every maximal block of equal
  consecutive worlds of~$\xi$ is a stationary, hence tree-like, witnessing
  cycle; replacing all of them, which is safe by
  Claim~\ref{claim:tree-like-witnesses-are-labelled}, yields~a~witnessing~cycle
  \[
    \hat{\xi} \deff \elu_0, \ldem_{p_0,p_0'}?, \elu_0, \tau_1, \elu_1,
    \ldem_{p_1,p_1'}?, \elu_1, \ldots, \tau_k, \elu_k, \ldem_{p_k,p_k'}?, \elu_k
  \]
  for~$\hat{\autA}_{q,q'}$ at~$\elw$, where $p_0 = q$, $p_k' = q'$, and
  $\ldem_{p_a,p_a'} \in \ell(\elu_a)$ for all~$a \leqslant k$. As $\strC$ is
  cactus-like, any two consecutive worlds $\elu_a, \elu_{a+1}$ are neighbours or
  an anchor and its anchored descendant.

  \smallskip\noindent\emph{Case $k = 2$.}
  Every non-test step of~$\xi$ joins $\elw$ and~$\elu_1$, so these are not
  neighbours, and one is an anchored descendant of the other. Hence
  $\elw, \tau_1, \elu_1, \ldem_{p_1,p_1'}?, \elu_1, \tau_2, \elw$ is a back-edge
  witnessing cycle for~$\hat{\autA}_{p_0',p_2}$, and
  Claim~\ref{claim:back-edge-witnesses-are-labelled} yields
  $\ldem_{p_0',p_2} \in \ell(\elw)$. As also
  $\ldem_{q,p_0'}, \ldem_{p_2,q'} \in \ell(\elw)$, two applications
  of~\ref{encready:B5} give $\ldem_{q,q'} \in \ell(\elw)$.

  \smallskip\noindent\emph{Case $k \geqslant 3$.}
  By Observation~\ref{obs:simple-cycles-are-anchored}, applied to the tree~$\domC$
  and $\elu_0, \ldots, \elu_{k-1}$, the cycle~$\hat{\xi}$ is cactus-like, and
  Claim~\ref{claim:cactus-like-witnesses-are-labelled} concludes.
\end{proof}

Combining the properties established so far, we obtain the following lemma.

\begin{lemma}\label{lemma:from-encoding-ready-trees-to-phi-cacti}
  Let $\treeT$ be an encoding-ready tree.
  Then $\dec(\treeT)$ is an exact $\varphi$-cactus.
  Moreover, if $\treeT$ is regular, then so is $\dec(\treeT)$.
\end{lemma}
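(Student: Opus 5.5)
The plan is to assemble the claims already proved for $\strC \deff \dec(\treeT)$. By Claim~\ref{claim:decoding-gives-unique-cactus-like-structures}, $\strC$ is cactus-like, of arity $2|Q|^2{+}1$, and satisfies~\ref{phicactus:fini}. Regularity comes straight from Claim~\ref{claim:decoding-preserves-regularity}. Claim~\ref{claim:decoding-preserves-easy-properties} gives~\ref{phicactus:bool} and the ``if'' parts of~\ref{phicactus:walk} and~\ref{phicactus:cycle}. Corollary~\ref{cor:witnessing-walks-works-well-in-decoding} then settles~\ref{phicactus:walk} completely.

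It remains to prove the ``only if'' part of~\ref{phicactus:cycle} together with exactness. Let $\elv$ have a witnessing cycle for $\autA_{q,q'}$. Since $\Delta \subseteq \hat{\Delta}$, this is also a witnessing cycle for $\hat{\autA}_{q,q'}$, and Claim~\ref{claim:all-witnesses-are-labelled} yields $\ldem_{q,q'} \in \ell(\elv)$. By Definition~\ref{def:decoding}, this means $\elv$ satisfies $\ldem_{q,q'}$ in~$\strC$. So every vertex with a witnessing cycle carries the demand, and this holds for every pair $(q,q')$, not only the bannable ones. Conversely, a vertex carrying $\ldem_{q,q'}$ has a witnessing cycle by the ``if'' part. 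Hence $\ldem_{q,q'}$ holds exactly at the vertices with witnessing cycles, which gives both exactness and the bannable ``otherwise'' clause of~\ref{phicactus:cycle}.

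I expect the main obstacle to be a subtle point in Claim~\ref{claim:all-witnesses-are-labelled}. Its hypothesis concerns $\hat{\autA}$-cycles in $\strC$, where tests $\ldem_{p,p'}?$ are evaluated against the labels. We only use it for $\autA$-cycles, which contain no such tests, so no circularity arises. The remaining work is just a check that the pieces compose: \ref{phicactus:bool} and~\ref{phicactus:fini} do not depend on the demand conditions, and each claim was proved for an arbitrary encoding-ready~$\treeT$. Combining all of this, $\dec(\treeT)$ is an exact $\varphi$-cactus, and it is regular whenever $\treeT$ is.
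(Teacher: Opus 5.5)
Your proposal is correct and matches the paper's proof, which likewise just assembles Claims~\ref{claim:decoding-gives-unique-cactus-like-structures}, \ref{claim:decoding-preserves-regularity}, \ref{claim:decoding-preserves-easy-properties}, \ref{claim:all-witnesses-are-labelled} and Corollary~\ref{cor:witnessing-walks-works-well-in-decoding}. Your explicit derivation of exactness spells out what the paper leaves implicit: you use $\Delta \subseteq \hat{\Delta}$ together with the fact that Claim~\ref{claim:all-witnesses-are-labelled} holds for all pairs of states, not only the bannable ones.
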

\begin{proof}
  That $\dec(\treeT)$ is a $\varphi$-cactus follows directly from
  Claims~\ref{claim:decoding-gives-unique-cactus-like-structures},
  \ref{claim:decoding-preserves-easy-properties},
  and~\ref{claim:all-witnesses-are-labelled}, together with
  Corollary~\ref{cor:witnessing-walks-works-well-in-decoding}.
  Regularity is preserved by
  Claim~\ref{claim:decoding-preserves-regularity}.
\end{proof}

\subsection{Two-Way Alternating Tree Automata}\label{appendix:twoapta}

For the reader's convenience, we collect here the automata-theoretic machinery 
that will be employed in the proof of Lemma~\ref{lemma:cactus-automaton}.
For the presentation we closely follow 
Calvanese, Eiter and Ortiz~\cite[Section~2]{CalvaneseEO14}, with two cosmetic deviations.
First, we run automata over trees as defined in Section~\ref{sec:preliminaries}, 
rather than over the trees of Calvanese et al.~\cite[Definition~2.6]{CalvaneseEO14}. 
Since our trees have their children numbered from~$0$ (and not from~$1$), 
we write $\stay$ and $\up$ for the two special directions 
denoted there by $0$ and~$-1$. Second, we speak of \emph{vertices} rather 
than of nodes, and reserve $Q$ for the states of the semiautomaton~$\autA$ 
fixed in the Proviso, using~$S$ for the states of tree automata.

\noindent \textbf{Positive Boolean formul{\ae}.}\,
Given a finite set~$W$, we write $\posbool{W}$ for the set of 
\emph{positive Boolean formul{\ae}} over~$W$, \ie the formul{\ae} built from 
the elements of~$W$ and the constants $\top, \bot$ using $\land$ and~$\lor$ only.
A~set $V \subseteq W$ \emph{satisfies} a formula $\beta \in \posbool{W}$ if 
assigning $\top$ to every element of~$V$ and $\bot$ to every element 
of $W \setminus V$ makes~$\beta$ evaluate to true.

\noindent \textbf{Directions.}\,
Fix $\rmN \in \N$ and put 
$\dirs \; \deff \; \{\, \up, \, \stay \,\} \; \cup \; \{\, 0, 1, \ldots, \rmN \,\}$,
the set of \emph{directions}; note that $|\dirs| = \rmN{+}3$. 
Let $\domT \subseteq \{0,1,\dots,\rmN\}^{*}$ be a tree, let $\elv \in \domT$ 
be a vertex, and let $d \in \dirs$. The vertex $\elv{\cdot}d$ 
\emph{reached from~$\elv$ in direction~$d$} is defined as follows:
\begin{itemize}\itemsep0em
  \item $\elv{\cdot}\stay \deff \elv$ (the automaton idles at~$\elv$);
  \item $\elv{\cdot}\up \deff \elw$, where $\elw$ is the parent of~$\elv$; 
  this is undefined for $\elv = \varepsilon$;
  \item $\elv{\cdot}i \deff \elv i$ for $0 \leqslant i \leqslant \rmN$, 
  the $i$-th child of~$\elv$; this is undefined if $\elv i \notin \domT$. 
  This corner case will not trouble us: by~\ref{run:R2}, a transition that 
  demands a copy in a direction undefined at the current vertex cannot be 
  satisfied, and the automata we construct use such transitions to force 
  their input trees to be complete (see Observation~\ref{obs:broadcast}).
\end{itemize}
Thus $\elv{\cdot}d$ is a vertex of~$\domT$ whenever it is defined, and it is either $\elv$ itself or a neighbour~of~$\elv$.

\noindent \textbf{The automaton model.}\,
The definition below is almost a verbatim copy of Definition~2.7 of Calvanese 
et al.~\cite{CalvaneseEO14}, modulo minor, already mentioned, cosmetic changes.

\begin{definition}[2APTA]\label{def:2apta}
  A~\emph{two-way alternating parity tree automaton} (2APTA) is a tuple 
  $\autT \deff (\Sigma, \rmN, S, \delta, s_0, F)$, where $\Sigma$ 
  is a finite \emph{input alphabet}, $\rmN \in \N$ fixes the direction 
  set~$\dirs$, $S$~is a finite set of \emph{states}, 
  $\delta \colon S \times \Sigma \to \posbool{\dirs \times S}$ is the 
  \emph{transition function}, $s_0 \in S$ is the \emph{initial state}, 
  and $F \deff (G_1, G_2, \ldots, G_n)$ with 
  $G_1 \subseteq G_2 \subseteq \cdots \subseteq G_n = S$ is the 
  \emph{parity acceptance condition}. The number~$n$ is the \emph{index} 
  of~$\autT$, denoted~$\ind(\autT)$. We~write $S(\autT), \Sigma(\autT)$, 
  etc., for the components of~$\autT$ if the automaton is not clear 
  from the context, and put $G_0 \deff \emptyset$.
\end{definition}

The automaton runs over $\Sigma$-labelled trees $(\domT, \ell)$ 
with $\domT \subseteq \{0,1,\dots,\rmN\}^{*}$, and it is \emph{two-way} 
and \emph{alternating}: a copy of~$\autT$ that sits at a vertex~$\elv$ in a 
state~$s$ reads the label $\ell(\elv)$ and spawns finitely many new copies, 
each of which is sent to a vertex reachable from~$\elv$ in a single move.
This is what the transition function encodes.
An atom $(d, s') \in \dirs \times S$ occurring in $\delta(s, \sigma)$ is to be 
read as the instruction 
\emph{``send a copy of~$\autT$ in state~$s'$ to the vertex $\elv{\cdot}d$''};
for $d = \stay$ the copy idles and re-reads the current vertex, for $d = \up$ 
it moves to the parent of~$\elv$ (which is what makes the automaton two-way), 
and for $d = i$ with $0 \leqslant i \leqslant \rmN$ it moves to the $i$-th 
child of~$\elv$. Disjunctions in $\delta(s,\sigma)$ are resolved existentially 
and conjunctions universally, the constant~$\top$ makes the current copy accept 
immediately (it may spawn no successors at all), and~$\bot$ makes it reject.
Note that a copy may only be sent in a direction~$d$ with $\elv{\cdot}d$ 
defined; in particular, a copy sitting at the root may not move up.

\noindent \textbf{Runs.}\,
A run records, in a tree-shaped fashion, the whole history of the copies 
spawned by~$\autT$: each of its vertices is labelled by a \emph{configuration}, 
\ie by a pair consisting of the vertex of the input tree that the corresponding 
copy reads and of the state it is in.
The definition below is an almost verbatim reproduction of Definition~2.8 
of Calvanese et al.~\cite{CalvaneseEO14}.

\begin{definition}[Run]\label{def:2apta-run}
  Let $\autT \deff (\Sigma, \rmN, S, \delta, s_0, F)$ be a 2APTA, 
  let $(\domT, \ell)$ be a $\Sigma$-labelled tree with 
  $\domT \subseteq \{0,1,\dots,\rmN\}^{*}$, let $\elv \in \domT$, 
  and let $s \in S$. An \emph{$(\elv, s)$-run} of~$\autT$ over $(\domT, \ell)$ 
  is a $(\domT {\times} S)$-labelled tree $(\domR, r)$ such that:
  \begin{namedlist}[$(\mathrm{R}_1)$\hspace{0.4em}]
    \item[\desclabel{$(\mathrm{R}_1)$}{run:R1}]
    $\varepsilon \in \domR$ and $r(\varepsilon) = (\elv, s)$;

    \item[\desclabel{$(\mathrm{R}_2)$}{run:R2}]
    every $\elw \in \domR$ \emph{satisfies}~$\delta$, that is: 
    if $r(\elw) = (\ely, s')$ and $\delta(s', \ell(\ely)) = \beta$, then there 
    are $m \in \N$ and a set 
    $W \deff \{ (d_0, t_0), \ldots, (d_{m-1}, t_{m-1}) \} \subseteq \dirs \times S$, 
    empty in case $m = 0$, such that $W$ satisfies~$\beta$ and, 
    for every $i < m$, the vertex $\ely{\cdot}d_i$ is defined, 
    $\elw i \in \domR$, and $r(\elw i) = (\ely{\cdot}d_i, t_i)$.
  \end{namedlist}
  For an infinite downward path~$\xi$ in~$\domR$, we let $\Inf(\xi)$ be the set 
  of those states $s' \in S$ for which infinitely many vertices~$\elw$ on~$\xi$ 
  satisfy $r(\elw) \in \domT {\times} \{s'\}$, \ie the states that occur 
  infinitely often along~$\xi$. Such a path~$\xi$ \emph{satisfies}~$F$ if there 
  is an even $i$ with $\Inf(\xi) \cap G_{i} \neq \emptyset$ 
  and $\Inf(\xi) \cap G_{i-1} = \emptyset$. The run $(\domR, r)$ is 
  \emph{accepting} if every infinite downward path of~$\domR$~satisfies~$F$.

  \noindent We say that $\autT$ \emph{accepts} a $\Sigma$-labelled tree 
  $(\domT, \ell)$ if it has an accepting $(\varepsilon, s_0)$-run over 
  $(\domT, \ell)$, and we write $\Lang(\autT)$ for the set of all 
  $\Sigma$-labelled trees accepted by~$\autT$, the \emph{language} of~$\autT$. 
  The \emph{non-emptiness problem} asks, given a 2APTA~$\autT$, whether 
  $\Lang(\autT) \neq \emptyset$. As an important consequence of Rabin's basis 
  theorem~\cite[Thm.~3.7]{PerrinPin04InfiniteTrees}, 
  if $\Lang(\autT) \neq \emptyset$, then $\autT$ accepts a~\emph{regular}~tree.
\end{definition}

\noindent A~few comments on Definition~\ref{def:2apta-run} are in order.
Note that the run $(\domR, r)$ is a $(|\dirs| \cdot |S|)$-ary tree.
Its vertices are indexed independently of those of~$\domT$; the map~$r$ is not 
required to be injective, and several copies of~$\autT$ may well read the same 
vertex of~$\domT$ in the same state. Note also that $\domR$ need not be 
complete, and that a copy whose transition is satisfied by 
$W \deff \emptyset$ (which happens precisely when $\emptyset$ satisfies 
$\delta(s', \ell(\ely))$, \eg when the latter equals~$\top$) contributes 
a leaf; finite branches of~$\domR$ are therefore unconstrained by~$F$, 
and only the infinite ones must satisfy the parity condition.
Finally, since $G_1 \subseteq \cdots \subseteq G_n = S$, we may equivalently 
assign to each state~$s'$ its \emph{priority}, namely the least~$i$ with 
$s' \in G_i$; as $S$ is finite and $\xi$ is infinite, $\Inf(\xi)$ is non-empty, 
and $\xi$ satisfies~$F$ if and only if the least priority occurring 
infinitely~often~along~$\xi$~is~even.

\noindent \textbf{Complexity and closure properties.}\,
The two facts about 2APTA that we rely upon are the following.
The first one, going back to Vardi~\cite{Vardi98}, is the cornerstone of the 
automata-theoretic approach to satisfiability: it is what turns our 
characterisation of satisfiability by means of $\varphi$-cacti 
into a decision procedure.

\begin{proposition}[Calvanese et al.~{\cite[Theorem~2.9]{CalvaneseEO14}}, 
  after Vardi~\cite{Vardi98}]\label{prop:2apta-nonemptiness}
  Non-emptiness of a given 2APTA~$\autT$ is decidable in time 
  single exponential in~$|S(\autT)|$ and polynomial in~$|\Sigma(\autT)|$.
\end{proposition}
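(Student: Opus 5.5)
The plan follows Vardi's reduction from two-way alternating to one-way nondeterministic automata~\cite{Vardi98}, in the form used by Calvanese et al.~\cite{CalvaneseEO14}, and then invokes a standard emptiness algorithm for one-way automata. Throughout, $\autT \deff (\Sigma, \rmN, S, \delta, s_0, F)$ is the given 2APTA.

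\textbf{Step 1: positional strategies.} First, I will show that $\autT$ accepts $(\domT, \ell)$ iff the tree admits a \emph{strategy annotation}. Such an annotation assigns to each vertex~$\elv$ a relation $\tau_\elv \subseteq S \times \dirs \times S$. For every pair $(s, \elv)$ that is reachable from $(s_0, \varepsilon)$ by following the annotation, the set $\{(d, s') \mid (s, d, s') \in \tau_\elv\}$ must satisfy $\delta(s, \ell(\elv))$. Moreover, every infinite sequence of configurations that follows the annotation must satisfy~$F$. This is the positional determinacy of parity games applied to the acceptance game of~$\autT$. Each annotation label is drawn from a set of size $2^{|S|^2(\rmN+3)}$, which is exponential in~$|S|$ for fixed~$\rmN$.

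\textbf{Step 2: summarising detours.} The annotated plays still move both up and down, so a one-way automaton cannot follow them directly. To fix this, I add a second annotation $\eta_\elv$ that summarises all finite detours below~$\elv$ which return to~$\elv$. For each pair of states $(s, s')$, $\eta_\elv$ records whether some annotated play can leave $\elv$ in state~$s$ and come back in state~$s'$, and it records the least priority seen on the way. The correctness of $\eta$ can be expressed by local constraints between a vertex and its children. Those constraints are checked by a deterministic one-way automaton whose state space is exponential in~$|S|$.

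Once both annotations are in place, acceptance reduces to a one-way property of each branch. Along every downward path, the plays obtained by combining downward moves with the summarised detours must satisfy the parity condition. I will check the complement of this property with a one-way nondeterministic parity \emph{word} automaton that has $O(|S|)$ states and index $O(\ind(\autT))$. Determinising it gives a deterministic parity word automaton with $2^{O(|S| \log |S|)}$ states and index polynomial in~$|S|$. This automaton then runs along every branch of the tree.

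\textbf{Step 3: product and emptiness.} Next, I form the product of three components: a one-way nondeterministic parity tree automaton that guesses $\tau$ and~$\eta$, the local checker from Step~2, and the determinised branch checker. The result has exponentially many states in~$|S|$ and index polynomial in~$|S|$. Its emptiness is then decided by solving the associated parity game, whose positions are (state, letter) pairs. The algorithm runs in time polynomial in the number of positions and exponential only in the index. Since the number of positions is exponential in~$|S|$ and only linear in~$|\Sigma|$, the total running time is single exponential in~$|S|$ and polynomial in~$|\Sigma|$, as the proposition claims.

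\textbf{Main obstacle.} I expect the hard part to be the treatment of two-wayness in Step~2. The detour summaries must be defined so that they are locally checkable and also capture the parity of every infinite play that oscillates forever around a vertex. At the same time, the index after determinisation must stay polynomial. If this construction turns out to be too intricate, I would fall back on citing Vardi's theorem directly, as the paper does.
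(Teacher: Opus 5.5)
The paper gives no proof here. It imports the bound from Calvanese et al.\ and Vardi, and your outline (positional strategies, detour annotations, a determinised branch automaton, emptiness of a one-way automaton) is the reduction behind that citation. As written, however, Step~2 does not give an equivalent one-way condition, because an infinite play need not descend. If $\elv$ is a vertex of minimal depth visited infinitely often, the play eventually stays below~$\elv$ and returns to it forever. For instance, if $\delta(s_0, a) = (\stay, s_0)$ for every letter~$a$ and $s_0$ has odd priority, then $\autT$ accepts nothing. Yet $\tau_\varepsilon = \{(s_0, \stay, s_0)\}$ passes your branch test vacuously, since no combined play ever makes a downward move.

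The missing ingredient is a local check, and it is what your ``main obstacle'' alludes to. Let $\eta_\elv$ be a set of triples $(s, c, s')$, one for every achievable minimum priority~$c$, closed under composition. A single least priority per pair of states is unsound, because it hides an odd detour behind an even one. Then require that no $\eta_\elv$ contain a triple $(s, c, s)$ with $c$ odd. Pigeonholing the states at the returns to~$\elv$ shows that this catches every non-descending bad play.

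Two smaller points on Step~2. First, impose on~$\eta$ only the closure rules: the exact summary is a least fixed point and is not locally checkable, whereas an over-approximation is harmless for a universal condition. Second, cut descending plays at the \emph{last} visit to each branch vertex. Between consecutive first visits a play may climb above the current vertex, and your downward summaries do not cover that.

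Step~3 does not justify the polynomial dependence on~$|\Sigma|$. The classical parity-game algorithms run in time $n^{O(k)}$. With $n$ linear in~$|\Sigma|$ and $k$ polynomial in~$|S|$, this gives $|\Sigma|^{\mathrm{poly}(|S|)}$. Your claim holds only for the fixed-parameter algorithms that appeared with Calude et al.\ (STOC~2017), which you would have to invoke explicitly.

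The standard and more robust fix is to take $\Sigma$ out of the game. The letter enters only through the purely local test that, for each active state~$s$, the $s$-transitions in~$\tau_\elv$ satisfy $\delta(s, \ell(\elv))$. With a suitable local formulation, the successor states of your product automaton do not depend on it. So one precomputes, in time $|\Sigma| \cdot 2^{\mathrm{poly}(|S|)}$, which annotation labels admit some letter, and then solves a game whose size is independent of~$|\Sigma|$.

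Finally, your label count $2^{|S|^2(\rmN+3)}$ is exponential in $|S|^2 \rmN$. That bound, rather than the restriction to fixed~$\rmN$, is what the paper needs, since there $\rmN = 2|Q|^2$ grows with~$\varphi$.
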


\noindent This is precisely the shape of the bound that we need in 
Lemma~\ref{lemma:cactus-automaton}: the automaton~$\autT_{\varphi}$ that is 
going to be constructed there has only polynomially many states in~$|\varphi|$,
while its alphabet $2^{\Gamma_{\varphi}}$ 
(and hence also its transition function) is of exponential size. 
Both contributions are exponential in~$|\varphi|$, 
and so is the total running time.
The second fact lets us assemble~$\autT_{\varphi}$ from a family of small
automata, one for each condition of Definition~\ref{def:encoding-ready-trees},
rather than build a single monolithic automaton. This way, we can treat each
condition in isolation and then combine the resulting automata into one.

\begin{proposition}[Calvanese 
  et al.~{\cite[Proposition~2.10]{CalvaneseEO14}}]\label{prop:2apta-intersection}
  Let $\autT_1, \ldots, \autT_n$ be 2APTA sharing the same input 
  alphabet~$\Sigma$ and the same~$\rmN$. Then one can construct, in time 
  polynomial in $\sum_{i=1}^{n} |\autT_i|$, a 2APTA~$\autT$ with 
  $|S(\autT)| = \sum_{i=1}^{n} |S(\autT_i)| + 1$ and 
  $\ind(\autT) = \max_{i=1}^{n} \ind(\autT_i)$ such that 
  $\Lang(\autT) = \bigcap_{i=1}^{n} \Lang(\autT_i)$.
\end{proposition}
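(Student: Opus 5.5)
This is the standard intersection construction for alternating automata; the plan is to build it by hand rather than pass through a nondeterministic or one-way form. First rename the states of $\autT_1, \ldots, \autT_n$ apart, so that the sets $S_i \deff S(\autT_i)$ are pairwise disjoint, and add one fresh initial state~$s_0$. Put $S \deff \{s_0\} \cup \bigcup_i S_i$, which gives the required count $|S| = \sum_i |S_i| + 1$. On $S_i$ the new transition function agrees with~$\delta_i$: the atoms of $\delta_i(s, \sigma)$ lie in $\dirs \times S_i$, which after renaming is a subset of $\dirs \times S$. For the fresh state set $\delta(s_0, \sigma) \deff \bigwedge_{i=1}^{n} (\stay, s_0^{i})$, where $s_0^i$ is the initial state of~$\autT_i$. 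This formula does not depend on~$\sigma$, and since $\stay$ is always defined, the root can always satisfy it.

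Next I need a parity condition. I would work with priorities, as in the remark after Definition~\ref{def:2apta-run}. Each $s \in S_i$ keeps its priority from~$\autT_i$, and $s_0$ gets the largest priority $n' \deff \max_i \ind(\autT_i)$. Setting $G_j \deff \{ s \mid \text{priority of } s \leqslant j \}$ for $j \leqslant n'$ gives a chain ending in $G_{n'} = S$, so the index is $\max_i \ind(\autT_i)$ as claimed. The priority chosen for $s_0$ is harmless, because $s_0$ occurs only at the root of any run and so never lies in $\Inf(\xi)$.

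Correctness runs in both directions. If the tree $(\domT, \ell)$ is accepted by every~$\autT_i$, take accepting $(\varepsilon, s_0^i)$-runs $(\domR_i, r_i)$ for each~$i$. Hang them as the subtrees $i{-}1$ below a new root labelled $(\varepsilon, s_0)$. Condition~\ref{run:R2} at the root is met by $W \deff \{(\stay, s_0^i)\}_i$, and elsewhere it is inherited from the component runs. Every infinite downward path is, apart from its first vertex, an infinite path in some~$\domR_i$, and the priorities there are unchanged, so the path satisfies~$F$.

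Conversely, let an accepting $(\varepsilon, s_0)$-run be given. Condition~\ref{run:R2} at its root produces a child labelled $(\varepsilon, s_0^i)$ for each~$i$. Because the states are disjoint and every transition from $S_i$ stays inside~$S_i$, the subtree below that child only contains states from~$S_i$. It is therefore an $(\varepsilon, s_0^i)$-run of~$\autT_i$, and its infinite paths satisfy $\autT_i$'s condition. This gives $\Lang(\autT) = \bigcap_i \Lang(\autT_i)$.

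The construction is a disjoint union plus one formula of size $O(n)$ per letter, so it runs in time polynomial in $\sum_i |\autT_i|$. The main obstacle is the parity bookkeeping. The chains $G_1 \subseteq \cdots$ of the different automata are of different lengths, and converting them to priorities and back must keep the index at the maximum rather than the sum. Treating $\ind(\autT_i)$ as the largest priority used by $\autT_i$, the translation is immediate and the merged chain has exactly $\max_i \ind(\autT_i)$ levels.
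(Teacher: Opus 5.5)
The paper gives no proof of this proposition and simply cites Calvanese et al. Your construction is the standard one behind that result: a disjoint union of the state sets, a fresh state $s_0$ with $\delta(s_0,\sigma) \deff \bigwedge_i (\stay, s_0^i)$, the old priorities kept, and $s_0$ placed at priority $\max_i \ind(\autT_i)$. The index bookkeeping and the forward inclusion are fine.

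One justification is wrong as stated, however. Under Definition~\ref{def:2apta-run}, condition~\ref{run:R2} lets $W$ be \emph{any} subset of $\dirs \times S$ that satisfies the transition formula. Since positive Boolean formul{\ae} are monotone, a run may spawn superfluous copies in arbitrary states: states of some other $S_k$, or $s_0$ itself. Such an $s_0$ can even recur infinitely often along a path, \eg alternating $s_0^1, s_0, s_0^1, \ldots$ at one vertex. Consequently, neither of your two claims holds for arbitrary runs:
\begin{itemize}
\item ``$s_0$ occurs only at the root of any run'';
\item ``the subtree below the child labelled $(\varepsilon, s_0^i)$ only contains states from $S_i$''.
\end{itemize}

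The converse inclusion therefore needs a pruning step. Start from that child. At each vertex keep only the children whose atom lies in $\dirs \times S_i$, and re-index them consecutively. Every atom of $\delta_i(s, a)$ lies in $\dirs \times S_i$, so the retained set $W \cap (\dirs \times S_i)$ still satisfies it. Hence the pruned tree is an $(\varepsilon, s_0^i)$-run of $\autT_i$. Its infinite paths are infinite paths of the original accepting run and carry the same priorities, so it is accepting.

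With this repair, the priority of $s_0$ is indeed irrelevant, though for a different reason than the one you gave. Your glued run in the forward direction has $s_0$ only at its root, and the pruned runs in the converse direction contain no $s_0$ at all.
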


\newcommand{\AlphabetAPTA}{\Gamma^{\mathrm{corr}}_{\varphi}}
\newcommand{\everywhere}[1]{\square\,#1}
\newcommand{\gate}[2]{\big[#1\big]_{#2}}
\newcommand{\nlab}[1]{\overline{s}_{#1}}
\newcommand{\dtest}[2]{s^{\Downarrow}_{#1,#2}}
\newcommand{\clmb}[1]{s^{\Uparrow}_{#1}}
\newcommand{\thst}[1]{s^{\Theta}_{#1}}
\newcommand{\suc}[2]{\mathrm{succ}_{#1}(#2)}
\newcommand{\dwnst}[3]{s^{\downarrow}_{#1,#2,#3}}
\newcommand{\stpst}[3]{s^{\dashv}_{#1,#2,#3}}
\newcommand{\upst}[2]{s^{\uparrow}_{#1,#2}}
\newcommand{\statst}[2]{s^{\circlearrowleft}_{#1,#2}}
\newcommand{\jump}[1]{\mathrm{jump}^{#1}}
\newcommand{\move}{\mathrm{move}}
\newcommand{\viol}{\mathrm{viol}}
\newcommand{\stat}[1]{\rightsquigarrow^{#1}}

\subsection{A Toolbox for Two-Way Tree Automata}\label{app:toolbox}

We now construct the 2APTA promised in Lemmas~\ref{lemma:cactus-automaton} and~\ref{lemma:foldable-cactus-automaton}.
Writing tree automata is, in spirit, programming in Assembler; this is why almost no paper on extensions of $\PDL$ contains an actual construction, usually stating merely that ``it can be done'' or that ``the construction is routine''.
We are more pedantic, and buy back readability by programming with \emph{subroutines}: this section fixes an input alphabet and assembles a small library of gadgets, from which the next three sections build the automata we need.
The library has three shelves: generic constructions turning automata into automata (Observations~\ref{obs:broadcast} and~\ref{obs:duality}); \emph{navigation} gadgets, which walk a copy to a prescribed relative of a vertex; and \emph{simulation} gadgets, which move a copy along a route while simulating a semiautomaton on the labels~it~reads.

\subsubsection{The Alphabet and Our Conventions}

Several properties of encoding-ready trees merely constrain how a single vertex is labelled. We implement such constraints at the level of the alphabet, without constructing any automaton at all. This concerns the conditions \ref{encready:A3}, \ref{encready:B0}, \ref{encready:B1}, \ref{encready:B2}, \ref{encready:B3}, \ref{encready:B4}, \ref{encready:B5}, and, from \ref{encready:A0}, the facts that the only subset containing $\NULL$ is $\{\NULL\}$, that every subset without $\NULL$ contains exactly one direction letter, and that a subset contains a letter from $\Gamma^{1}_{\varphi}$ if and only if it contains a direction letter other than~$\pD_{\varepsilon}$.
Iterating naively over all subsets of $\Gamma_{\varphi}$ and brute-force checking the constraints, we obtain the alphabet $\AlphabetAPTA \subseteq 2^{\Gamma_{\varphi}}$ by filtering out from $2^{\Gamma_{\varphi}}$ every symbol that violates one of the listed conditions.
This takes time exponential in $|\varphi|$, and $|\AlphabetAPTA|$, being a subset of $2^{\Gamma_{\varphi}}$, is also at most exponential in~$|\varphi|$.
From now on, \emph{all} 2APTA under construction work on $\AlphabetAPTA$-labelled trees of arity $\rmN{+}1$ for $\rmN \deff 2|Q|^2$.

\noindent \textbf{Conventions.}\,
Recall that we put $\Gamma_{\varphi}^{0} \deff (\Sigma_{\varphi} \cap \AP) \cup \{ \pred{slf}_{\pe} \mid \pe \in \Sigma_{\varphi} \cap \PR \}$ and $\Gamma_{\varphi}^{1} \deff \{ \pe, \inv{\pe} \mid \pe \in \Sigma_{\varphi} \cap \PR \}$.
Throughout, the letter~$i$ ranges over $\{0, 1, \ldots, \rmN\}$, the letter~$j$ over $\{1, \ldots, \rmN\}$, and the letter~$a$ over $\AlphabetAPTA$.
The set $\Gamma^{\mathrm{bck}}_{\varphi} \deff \{\, \pred{bck}_{\sigma} \mid \sigma \in \Sigma_{\varphi} \,\}$ collects the back-edge letters.
We abbreviate $\everywhere{s} \deff \bigwedge_{i=0}^{\rmN} (i, s)$, \emph{``send a copy in state~$s$ to every child''}.
We write $\gate{\beta}{P}$ for~$\beta$ when the side condition~$P$ holds, and~$\bot$ otherwise.
Empty disjunctions are~$\bot$ and empty conjunctions~$\top$.
By~\ref{run:R2}, every direction occurring in a transition formula must be defined at the current vertex.
This is how the gadget of Observation~\ref{obs:broadcast} forces its input trees to be complete, at no cost.
Finally, we name the property in terms of which~all~our~gadgets~are~specified.

\begin{definition}[Realisability]\label{def:realisable}
  Let $\autT$ be a 2APTA with state set~$S$, let $(\domT, \ell)$ be a labelled tree, let $\elv \in \domT$, and let $\beta \in \posbool{\dirs \times S}$.
  We say that $\beta$ is \emph{realisable at~$\elv$} if some set $W \subseteq \dirs {\times} S$ satisfying~$\beta$ is such that, for every $(d, t) \in W$, the vertex $\elv {\cdot} d$ is defined and $\autT$ has an accepting $(\elv {\cdot} d, t)$-run.
\end{definition}

\noindent By~\ref{run:R1} and~\ref{run:R2}, $\autT$ has an accepting $(\elv, s)$-run if and only if $\delta(s, \ell(\elv))$ is realisable at~$\elv$; we use this silently.
We say that $s$ \emph{succeeds at~$\elv$} if $\autT$ has an accepting $(\elv, s)$-run.

\subsubsection{Generic Constructions: Universality and Complementation}

Besides the closure of 2APTA under intersection obtained via 
Proposition~\ref{prop:2apta-intersection}, we employ two standard transformations.
The first spares us the bookkeeping for conditions required of
\emph{every} vertex: it runs a given automaton at every vertex at once.

\begin{observation}[Broadcast]\label{obs:broadcast}
  Let $\autT' \deff (\AlphabetAPTA, \rmN, S', \delta', s', (G'_1, \ldots, G'_n))$ be a 2APTA, and let $\autT \deff (\AlphabetAPTA, \rmN, S' \cupdot \{s_{\forall}\}, \delta, s_{\forall}, F)$, where $\delta$ extends~$\delta'$ by $\delta(s_{\forall}, a) \deff (\stay, s') \land \everywhere{s_{\forall}}$ and $F$ arises from $(G'_1, \ldots, G'_n)$ by putting~$s_{\forall}$ at the least \emph{even} priority not below~$n$, that is, $F \deff (G'_1, \ldots, G'_{n-1}, S' \cupdot \{s_{\forall}\})$ for even~$n$ and $F \deff (G'_1, \ldots, G'_{n}, S' \cupdot \{s_{\forall}\})$ for odd~$n$.
  Then $\autT$ accepts a tree $(\domT, \ell)$ if and only if $\domT$ is the complete $(\rmN{+}1)$-ary tree and $s'$ succeeds at every $\elv \in \domT$.
  Moreover, $\ind(\autT) \leqslant \ind(\autT') + 1$.
\end{observation}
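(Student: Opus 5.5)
The proof splits into the two directions of the equivalence. For each we relate runs of $\autT$ to families of runs of $\autT'$ and check the parity condition along infinite branches.

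\emph{If direction.} Suppose $\domT$ is complete and $s'$ succeeds at every vertex. For each $\elv$ fix an accepting $(\elv, s')$-run $R_{\elv}$ of $\autT'$. We build an $(\varepsilon, s_{\forall})$-run of $\autT$ by unfolding $\delta(s_{\forall},\cdot)$ coinductively. A run vertex labelled $(\elv, s_{\forall})$ receives two kinds of children. The first is a child labelled $(\elv, s')$, below which we graft $R_{\elv}$ verbatim; this is legitimate because $\delta$ agrees with $\delta'$ on $S'$. The second is a child labelled $(\elv i, s_{\forall})$ for every $i \leqslant \rmN$; every such $\elv i$ is defined since $\domT$ is complete, so \ref{run:R2} holds.

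An infinite downward branch of this run has one of two shapes. It may stay in $s_{\forall}$ forever. Then $\Inf = \{s_{\forall}\}$, and $s_{\forall}$ has even priority, so the branch satisfies $F$. Otherwise it eventually enters some $R_{\elv}$ and never returns to $s_{\forall}$, because no transition of $\delta'$ mentions $s_{\forall}$. Its infinite suffix is then a branch of $R_{\elv}$. The priorities of states in $S'$ are unchanged by the new $F$, which modifies only the top level $G_n$ by absorbing it into an even level, or appends a new even level. So the suffix satisfies $F$ exactly as it satisfied $(G'_1,\ldots,G'_n)$.

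This priority bookkeeping needs a short check. If $n$ is even, merging $s_{\forall}$ into $G_n = S'\cupdot\{s_{\forall}\}$ leaves priority $n$ for the old top states and gives $s_{\forall}$ priority $n$. If $n$ is odd, we add a new level $n{+}1$. Either way the index grows by at most one.

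\emph{Only if direction.} Take an accepting $(\varepsilon, s_{\forall})$-run. We show by induction on $|\elv|$ that some run vertex is labelled $(\elv, s_{\forall})$ and that $\elv \in \domT$. At such a vertex, \ref{run:R2} forces the satisfied set to contain $(\stay, s')$ and $(i, s_{\forall})$ for every $i$, since the transition formula is a pure conjunction. In particular every $\elv i$ must be defined, which gives completeness. Moreover, the subtree rooted at the $(\elv, s')$-child uses only states of $S'$ and hence only $\delta'$. Its infinite branches are suffixes of branches of the whole run, so they satisfy $F$ and therefore the original condition on $S'$. It is thus an accepting $(\elv, s')$-run of $\autT'$.

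The only real obstacle is this priority correspondence on $S'$, and it is immediate from the shape of $F$.
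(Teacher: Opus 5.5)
Your proof is correct and takes the same route as the paper. The run consists of a skeleton of $s_{\forall}$-copies covering the tree, which forces completeness via \ref{run:R2}, with an accepting $(\elv, s')$-run of $\autT'$ grafted at each vertex by the $\stay$-move; an infinite branch either stays in the skeleton, where $s_{\forall}$ has even priority, or ends inside a grafted run, where $F$ and $(G'_1, \ldots, G'_n)$ agree on $S'$. One small point, which the paper also leaves implicit: in the ``only if'' direction, prune each vertex's children to those given by $W \cap (\dirs \times S')$ before calling the subtree a run of $\autT'$, because \ref{run:R2} does not by itself forbid superfluous $s_{\forall}$-children below an $S'$-state.
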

\begin{proof}
  Since $\delta(s_{\forall}, a)$ demands a copy in every direction $i \leqslant \rmN$, \ref{run:R2} forces every vertex read in state~$s_{\forall}$ to have all $\rmN{+}1$ children; such copies reach every vertex, so $\domT$ is complete.
  By~\ref{run:R1} and~\ref{run:R2}, an $(\varepsilon, s_{\forall})$-run of~$\autT$ consists of a skeleton of $s_{\forall}$-labelled vertices covering~$\domT$, carrying at the vertex reading~$\elv$ an $(\elv, s')$-run of~$\autT'$ attached by the $\stay$-move; conversely, any such family of runs combines into an $(\varepsilon, s_{\forall})$-run.
  An infinite path~$\xi$ of the glued run either enters an attached run, and is then a suffix of an infinite path of it, or stays in the skeleton forever, where $\Inf(\xi) = \{s_{\forall}\}$; the latter satisfies~$F$ as $s_{\forall}$ has an even priority, and for the former $F$ and $(G'_1, \ldots, G'_n)$ agree, both conditions having the same restriction to~$S'$.
  Hence the constructed run is accepting precisely when all the attached ones are.
\end{proof}

\noindent The second transformation complements an automaton.
It lets us state every condition below in its \emph{positive} form---as the existence of a violation, of a successor, of a path---and let the automaton guess; complementation then turns the guesses into the universal quantifiers the conditions really carry.
Its only proviso is that a copy must never be asked to move up from the root: call a 2APTA \emph{root-safe} if no atom $(\up, \cdot)$ occurs in $\delta(s, a)$ for a letter~$a$ with $\pD_{\varepsilon} \in a$.

\begin{observation}[Duality]\label{obs:duality}
  Let $\autT \deff (\AlphabetAPTA, \rmN, S, \delta, s_0, F)$ be a root-safe 2APTA, let $\overline{\delta}(s, a)$ be the De Morgan dual of $\delta(s, a)$, that is, the formula obtained by exchanging $\land$ with~$\lor$ and $\top$ with~$\bot$ while keeping all atoms, and let $\overline{F}$ be a parity condition over~$S$ accepting exactly the paths that $F$ rejects.
  Then, over complete $(\rmN{+}1)$-ary $\AlphabetAPTA$-labelled trees, $\overline{\autT} \deff (\AlphabetAPTA, \rmN, S, \overline{\delta}, s_0, \overline{F})$ has an accepting $(\elv, s)$-run if and only if $\autT$ has none.
  Such an $\overline{F}$ always exists with $\ind(\overline{F}) \leqslant \ind(F) + 1$, namely $\overline{F} \deff (\emptyset, G_1, \ldots, G_n)$.
\end{observation}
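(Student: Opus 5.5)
The plan is to prove the observation through the standard game-theoretic semantics of 2APTA and the determinacy of parity games, following the classical argument of Muller and Schupp as it is used for two-way automata by Vardi~\cite{Vardi98}. Fix a complete $(\rmN{+}1)$-ary $\AlphabetAPTA$-labelled tree $(\domT,\ell)$. For a 2APTA $\autT$ over it I define the \emph{acceptance game} $\mathcal{G}(\autT)$ between two players, Automaton and Pathfinder.

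The positions of $\mathcal{G}(\autT)$ are pairs $(\elv,\beta)$, where $\elv \in \domT$ and $\beta$ is a subformula of some $\delta(s,\ell(\elv))$, together with configurations $(\elv,s)$.
\begin{itemize}
  \item At a configuration $(\elv,s)$ the game moves deterministically to $(\elv,\delta(s,\ell(\elv)))$.
  \item At $(\elv,\beta_1\lor\beta_2)$ Automaton picks a disjunct; at $(\elv,\beta_1\land\beta_2)$ Pathfinder picks a conjunct.
  \item At $(\elv,\top)$ Automaton wins, and at $(\elv,\bot)$ Pathfinder wins.
  \item At an atom $(\elv,(d,t))$ the play moves to the configuration $(\elv{\cdot}d,t)$.
\end{itemize}
An infinite play is won by Automaton iff the sequence of states in the configurations it visits satisfies $F$.

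The first step is the routine lemma that $\autT$ has an accepting $(\elv,s)$-run iff Automaton wins $\mathcal{G}(\autT)$ from $(\elv,s)$. From a run one reads off a strategy: Automaton resolves each disjunction according to the set $W$ chosen in \ref{run:R2}. Conversely, a strategy unfolds into a run by collecting, at each configuration, the atoms reachable under all of Pathfinder's choices; these atoms form a set $W$ satisfying the current formula. Infinite run branches correspond exactly to infinite plays. This correspondence requires that every atom reached in the game has $\elv{\cdot}d$ defined, since otherwise the atom position is a dead end that the run semantics of \ref{run:R2} counts as a loss for Automaton. Child directions are always defined because the tree is complete. The direction $\stay$ is trivially defined. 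The direction $\up$ is defined everywhere except at the root, and \emph{root-safety} is exactly what excludes an $\up$-atom being read at the root: by \ref{encready:A0}, the root is precisely the vertex whose label contains $\pD_\varepsilon$. Root-safety is preserved by dualisation, because $\overline{\delta}$ has the same atoms as $\delta$. So both $\mathcal{G}(\autT)$ and $\mathcal{G}(\overline{\autT})$ are well defined on the same arena, and the lemma holds for both automata.

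Next I observe that $\mathcal{G}(\overline{\autT})$ is $\mathcal{G}(\autT)$ with the roles of the two players swapped. Exchanging $\land$ with $\lor$ and $\top$ with $\bot$ swaps who chooses at each Boolean node and who wins at each constant. The atoms, and hence the moves between configurations, are unchanged. On infinite plays, $\overline{F}$ accepts exactly the state sequences that $F$ rejects. So Automaton wins $\mathcal{G}(\overline{\autT})$ from $(\elv,s)$ iff Pathfinder wins $\mathcal{G}(\autT)$ from $(\elv,s)$. By the determinacy of parity games (indeed Borel determinacy suffices), Pathfinder wins $\mathcal{G}(\autT)$ iff Automaton does not. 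Chaining this with the first-step lemma for both automata yields the claim.

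Finally I check the concrete $\overline{F}\deff(\emptyset,G_1,\ldots,G_n)$. Under it, the priority of each state is its old priority plus one. Hence the least priority occurring infinitely often is even under $\overline{F}$ iff it is odd under $F$, and the index grows by exactly one. I expect the main obstacle to be the careful bookkeeping in the run/strategy correspondence, specifically the handling of undefined directions, since that is the only place where root-safety and completeness enter. The determinacy step itself is simply cited.
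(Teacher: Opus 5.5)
Your proposal is correct and follows the paper's own proof essentially step for step. Both use the acceptance game, identify Automaton's strategies with runs, invoke completeness and root-safety so that every atom's direction is defined, note that dualising swaps the players and complements the winning condition, cite determinacy of parity games, and shift every priority up by one. The one thing to tighten is your appeal to \ref{encready:A0}, which is not a hypothesis of the observation. The argument only needs the root's label to contain $\pD_{\varepsilon}$; this holds for the basic shape trees the observation is applied to, by \ref{bst:S2}, and the paper's ``$\up$ by root-safety'' assumes it tacitly too. Without it, a root-safe automaton may issue an $\up$-atom at the root, and then neither automaton has an accepting run there.
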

\begin{proof}
  Fix a complete tree $(\domT, \ell)$ and consider the \emph{acceptance game} $\mathcal{G}(\autT)$~\cite[Chapter 9]{GraedelTW02} played by two players: Pathfinder and Automaton. Its positions are the pairs $(\elv, \beta)$ with $\elv \in \domT$ and $\beta$ a subformula of some $\delta(s, \ell(\elv))$; at $(\elv, \beta_1 \lor \beta_2)$ Automaton moves to one of $(\elv, \beta_k)$ and at $(\elv, \beta_1 \land \beta_2)$ Pathfinder does; at $(\elv, (d, t))$ the play continues at $(\elv {\cdot} d, \delta(t, \ell(\elv {\cdot} d)))$; a play ending at $\top$, or in which Pathfinder cannot move, is won by Automaton, and dually for~$\bot$.
  Every direction ever met is defined---the directions $0, \ldots, \rmN$ by completeness and $\up$ by root-safety---so the play is never stuck at an atom.
  Unfolding~\ref{run:R1} and~\ref{run:R2}, the strategies of Automaton from $(\elv, \delta(s, \ell(\elv)))$ are exactly the $(\elv, s)$-runs of~$\autT$, and a play conforming to one traces a downward path of that run; hence Automaton wins from that position if and only if $s$ succeeds at~$\elv$, the winning condition being $F$ read on the $S$-components.
  The game $\mathcal{G}(\overline{\autT})$ has the same positions and moves, with the two players exchanged and the winning condition complemented.
  Parity games are determined~\cite[Chapter~6]{GraedelTW02}, so exactly one of the two players wins from a given position, which is the claim.
  Finally, a path satisfies $(\emptyset, G_1, \ldots, G_n)$ if and only if it violates $(G_1, \ldots, G_n)$, as the priorities of all states are shifted by one. 
\end{proof}

\subsubsection{Basic Shape Trees}\label{app:basic-shape}

All gadgets below are specified relative to trees of an already known shape, which we identify first.
Call an $\AlphabetAPTA$-labelled tree a \emph{basic shape tree} if it satisfies every condition of Definition~\ref{def:encoding-ready-trees} except \ref{encready:A4}, \ref{encready:A5} and \ref{encready:C1}--\ref{encready:C4}.
Filtering the alphabet (\ie switching from $2^{\Gamma_{\varphi}}$ to $\AlphabetAPTA$) has secured \ref{encready:A3}, \ref{encready:B0}--\ref{encready:B5} and every label-local part of~\ref{encready:A0}.
It remains to require that the accepted tree $(\domT, \ell)$ satisfy the following, for every vertex~$\elv$ and every direction~$i$:
\begin{namedlist}[$(\mathrm{S}_7)$\hspace{0.4em}]
  \item[\desclabel{$(\mathrm{S}_1)$}{bst:S1}]
  $\domT$ is the complete $(\rmN{+}1)$-ary tree;
  \item[\desclabel{$(\mathrm{S}_2)$}{bst:S2}]
  $\pD_{\varepsilon} \in \ell(\varepsilon)$;
  \item[\desclabel{$(\mathrm{S}_3)$}{bst:S3}]
  $\elv i$ is a $\NULL$-vertex or $\pD_{i} \in \ell(\elv i)$;
  \item[\desclabel{$(\mathrm{S}_4)$}{bst:S4}]
  every $0$-path of non-$\NULL$ vertices is finite;
  \item[\desclabel{$(\mathrm{S}_5)$}{bst:S5}]
  the $0$-child of~$\varepsilon$ is a $\NULL$-vertex;
  \item[\desclabel{$(\mathrm{S}_6)$}{bst:S6}]
  if $\elv$ is a $\NULL$-vertex, then so is~$\elv i$;
  \item[\desclabel{$(\mathrm{S}_7)$}{bst:S7}]
  if $\ell(\elv) \cap \Gamma^{\mathrm{bck}}_{\varphi} \neq \emptyset$, then $\elv \neq \varepsilon$ and the $0$-child of~$\elv$ is a $\NULL$-vertex.
\end{namedlist}

\noindent Indeed, \ref{bst:S1} is the shape required by~\ref{encready:A0}, and \ref{bst:S2} with~\ref{bst:S3} upgrades the filtered ``exactly one direction letter'' to the faithfulness demanded there.
Since $\pD_{\varepsilon}$ then marks exactly the root, this also validates the root clause of~\ref{encready:A3} and the clause of~\ref{encready:A0} that places letters of~$\Gamma^{1}_{\varphi}$ exactly at the non-root non-$\NULL$ vertices.
Moreover, \ref{bst:S2} forces~$\varepsilon$ to be non-$\NULL$, as $\{\NULL\}$ is the only symbol of $\AlphabetAPTA$ containing~$\NULL$.
Conditions \ref{bst:S4} and~\ref{bst:S5} are \ref{encready:A1}, and \ref{bst:S6} is~\ref{encready:A2}.
Finally, \ref{bst:S7} is the back-edge clause of~\ref{encready:A0}: given \ref{bst:S5} and~\ref{bst:S6}, a non-$\NULL$ vertex is an anchored descendant if and only if it is not the~root~and~its~$0$-child~is~$\NULL$.

In what follows, the states of the intended automata will be ``\emph{label tests}'': for $\Lambda \subseteq \Gamma_{\varphi}$ we let $s_{\Lambda}$ and $\nlab{\Lambda}$ be the states with $\delta(s_{\Lambda}, a) \deff \top$ if $\Lambda \subseteq a$ and $\bot$ otherwise, and $\delta(\nlab{\Lambda}, a) \deff \top$ if $\Lambda \cap a = \emptyset$ and $\bot$ otherwise; for singletons we write $s_{\gamma}$ and~$\nlab{\gamma}$.
Thus, in particular, $s_{\NULL}$ and $\nlab{\NULL}$ test whether the current vertex is and is not a $\NULL$-vertex, respectively.

\noindent \textbf{2APTA.} Each of the following has index~$2$ and $G_1 \deff \emptyset$, so that all of its runs are accepting.
\begin{itemize}\itemsep0.15em
  \item $\autT_{\mathrm{root}}$ verifies \ref{bst:S2} and~\ref{bst:S5}. Its states are $\{s_{\varepsilon}, s_{\NULL}\}$, the initial one being $s_{\varepsilon}$, and $\delta(s_{\varepsilon}, a) \deff (0, s_{\NULL})$ if $\pD_{\varepsilon} \in a$, and $\bot$ otherwise.
  \item $\autT_{\mathrm{dir}}$ checks \ref{bst:S3}. It is obtained by applying Observation~\ref{obs:broadcast} to the 2APTA with states $\{s_{r}\} \cupdot \{\, s'_{i} \,\}$, initial state~$s_{r}$, $\delta(s_{r}, a) \deff \bigwedge_{i=0}^{\rmN} (i, s'_{i})$, and $\delta(s'_{i}, a) \deff \top$ if either $\NULL \in a$ or $\pD_i \in a$, and $\bot$ otherwise.
  \item $\autT_{0\text{-}\mathrm{fin}}$ checks \ref{bst:S4}. Apply Observation~\ref{obs:broadcast} to the automaton with the single state~$s_{u}$, $G_1 \deff \{s_{u}\}$, and $\delta(s_{u}, a) \deff \top$ if $\NULL \in a$, and $(0, s_{u})$ otherwise.
  \item $\autT_{\NULL}$ checks \ref{bst:S6}. Apply Observation~\ref{obs:broadcast} to the automaton with states $\{s_{r}, s_{\NULL}\}$, initial state~$s_{r}$, and $\delta(s_{r}, a) \deff \everywhere{s_{\NULL}}$ if $\NULL \in a$, and $\top$ otherwise.
  \item $\autT_{\mathrm{bck}}$ tests for \ref{bst:S7}. Apply Observation~\ref{obs:broadcast} to the 2APTA with states $\{s_{r}, s_{\NULL}\}$, initial state~$s_{r}$, and $\delta(s_{r}, a) \deff \top$ if $a \cap \Gamma^{\mathrm{bck}}_{\varphi} = \emptyset$; $\delta(s_{r}, a) \deff \bot$ if $a \cap \Gamma^{\mathrm{bck}}_{\varphi} \neq \emptyset$ and $\pD_{\varepsilon} \in a$; and $\delta(s_{r}, a) \deff (0, s_{\NULL})$ otherwise.
\end{itemize}

Let $\autT_{\mathrm{bst}}$ be the intersection of the five 2APTA above, constructed as in Proposition~\ref{prop:2apta-intersection}.
\begin{lemma}\label{lemma:bst-automaton}
  The 2APTA $\autT_{\mathrm{bst}}$ accepts precisely the basic shape $\AlphabetAPTA$-labelled trees.
  It has at most $2|Q|^2{+}14$ states and index~$2$, and is constructed in time polynomial~in~$|\AlphabetAPTA| + |Q|$.
\end{lemma}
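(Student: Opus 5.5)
The plan is to prove that each of the five component automata recognises exactly its intended set of trees. We then combine them with Proposition~\ref{prop:2apta-intersection} and count states and the index.

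For the correctness of each component, recall that the filtered alphabet $\AlphabetAPTA$ already secures \ref{encready:A3}, \ref{encready:B0}--\ref{encready:B5} and the label-local parts of~\ref{encready:A0}. By the discussion preceding the lemma, a $\AlphabetAPTA$-labelled tree is then a basic shape tree if and only if it satisfies \ref{bst:S1}--\ref{bst:S7}. So it suffices to show that the intersection accepts exactly the trees satisfying \ref{bst:S1}--\ref{bst:S7}. We treat the components one at a time.
\begin{itemize}
\item \emph{Broadcast components.} For $\autT_{\mathrm{dir}}$, $\autT_{0\text{-}\mathrm{fin}}$, $\autT_{\NULL}$ and $\autT_{\mathrm{bck}}$, Observation~\ref{obs:broadcast} reduces acceptance to two facts: the tree is complete, and the inner initial state succeeds at every vertex. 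Completeness yields~\ref{bst:S1} for free. We then check, vertex by vertex, that success of the inner state is equivalent to the local condition.
\item \emph{$\autT_{\mathrm{dir}}$.} The state $s_r$ succeeds iff every child $\elv i$ carries $\NULL$ or $\pD_i$; this is~\ref{bst:S3}.
\item \emph{$\autT_{\NULL}$.} At a $\NULL$-vertex, $s_r$ requires all children to be $\NULL$; elsewhere it accepts. Applied at every vertex, this is~\ref{bst:S6}.
\item \emph{$\autT_{\mathrm{bck}}$.} A vertex carrying a back-edge letter is rejected if it is the root, and otherwise its $0$-child must be $\NULL$; this is~\ref{bst:S7}.
\item \emph{$\autT_{\mathrm{root}}$.} This automaton is not broadcast and reads only the root and its $0$-child; it gives \ref{bst:S2} and~\ref{bst:S5} directly.
\end{itemize}

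The one component whose correctness depends on the acceptance condition is $\autT_{0\text{-}\mathrm{fin}}$, and this is the step needing care. The state $s_u$ has odd priority~$1$, since $G_1=\{s_u\}$. A run from $(\elv,s_u)$ is forced: it is a single chain following the $0$-path from~$\elv$ until it meets a $\NULL$-vertex, where it accepts with~$\top$. If the $0$-path of non-$\NULL$ vertices from $\elv$ is finite, the run is finite and hence accepting. Otherwise the run is an infinite path seeing $s_u$ infinitely often, and it rejects.

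We must also check that broadcasting preserves this behaviour. Inside the run built by Observation~\ref{obs:broadcast}, $s_\forall$ receives an even priority. Every infinite path either eventually stays in the $s_\forall$-skeleton, which is accepting, or enters an attached inner run, where the priorities agree with the original ones. The proof of Observation~\ref{obs:broadcast} already covers this, so we only need to confirm that the index bookkeeping yields $n=1$ for the inner automaton, with $s_\forall$ placed at priority~$2$.

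Finally we collect the bounds. For the index, every component has index~$2$; for $\autT_{0\text{-}\mathrm{fin}}$ this is the broadcast of an index-$1$ automaton, shifted up by at most one. Proposition~\ref{prop:2apta-intersection} keeps the maximum, so $\ind(\autT_{\mathrm{bst}})=2$. For the state count:
\begin{itemize}
\item $\autT_{\mathrm{root}}$ has $2$ states;
\item $\autT_{\mathrm{dir}}$ has $1+(\rmN+1)+1=\rmN+3$ states;
\item $\autT_{0\text{-}\mathrm{fin}}$ has $2$ states;
\item $\autT_{\NULL}$ has $3$ states;
\item $\autT_{\mathrm{bck}}$ has $3$ states.
\end{itemize}
Together with the one extra state from the intersection, the total is $\rmN+14=2|Q|^2+14$.

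For the running time, each transition table lists at most $|\AlphabetAPTA|$ letters per state, and every transition formula has size $O(\rmN)$. Hence all five automata, and by Proposition~\ref{prop:2apta-intersection} also their intersection, are built in time polynomial in $|\AlphabetAPTA|+|Q|$.
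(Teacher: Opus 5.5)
Your proposal is correct and follows the paper's proof essentially step by step. It uses the same component-wise correctness via Observation~\ref{obs:broadcast}, the same parity argument for $\autT_{0\text{-}\mathrm{fin}}$, and the same state count $\rmN+14$; for $\autT_{0\text{-}\mathrm{fin}}$, strictly speaking runs are not forced, but every run contains the $0$-chain and that chain alone is an accepting run.

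The one point you pass over silently is that $\autT_{\mathrm{bck}}$ tests $\pD_{\varepsilon} \in a$ rather than rootness. This is faithful only because \ref{bst:S2}, \ref{bst:S3} and the uniqueness of direction letters (supplied by $\autT_{\mathrm{root}}$, $\autT_{\mathrm{dir}}$ and the alphabet) make $\pD_{\varepsilon}$ mark exactly the root, which the paper states explicitly.
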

\begin{proof}
  By Proposition~\ref{prop:2apta-intersection} and the discussion above, it suffices to see that each 2APTA recognises its condition; the bounds follow, as they have $2$, $\rmN{+}3$, $2$, $3$, and~$3$~states~and~index~$2$.
  For $\autT_{\mathrm{root}}$, an accepting $(\varepsilon, s_{\varepsilon})$-run exists precisely when $\pD_{\varepsilon} \in \ell(\varepsilon)$ and $\NULL \in \ell(0)$.
  For the remaining four we invoke Observation~\ref{obs:broadcast}, which also yields~\ref{bst:S1}, and check the local automaton at an arbitrary vertex~$\elv$.
  In~$\autT_{\mathrm{dir}}$, the copy in state~$s_{r}$ sends to each child~$\elv i$ a copy in state~$s'_{i}$, which accepts if and only if $\elv i$ is a $\NULL$-vertex or carries~$\pD_i$: this is~\ref{bst:S3} at~$\elv$.
  In $\autT_{\NULL}$ and $\autT_{\mathrm{bck}}$ the argument is the same, the runs again having depth at most one, and the case distinction in $\delta(s_{r}, \cdot)$ transcribing~\ref{bst:S6}, respectively~\ref{bst:S7}, verbatim.
  For the latter, recall that $\pD_{\varepsilon} \in \ell(\elv)$ holds exactly for $\elv = \varepsilon$, by~\ref{bst:S2},~\ref{bst:S3}, and the uniqueness of direction letters, which the intersection provides.
  Finally, every $(\elv, s_{u})$-run of the local automaton of $\autT_{0\text{-}\mathrm{fin}}$ contains the path $\elv, \elv 0, \elv 0 0, \ldots$, followed until the first $\NULL$-vertex.
  If the $0$-path of non-$\NULL$ vertices starting at~$\elv$ is finite, so is that path, and the run consisting of it alone is accepting; otherwise every run has an infinite path with $\Inf = \{s_{u}\} = G_1$, and is rejecting.
  Quantifying over all vertices~$\elv$ yields~\ref{bst:S4}.
\end{proof}

\noindent \textbf{Anchors and ends.}\, Fix a basic shape tree and a non-$\NULL$ vertex~$\elv$.
By~\ref{bst:S4} the maximal $0$-path starting at~$\elv$ is finite; write $\mathrm{end}(\elv)$ for its last vertex, which is maximal.
Unless $\elv = \varepsilon$, the maximal $0$-path \emph{ending} at~$\elv$ avoids the root by~\ref{bst:S5}, hence starts at a vertex carrying $\pD_{j}$ for some $j \geqslant 1$; we write~$\mathrm{anc}(\elv)$ for that vertex's parent, and leave $\mathrm{anc}(\varepsilon)$ undefined.
Thus every non-root non-$\NULL$ vertex~$\elv$ lies on exactly one anchor path, that from~$\mathrm{anc}(\elv)$ to~$\mathrm{end}(\elv)$.
The following observation is immediate but useful: 
\begin{observation}\label{obs:anchor-paths-end-anc-functions}
  In a basic shape tree, for all vertices $\elu$ and~$\elw$ the following holds. The vertex~$\elv$ lies on the anchor path from~$\elu$ to~$\elw$ if and only if $\elv \neq \elu$ and $(\elu, \elw) = (\mathrm{anc}(\elv), \mathrm{end}(\elv))$, or $\elv = \elu$ and $\elw$ is an anchored descendant (possibly a child) of~$\elv$.
\end{observation}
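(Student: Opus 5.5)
The plan is to unfold the definition of an anchored path in the lifted vocabulary and compare it with the definitions of $\mathrm{anc}$ and $\mathrm{end}$. Recall that the anchor path from $\elu$ to $\elw$ has the form $\elu, \elu j, \elu j0, \ldots, \elu j0^{k} = \elw$ for some $j \geqslant 1$ and $k \in \N$. All of its vertices are non-$\NULL$, and $\elw$ is maximal, \ie its $0$-child is a $\NULL$-vertex. By~\ref{bst:S6}, non-$\NULL$ vertices form a prefix-closed set. So it suffices to check that $\elw$ is non-$\NULL$ with a $\NULL$ $0$-child; the intermediate vertices are then automatically non-$\NULL$. I would prove the two directions separately, splitting on whether $\elv = \elu$.

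\emph{Left to right.} Suppose $\elv$ lies on the anchor path from $\elu$ to $\elw$.
\begin{itemize}
\item If $\elv = \elu$, then $\elw = \elv j 0^{k}$ is by definition an anchored descendant of $\elv$; for $k = 0$ it is the child $\elv j$.
\item If $\elv \neq \elu$, then $\elv = \elu j 0^{m}$ for some $m \leqslant k$.
  \begin{itemize}
  \item The maximal $0$-path starting at $\elv$ is $\elv, \elv 0, \ldots, \elu j 0^{k}$. It stops exactly there, because the $0$-child of $\elw$ is $\NULL$. Hence $\mathrm{end}(\elv) = \elw$.
  \item The maximal $0$-path ending at $\elv$ starts at $\elu j$, since $j \neq 0$ prevents extending it further upward. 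By~\ref{bst:S3}, $\elu j$ carries $\pD_{j}$ with $j \geqslant 1$. Hence $\mathrm{anc}(\elv) = \elu$.
  \end{itemize}
\end{itemize}

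\emph{Right to left.}
\begin{itemize}
\item If $\elv = \elu$ and $\elw$ is an anchored descendant of $\elv$, then $\elw = \elv j 0^{k}$ with maximal $\elw$, and $\elv = \elu$ is the first vertex of that path.
\item If $\elv \neq \elu$ and $(\elu, \elw) = (\mathrm{anc}(\elv), \mathrm{end}(\elv))$, then $\elv \neq \varepsilon$, because $\mathrm{anc}(\varepsilon)$ is undefined.
  \begin{itemize}
  \item By definition of $\mathrm{anc}$, the maximal $0$-path ending at $\elv$ starts at some $\elu j$ with $j \geqslant 1$, so $\elv = \elu j 0^{m}$.
  \item By definition of $\mathrm{end}$, $\elw = \elv 0^{r}$ is maximal.
  \end{itemize}
  Thus $\elu, \elu j, \ldots, \elv, \ldots, \elw$ is the anchor path from $\elu$ to $\elw$, and $\elv$ lies on it.
\end{itemize}

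The only step that needs care is the identification of $\mathrm{anc}(\elv)$ in the left-to-right direction. It relies on the direction letters being faithful, so that the start of the maximal $0$-path ending at $\elv$ is recognised through $\pD_{j}$. It also relies on~\ref{bst:S5}: the $0$-path cannot run up to the root, so $\mathrm{anc}$ is well defined for every non-root non-$\NULL$ vertex. Everything else is direct bookkeeping on addresses.
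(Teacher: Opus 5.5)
Your proof is correct and follows the paper's route: the paper calls the observation ``just a rephrasing of an anchor path in terms of $\mathrm{anc}(\cdot)$, $\mathrm{end}(\cdot)$'', and you have written out that address bookkeeping in both directions. One small remark: you do not need the direction letters to be faithful here, since $\mathrm{anc}(\elv)$ is defined as the parent of the first vertex of the maximal $0$-path ending at~$\elv$, which addresses alone determine; the letter $\pD_{j}$ only matters later, for the climbing gadget.
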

\begin{proof}
  This is just a rephrasing of an anchor path in terms of $\mathrm{anc}(\cdot)$, $\mathrm{end}(\cdot)$.
\end{proof}

\subsubsection{Navigation Gadgets}\label{app:navigation}
The first shelf of the library moves a copy of the automaton from the currently scanned vertex~$\elv$ to a relative~$\elw$ of it, and resumes there a task given as a parameter~$s$; the parameter is what makes the gadgets composable.
Below, the target vertex~$\elw$ is either unique (the first two cases) or existentially quantified (the last one). Due to this, the first two gadgets are states, while the third one is a macro.

\begin{itemize}\itemsep0.6em
  \item \textbf{Descend and test.}
  In the state $\dtest{\Lambda}{s}$ the automaton should \emph{``descend the $0$-path to $\mathrm{end}(\elv)$, then check $\Lambda \subseteq \ell(\mathrm{end}(\elv))$, and finally resume~$s$ there''}:
  \[
    \delta(\dtest{\Lambda}{s}, a) \deff \big( (0, s_{\NULL}) \land (\stay, s_{\Lambda}) \land (\stay, s) \big) \;\lor\; \big( (0, \nlab{\NULL}) \land (0, \dtest{\Lambda}{s}) \big).
  \]
  The transition selects one of two cases: either we already reached the target, and then we verify that it is alright, or we continue the search from the $0$-child of the current vertex.

  \item \textbf{Climb.}
  In the state $\clmb{s}$ the automaton should \emph{``climb to $\mathrm{anc}(\elv)$ and resume the task~$s$~there''}:
  \[
    \delta(\clmb{s}, a) \deff (\up, \clmb{s}) \, \text{ if } \pD_{0} \in a; \quad
    \delta(\clmb{s}, a) \deff (\up, s) \,\text{if}\, \pD_{j} \in a, j \geqslant 1; \quad
    \delta(\clmb{s}, a) \deff \bot \, \text{otherwise.}
  \]
  The automaton moves up as long as it stays among $0$-children; on reaching a non-zero child it makes the final move. Otherwise its run fails.

  \item \textbf{Successor.}
  For a non-empty set~$\Lambda$ of possibly inverted actions, the macro $\suc{\Lambda}{a, s}$ means \emph{``move to a guessed vertex that is a $\sigma$-successor of $\elv$ for all $\sigma \in \Lambda$; resume~$s$~there''}:
  \[
    \suc{\Lambda}{a, s} \deff \bigvee_{i=0}^{\rmN} \big( (i, \thst{\Lambda}) \land (i, s) \big)
    \;\lor\; \bigvee_{j=1}^{\rmN} \big( j, \dtest{\pred{bck}_{\inv{\Lambda}}}{s} \big)
    \;\lor\; \gate{(\up, s)}{\inv{\Lambda} \subseteq \Theta(a)}
    \;\lor\; \gate{(\stay, \clmb{s})}{\pred{bck}_{\Lambda} \subseteq a}.
  \]
  The four disjuncts guess, in this order, that the target is a child, an anchored descendant, the parent, or the anchor of~$\elv$; these are exactly the clauses (i)--(iv) defining a $\sigma$-successor.
  Here $\inv{\Lambda} \deff \{\, \inv{\sigma} \mid \sigma \in \Lambda \,\}$ and $\pred{bck}_{\Lambda} \deff \{\, \pred{bck}_{\sigma} \mid \sigma \in \Lambda \,\}$, the state $\thst{\Lambda}$ accepts a letter~$a$ if and only if $\Lambda \subseteq \Theta(a)$, and
  $
    \Theta(a) \; \deff \; (\Gamma^{1}_{\varphi} \cap a) \;\cup\; \{\, \inv{\rho} \mid \pred{bck}_{\rho} \in a \text{ and } \pD_{j} \in a \text{ for some } j \geqslant 1 \,\}
  $
  collects the letters that realise a move from the parent of a vertex labelled~$a$~to~that~vertex.
\end{itemize}

We close the section with two remarks and an almost trivial observation.
First, one pair of vertices may fall under two clauses---a maximal $j$-child is also an anchored descendant---so a set~$\Lambda$ with more than one element may have to be realised partly by letters of $\Gamma^{1}_{\varphi}$ and partly by $\pred{bck}$-letters; this is why $\Theta(a)$ is defined as a \emph{union} of these two cases.
Second, the $\sigma$-successor relation is symmetric under inversion, so a single gadget serves both directions:

\begin{observation}[Inversion]\label{obs:inversion}
  Let $\elv, \elw$ be non-$\NULL$ vertices and let $\sigma$ be a possibly inverted action.
  Then $\elv$ is a $\sigma$-successor of~$\elw$ if and only if $\elw$ is an $\inv{\sigma}$-successor of~$\elv$.
\end{observation}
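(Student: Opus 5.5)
The plan is to unfold the definition of a $\sigma$-successor into its four clauses (i)--(iv) and check that they pair up under the swap $(\elw, \elv, \sigma) \mapsto (\elv, \elw, \inv{\sigma})$. Since $\inv{\inv{\sigma}} = \sigma$ for possibly inverted actions, it suffices to prove the ``only if'' direction. The ``if'' direction then follows by applying that direction to $\inv{\sigma}$ with the roles of $\elv$ and $\elw$ exchanged.

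So assume that $\elv$ is a $\sigma$-successor of~$\elw$. We need that $\elw$ is an $\inv{\sigma}$-successor of~$\elv$, which means one of the following: (i$'$)~$\elw$ is a child of~$\elv$ and $\inv{\sigma} \in \ell(\elw)$; (ii$'$)~$\elv$ is a child of~$\elw$ and $\sigma \in \ell(\elv)$; (iii$'$)~$\elw$ is an anchored descendant of~$\elv$ and $\pred{bck}_{\sigma} \in \ell(\elw)$; or (iv$'$)~$\elv$ is an anchored descendant of~$\elw$ and $\pred{bck}_{\inv{\sigma}} \in \ell(\elv)$. We go through the four original clauses.

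Clause~(i), where $\elv$ is a child of~$\elw$ and $\sigma \in \ell(\elv)$, is literally~(ii$'$). Clause~(ii), where $\elw$ is a child of~$\elv$ and $\inv{\sigma} \in \ell(\elw)$, is~(i$'$). Clause~(iii), where $\elv$ is an anchored descendant of~$\elw$ and $\pred{bck}_{\inv{\sigma}} \in \ell(\elv)$, is~(iv$'$). Clause~(iv), where $\elw$ is an anchored descendant of~$\elv$ and $\pred{bck}_{\sigma} \in \ell(\elw)$, is~(iii$'$). In each case the label condition is syntactically identical, so nothing beyond $\inv{\inv{\sigma}} = \sigma$ is used.

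The only point needing any care is this involution on $\Sigma$. It holds by the definition of $\inv{\cdot}$ lifted to $\Sigma$, which sets $\inv{(\inv{\pe})} \deff \pe$. I expect no real obstacle; the observation is bookkeeping that lets the successor macro be reused in both directions.
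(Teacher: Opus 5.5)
Your proof is correct and is the paper's argument: under the swap $(\elv,\elw,\sigma)\mapsto(\elw,\elv,\inv{\sigma})$, clauses (i) and (ii) exchange, as do (iii) and (iv), using only $\inv{\inv{\sigma}}=\sigma$. Because this matching identifies the two four-clause disjunctions clause for clause, it gives both directions at once, so your separate symmetry step for the ``if'' direction is valid but not needed.
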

\begin{proof}
  Swapping $\elv$ and~$\elw$ and replacing $\sigma$ by~$\inv{\sigma}$ turns clause~(i) of the definition of a $\sigma$-successor into clause~(ii) and clause~(iii) into clause~(iv), and conversely.
\end{proof}

\subsubsection{Simulation Gadgets}\label{app:simulation}

The second shelf consists of automata that, from a given vertex, simulate a run of a (word) semiautomaton on the labels read along the way.
All gadgets in this section are parameterised by a semiautomaton $\autE \deff (Q, \Delta_{\autE})$ over~$\Sigma_{\varphi}$: either $\autA$, its extension $\hat{\autA}$ with $\ldem_{q,q'}?$-tests, or the \emph{reversal} $\inv{\hat{\autA}}$ of $\hat{\autA}$.
All three share the state set $Q$; the transition relation of $\inv{\hat{\autA}}$ consists of all triples $(q', \inv{\sigma}, q)$ with $(q, \sigma, q') \in \Delta_{\hat{\autA}}$.
Reversal spares us half of this shelf: an \emph{upward} path for~$\autE$ (required \eg in Condition~\ref{encready:C4}) is a \emph{downward} path for~$\inv{\autE}$, read backwards.~It~is~easy~to~see:

\begin{observation}[Reversal]\label{obs:reversal}
  A walk $\xi$ in a structure is a witnessing walk for $\autE_{q,q'}$ if and only if its reversal $\inv{\xi}$ is one for~$\inv{\autE}_{q',q}$, and a path $\elv_0, \ldots, \elv_n$ in a labelled tree is a witnessing path for $\autE_{q,q'}$ if and only if $\elv_n, \ldots, \elv_0$ is one for~$\inv{\autE}_{q',q}$.
\end{observation}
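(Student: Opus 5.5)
The plan is to prove both halves of Observation~\ref{obs:reversal} by exhibiting, for each accepted word, an explicit correspondence between runs of $\autE_{q,q'}$ and runs of $\inv{\autE}_{q',q}$ obtained by reading the run backwards. First I would record the letter-level fact that $(\elu,\elv) \in \sem{\sigma}_{\strA}$ if and only if $(\elv,\elu) \in \sem{\inv{\sigma}}_{\strA}$ for every $\sigma \in \Sigma$, including the tests $\pP?$ and $\ldem_{p,p'}?$ (which are self-inverse and interpreted as subsets of the diagonal). Together with $\inv{\inv{\sigma}} = \sigma$ and the definition of $\inv{\autE}$, this gives $(q,\sigma,q') \in \Delta_{\autE}$ if and only if $(q',\inv{\sigma},q) \in \Delta_{\inv{\autE}}$. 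For $\autE = \hat{\autA}$ this is the defining clause. For $\autE = \inv{\hat{\autA}}$ it yields $\inv{\inv{\autE}} = \autE$, so the argument is symmetric and only one direction needs checking.

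For walks, take an accepting run $(\ela_0,q_0),\sigma_1,\ldots,\sigma_n,(\ela_n,q_n)$ of $\autE_{q,q'}$ with $q_0=q$ and $q_n=q'$. The reversed sequence $(\ela_n,q_n),\inv{\sigma_n},\ldots,\inv{\sigma_1},(\ela_0,q_0)$ is then a run of $\inv{\autE}$: each step uses $(q_{i},\inv{\sigma_{i}},q_{i-1}) \in \Delta_{\inv{\autE}}$ and $(\ela_i,\ela_{i-1}) \in \sem{\inv{\sigma_i}}_{\strA}$. It starts in $q'$ and ends in $q$, and its witnessing walk is exactly $\inv{\xi}$. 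The converse follows by applying the same argument to $\inv{\autE}$.

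For paths in a labelled tree, I would unfold the definition. A path $\elv_0,\ldots,\elv_n$ witnesses $\autE_{q,q'}$ when $\autE_{q,q'}$ accepts $w_0w_1\cdots w_n$, where $w_0$ reads a word over $\Gamma^0_\varphi\cap\ell(\elv_0)$ and each $w_i$ reads $\theta(\elv_{i-1},\elv_i)\cdot(\Gamma^0_\varphi\cap\ell(\elv_i))^*$. The reversed word, with every letter inverted, regroups as follows: a loop block at $\elv_n$, then for each $i$ the inverted edge letter from $\theta(\elv_{i-1},\elv_i)$ followed by the loop letters at $\elv_{i-1}$. Two facts make this regrouping legitimate.

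First, $\theta(\elv_i,\elv_{i-1}) = \inv{\theta(\elv_{i-1},\elv_i)}$. This is immediate from the two cases in the definition of $\theta$, since they are swapped by exchanging the arguments.

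Second, the set of readings of $\Gamma^0_\varphi$-letters is closed under $\inv{\cdot}$. A test reads as itself, and $\pred{slf}_\pe$ may be read as either $\pe$ or $\inv{\pe}$.

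The reversed inverted word therefore has exactly the shape required of a witnessing path $\elv_n,\ldots,\elv_0$. By the word-level fact above (runs reverse under inversion), it is accepted by $\inv{\autE}_{q',q}$.

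The only delicate step is this regrouping of loop letters. Loop letters attached after each $\theta$-letter in the original word must become loop letters attached before the inverted $\theta$-letter, and the block at $\elv_0$ moves to the end. I would handle this by writing the word as $u_0e_1u_1\cdots e_nu_n$, with $u_i$ the loop block at $\elv_i$ and $e_i$ the edge letters. Its inverted reversal is then $\inv{u_n}\,\inv{e_n}\,\inv{u_{n-1}}\cdots\inv{e_1}\,\inv{u_0}$, which visibly matches the required segmentation.
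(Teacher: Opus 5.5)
Your proposal is correct and follows essentially the same route as the paper. Walks are handled by the identity $\sem{\inv{\sigma}}_{\strA} = \{(\elv,\elu) \mid (\elu,\elv) \in \sem{\sigma}_{\strA}\}$ applied along a reversed run, and paths by exactly the two facts the paper cites (closure of the readings of $\Gamma^{0}_{\varphi}$-letters under $\inv{\cdot}$, and $\theta(\elv,\elu) = \{\inv{\sigma} \mid \sigma \in \theta(\elu,\elv)\}$), with your segmentation $u_0e_1u_1\cdots e_nu_n \mapsto \inv{u_n}\,\inv{e_n}\cdots\inv{e_1}\,\inv{u_0}$ merely spelling out the regrouping the paper leaves implicit.
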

\begin{proof}
  The first claim is immediate from $\sem{\inv{\sigma}}_{\strA} = \{ (\elv, \elu) \mid (\elu,\elv) \in \sem{\sigma}_{\strA} \}$.
  For the second, note that the readings of the letters of $\Gamma^{0}_{\varphi} \cap \ell(\elx)$ are closed under $\inv{\cdot}$, a test being its own inverse and $\pred{slf}_{\pe}$ offering both $\pe$ and~$\inv{\pe}$; and that $\theta(\elv, \elu) = \{\, \inv{\sigma} \mid \sigma \in \theta(\elu, \elv) \,\}$ for neighbours $\elu, \elv$, by the two clauses defining~$\theta$.
  Reversing an accepted word and inverting each of its letters therefore turns a witness for one side into a witness for the~other.
\end{proof}

\noindent \textbf{Idling.}\, For $a \in \AlphabetAPTA$ let $\stat{\autE}_{a} \subseteq Q \times Q$ collect the pairs $(q,q')$ for which a vertex labelled~$a$ has a \emph{stationary} witnessing path for~$\autE_{q,q'}$, and $\stat{\autE,+}_{a}$ its restriction to paths reading at least one letter.
Both are well defined, the existence of such a path depending only on~$a$, and computable in time polynomial in $|Q| {+} |\Delta_{\autE}| {+} |a|$: build the relation containing $(q, q')$ whenever $(q, \pP{?}, q') \in \Delta_{\autE}$ for some $\pP \in a$, or $(q, \sigma, q') \in \Delta_{\autE}$ for a possibly inverted action~$\sigma$ whose underlying action~$\pe$ satisfies $\pred{slf}_{\pe} \in a$, and take its transitive, respectively reflexive-transitive, closure.
By Observation~\ref{obs:reversal}, the relation $\stat{\inv{\autE}}_{a}$ is the converse of $\stat{\autE}_{a}$.
For $\autE \deff \hat{\autA}$ the fresh transitions contribute exactly the tests $\ldem_{q,q'}?$ with $\ldem_{q,q'} \in a$.

\noindent \textbf{Crossing.}\, Observe that an action $\sigma$ from a vertex to its $\sigma$-successor is recorded at the \emph{lower} of its two endpoints, and this asymmetry shapes the design of our gadgets.
Going down, the automaton must therefore inspect the vertex it moves to; going up, it reads that letter before moving, which makes the ascent the simpler of the two.
Below, $\gamma$ ranges over $\Gamma^{\mathrm{bck}}_{\varphi} \cup \Gamma^{1}_{\varphi} \cup \{\varnothing\}$, the dummy symbol~$\varnothing$ reading \emph{``no letter required''}; we write $\gamma \sqsubseteq a$ for \emph{``$\gamma = \varnothing$ or $\gamma \in a$''}.
All four families of gadgets below come in one copy for each semiautomaton~$\autE$ from our list; we display it in brackets, as in $\statst{q}{q'}[\autE]$, and omit it whenever it is crystal clear from the context.
\begin{itemize}\itemsep0.6em
  \item \textbf{Idle.}
  In the state $\statst{q}{q'}$, the automaton's task is to show that \emph{``the current vertex has a stationary witnessing path for~$\autE_{q,q'}$''}. Here we just exploit the precomputed relation $\stat{\autE}_{a}$:
  \[
    \delta(\statst{q}{q'}, a) \; \deff \;
    \begin{cases}
      \top & \text{if } q \stat{\autE}_{a} q', \\
      \bot & \text{otherwise.}
    \end{cases}
  \]

  \item \textbf{Idle'.}
  In the state $\stpst{q}{q'}{\gamma}$, the automaton's task is to show that \emph{``the current vertex is maximal, carries~$\gamma$, and has a stationary witnessing path for~$\autE_{q,q'}$''}:
  \[
    \delta(\stpst{q}{q'}{\gamma}, a) \; \deff \;
    \begin{cases}
      (0, s_{\NULL}) & \text{if } \gamma \sqsubseteq a \text{ and } q \stat{\autE}_{a} q', \\
      \bot & \text{otherwise.}
    \end{cases}
  \]
  This case differs from the previous one only in two further checks: that the letter~$\gamma$ holds at the scanned vertex, and that the current vertex is maximal.
  The automaton performs the second check by spawning a copy in~$s_{\NULL}$, verifying that the $0$-child is~indeed~$\NULL$.

  \item \textbf{Descend.}
  In the state $\dwnst{q}{q'}{\gamma}$, the automaton's task is to show that \emph{``the $0$-path from the current vertex to $\mathrm{end}(\elv)$ is a witnessing path for~$\autE_{q,q'}$, and $\mathrm{end}(\elv)$ carries~$\gamma$''}:
  \[
    \delta(\dwnst{q}{q'}{\gamma}, a) \; \deff \; (\stay, \stpst{q}{q'}{\gamma}) \,\lor\, \jump{\downarrow}_{0}(a, q, q', \gamma).
  \]
    Intuitively, the transition above offers two options.
    Either the current vertex is already maximal, and we invoke the simulation described in the previous case, or it is not, and we move one step down to the $0$-child and continue the simulation there.
    We~define~$\jump{\downarrow}_{0}$~later.

  \item \textbf{Ascend.}
  In the state $\upst{q}{q'}$, the automaton's task is to show that
  \emph{``the upward path from the current vertex to $\mathrm{anc}(\elv)$ is a witnessing path for~$\autE_{q,q'}$''}:
  \[
    \delta(\upst{q}{q'}, a) \; \deff \;
    \begin{cases}
      \bigvee_{(\tau, q''')} (\up, \upst{q'''}{q'}) & \text{if } \pD_{0} \in a, \\
      \bigvee_{(\tau, q''')} (\up, \statst{q'''}{q'}) & \text{if } \pD_{j} \in a \text{ for some } j \geqslant 1,\\
      \bot & \text{otherwise,}
    \end{cases}
  \]
  both disjunctions ranging over the pairs $(\tau, q''') \in \Gamma^{1}_{\varphi} \times Q$ with $\tau \in a$ and $(q'', \inv{\tau}, q''') \in \Delta_{\autE}$ for some~$q''$ with $q \stat{\autE}_{a} q''$.
  Intuitively, the cases above describe the options for moving up the tree, each updating the state of the simulated semiautomaton (we can idle a bit and then we need to move up).
  If we move to the parent that is also a $0$-child, we continue the process.
  If we move to the parent that is not a $0$-child, we must stay there, updating the state only along stationary paths.
  Otherwise, the move~is~illegal~and~the~automaton~rejects.
\end{itemize}

Above, and repeatedly below, we use two abbreviations, the first for a direction~$j$ and a ``task'' $X \in \{\downarrow, \dashv\}$, the second for a map $\mathbf{s} \colon Q \to S$:
\begin{align*}
  \jump{X}_{j}(a, q, q', \gamma) &\; \deff \!\!\!\bigvee_{\substack{\tau \in \Gamma^{1}_{\varphi},\; q'', q''' \in Q \\ q \,\stat{\autE}_{a}\, q'', \; (q'', \tau, q''') \in \Delta_{\autE}}}\!\!\! (j, s_{\tau}) \,\land\, (j, s^{X}_{q''', q', \gamma}),
  \\
  \move(a, q, \mathbf{s}) &\; \deff \!\!\!\bigvee_{\substack{\sigma \in \Gamma^{1}_{\varphi},\; q'', q''' \in Q \\ q \,\stat{\autE}_{a}\, q'', \; (q'', \sigma, q''') \in \Delta_{\autE}}}\!\!\! \suc{\{\sigma\}}{a, \mathbf{s}(q''')}.
\end{align*}
Both read: \emph{``simulating $\autE$ from~$q$, idle at the current vertex until you reach~$q''$, then cross one edge''}.
In $\jump{X}_{j}$ the edge goes down to the $j$-child, which must carry~$\tau$, and the task resumes in~$s^{X}_{q''', q', \gamma}$.
For $X = \downarrow$ the copy then continues down the $0$-path; for $X = \dashv$ it idles.
In $\move$ the edge goes to a guessed $\sigma$-successor, and the task resumes in~$\mathbf{s}(q''')$.
Thus $\jump{\downarrow}_{j}$ and $\jump{\dashv}_{j}$ walk along the tree, whereas $\move$ walks along the \emph{structure}~encoded~by~it.

We can now state what the whole library does in detail, as summarised in Lemma~\ref{lemma:toolbox} below.
It is one of those long lemmas that collect many properties of the construction, but whose proofs are routine, resting on the semantics of 2APTA and on unfolding the macros.

\begin{lemma}[Toolbox]\label{lemma:toolbox}
  Let $\autT$ be a 2APTA over $\AlphabetAPTA$ containing the state families above, let $(\domT, \ell)$ be a basic shape tree, let $\elv \in \domT$, and put $a \deff \ell(\elv)$. Then:
  \begin{namedlist}[$(\mathrm{T}_8)$\hspace{0.4em}]
    \item[\desclabel{$(\mathrm{T}_1)$}{tb:tests}]
    $s_{\Lambda}$ succeeds at~$\elv$ if and only if $\Lambda \subseteq a$, and $\nlab{\Lambda}$ does if and only if $\Lambda \cap a = \emptyset$.
    \item[\desclabel{$(\mathrm{T}_2)$}{tb:down}]
    for non-empty~$\Lambda$, the state $\dtest{\Lambda}{s}$ succeeds at~$\elv$ if and only if $\elv$ is non-$\NULL$, $\Lambda \subseteq \ell(\mathrm{end}(\elv))$, and $s$ succeeds at~$\mathrm{end}(\elv)$; and $\clmb{s}$ succeeds at~$\elv$ if and only if $\mathrm{anc}(\elv)$ is defined and $s$ succeeds there.
    \item[\desclabel{$(\mathrm{T}_3)$}{tb:succ}]
    for non-empty~$\Lambda$, the formula $\suc{\Lambda}{a, s}$ is realisable at~$\elv$ if and only if $s$ succeeds at some vertex~$\elw$ such that, for every $\sigma \in \Lambda$, $\elw$ is a $\sigma$-successor of~$\elv$.
    \item[\desclabel{$(\mathrm{T}_4)$}{tb:stat}]
    $\statst{q}{q'}[\autE]$ succeeds at~$\elv$ if and only if $\elv$ has a stationary witnessing path for~$\autE_{q,q'}$, and $\stpst{q}{q'}{\gamma}[\autE]$ does if and only if, in addition, $\elv$ is maximal and $\gamma \sqsubseteq a$.
    \item[\desclabel{$(\mathrm{T}_5)$}{tb:dwn}]
    for non-$\NULL$~$\elv$, the state $\dwnst{q}{q'}{\gamma}[\autE]$ succeeds at~$\elv$ if and only if the $0$-path from~$\elv$ to $\mathrm{end}(\elv)$ is a witnessing path for~$\autE_{q,q'}$ and $\gamma \sqsubseteq \ell(\mathrm{end}(\elv))$. Consequently, for every direction~$j$, the formula $\jump{\downarrow}_{j}[\autE](a, q, q', \gamma)$ is realisable at~$\elv$ if and only if some witnessing path for~$\autE_{q,q'}$ anchored at~$\elv$ in direction~$j$ ends in a vertex~$\elw$ with $\gamma \sqsubseteq \ell(\elw)$; the same holds for $\jump{\dashv}_{j}[\autE]$, restricted to paths of length one.
    \item[\desclabel{$(\mathrm{T}_6)$}{tb:up}]
    $\upst{q}{q'}[\autE]$ succeeds at~$\elv$ if and only if $\mathrm{anc}(\elv)$ is defined and the upward path from~$\elv$ to $\mathrm{anc}(\elv)$ is a witnessing path for~$\autE_{q,q'}$.
    \item[\desclabel{$(\mathrm{T}_7)$}{tb:move}]
    $\move[\autE](a, q, \mathbf{s})$ is realisable at~$\elv$ if and only if there are $q'', q''' \in Q$ and $\sigma \in \Gamma^{1}_{\varphi}$ with $q \stat{\autE}_{a} q''$ and $(q'', \sigma, q''') \in \Delta_{\autE}$, and a $\sigma$-successor~$\elw$ of~$\elv$ at which $\mathbf{s}(q''')$ succeeds.
    \item[\desclabel{$(\mathrm{T}_8)$}{tb:hygiene}]
    all gadgets above are root-safe, and every branch that a gadget spawns, other than those of its parameters, is finite. More precisely, on any path of a run, only finitely many configurations in the states above lie between two consecutive configurations in a parameter state ($s$ in \ref{tb:down} and~\ref{tb:succ}, and $\mathbf{s}(\cdot)$ in~\ref{tb:move}).
  \end{namedlist}
  Clauses \ref{tb:down}, \ref{tb:succ}, and~\ref{tb:move} hold for arbitrary parameters, whether their runs are finite or not: by~\ref{tb:hygiene}, a gadget only prefixes a run of its parameter with finitely many configurations, which affects neither $\Inf$ nor the parity condition.
\end{lemma}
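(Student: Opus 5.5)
The plan is to prove the clauses of the Toolbox Lemma in dependency order, each by unfolding the transition formula once and then using \ref{run:R1}--\ref{run:R2} together with the fact that $\delta(s, \ell(\elv))$ is realisable at~$\elv$ exactly when $s$ succeeds there. I first handle the finite-depth states: \ref{tb:tests} is immediate from the definition of label tests. For \ref{tb:stat}, I would verify that $\stat{\autE}_{a}$ is exactly the set of pairs admitting a stationary witnessing path. The readings of $\Gamma^{0}_{\varphi} \cap a$ are precisely the tests $\pP?$ with $\pP \in a$ and the letters $\pe, \inv{\pe}$ with $\pred{slf}_{\pe} \in a$, plus the $\ldem$-tests for $\hat{\autA}$. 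An accepted word over these letters is therefore a path in the one-step relation, so the reflexive--transitive closure matches. The clause for $\stpst{q}{q'}{\gamma}$ adds the maximality check via $(0, s_{\NULL})$, and maximality means exactly that the $0$-child is $\NULL$.

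Next come the recursive states, by induction. For $\dtest{\Lambda}{s}$ and $\dwnst{q}{q'}{\gamma}$, the recursion moves only along the $0$-path from~$\elv$, which is finite by \ref{bst:S4}. So for the ``if'' direction I build the run by induction on the distance to $\mathrm{end}(\elv)$. The first disjunct is forced exactly at the maximal vertex, since the second requires a non-$\NULL$ $0$-child. For the ``only if'' direction, I note that any run contains a chain of these states down the $0$-path, and this chain is finite because that path is. Then I read off the witnessing word step by step: idle via $\stat{\autE}_a$, cross via a letter $\tau \in \Gamma^1_\varphi$ on the child, which equals $\theta(\elv,\elv0)$, then idle again. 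This is exactly the definition of a witnessing path. The clause for $\jump{X}_j$ follows from a single such downward step into direction~$j$ combined with the $\dwnst{}{}{}$ or $\stpst{}{}{}$ clause. For $\clmb{s}$ and $\upst{q}{q'}$ I argue symmetrically on the upward climb through $\pD_0$-labelled vertices until the first $\pD_j$ with $j \geqslant 1$. This climb terminates at $\mathrm{anc}(\elv)$ by \ref{bst:S5}, and it fails at the root because $\pD_\varepsilon$ yields~$\bot$. Ascending uses the letter $\tau$ at the lower vertex inverted, which matches $\theta(\elx,\text{parent}) = \{\inv{\sigma}\}$.

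For \ref{tb:succ}, I will check the four disjuncts against clauses (i)--(iv) of the definition of a $\sigma$-successor, applied simultaneously to all $\sigma \in \Lambda$:
\begin{itemize}
\item a child $\elv i$ qualifies for all $\sigma \in \Lambda$ iff $\Lambda \subseteq \Theta(\ell(\elv i))$;
\item an anchored descendant in direction $j$ is reached via $\dtest{\pred{bck}_{\inv\Lambda}}{s}$ using \ref{tb:down};
\item the parent qualifies via $\inv{\Lambda} \subseteq \Theta(a)$;
\item the anchor qualifies via $\pred{bck}_\Lambda \subseteq a$ followed by \ref{tb:down} for~$\clmb{s}$.
\end{itemize}
The main obstacle, and where I expect care, is the overlap case: a maximal $j$-child is both a child and an anchored descendant, so a single $\sigma$ may be realised by a $\pred{bck}$-letter instead of a $\Gamma^1$-letter. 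I would resolve this by showing that $\Theta(a)$ collects precisely the union of both realisations. In particular, $\inv{\rho}$ with $\pred{bck}_\rho \in a$ and $\pD_j \in a$ corresponds to case (iii)/(iv) when the anchor is the parent, and \ref{bst:S7} guarantees that $\pred{bck}$ occurs only at maximal non-root vertices. With this in hand, \ref{tb:move} is just \ref{tb:stat} combined with \ref{tb:succ} for a singleton~$\Lambda$.

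Finally, for \ref{tb:hygiene}, root-safety is checked by inspection. The only $\up$-atoms occur in $\clmb{}$, in $\upst{}{}$ (which are $\bot$ under $\pD_\varepsilon$), and in the parent disjunct of $\suc{}{}$. The last one is guarded by $\inv\Lambda \subseteq \Theta(a)$, which is empty at the root because the root carries neither $\Gamma^1$-letters nor $\pD_j$. Finiteness between parameter states follows because every non-parameter chain is either monotone along a finite $0$-path or climbing, and a climb is bounded by depth. Hence prefixing does not affect $\Inf$, which justifies the closing remark that the clauses hold for arbitrary parameters.
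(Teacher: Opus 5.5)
Your proposal is correct and follows essentially the same route as the paper. Like the paper, you unfold each transition, get termination from \ref{bst:S4} for descents and from decreasing depth for climbs, match the four disjuncts of $\suc{\Lambda}{a,s}$ to clauses (i)--(iv) with $\Theta(a)$ absorbing the child/anchored-descendant and parent/anchor overlaps, and use the alphabet filtering for root-safety. The one detail you skip, which the paper spells out, is the $\NULL$-vertex case of \ref{tb:down}: there $\dtest{\Lambda}{s}$ fails because \ref{bst:S6} leaves only the first disjunct available, and $s_{\Lambda}$ then rejects the label $\{\NULL\}$.
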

\begin{proof}
  \ref{tb:tests} and~\ref{tb:stat} hold by definition, the latter since a non-$\NULL$ vertex is maximal if and only if its $0$-child is a $\NULL$-vertex.
  \ref{tb:move} follows from~\ref{tb:succ}, disjunct by disjunct.

  \ref{tb:down}. The two disjuncts of $\delta(\dtest{\Lambda}{s}, a)$ exclude each other, as the first requires the $0$-child to be a $\NULL$-vertex and the second forbids it.
  Hence a copy in $\dtest{\Lambda}{s}$ descends the $0$-path from~$\elv$, finite by~\ref{bst:S4}, and spawns $s_{\Lambda}$ and~$s$ exactly at $\mathrm{end}(\elv)$.
  At a $\NULL$-vertex it takes the first disjunct by~\ref{bst:S6}, and $s_{\Lambda}$ fails, as $\Lambda \neq \emptyset$ and $\{\NULL\}$ is the only letter of~$\AlphabetAPTA$ containing~$\NULL$.
  Dually, a copy in $\clmb{s}$ climbs while reading~$\pD_{0}$, then takes one more step, to $\mathrm{anc}(\elv)$; it rejects at the root and at $\NULL$-vertices, whose labels carry no direction letter, and terminates as the depth decreases.

  \ref{tb:succ}. Let $\Theta(\elx, \ely) \deff \{\, \sigma \mid \ely \text{ is a } \sigma\text{-successor of } \elx \,\}$; we show that the disjuncts of $\suc{\Lambda}{a, s}$ reach exactly the vertices~$\elw$ with $\Lambda \subseteq \Theta(\elv, \elw)$.
  As $\Lambda \neq \emptyset$, each disjunct forces $\elv$ and its target to be non-$\NULL$.
  By \ref{bst:S5}--\ref{bst:S7}, only anchored descendants carry letters $\pred{bck}_{\rho}$, and such a vertex is the anchored descendant of its parent if and only if it also carries some~$\pD_{j}$ with $j \geqslant 1$.
  Reading off clauses (i)--(iv) from the definition of a successor, $\Theta(\ely, \elx) = \Theta(\ell(\elx))$ for every child~$\elx$ of~$\ely$, which settles the first disjunct and, with Observation~\ref{obs:inversion}, the third.
  By Observation~\ref{obs:anchor-paths-end-anc-functions}, the anchored descendants of~$\elv$ are the vertices $\mathrm{end}(\elv j)$ with $j \geqslant 1$ and $\elv j$ non-$\NULL$, so by~\ref{tb:down} the second disjunct covers clause~(iii); the fourth covers clause~(iv).

  \ref{tb:dwn}. The proof goes via induction on the height of runs, which are finite by~\ref{bst:S4}.
  The disjunct $(\stay, \stpst{q}{q'}{\gamma})$ is handled by~\ref{tb:stat}; for the other, the disjuncts of $\jump{\downarrow}_{0}$ match one-to-one the blocks from $(\Gamma^{0}_{\varphi} \cap \ell(\elv))^{*} \cdot \theta(\elv, \elv 0)$ in the definition of a witnessing path, by the definition of~$\stat{\autE}_{a}$ and as $\theta(\elv, \elv 0) = \Gamma^{1}_{\varphi} \cap \ell(\elv 0)$.
  Prefixing the step to the $j$-child yields the claims on $\jump{X}_{j}$.

  \ref{tb:up} is dual, but reads the letter crossing an edge at the current vertex~$\elx$, before moving to its parent~$\ely$: a disjunct of $\delta(\upst{q}{q'}, a)$ consumes a block from $(\Gamma^{0}_{\varphi} \cap \ell(\elx))^{*}$ via $q \stat{\autE}_{a} q''$, then the letter $\inv{\tau} \in \theta(\elx, \ely)$ with $\tau \in \Gamma^{1}_{\varphi} \cap \ell(\elx)$.
  The copy reaches $\mathrm{anc}(\elv)$ as in~\ref{tb:down}, and~\ref{tb:stat} handles the last vertex.

  \ref{tb:hygiene}. Atoms $(\up, \cdot)$ occur only in $\delta(\clmb{\cdot}, a)$, in $\delta(\upst{\cdot}{\cdot}, a)$, and in the third disjunct of $\suc{\Lambda}{a, \cdot}$.
  The first two are~$\bot$ unless $a$ contains $\pD_{0}$ or some~$\pD_{j}$ with $j \geqslant 1$.
  The third requires $\inv{\Lambda} \subseteq \Theta(a)$, which fails if $\pD_{\varepsilon} \in a$: by the filtering of~$\AlphabetAPTA$, such an~$a$ has no other direction letter and no letter of~$\Gamma^{1}_{\varphi}$, so $\Theta(a) = \emptyset$.
  For finiteness, every move from a state above enters a parameter state, is a $\stay$-move (these are acyclic on the states above), descends a $0$-path, finite by~\ref{bst:S4}, or strictly decreases the depth.
  Hence no path of a run stays among these states forever; in particular, a run whose parameters are all label tests is finite.
\end{proof}

\subsection{A 2APTA for Encoding-Ready Trees}\label{app:encready-automaton}

With the library in place, writing 2APTA for the remaining conditions is almost straightforward.
Throughout this section we work over basic shape trees, so that Lemma~\ref{lemma:toolbox} applies, and  we combine the resulting automata only at the end. For the reader's convenience, we remind that $\kappa$ maps the demands bijectively onto the non-zero directions $\{1, \ldots, \rmN\}$.

\subsubsection{Providing Witnesses: \ref{encready:A4} and \ref{encready:A5}}\label{app:twoapta-witnesses}

Both conditions constrain a single vertex, hence it suffices to verify them at 
an arbitrary vertex and let Observation~\ref{obs:broadcast} do the bookkeeping.
A vertex satisfying $\wdem_{q,q'}$ must exhibit a witnessing path for $\autA_{q,q'}$ that is stationary or anchored in the direction reserved for that demand; this is literally $\stat{\autA}_{a}$ or $\jump{\downarrow}$.
A vertex satisfying $\ldem_{q,q'}$ must in addition close the anchored path by a single step back, which by~\ref{encready:A5} leads from the last vertex~$\elw$ of that path to a $\sigma$-successor (for some possibly inverted action $\sigma$) of~$\elw$, namely~$\elv$ itself. As $\elw$ is then a strict descendant of~$\elv$, only clauses (ii) and~(iv) of the definition of a $\sigma$-successor can apply: $\elw$ is a child of~$\elv$ carrying~$\inv{\sigma}$, or an anchored descendant of~$\elv$ carrying~$\pred{bck}_{\sigma}$. These are matched by $\jump{\dashv}$ and $\jump{\downarrow}$, respectively.
We therefore obtain $\autT_{\mathrm{wtn}}$ by applying Observation~\ref{obs:broadcast} to the 2APTA with initial state~$s_{\mathrm{wtn}}$ and acceptance condition $(\emptyset, S)$, where
\[
  \delta(s_{\mathrm{wtn}}, a) \; \deff \; \bigwedge_{\wdem_{q,q'} \in a} \alpha^{\wdem}_{q,q'}(a) \;\land\; \bigwedge_{\ldem_{q,q'} \in a} \alpha^{\ldem}_{q,q'}(a)
\]
and where $\alpha^{\wdem}_{q,q'}(a) \deff \top$ and $\alpha^{\ldem}_{q,q'}(a) \deff \top$ whenever $q \stat{\autA}_{a} q'$, while otherwise
\begin{align*}
  \alpha^{\wdem}_{q,q'}(a) &\deff \jump{\downarrow}_{\kappa(\wdem_{q,q'})}[\autA](a, q, q', \varnothing),
  \\
  \alpha^{\ldem}_{q,q'}(a) &\deff \!\!\!\bigvee_{\substack{(q_1, \sigma, q_2) \in \Delta \\ q_2 \,\stat{\autA}_{a}\, q'}}\!\!\! \Big( \jump{\downarrow}_{\kappa(\ldem_{q,q'})}[\autA](a, q, q_1, \pred{bck}_{\sigma}) \;\lor\; \jump{\dashv}_{\kappa(\ldem_{q,q'})}[\autA](a, q, q_1, \inv{\sigma}) \Big).
\end{align*}

\noindent The next lemma is immediate from the semantics of our gadgets and the explanations above.
\begin{lemma}\label{lemma:wtn-automaton}
  A basic shape tree is accepted by $\autT_{\mathrm{wtn}}$ if and only if it satisfies \ref{encready:A4} and~\ref{encready:A5}. It has polynomially many states in $|Q| {+} |\Sigma_{\varphi}|$ and index~$2$, and is constructed in time polynomial in $|\AlphabetAPTA| {+} |Q| {+} |\Delta|$.
\end{lemma}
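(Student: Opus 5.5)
By Observation~\ref{obs:broadcast}, $\autT_{\mathrm{wtn}}$ accepts a basic shape tree $(\domT,\ell)$ if and only if $s_{\mathrm{wtn}}$ succeeds at every vertex. So I will fix an arbitrary vertex $\elv$ with $a \deff \ell(\elv)$ and show that $\delta(s_{\mathrm{wtn}},a)$ is realisable at~$\elv$ exactly when \ref{encready:A4} and~\ref{encready:A5} hold at~$\elv$. For a $\NULL$-vertex both sides are trivial: $a=\{\NULL\}$ carries no demand, so the transition is the empty conjunction~$\top$. For a non-$\NULL$ vertex, the transition is a conjunction over the demands in~$a$, and realisability of a conjunction reduces to realisability of each conjunct. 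It therefore suffices to treat one demand at a time.

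For $\wdem_{q,q'} \in a$ I split on whether $q \stat{\autA}_{a} q'$. If it holds, then by~\ref{tb:stat} (the definition of $\stat{\autA}_a$) a stationary witnessing path exists, and the conjunct is~$\top$. If it fails, no stationary witness exists, so \ref{encready:A4} requires a witnessing path anchored in direction $\kappa(\wdem_{q,q'})$. By~\ref{tb:dwn} with $\gamma=\varnothing$, this is precisely realisability of $\jump{\downarrow}_{\kappa(\wdem_{q,q'})}[\autA](a,q,q',\varnothing)$.

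For $\ldem_{q,q'}$ the stationary case is the same. Otherwise, \ref{encready:A5} asks for $q_1,q_2,\sigma$ with $(q_1,\sigma,q_2)\in\Delta$, a stationary path for $\autA_{q_2,q'}$ at~$\elv$ (that is, $q_2 \stat{\autA}_a q'$ by~\ref{tb:stat}), and an anchored witnessing path for $\autA_{q,q_1}$ in direction $j\deff\kappa(\ldem_{q,q'})$ ending at some $\elw$ of which $\elv$ is a $\sigma$-successor. The key step is to classify this last condition. Since $\elw$ is a strict descendant of~$\elv$, only clauses (ii) and~(iv) of the $\sigma$-successor definition can apply. If $\elw=\elv j$ is a child, clause (ii) requires $\inv{\sigma}\in\ell(\elw)$. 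If $\elw$ is an anchored descendant, clause (iv) requires $\pred{bck}_\sigma\in\ell(\elw)$. By~\ref{tb:dwn}, the first is captured by $\jump{\dashv}_j$ with $\gamma=\inv{\sigma}$ (paths of length one) and the second by $\jump{\downarrow}_j$ with $\gamma=\pred{bck}_\sigma$.

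The main subtlety is the overlap in these cases: a maximal $j$-child is simultaneously a child and an anchored descendant, and a path of length one is also covered by $\jump{\downarrow}_j$. I will check that the disjunction is nonetheless exact. Any witness satisfies at least one disjunct. Conversely, each disjunct yields an $\elw$ satisfying clause (ii) or clause~(iv). For the $\jump{\downarrow}$ disjunct, $\elw=\mathrm{end}(\elv j)$ is anchored by Observation~\ref{obs:anchor-paths-end-anc-functions}, because \ref{tb:stat} checks maximality there. For $\jump{\dashv}$ the vertex is a child. Hence the disjunction matches \ref{encready:A5} in both directions.

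For acceptance, \ref{tb:hygiene} shows that every run spawned from these gadgets with label-test parameters is finite, so the condition $(\emptyset,S)$ is harmless and each local run is accepting. For the size bounds, the states are $s_{\mathrm{wtn}}$, the label tests, and the families $\statst{}{}$, $\stpst{}{}{\gamma}$, $\dwnst{}{}{\gamma}$ indexed by $Q^2$ and $\gamma$ ranging over $O(|\Sigma_\varphi|)$ values. This gives polynomially many states, and adding $s_\forall$ keeps the index at~$2$. Each transition formula has polynomial size in $|Q|+|\Delta|$ and is written once per letter of $\AlphabetAPTA$, after computing $\stat{\autA}_a$ in polynomial time, which yields the stated construction time.
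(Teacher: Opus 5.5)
Your proposal is correct and follows essentially the same route as the paper's proof: you reduce via Observation~\ref{obs:broadcast} to a local claim at one vertex, discharge each demand conjunct by \ref{tb:stat} and \ref{tb:dwn}, match the closing step of \ref{encready:A5} to clauses~(ii) and~(iv) of the $\sigma$-successor definition via $\mathrm{jump}^{\dashv}$ and $\mathrm{jump}^{\downarrow}$, and read the bounds off the toolbox gadgets plus the broadcast state. Your explicit check of the overlap between the two disjuncts is a useful detail that the paper leaves to its preceding discussion; note also that under the acceptance condition $(\emptyset,S)$ every run is accepting anyway, so your appeal to \ref{tb:hygiene} there is harmless but unnecessary.
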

\begin{proof}
  By Observation~\ref{obs:broadcast} it remains to show that $s_{\mathrm{wtn}}$ succeeds at~$\elv$ precisely when $\elv$ satisfies~\ref{encready:A4} and~\ref{encready:A5}.
  By~\ref{tb:dwn}, $\alpha^{\wdem}_{q,q'}(\ell(\elv))$ is realisable at~$\elv$ exactly when $\elv$ has a stationary witnessing path for $\autA_{q,q'}$ or one anchored in direction $\kappa(\wdem_{q,q'})$, which is~\ref{encready:A4}; and $\alpha^{\ldem}_{q,q'}(\ell(\elv))$ exactly when \ref{encready:A5} holds at~$\elv$, by the discussion above and because $q_2 \stat{\autA}_{\ell(\elv)} q'$ is the stationary tail required there. A $\NULL$-vertex carries no demand, so both conjunctions are empty at~$\{\NULL\}$, matching that \ref{encready:A4} and~\ref{encready:A5} speak about non-$\NULL$ vertices only. The bounds are read off the gadgets of Lemma~\ref{lemma:toolbox} instantiated with $\autE \deff \autA$, plus the state added by Observation~\ref{obs:broadcast}; the time is dominated by computing $\stat{\autA}_{a}$ per each~of~$|\AlphabetAPTA|$~letters.
\end{proof}

\subsubsection{Closing the Demands: \ref{encready:C1}--\ref{encready:C4}}\label{app:twoapta-closure}

All four remaining conditions have the same shape: an \emph{existential} hypothesis---the existence of a $\sigma$-successor, or of a witnessing path along an anchor path---and a label-local conclusion.
We therefore build a single 2APTA $\autT_{\viol}$ that, started at a vertex, \emph{guesses a violation} of one of them there, and let Observation~\ref{obs:duality} do the rest.
Its initial state is $s_{\viol}$ with $\delta(s_{\viol}, a) \deff \bigvee_{k=1}^{4} (\stay, s_{\mathrm{C}_k})$, its acceptance condition is $(\emptyset, S)$, and its four main states are as follows.
For \ref{encready:C1} and~\ref{encready:C2}, Observation~\ref{obs:inversion} turns the clause ``$\elw$ is a $\tau$-successor of~$\elv$'' of~\ref{encready:C1} into ``$\elv$ is an $\inv{\tau}$-successor of~$\elw$'', so both requirements on the guessed vertex are discharged by a single call to $\suc{\{\sigma, \inv{\tau}\}}{\cdot}$. The conditions below are almost verbatim translations of \ref{encready:C1}--\ref{encready:C2}, modulo rewriting implications of conjunctions as disjunctions.

\begin{align*}
  \delta(s_{\mathrm{C}_1}, a)
    &\deff \!\!\bigvee_{\substack{(q_1, \sigma, q_2),\, (q_3, \tau, q_4) \in \Delta \\ \ldem_{q_1,q_4} \notin a}}\!\!\!
      \suc{\{\sigma, \inv{\tau}\}}{a,\, s_{\ldem_{q_2,q_3}}},
  \\[0.4em]
  \delta(s_{\mathrm{C}_2}, a)
    &\deff \!\!\bigvee_{\substack{(q_1, \sigma, q_2) \in \Delta,\; q_3 \in Q \\ \wdem_{q_1,q_3} \notin a}}\!\!\!
      \suc{\{\sigma\}}{a,\, s_{\wdem_{q_2,q_3}}}.
\end{align*}
Conditions \ref{encready:C3} and~\ref{encready:C4} speak about a vertex~$\elv$ on the anchored path from an anchor~$\elu$ to an anchored descendant~$\elw$.
By Observation~\ref{obs:anchor-paths-end-anc-functions} the pair $(\elu, \elw)$ is \emph{determined} by~$\elv$, unless $\elv = \elu$, in which case the segment from~$\elu$ to~$\elv$ is stationary and the one from~$\elv$ to~$\elw$ is anchored in a guessed direction.
This dichotomy is the two disjuncts below; each segment is simulated in the direction in which the automaton walks, which by Observation~\ref{obs:reversal} means exchanging $\autE \deff \hat{\autA}$ with~$\inv{\autE}$ and swapping the corresponding pair of states:
\begin{align*}
  \delta(s_{\mathrm{C}_3}, a) \deff
    \!\!\bigvee_{\substack{(q_2, \sigma, q_3) \in \Delta \\ q_1, q_4 \in Q,\; \ldem_{q_1,q_4} \notin a}}\!\!\! \Big(
    &\gate{\textstyle\bigvee_{j=1}^{\rmN} \jump{\downarrow}_{j}[\autE](a, q_1, q_2, \pred{bck}_{\sigma})}{q_3 \,\stat{\autE}_{a}\, q_4}
  \\[0.15em]
  {}\lor\;\;
    &\big( (\stay, \dwnst{q_1}{q_2}{\pred{bck}_{\sigma}}[\autE]) \land (\stay, \upst{q_4}{q_3}[\inv{\autE}]) \big) \Big),
  \\[0.5em]
  \delta(s_{\mathrm{C}_4}, a) \deff
    \!\!\bigvee_{\substack{(q_2, \sigma, q_3) \in \Delta \\ q_1, q_4 \in Q,\; \ldem_{q_1,q_4} \notin a}}\!\!\! \Big(
    &\gate{\textstyle\bigvee_{j=1}^{\rmN} \jump{\downarrow}_{j}[\inv{\autE}](a, q_4, q_3, \pred{bck}_{\inv{\sigma}})}{q_1 \,\stat{\autE}_{a}\, q_2}
  \\[0.15em]
  {}\lor\;\;
    &\big( (\stay, \upst{q_1}{q_2}[\autE]) \land (\stay, \dwnst{q_4}{q_3}{\pred{bck}_{\inv{\sigma}}}[\inv{\autE}]) \big) \Big).
\end{align*}

\begin{lemma}\label{lemma:closure-automaton}
  Let $\autT_{\mathrm{C}}$ be the result of applying Observation~\ref{obs:broadcast} to $\overline{\autT_{\viol}}$.
  A basic shape tree is accepted by $\autT_{\mathrm{C}}$ if and only if it satisfies \ref{encready:C1}--\ref{encready:C4}.
  The automaton $\autT_{\mathrm{C}}$ has polynomially many states in $|Q| {+} |\Sigma_{\varphi}|$ and index~$2$, and is constructed in time polynomial in $|\AlphabetAPTA| {+} |Q| {+} |\Delta|$.
\end{lemma}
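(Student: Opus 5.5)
The plan is to reduce the statement to a vertex-wise claim, in the same way as the proof of Lemma~\ref{lemma:wtn-automaton}. By Observation~\ref{obs:broadcast}, $\autT_{\mathrm{C}}$ accepts a basic shape tree if and only if the initial state $s_{\viol}$ of $\overline{\autT_{\viol}}$ succeeds at every vertex~$\elv$.

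Every gadget is root-safe by~\ref{tb:hygiene}, and so are the four main states, whose formulae only call gadgets. Observation~\ref{obs:duality} therefore applies over the complete trees guaranteed by~\ref{bst:S1}. It yields that $s_{\viol}$ succeeds in $\overline{\autT_{\viol}}$ at~$\elv$ iff it fails in $\autT_{\viol}$ at~$\elv$. Hence it suffices to show the following: $s_{\viol}$ succeeds in $\autT_{\viol}$ at~$\elv$ exactly when one of \ref{encready:C1}--\ref{encready:C4} is violated \emph{with~$\elv$ in the role of the vertex receiving the conclusion}. For \ref{encready:C1} and~\ref{encready:C2} this is the vertex~$\elw$; for \ref{encready:C3} and~\ref{encready:C4} it is~$\elv$. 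Since $\delta(s_{\viol},a)$ is a disjunction of $\stay$-moves, this splits into four claims, one per $s_{\mathrm{C}_k}$.

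For $s_{\mathrm{C}_1}$ I invoke~\ref{tb:succ} with $\Lambda=\{\sigma,\inv{\tau}\}$ together with~\ref{tb:tests}. A disjunct is realisable iff $\ldem_{q_1,q_4}\notin\ell(\elv)$ and some vertex~$\elx$ carrying $\ldem_{q_2,q_3}$ is both a $\sigma$-successor and an $\inv{\tau}$-successor of~$\elv$. By Observation~\ref{obs:inversion}, $\elx$ is an $\inv{\tau}$-successor of~$\elv$ iff $\elv$ is a $\tau$-successor of~$\elx$, so this is precisely a violation of~\ref{encready:C1} with $\elw\deff\elv$. The state $s_{\mathrm{C}_2}$ is handled in the same way, using~\ref{tb:succ} with a singleton.

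For $s_{\mathrm{C}_3}$ I split by Observation~\ref{obs:anchor-paths-end-anc-functions}. If $\elv=\elu$, the path from $\elu$ to $\elv$ is stationary, giving the gate $q_3\stat{\autE}_a q_4$ by~\ref{tb:stat}. The anchored path from~$\elv$ in some direction~$j$ to $\elw$ carrying $\pred{bck}_\sigma$ is then exactly $\jump{\downarrow}_j$ by~\ref{tb:dwn}. If $\elv\neq\elu$, then $(\elu,\elw)=(\mathrm{anc}(\elv),\mathrm{end}(\elv))$. The downward segment is $\dwnst{q_1}{q_2}{\pred{bck}_\sigma}[\autE]$ by~\ref{tb:dwn}. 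The downward $\hat{\autA}_{q_3,q_4}$-path from~$\elu$ to~$\elv$ is, by Observation~\ref{obs:reversal}, an upward $\inv{\autE}_{q_4,q_3}$-path from~$\elv$ to $\mathrm{anc}(\elv)$, which is~\ref{tb:up}. The state $s_{\mathrm{C}_4}$ is the mirror image: exchange the roles of $\autE$ and $\inv{\autE}$ and use $\pred{bck}_{\inv\sigma}$. A small point needs checking here: in the case $\elv=\elw$ the $0$-path is empty and Idle$'$ covers it. Another point is that an anchored descendant reached via a $j$-child with no $0$-child is included in both~\ref{tb:dwn} and~\ref{encready:C3}.

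The main obstacle I expect is acceptance bookkeeping rather than combinatorics. The condition of $\autT_{\viol}$ is $(\emptyset,S)$, which rejects every infinite path. This is harmless only because, by~\ref{tb:hygiene}, all runs of these gadgets with label-test parameters are finite, so runs of $\autT_{\viol}$ are finite and dualising introduces no spurious parity behaviour. I would verify this explicitly and then read off the index. Complementing gives $(\emptyset,\emptyset,S)$, and Broadcast adds at most one priority. The claimed index~$2$ then needs the remark that a condition whose relevant paths are all finite can be normalised to index~$2$; otherwise I would settle for a constant index. The state count and running time follow as in Lemma~\ref{lemma:wtn-automaton}: there are polynomially many instances of the gadgets for $\autE\in\{\hat{\autA},\inv{\hat{\autA}}\}$, and each transition formula is computed per letter from the relations $\stat{\autE}_a$.
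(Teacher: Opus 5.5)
Your proposal follows the paper's proof essentially step for step: root-safety via \ref{tb:hygiene}, then Observations~\ref{obs:duality} and~\ref{obs:broadcast} to reduce to a vertex-wise violation claim, then \ref{tb:succ} with Observation~\ref{obs:inversion} for \ref{encready:C1}--\ref{encready:C2}, and finally the $\mathrm{anc}/\mathrm{end}$ dichotomy with \ref{tb:dwn}, \ref{tb:up} and Observation~\ref{obs:reversal} for \ref{encready:C3}--\ref{encready:C4}. Two small bookkeeping corrections: $(\emptyset,S)$ gives every state the even priority~$2$ and so \emph{accepts} every infinite path (this is immaterial, since Duality holds for any condition and its complement), and that complement can be taken directly as $(S,S)$---equivalent to $(\emptyset,\emptyset,S)$, since both give every state an odd priority---which has index~$2$, and Broadcast adds no priority to it because $2$ is even, so your hedge about the index is unnecessary.
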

\begin{proof}
  The automaton $\autT_{\viol}$ is root-safe by~\ref{tb:hygiene}, and its acceptance condition $(\emptyset, S)$ is complemented by $(S, S)$, again of index~$2$; so by Observations \ref{obs:duality} and~\ref{obs:broadcast}, $\autT_{\mathrm{C}}$ accepts a basic shape tree if and only if $s_{\viol}$ succeeds at no vertex of it. It suffices to show that $s_{\viol}$ succeeds at~$\elv$ precisely when one of \ref{encready:C1}--\ref{encready:C4} fails there.

  For \ref{encready:C1}, clause~\ref{tb:succ} says that $\suc{\{\sigma, \inv{\tau}\}}{\ell(\elw), s_{\ldem_{q_2,q_3}}}$ is realisable at~$\elw$ exactly when some vertex is both a $\sigma$- and an $\inv{\tau}$-successor of~$\elw$ and carries~$\ldem_{q_2,q_3}$; with Observation~\ref{obs:inversion} and the side condition $\ldem_{q_1,q_4} \notin \ell(\elw)$ this is the negation of~\ref{encready:C1} there. Condition~\ref{encready:C2} is the same, with a single letter and the demand~$\wdem$.

  For~\ref{encready:C3}, fix $\elv$ and a tuple with $\ldem_{q_1,q_4} \notin \ell(\elv)$, and recall Observation~\ref{obs:anchor-paths-end-anc-functions}.
  If $\elv = \elu$, the downward witnessing path for $\hat{\autA}_{q_3,q_4}$ from~$\elu$ to~$\elv$ is stationary, that is $q_3 \,\stat{\autE}_{\ell(\elv)}\, q_4$, while the one for $\hat{\autA}_{q_1,q_2}$ from~$\elv$ to~$\elw$ is anchored in some direction $j \geqslant 1$ and ends on a vertex carrying $\pred{bck}_{\sigma}$: by~\ref{tb:dwn}, the first disjunct.
  If $\elv \neq \elu$, the path for $\hat{\autA}_{q_1,q_2}$ instead descends the $0$-path from~$\elv$ to $\mathrm{end}(\elv)$, which $\dwnst{q_1}{q_2}{\pred{bck}_{\sigma}}[\autE]$ tests by~\ref{tb:dwn}, while the path from~$\elu$ to~$\elv$ climbs back to $\mathrm{anc}(\elv)$ and is tested by $\upst{q_4}{q_3}[\inv{\autE}]$, by Observation~\ref{obs:reversal} and~\ref{tb:up}.
  Conversely, $\upst{\cdot}{\cdot}$ fails at the root and its ascent is non-trivial, so the second disjunct is realisable only when $\elv \neq \elu$.
  Hence $\delta(s_{\mathrm{C}_3}, \ell(\elv))$ is realisable exactly when \ref{encready:C3} fails at~$\elv$.
  Condition~\ref{encready:C4} is the mirror image: both of its paths are upward, so Observation~\ref{obs:reversal} exchanges $\autE$ with~$\inv{\autE}$, swaps the two pairs of states, and turns $\pred{bck}_{\sigma}$ into $\pred{bck}_{\inv{\sigma}}$.

  For the bounds, $\autT_{\viol}$ consists of the five states above together with the gadgets of Lemma~\ref{lemma:toolbox}, instantiated with $\autE \in \{\hat{\autA}, \inv{\hat{\autA}}\}$ and with the $O((|Q| {+} |\Sigma_{\varphi}|)^{2})$ parameters $\Lambda$ and~$s$ that occur; dualisation adds no state, Observation~\ref{obs:broadcast} adds one, and neither changes the index. The transition function is tabulated letter by letter, each entry computable in polynomial time once $\stat{\autE}_{a}$ is.
\end{proof}

\subsubsection{Putting It Together}

Harvesting the results of the previous subsections, we obtain a 2APTA that accepts precisely the encoding-ready trees for~$\varphi$.

\begin{lemma}\label{lemma:encoding-ready-automaton}
  The 2APTA $\autT_{\varphi}$, the intersection of $\autT_{\mathrm{bst}}$, $\autT_{\mathrm{wtn}}$ and $\autT_{\mathrm{C}}$, accepts precisely the encoding-ready trees for~$\varphi$. It has index~$2$, polynomially many states in~$|\varphi|$, and is computed in time exponential in~$|\varphi|$.
\end{lemma}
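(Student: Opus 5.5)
The plan is to assemble the statement from Lemma~\ref{lemma:bst-automaton}, Lemma~\ref{lemma:wtn-automaton}, Lemma~\ref{lemma:closure-automaton} and Proposition~\ref{prop:2apta-intersection}. The basic shape conditions cover all of Definition~\ref{def:encoding-ready-trees} except \ref{encready:A4}, \ref{encready:A5} and \ref{encready:C1}--\ref{encready:C4}. Of these, \ref{encready:A4} and \ref{encready:A5} are handled by $\autT_{\mathrm{wtn}}$, and \ref{encready:C1}--\ref{encready:C4} by $\autT_{\mathrm{C}}$.

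For correctness I will argue in both directions. First, suppose a tree $\treeT$ is accepted by the intersection $\autT_{\varphi}$. Then it is accepted by $\autT_{\mathrm{bst}}$, so by Lemma~\ref{lemma:bst-automaton} it is a basic shape tree. Knowing this, Lemmas~\ref{lemma:wtn-automaton} and~\ref{lemma:closure-automaton} apply to $\treeT$, and they yield \ref{encready:A4}, \ref{encready:A5} and \ref{encready:C1}--\ref{encready:C4}. Together with the basic shape conditions, this gives every condition of encoding-readiness. Conversely, an encoding-ready tree is, in particular, a basic shape tree. Hence each of the three automata accepts it, by the same three lemmas, and so does the intersection.

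The one point I must check is alphabetic. The automata run over $\AlphabetAPTA$, whereas encoding-ready trees are a priori $2^{\Gamma_{\varphi}}$-labelled. However, \ref{encready:A0}, \ref{encready:A3} and \ref{encready:B0}--\ref{encready:B5} force every label of an encoding-ready tree into $\AlphabetAPTA$. Therefore the encoding-ready trees are exactly the $\AlphabetAPTA$-labelled ones meeting the remaining conditions, and no tree is lost by restricting the alphabet.

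For the bounds, Proposition~\ref{prop:2apta-intersection} gives $|S(\autT_{\varphi})|$ equal to the sum of the three state counts plus one. Each of these counts is polynomial in $|Q|+|\Sigma_{\varphi}|$, and both $|Q|$ and $|\Sigma_{\varphi}|$ are polynomial in $|\varphi|$. The index of the intersection is the maximum of the three indices, which is $2$. For the running time, computing $\AlphabetAPTA$ takes exponential time in $|\varphi|$. Each component is then built in time polynomial in $|\AlphabetAPTA|+|Q|+|\Delta|$, hence exponential in $|\varphi|$, and the intersection step is polynomial in the total size of the three automata. I expect no real obstacle here; the only delicate points are the alphabet bookkeeping and remembering that the component lemmas apply only to basic shape trees.
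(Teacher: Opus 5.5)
Your proposal is correct and follows the paper's proof essentially step for step. Both proofs restrict to trees over the filtered alphabet; the paper phrases this equivalently as ``a tree using a filtered-out letter violates a label-local condition''. Both then use Lemma~\ref{lemma:bst-automaton} to cut out the basic shape trees, add the six remaining conditions via Lemmas~\ref{lemma:wtn-automaton} and~\ref{lemma:closure-automaton}, and read off the index, state and time bounds from Proposition~\ref{prop:2apta-intersection}, with tabulation over the exponentially large alphabet dominating the running time.
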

\begin{proof}
  A $2^{\Gamma_{\varphi}}$-labelled tree using a letter outside~$\AlphabetAPTA$ violates some condition of Section~\ref{app:toolbox}, hence is not encoding-ready; so we restrict attention to $\AlphabetAPTA$-labelled trees, where the alphabet itself secures \ref{encready:A3}, \ref{encready:B0}--\ref{encready:B5} and the label-local part of~\ref{encready:A0}.
  By Lemma~\ref{lemma:bst-automaton}, the conditions other than \ref{encready:A4}, \ref{encready:A5} and \ref{encready:C1}--\ref{encready:C4} cut out exactly the basic shape trees; on those, Lemmas~\ref{lemma:wtn-automaton} and~\ref{lemma:closure-automaton} add the six missing ones.
  Proposition~\ref{prop:2apta-intersection} assembles the three automata, preserving the index and summing the state counts, which stay polynomial in~$|\varphi|$ since $|Q|$ and~$|\Delta|$ are.
  Tabulating the transition functions over~$\AlphabetAPTA$ dominates the running time, and $|\AlphabetAPTA|$ is exponential in~$|\varphi|$.
\end{proof}

\subsection{Proof of Lemma~\ref{lemma:cactus-automaton}}
\begin{proof}
  Definition~\ref{def:encoding} defines~$\enc$, and Lemma~\ref{lemma:encoding-ready-automaton} constructs $\autT_{\varphi}$ in exponential time. Its non-emptiness can be tested in exponential time by Proposition~\ref{prop:2apta-nonemptiness}. Every (regular) tree~$\treeT$ accepted by $\autT_{\varphi}$ is of the form $\enc(\strC)$ for the $\varphi$-cactus $\strC \deff \dec(\treeT)$, by Lemma~\ref{lemma:from-encoding-ready-trees-to-phi-cacti} and Claim~\ref{claim:decoding-gives-unique-cactus-like-structures}.
\end{proof}

\subsection{Proof of Remark~\ref{remark:general-loop-pdl}}

We now reprove the following theorem, proved without the converse by Danecki~\cite{DaneckiLoopPDL84} in 1984.
The method mirrors the one of Section~\ref{sec:cactus-models}; only minor modifications are needed.

\begin{procedurebox}{Deciding general satisfiability of $\LoopPDL$ in $\ExpTime$}
\begin{namedlist}[3.]
  \item[\textbf{1.}] Given an input $\LoopPDL$-formula~$\varphi'$, compute a semiautomaton~$\autA$ and a formula~$\varphi$ in $\autA$-normal form (we treat all automata as bannable). \why{A generalisation of Lemma~\ref{lemma:normal-form} assuming that all automata are bannable; polynomial time, and $|\varphi|$ is polynomial in~$|\varphi'|$.}
  \item[\textbf{2.}] Construct a 2APTA~$\autT_{\varphi}$ recognising the encodings of exact $\varphi$-cacti. \why{Lemmas~\ref{lemma:encoding-ready-automaton}, \ref{lemma:from-encoding-ready-trees-to-phi-cacti}.}
  \item[\textbf{3.}] Return \textsc{Yes} if $\autT_{\varphi}$ has non-empty language, and \textsc{No} otherwise. \why{Lemma~\ref{lemma:encoding-ready-automaton}; exp.~time~in~$|\varphi|$.}
\end{namedlist}
\end{procedurebox}

We compute the normal form $\varphi$ by following the proof of Lemma~\ref{lemma:normal-form} and treating all automata as bannable; its correctness proof is verbatim the one given there.
The automaton $\autT_{\varphi}$ can be constructed (Lemma~\ref{lemma:encoding-ready-automaton}) and tested for non-emptiness (Proposition~\ref{prop:2apta-nonemptiness}) in exponential time in~$|\varphi|$.
It remains to show that $\autT_{\varphi}$ has a non-empty language if and only if $\varphi$ has a (possibly infinite) model.
Suppose $\varphi$ has a model.
By the L\"owenheim--Skolem theorem we may assume it is countable, and hence, by Lemma~\ref{lemma:model-to-cactus}, that $\varphi$ has a $\varphi$-cactus model $\strC$.
Then $\enc(\strC)$ is an encoding-ready tree (Lemma~\ref{lemma:from-cacti-to-encoding-ready-trees}), and is thus accepted by $\autT_{\varphi}$.
Conversely, let $\autT_{\varphi}$ accept some encoding-ready tree.
Decoding it yields an exact $\varphi$-cactus $\strC$ (Lemma~\ref{lemma:from-encoding-ready-trees-to-phi-cacti}), which is by design a model of~$\varphi$.
This concludes the proof.

\subsection{Proof of Lemma~\ref{lemma:if-finite-then-cactus-foldable}}
\begin{proof}
Regularity was already established in Lemma~\ref{lemma:model-to-cactus}, and 
thus we focus on foldability only.
Let $\homof \colon \strC \to \strA$ be as in Lemma~\ref{lemma:model-to-cactus}, and suppose that $\strC$ has a bad $\omega$-walk $\xi_0 \xi_1 \ldots$ for a concatenation-closed $\hat{\autA}_{q,q'}$ with bannable $\autA_{q,q'}$.
Let $\elv_k$ be the first vertex of~$\xi_k$.
As $\strA$ is finite, $\homof(\elv_i) = \homof(\elv_j) \eqqcolon \elb$ for some $i < j$.
By concatenation-closedness, $\zeta \deff \xi_i \cdots \xi_{j-1}$ or its reversal is a witnessing walk for $\hat{\autA}_{q,q'}$, so its $\homof$-image yields a witnessing cycle for $\hat{\autA}_{q,q'}$ at~$\elb$.
As $\autA_{q,q'}$ is bannable, $\elb$ satisfies $\ldem_{q,q'}$, whereas $\elv_i$ does not; this contradicts~\ref{cactus:prop}.
\end{proof}

\subsection{Foldable Encoding-Ready Trees and Their 2APTA}\label{app:foldable-cactus-automaton}

\newcommand{\bad}{\mathrm{bad}}
\newcommand{\seg}{\mathrm{seg}}
\newcommand{\wlk}[1]{\mathrm{wlk}_{#1}}

Recall from Definition~\ref{def:foldability} that a \emph{bad $(q,q')$-walk} is a witnessing walk for $\hat{\autA}_{q,q'}$ whose ends both fail~$\ldem_{q,q'}$.
A structure is foldable if it admits no $\omega$-walk factorising into infinitely many bad $(q,q')$-walks, or into the reversals of such, for a bannable concatenation-closed~$\autA_{q,q'}$.
Let $\rmB \subseteq Q \times Q$ collect the pairs $(q,q')$ for which $\autA_{q,q'}$ is bannable and concatenation-closed.
By Lemma~\ref{lemma:normal-form}, this set is produced together with~$\autA$ and~$\varphi$: the bannable NFAs are exactly those built for the negated loops of the input, and the concatenation-closed ones among them are exactly those built for a~$\neg\lop(\pi^{+})$.
An encoding-ready tree~$\treeT$ is \emph{foldable} if the decoding $\dec(\treeT)$ is.
We first bring the definition into a shape that an automaton can follow more easily.

\begin{observation}\label{obs:bad-normalised}
  A $\varphi$-cactus $\strC$ has a bad $\omega$-walk if and only if there are
  \begin{itemize}\itemsep0em
    \item a pair $(q,q') \in \rmB$ and a triple $(\autE, p_{\mathrm{in}}, p_{\mathrm{fin}})$ equal to either $(\hat{\autA},\, q,\, q')$ or $(\inv{\hat{\autA}},\, q',\, q)$,
    \item an infinite sequence $\elv_0, \elv_1, \ldots$ of vertices of~$\strC$, none of which satisfies~$\ldem_{q,q'}$, and
    \item for every~$i$, a witnessing walk~$\xi_i$ for $\autE_{p_{\mathrm{in}}, p_{\mathrm{fin}}}$ from~$\elv_i$ to~$\elv_{i+1}$ that makes at least one step.
  \end{itemize}
  We call the walks~$\xi_i$ the \emph{segments} of this factorisation.
\end{observation}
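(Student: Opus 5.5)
The plan is to prove both directions by unfolding Definition~\ref{def:foldability} and then renormalising the factorisation, using Observation~\ref{obs:reversal} to handle the reversed case uniformly.

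\emph{Forward direction.} Suppose $\strC$ has a bad $\omega$-walk $\xi = \xi_1\xi_2\cdots$ for some bannable concatenation-closed $\autA_{q,q'}$. Then $(q,q') \in \rmB$, and either all $\xi_i$ are bad $(q,q')$-walks or all $\inv{\xi_i}$ are.

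In the first case, take $\autE \deff \hat{\autA}$ and $(p_{\mathrm{in}}, p_{\mathrm{fin}}) \deff (q,q')$. In the second case, take $\autE \deff \inv{\hat{\autA}}$ and $(p_{\mathrm{in}}, p_{\mathrm{fin}}) \deff (q',q)$. By Observation~\ref{obs:reversal}, the walk $\xi_i$ itself is then a witnessing walk for $\autE_{p_{\mathrm{in}},p_{\mathrm{fin}}}$. In both cases the ends of each $\xi_i$ fail $\ldem_{q,q'}$, because the ends of a walk and of its reversal coincide.

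Let $\elv_i$ be the first world of $\xi_{i+1}$, reindexed from $0$. Since the $\xi_i$ concatenate, $\xi_i$ runs from $\elv_{i}$ to $\elv_{i+1}$.

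\emph{The main obstacle} is the requirement that every segment makes at least one step. Some $\xi_i$ could be a single world, with no letters at all, since a witnessing walk may be empty when $p_{\mathrm{in}} = p_{\mathrm{fin}}$. I would handle this as follows.

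Because $\xi$ is an $\omega$-walk, it is an infinite word, so infinitely many $\xi_i$ are non-empty. Group each maximal run of empty segments together with the next non-empty segment. An empty witnessing walk forces $p_{\mathrm{in}} = p_{\mathrm{fin}}$, so concatenating an empty walk leaves the accepted state pair unchanged. Concatenation-closedness (transported to $\inv{\hat{\autA}}$ via Observation~\ref{obs:reversal}) then shows that each group is again a witnessing walk for $\autE_{p_{\mathrm{in}},p_{\mathrm{fin}}}$. Each group makes at least one step, and its ends are still vertices failing $\ldem_{q,q'}$.

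One more point needs care: that concatenation-closedness of $\autA_{q,q'}$ transfers to $\hat{\autA}_{q,q'}$. On $\varphi$-cacti this holds because $\ldem$-tests can be expanded into witnessing cycles, as argued after Lemma~\ref{lemma:model-to-cactus}. This expansion uses that $\strC$ is a model of $\varphi$.

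\emph{Backward direction.} Suppose we are given $(q,q')$, the triple $(\autE, p_{\mathrm{in}}, p_{\mathrm{fin}})$, the vertices $\elv_i$, and the segments $\xi_i$. Their concatenation $\xi_0\xi_1\cdots$ is an infinite word, since every segment has at least one step, and each of its odd-length prefixes is a walk; so it is an $\omega$-walk.

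If $\autE = \hat{\autA}$, each $\xi_i$ is a witnessing walk for $\hat{\autA}_{q,q'}$ whose ends $\elv_i, \elv_{i+1}$ fail $\ldem_{q,q'}$, hence a bad $(q,q')$-walk. If $\autE = \inv{\hat{\autA}}$, Observation~\ref{obs:reversal} shows that each $\inv{\xi_i}$ is a bad $(q,q')$-walk. In either case Definition~\ref{def:foldability} gives a bad $\omega$-walk.
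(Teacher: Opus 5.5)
Your proof is correct and follows the same route as the paper. You reduce the reversed case to $\autE = \inv{\hat{\autA}}$ via Observation~\ref{obs:reversal}, concatenate the segments for ($\Leftarrow$), and renormalise a given factorisation for ($\Rightarrow$).

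The one place you differ is the treatment of segments that read no letter, and there your detour is unnecessary. Such a segment is the single world $\elv_i = \elv_{i+1}$. Concatenating it with the following walk returns that very walk, so your ``group'' \emph{is} the next non-empty segment and nothing needs proving. The paper simply discards these segments, noting that the survivors still factorise the same $\omega$-walk.

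You can therefore drop the appeal to concatenation-closedness entirely. This is just as well, because the expansion of $\ldem$-tests does not give the transfer of concatenation-closedness from $\autA_{q,q'}$ to $\hat{\autA}_{q,q'}$ that you claim. Expanding the tests produces a \emph{different} walk with the same endpoints, not an accepting run of $\hat{\autA}_{q,q'}$ on the original concatenation.

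At the level of walks the transfer actually fails. Take $q \neq q'$ and a vertex $\elv$ satisfying $\ldem_{q,q'}$. The walk $\elv, \ldem_{q,q'}?, \elv$ witnesses $\hat{\autA}_{q,q'}$, but its concatenation with itself does not, since the only transitions reading $\ldem_{q,q'}?$ go from $q$ to $q'$. A replacement walk between the same endpoints would suffice for the existence claim here, but none is needed.
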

\begin{proof}
  Observation~\ref{obs:reversal} identifies the case in which the~$\inv{\xi_i}$ are the bad $(q,q')$-walks with the triple $(\inv{\hat{\autA}}, q', q)$, so both directions follow from the case $\autE = \hat{\autA}$; only the steps need bookkeeping.
  For~($\Leftarrow$), each~$\xi_i$ makes a step, so $\xi_0 \xi_1 \cdots$ is an $\omega$-walk, bad~by~definition.
  For~($\Rightarrow$), factorise a bad $\omega$-walk as $\xi_0 \xi_1 \cdots$, let $\elv_i$ be the first vertex of~$\xi_i$, and discard the segments making no step: each has $\elv_i = \elv_{i+1}$, so the survivors still factorise the same $\omega$-walk, and infinitely many survive, an~$\omega$-walk~being~infinite.
\end{proof}

\noindent \textbf{The automaton $\autT_{\bad}$.}\, Observation~\ref{obs:bad-normalised} has the shape of a B\"uchi condition: walk forever, and pass infinitely often through a \emph{segment boundary}, that is, through one of the vertices~$\elv_i$, where a segment ends, the next one begins, and the simulated run of~$\autE$ restarts at~$p_{\mathrm{in}}$.
The walk itself is simulated by~$\move$, whose two ingredients---idling at the current vertex and stepping to a guessed $\sigma$-successor---are, by~\ref{dict:edge}, exactly the two ways in which a walk of $\dec(\treeT)$ can proceed.
Accordingly, for every $(q,q') \in \rmB$ and every triple as above, $\autT_{\bad}$ has the states $\seg$ and $\wlk{p}$ for $p \in Q$ (all decorated with $q$, $q'$, and~$\autE$, which we suppress).
Writing $\wlk{}$ for the map $p \mapsto \wlk{p}$, their transitions are:
\begin{align*}
  \delta(\seg, a) &\deff \gate{\;\gate{(\stay, \seg)}{p_{\mathrm{in}} \stat{\autE,+}_{a} p_{\mathrm{fin}}} \;\lor\; \move[\autE](a, p_{\mathrm{in}}, \wlk{})\;}{\ldem_{q,q'} \notin a},
  \\
  \delta(\wlk{p}, a) &\deff \gate{(\stay, \seg)}{p \stat{\autE}_{a} p_{\mathrm{fin}}} \;\lor\; \move[\autE](a, p, \wlk{}).
\end{align*}
Both lines are read alike.
The state~$\wlk{p}$ says \emph{``I am inside a segment~$\xi_i$, the simulated run of~$\autE$ has got as far as~$p$, and I still owe a witnessing walk for $\autE_{p, p_{\mathrm{fin}}}$ from here''}; the state~$\seg$ says the same with $p = p_{\mathrm{in}}$, and in addition that the segment begins at the current world.
From either state, the automaton picks one of two moves.
It may \emph{close} the segment, idling on the spot until the simulated run reaches~$p_{\mathrm{fin}}$ and returning to~$\seg$ at the very same vertex, which is where the next segment then starts; or it may \emph{continue} the segment with~$\move$, idling for a while and then crossing one edge to a guessed $\sigma$-successor, where it resumes in whichever state~$\wlk{p_2}$ the crossing has advanced the simulated run to.
The two lines differ only in~$\delta(\seg, \cdot)$, and there in two places: the gate $\gate{\;\cdot\;}{\ldem_{q,q'} \notin a}$ around the whole of it lets a segment open only at a world that fails the cycle demand, and the~$+$ in~$\stat{\autE,+}_{a}$ makes a segment that is closed straight from~$\seg$, without any~$\move$, read at least one letter.
So, read along a run, these moves spell out a factorisation as in Observation~\ref{obs:bad-normalised}---the vertices at which the run is in state~$\seg$ are the~$\elv_i$, and what happens between two consecutive visits to~$\seg$ is~a~$\xi_i$.
The initial state is~$s_{\bad}$, with $\delta(s_{\bad}, a) \deff \bigvee (\stay, \seg^{q,q',\autE})$, the disjunction ranging over all the above choices.
The acceptance condition is $(\emptyset, \rmB^{\ast}, S)$, where $\rmB^{\ast}$ is the set of all the states~$\seg^{q,q',\autE}$: a run is accepting if and only if each of its infinite paths meets~$\rmB^{\ast}$ infinitely~often. Unfolding the macros now tells us what~$\seg$ does.

\begin{observation}\label{obs:seg-succeeds}
  Let $\treeT$ be an encoding-ready tree and put $\strC \deff \dec(\treeT)$.
  Then, for every vertex~$\elv$ of~$\treeT$ and every choice of $(q,q')$ and $(\autE, p_{\mathrm{in}}, p_{\mathrm{fin}})$ as above, the state $\seg$ succeeds at~$\elv$ if and only if $\strC$ has worlds $\elc_0 = \elv, \elc_1, \ldots$ and walks $\xi_0, \xi_1, \ldots$ as in Observation~\ref{obs:bad-normalised}.
\end{observation}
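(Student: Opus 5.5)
We prove both directions by relating accepting $(\elv, \seg)$-runs to factorisations, using the Toolbox (Lemma~\ref{lemma:toolbox}) and Observation~\ref{dict:edge}. Encoding-ready trees are basic shape trees, so all clauses of the Toolbox apply. Throughout, ``$\seg$ succeeds'' means there is an accepting run for the condition $(\emptyset, \rmB^{\ast}, S)$: every infinite path must visit $\seg$-states infinitely often, and finite branches are harmless.

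\emph{($\Leftarrow$).} Given $\elc_0 = \elv, \elc_1, \ldots$ and segments $\xi_i$, we build a run along the factorisation. At $\elc_i$ the run sits in $\seg$, and the gate $\ldem_{q,q'} \notin a$ holds because $\elc_i$ fails $\ldem_{q,q'}$. We reduce each $\xi_i$: its maximal blocks of equal worlds become stationary runs of $\autE$, and each step between distinct worlds is a step to a $\sigma$-successor by \ref{dict:edge}. We then realise the reduced walk by iterating $\move$: each application idles, which \ref{tb:move} tests via $\stat{\autE}_a$, and then crosses one edge into $\wlk{p}$. At the last world of $\xi_i$ the run closes with $(\stay, \seg)$, using that the final block leads from the current state to $p_{\mathrm{fin}}$. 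If $\xi_i$ has no crossing, its single nonempty block is closed directly by the $\stat{\autE,+}$ gate, which is satisfied because $\xi_i$ makes at least one step.

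The run is a single spine, apart from finite side branches spawned by gadgets, which are finite by \ref{tb:hygiene}. The spine visits $\seg$ once per segment, so infinitely often, and the run is accepting.

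\emph{($\Rightarrow$).} Take an accepting $(\elv, \seg)$-run. Since $\delta(\seg, \cdot)$ and $\delta(\wlk{p}, \cdot)$ are disjunctions of single-spine options, every choice continues exactly one copy in a state among $\seg$ and $\wlk{\cdot}$. Its other atoms are label tests or navigation side-branches, which are finite by \ref{tb:hygiene}. We extract this main thread as a path of the run. The thread cannot be finite, because neither $\seg$ nor any $\wlk{p}$ has a transition equal to $\top$. It is therefore an infinite path, and acceptance forces it to visit $\seg$ infinitely often. The vertices at which it is in $\seg$ are the $\elc_i$, and the gate ensures each fails $\ldem_{q,q'}$.

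Between consecutive visits to $\seg$ the thread performs finitely many $\move$s, followed by a closing idle. By \ref{tb:move}, \ref{tb:stat}, and \ref{dict:edge}, these translate back into a walk of $\strC$ that consists of stationary blocks and steps to $\sigma$-successors and is accepted by $\autE_{p_{\mathrm{in}}, p_{\mathrm{fin}}}$. This walk makes at least one step: either a $\move$ occurred, or the segment was closed directly through $\stat{\autE,+}$.

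The main obstacle is making precise that the main thread is well defined. The difficulty is the successor macro: it chooses one target, but it spawns side copies such as $\thst{\Lambda}$ or descends via $\dtest{\cdot}{\cdot}$ before resuming in the parameter state. We must check that the parameter copy is unique on each branch and that the finitely many gadget configurations in between do not disturb the parity condition. This is exactly what the final remark of Lemma~\ref{lemma:toolbox} provides, so we will invoke \ref{tb:succ} and \ref{tb:hygiene} explicitly at this point.
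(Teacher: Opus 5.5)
Your proposal is correct and follows essentially the same route as the paper. You follow the single thread of the run through the states $\mathrm{seg}$ and $\mathrm{wlk}_{p}$, read each $\mathrm{move}$ via \ref{tb:move} and \ref{dict:edge} as a stationary block plus one step to a $\sigma$-successor, and cut the thread at its visits to $\mathrm{seg}$, where the gate $\ldem_{q,q'} \notin a$ and the relation $\rightsquigarrow^{\autE,+}_{a}$ supply the failing cycle demand and the non-empty segments; the only cosmetic slip is that translating $\rightsquigarrow^{\autE}_{a}$ into stationary witnessing walks of the decoded structure uses \ref{dict:path} rather than \ref{tb:stat}.
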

\begin{proof}
  By Claim~\ref{claim:decoding-gives-unique-cactus-like-structures}, $\enc(\strC) = \treeT$, so \ref{dict:dom}--\ref{dict:path} apply to~$\strC$ and~$\treeT$; in particular $\treeT$ is a basic shape tree, so Lemma~\ref{lemma:toolbox} applies too.
  Consider the unique infinite path of a run, and on it a configuration in a state $\wlk{p}$ or~$\seg$ (the latter read as~$\wlk{p_{\mathrm{in}}}$), with vertex~$\elu$ and index~$p$.
  By~\ref{tb:move}, a $\move[\autE](\ell(\elu), p, \wlk{})$-move leads to a configuration $(\elw, \wlk{p_2})$ with
  \[
    p \stat{\autE}_{\ell(\elu)} p_1, \qquad (p_1, \sigma, p_2) \in \Delta_{\autE}, \qquad \text{and} \qquad \text{$\elw$ a $\sigma$-successor of~$\elu$}
  \]
  for some $p_1$ and~$\sigma$.
  By~\ref{dict:path}, the first says that $\strC$ has a stationary witnessing walk for~$\autE_{p,p_1}$ at~$\elu$; by~\ref{dict:edge}, the last says $(\elu, \elw) \in \sem{\sigma}_{\strC}$ with $\elu \neq \elw$.
  The move is thus a stationary block followed by one non-stationary step of a walk in~$\strC$, advancing the simulated run of~$\autE$ from~$p$ to~$p_2$; both clauses being equivalences, every such step arises this way.
  Cutting the infinite path at its visits to~$\seg$ therefore yields precisely the factorisations of Observation~\ref{obs:bad-normalised}:
  \begin{itemize}\itemsep0em
    \item between consecutive visits, the walk read off the path witnesses~$\autE_{p_{\mathrm{in}}, p_{\mathrm{fin}}}$, the closing idle coming from the first disjunct of~$\delta(\wlk{p}, \cdot)$, respectively of~$\delta(\seg, \cdot)$;
    \item it makes at least one step, by the placement of~$\stat{\autE,+}$ above;
    \item the vertex visited in~$\seg$ fails~$\ldem_{q,q'}$, hence so does the corresponding world of~$\strC$ (by $\dec$).
  \end{itemize}
  Finally, the run is accepting if and only if its infinite path meets~$\rmB^{\ast}$ infinitely often, that is, if and only if there are infinitely~many~segments.
\end{proof}

\noindent Let $\autT_{\mathrm{fold}}$ be the result of applying Observation~\ref{obs:broadcast} to $\overline{\autT_{\bad}}$. Complementing and broadcasting now deliver the automaton we~promised.

\begin{lemma}\label{lemma:fold-automaton}
  An encoding-ready tree $\treeT$ is accepted by $\autT_{\mathrm{fold}}$ if and only if it is foldable.
  The automaton $\autT_{\mathrm{fold}}$ has polynomially many states in $|Q| {+} |\Sigma_{\varphi}|$ and index~$2$, and is constructed in time polynomial in~$|\AlphabetAPTA| {+} |Q| {+} |\Delta|$.
\end{lemma}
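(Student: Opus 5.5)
The plan is to reduce the claim to the observations just established and to the generic closure properties of 2APTA. Fix an encoding-ready tree $\treeT$ and put $\strC \deff \dec(\treeT)$. By Claim~\ref{claim:decoding-gives-unique-cactus-like-structures} we have $\enc(\strC) = \treeT$, so $\treeT$ is a basic shape tree and Lemma~\ref{lemma:toolbox} applies. In particular, by \ref{tb:hygiene}, $\autT_{\bad}$ is root-safe: its own states $s_{\bad}$ and $\seg$, $\wlk{p}$ only use $\stay$-moves and the gadget $\move$. Hence Observation~\ref{obs:duality} applies to $\autT_{\bad}$ over the complete tree $\treeT$, and Observation~\ref{obs:broadcast} then gives the following: $\autT_{\mathrm{fold}}$ accepts $\treeT$ if and only if $s_{\bad}$ succeeds at no vertex of $\treeT$.

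Next I unfold $s_{\bad}$. Its transition is a disjunction of $(\stay, \seg^{q,q',\autE})$ over all $(q,q') \in \rmB$ and both triples. So $s_{\bad}$ succeeds at $\elv$ iff some $\seg$ succeeds there. By Observation~\ref{obs:seg-succeeds}, this holds iff $\strC$ has a factorisation as in Observation~\ref{obs:bad-normalised} starting at $\elv$.

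For the ($\Rightarrow$) direction, such a factorisation yields a bad $\omega$-walk directly. For the ($\Leftarrow$) direction, a bad $\omega$-walk yields such a factorisation starting at its first boundary vertex $\elv_0$, by the normalisation in Observation~\ref{obs:bad-normalised}. Hence $s_{\bad}$ succeeds somewhere iff $\strC$ has a bad $\omega$-walk, i.e.\ iff $\treeT$ is not foldable. This proves the equivalence.

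The main point to check is the acceptance-condition bookkeeping. The condition $(\emptyset, \rmB^{\ast}, S)$ is a Büchi condition of index~$3$. Dualising via Observation~\ref{obs:duality} shifts it to $(\emptyset, \emptyset, \rmB^{\ast}, S)$, and broadcasting could add one more level. To land at index~$2$ as claimed, I will instead directly take as $\overline{F}$ the co-Büchi condition that rejects paths visiting $\rmB^{\ast}$ infinitely often: priority $1$ for $\rmB^{\ast}$ and priority $2$ for everything else. After renormalising the priorities this is $(\rmB^{\ast}, S)$, of index~$2$. Putting $s_{\forall}$ at priority $2$ in Observation~\ref{obs:broadcast} keeps the index at~$2$.

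One further subtlety needs a check. The runs of the gadget parameters inside $\move$ may be infinite only through $\wlk{}$ states. By \ref{tb:hygiene}, only finitely many gadget configurations separate consecutive parameter configurations. So the infinite paths of a run alternate between $\wlk{}$ and $\seg$ states, and the Büchi condition really counts segments, as Observation~\ref{obs:seg-succeeds} assumes.

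For the size bounds, $\autT_{\bad}$ has, for each of the $O(|Q|^2)$ pairs in $\rmB$ and two triples, the states $\seg$ and $\wlk{p}$, that is, $O(|Q|)$ states each. It also includes the gadget states of Lemma~\ref{lemma:toolbox} for $\autE \in \{\hat{\autA}, \inv{\hat{\autA}}\}$ with polynomially many parameters, plus $s_{\bad}$. Dualisation adds no state, and broadcasting adds one. The transition function is tabulated per letter of $\AlphabetAPTA$, each entry requiring the relations $\stat{\autE}_a$ and $\stat{\autE,+}_a$, which are polynomial-time computable. This gives construction time polynomial in $|\AlphabetAPTA| + |Q| + |\Delta|$.
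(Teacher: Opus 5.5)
Your proposal is correct and follows essentially the same route as the paper. Both arguments use root-safety of $\autT_{\bad}$ via \ref{tb:hygiene}, complement it with the co-B\"uchi condition $(\rmB^{\ast}, S)$ of index~$2$, broadcast the result, and then use Observations~\ref{obs:seg-succeeds} and~\ref{obs:bad-normalised} to identify acceptance with the absence of bad $\omega$-walks in $\dec(\treeT)$, with the same state and time count; your remarks on avoiding the default dual condition and on the finiteness of gadget stretches just make explicit points the paper leaves implicit.
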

\begin{proof}
  \emph{Correctness.}\;
  By~\ref{tb:hygiene}, the automaton $\autT_{\bad}$ is root-safe, and its B\"uchi condition is complemented by the co-B\"uchi condition $(\rmB^{\ast}, S)$ of index~$2$.
  Observations~\ref{obs:duality} and~\ref{obs:broadcast} therefore make $\autT_{\mathrm{fold}}$ accept exactly those trees at whose every vertex every state~$\seg$ fails.
  For an encoding-ready~$\treeT$, Observation~\ref{obs:seg-succeeds} reads this as $\dec(\treeT)$ admitting none of the sequences of Observation~\ref{obs:bad-normalised}, and hence, by that observation, no bad $\omega$-walk~at~all.

  \smallskip\noindent\emph{Size.}\;
  Besides~$s_{\bad}$, the automaton $\autT_{\bad}$ has $2|Q|^2 \cdot 2 \cdot (|Q|{+}1)$ states $\seg, \wlk{p}$, together with the navigation gadgets of Lemma~\ref{lemma:toolbox} used by~$\move$.
  Dualisation adds no state, and Observation~\ref{obs:broadcast} adds one.
  The index stays~$2$, since the complemented condition is already of index~$2$ and $n = 2$ is even.
  Finally, the transition function is tabulated over~$\AlphabetAPTA$, each entry being computable in polynomial time once $\stat{\autE}_{a}$ and~$\stat{\autE,+}_{a}$ are.
\end{proof}

\subsection{Proof of Lemma~\ref{lemma:foldable-cactus-automaton}}
\begin{proof}
  As the map we keep the encoding~$\enc$ of Definition~\ref{def:encoding}, already used for Lemma~\ref{lemma:cactus-automaton}; only the automaton is new.
  Let $\autT^{f}_{\varphi}$ be the intersection (Proposition~\ref{prop:2apta-intersection}, applicable as both components share the alphabet~$\AlphabetAPTA$ and the arity $\rmN = 2|Q|^2{+}1$) of the automaton~$\autT_{\varphi}$ of Lemma~\ref{lemma:encoding-ready-automaton} with the automaton~$\autT_{\mathrm{fold}}$ of Lemma~\ref{lemma:fold-automaton}.
  As $\autT_{\varphi}$ admits only encoding-ready trees, which is where Lemma~\ref{lemma:fold-automaton} pins down the language of~$\autT_{\mathrm{fold}}$, the intersection accepts precisely the foldable encoding-ready trees for~$\varphi$.
  These are exactly the trees $\enc(\strC)$ for foldable $\varphi$-cacti~$\strC$: the maps $\enc$ and~$\dec$ are mutually inverse on the two sides (Lemmas~\ref{lemma:from-cacti-to-encoding-ready-trees} and~\ref{lemma:from-encoding-ready-trees-to-phi-cacti}, Claim~\ref{claim:decoding-gives-unique-cactus-like-structures}) and preserve regularity (Claim~\ref{claim:decoding-preserves-regularity}), while a tree is foldable by definition of its decoding.
  Finally, Proposition~\ref{prop:2apta-intersection} sums the state counts and keeps the index, so $\autT^{f}_{\varphi}$ still has index~$2$ and polynomially many states in~$|\varphi|$; the remaining bounds are as in Lemma~\ref{lemma:cactus-automaton}.
  This concludes the proof.
\end{proof}

\subsection{Proof of Theorem~\ref{thm:main-characterisation}}
\begin{proof}
A direct consequence of Lemma~\ref{lemma:if-finite-then-cactus-foldable} 
(completeness) and Lemma~\ref{lemma:soundness} (soundness).
\end{proof}

\subsection{Proof of Theorem~\ref{thm:main-result}}
\begin{proof}
The lower bound is folklore and comes from the global satisfiability problem for unimodal logic. That logic has the finite model properly, so the bound also holds for finite satisfiability.
For the upper bound, we follow the procedure from Section~\ref{sec:cactus-models}. Let $\varphi$ be the input formula.
First, we convert an input $\varphi$ into normal form in polynomial time; this step is correct by Lemma~\ref{lemma:normal-form}.
Second, we construct the 2APTA $\autT^{f}_{\varphi}$ that recognises exactly the foldable encoding-ready trees for~$\varphi$. It is the intersection of a 2APTA for encoding-ready trees with a 2APTA for the foldability condition.
This takes exponential time in $|\varphi|$. The resulting automaton has exponential size, but only polynomially many states and polynomial index. It is correct because each step of its construction in Section~\ref{app:foldable-cactus-automaton} is correct.
Third, we test $\autT^{f}_{\varphi}$ for emptiness. As our 2APTA has only polynomially many states and polynomially large index, its non-emptiness can be tested in a single exponential time in $|\varphi|$ (Proposition~\ref{prop:2apta-nonemptiness}).
Suppose $\autT^{f}_{\varphi}$ accepts some tree. Then it also accepts a regular tree~\cite[Thm.~3.7]{PerrinPin04InfiniteTrees}, which is a foldable regular encoding-ready tree for~$\varphi$. This tree encodes a foldable regular $\varphi$-cactus, which in turn yields a finite model of~$\varphi$.
Conversely, every finite model of $\varphi$ gives a foldable regular $\varphi$-cactus. The encoding of this cactus is a foldable encoding-ready tree, which $\autT^{f}_{\varphi}$ accepts.
Hence the algorithm is sound and complete, and runs in time exponential in~$|\varphi|$.
\end{proof}

\section{Appendix for Section~\ref{sec:traces} (Proof of Lemma~\ref{lemma:trace-finiteness})}\label{app:traces}

\begin{proof}
  Suppose, toward a contradiction, that no bound~$\rmB$ exists.
  As explained in Section~\ref{sec:traces}, we may then fix a single concatenation-closed bannable automaton $\autA_{q,q'}$ for which arbitrarily large values of~$m$ occur, and it suffices to exhibit a context-free grammar for the language~$\Lang$ of the walks certifying them.
  Recall that $\strC$ is regular, hence has only finitely many subtrees up to isomorphism; we call their isomorphism classes \emph{subtree types}.

  The terminals of the grammar are the elements of $\Sigma_\varphi$ together with a fresh separation mark~$\sharp$.
  Its nonterminals are
  \[
    \Pref{\mathfrak t, \downarrow},\qquad \Pref{q_1, \mathfrak t, \updownarrow},\qquad
    \Seg{q_1, \mathfrak t, \downarrow},\qquad \Seg{q_1, q_2, \mathfrak t, \updownarrow},
  \]
  one for every subtree type~$\mathfrak t$ of~$\strC$ and all states $q_1, q_2 \in Q$; Table~\ref{tab:nonterminals} says what each of them stands for.
  We derive only walks that start at the root, which is harmless: Definition~\ref{def:trace} constrains~$\xi_0$ only through its last world, and by the last part of Condition~\ref{phicactus:fini} the root reaches every vertex by a walk over~$\Sigma_{\varphi}$ crossing parent-to-child edges only, so replacing~$\xi_0$ by such a walk to its last world preserves every pair $(m, p)$ that the factorisation witnesses.
  The starting symbol is thus $\Pref{\mathfrak t_\varepsilon, \downarrow}$, where $\mathfrak t_\varepsilon$ is the subtree type of the root of~$\strC$.

\begin{table}[H]
  \centering
  \small
  \renewcommand{\arraystretch}{1.2}
  \begin{tabular}{@{}r@{\quad}p{10.4cm}@{}}
    \hline
    Nonterminal & Intended meaning\\\hline
    $\Pref{\mathfrak t, \downarrow}$ &
      Look for the beginning of $\xi_1$ in subtree $\mathfrak t$, expecting no return to an ancestor.\\
    $\Pref{q_1, \mathfrak t, \updownarrow}$ &
      Look for the beginning of $\xi_1$ in subtree $\mathfrak t$, expecting to return to the root
      of~$\mathfrak t$ with the current bad walk's run of $\hat\autA_{q,q'}$ in the state $q_1$, before returning to an ancestor.\\
    $\Seg{q_1, \mathfrak t, \downarrow}$ & 
      The current walk is in the state $q_1$ at the root of~$\mathfrak t$; continue the walk or finish
      and possibly start a new one, expecting no return to an ancestor.\\
    $\Seg{q_1, q_2, \mathfrak t, \updownarrow}$ &
      The current walk is in the state $q_1$ at the root of~$\mathfrak t$; continue the walk
      or finish and possibly start a new one, expecting to return to the root of~$\mathfrak t$ with the current bad walk's run
      of $\hat\autA_{q,q'}$ in the state $q_2$, before returning to an ancestor.\\
    \hline
  \end{tabular}
  \caption{The nonterminals of the grammar, for a subtree type~$\mathfrak t$ and states $q_1, q_2 \in Q$.}\label{tab:nonterminals}
\end{table}

  \vspace{-5mm}
  \noindent In the rules below, $\mathfrak t$ ranges over subtree types and the states $q_i, q_i'$ over~$Q$; each rule is included for every choice of them meeting the side condition stated next to it.

  \vspace{0.35em}
  \noindent \textbf{Tree edges.}\,
  For every subtree type~$\mathfrak t$ whose root has a $\sigma$-edge towards a child of subtree type~$\mathfrak t'$ and a $\tau$-edge back, we include:
  \begin{align*}
    \Pref{\mathfrak t, \downarrow} &\to \sigma\,\Pref{\mathfrak t', \downarrow},\\
    \Pref{\mathfrak t, \downarrow} &\to \sigma\,\Pref{q_1, \mathfrak t', \updownarrow}\,\tau\,\Seg{q_2, \mathfrak t, \downarrow}
      &&\text{for } (q_1, \tau, q_2) \in \hat\Delta,\\
    \Pref{q_1, \mathfrak t, \updownarrow} &\to \sigma\,\Pref{q_2, \mathfrak t', \updownarrow}\,\tau\,\Seg{q_3, q_1, \mathfrak t, \updownarrow}
      &&\text{for } (q_2, \tau, q_3) \in \hat\Delta,\\
    \Seg{q_1, \mathfrak t, \downarrow} &\to \sigma\,\Seg{q_2, \mathfrak t', \downarrow}
      &&\text{for } (q_1, \sigma, q_2) \in \hat\Delta,\\
    \Seg{q_1, \mathfrak t, \downarrow} &\to \sigma\,\Seg{q_2, q_3, \mathfrak t', \updownarrow}\,\tau\,\Seg{q_4, \mathfrak t, \downarrow}
      &&\text{for } (q_1, \sigma, q_2), (q_3, \tau, q_4) \in \hat\Delta,\\
    \Seg{q_1, q_2, \mathfrak t, \updownarrow} &\to \sigma\,\Seg{q_3, q_4, \mathfrak t', \updownarrow}\,\tau\,\Seg{q_5, q_2, \mathfrak t, \updownarrow}
      &&\text{for } (q_1, \sigma, q_3), (q_4, \tau, q_5) \in \hat\Delta.
  \end{align*}
  Note that the first three rules do not require anything of $\sigma$, since it belongs
  to $\xi_0$ where no run is tracked.

  \noindent \textbf{Anchored cycles.}\,
  Let $\elv_0, \sigma_1, \elv_1, \dots, \sigma_k, \elv_0$ be an anchored cycle at the root of a subtree type~$\mathfrak t$, and write $\mathfrak t_i$ for the subtree type of~$\elv_i$.
  For every $1 \leqslant i < k$, the walk may enter the cycle, wander off at~$\elv_i$,
  finish $\xi_0$ and find the beginning of $\xi_1$ there, and come back in some later $\xi_j$, carrying on around it; we thus include:
  \begin{align*}
     \Pref{\mathfrak t, \downarrow} &\to \sigma_1 \dots \sigma_i\, \Pref{q_i', \mathfrak t_i, \updownarrow}\,
     \sigma_{i+1}\, \Seg{q_{i+1}, q_{i+1}', \mathfrak t_{i+1}, \updownarrow}
     \dots \sigma_k\, \Seg{q_k, \mathfrak t, \downarrow},\\
     \Pref{q_0, \mathfrak t, \updownarrow} &\to \sigma_1 \dots \sigma_i\, \Pref{q_i', \mathfrak t_i, \updownarrow}\,
     \sigma_{i+1}\, \Seg{q_{i+1}, q_{i+1}', \mathfrak t_{i+1}, \updownarrow}
     \dots \sigma_k\, \Seg{q_k, q_0, \mathfrak t, \updownarrow},\\[2pt]
     &\hspace{-3.5em}\text{both for } (q_j', \sigma_{j+1}, q_{j+1}) \in \hat\Delta \text{ with } i \leqslant j < k.
  \end{align*}
  Again, the two above rules require nothing of $\sigma_1, \dots, \sigma_i$, since these actions do not
  belong to any bad walk, unlike $\sigma_{i+1}, \dots, \sigma_k$ which occur past the first $\sharp$ mark.
  A walk already at the root of~$\mathfrak t$ may travel around the whole cycle, which calls for:
  \begin{align*}
     \Seg{q_0', \mathfrak t, \downarrow} &\to \sigma_1\, \Seg{q_1, q_1', \mathfrak t_1, \updownarrow}\,
     \sigma_2\, \Seg{q_2, q_2', \mathfrak t_2, \updownarrow} \dots \sigma_k\, \Seg{q_k, \mathfrak t, \downarrow}\\[-1pt]
     &\quad\ \text{for } (q_i', \sigma_{i+1}, q_{i+1}) \in \hat\Delta \text{ with } 0 \leqslant i < k,\\
     \Seg{q_0, q_0', \mathfrak t, \updownarrow} &\to \sigma_1\, \Seg{q_1, q_1', \mathfrak t_1, \updownarrow}\,
     \sigma_2\, \Seg{q_2, q_2', \mathfrak t_2, \updownarrow} \dots \sigma_k\, \Seg{q_k, q_0', \mathfrak t, \updownarrow}\\[-1pt]
     &\quad\ \text{for } (q_0, \sigma_1, q_1), (q_i', \sigma_{i+1}, q_{i+1}) \in \hat\Delta \text{ with } 1 \leqslant i < k.
  \end{align*}
  The same travel taken in the opposite direction may ``break off'' if we do not expect to return;
  we must guess the breakoff point $\elv_{i-1}$ for some $1 \leqslant i \leqslant k$.
  \begin{align*}
     \Seg{q_k', \mathfrak t, \downarrow} &\to \inv{\sigma_k}\, \Seg{q_{k-1}, q_{k-1}', \mathfrak t_{k-1}, \updownarrow}\,
     \inv{\sigma_{k-1}}\, \Seg{q_{k-2}, q_{k-2}', \mathfrak t_{k-2}, \updownarrow}\\[-1pt]
     &\quad\ \dots \inv{\sigma_i}\, \Seg{q_{i-1}, \mathfrak t_{i-1}, \downarrow}\\[-1pt]
     &\quad\ \text{for } (q_j', \inv{\sigma_j}, q_{j-1}) \in \hat\Delta \text{ with } i \leqslant j \leqslant k,\\
     \Seg{q_k, q_k', \mathfrak t, \updownarrow} &\to \inv{\sigma_k}\, \Seg{q_{k-1}, q_{k-1}', \mathfrak t_{k-1}, \updownarrow}\,
     \inv{\sigma_{k-1}}\, \Seg{q_{k-2}, q_{k-2}', \mathfrak t_{k-2}, \updownarrow}\\[-1pt]
     &\quad\ \dots \inv{\sigma_1}\, \Seg{q_0, q_k', \mathfrak t, \updownarrow}\\[-1pt]
     &\quad\ \text{for } (q_k, \inv{\sigma_k}, q_{k-1}), (q_j', \inv{\sigma_j}, q_{j-1}) \in \hat\Delta \text{ with } 1 \leqslant j < k.
  \end{align*}
  A cycle may be entered via a back edge, traversed partly, and left via a back edge too.
  In~this~case, we~must guess the turning point~$\elv_{i-1}$ for some $1 < i \leqslant k$. Let \[\elv_0, \sigma_k, \elv_{k-1}, \sigma_{k-1},
  \elv_{k-2}, \dots, \sigma_i, \elv_{i-1}, \tau_i, \dots, \elv_{k-2}, \tau_{k-1}, \elv_{k-1}, \tau_k, \elv_0 \]
  be a traversal of this shape. There are finitely many such possibilities. We~include:
  \begin{align*}
     \Seg{q_k', \mathfrak t, \downarrow} &\to \sigma_k \Seg{q_{k-1}, q_{k-1}', \mathfrak t_{k-1}, \updownarrow}
     \dots {\sigma_{i+1}} \Seg{q_i, q_i', \mathfrak t_i, \updownarrow} {\sigma_i} \Seg{q_{i-1}, q_{i-1}', \mathfrak t_{i-1}, \updownarrow}\\
     &\tau_i \Seg{p_i, p_i', \mathfrak t_i, \updownarrow} \dots \tau_k \Seg{p_k, \mathfrak t, \downarrow}\\
     &\text{for $(q_j', \sigma_j, q_{j-1}) \in\hat\Delta$ with $i \leqslant j \leqslant k$ and $(p_{j-1}', \tau_j, p_{j}) \in \hat\Delta$ with $i < j \leqslant k$}\\
     &\text{and } (q_{i-1}', \tau_i, p_i) \in \hat\Delta.\\
     \Seg{q_k', q_k, \mathfrak t, \updownarrow} &\to {\sigma_k} \Seg{q_{k-1}, q_{k-1}', \mathfrak t_{k-1}, \updownarrow}
     \dots {\sigma_{i+1}} \Seg{q_i, q_i', \mathfrak t_i, \updownarrow} {\sigma_i} \Seg{q_{i-1}, q_{i-1}', \mathfrak t_{i-1}, \updownarrow}\\
     &\tau_i \Seg{p_i, p_i', \mathfrak t_i, \updownarrow} \dots \tau_k \Seg{p_k, q_k, \mathfrak t, \updownarrow}\\
     &\text{for $(q_j', \sigma_j, q_{j-1}) \in\hat\Delta$ with $i \leqslant j \leqslant k$ and $(p_{j-1}', \tau_j, p_{j}) \in \hat\Delta$ with $i < j \leqslant k$}\\
     &\text{and } (q_{i-1}', \tau_i, p_i) \in \hat\Delta.
 \end{align*}

  \noindent \textbf{Separation marks.}\,
  The next rules place the marks and restart the run of $\autA_{q,q'}$.
  For every subtree type~$\mathfrak t$ whose root does not satisfy $\ldem_{q,q'}$, we include:
  \begin{align*}
    \Pref{\mathfrak t, \downarrow} &\to \sharp\, \Seg{q, \mathfrak t, \downarrow}, &
    \Seg{q', \mathfrak t, \downarrow} &\to \sharp\, \Seg{q, \mathfrak t, \downarrow},\\
    \Pref{q_1, \mathfrak t, \updownarrow} &\to \sharp\, \Seg{q, q_1, \mathfrak t, \updownarrow}, &
    \Seg{q', q_1, \mathfrak t, \updownarrow} &\to \sharp\, \Seg{q, q_1, \mathfrak t, \updownarrow}.
  \end{align*}

  \noindent \textbf{Self-loops, tests, and standing still.}\,
  For every subtree type~$\mathfrak t$ and every letter~$\sigma$ labelling a self-loop at the root of~$\mathfrak t$ (a test being such a letter), we include:
  \begin{align*}
    \Seg{q_1, \mathfrak t, \downarrow} &\to \sigma\, \Seg{q_2, \mathfrak t, \downarrow}
      &&\text{for } (q_1, \sigma, q_2) \in \hat\Delta,\\
    \Seg{q_1, q_2, \mathfrak t, \updownarrow} &\to \sigma\, \Seg{q_3, q_2, \mathfrak t, \updownarrow}
      &&\text{for } (q_1, \sigma, q_3) \in \hat\Delta.
  \end{align*}
  Finally, a walk may simply stop where it is: for every subtree type~$\mathfrak t$ and every $q_1 \in Q$
  we~include $\Seg{q_1, \mathfrak t, \downarrow} \to \varepsilon$ and $\Seg{q_1, q_1, \mathfrak t, \updownarrow} \to \varepsilon$.

  The grammar described above consists of finitely many rules because $\strC$
  is a regular cactus of bounded arity with finitely many non-empty transition relations.

  Now, let us introduce some definitions that will allow us to formalise the semantics
  given in Table~\ref{tab:nonterminals}. For a walk $\xi$, we obtain its \emph{projection}
  $\tilde\xi$ by removing the vertices and leaving the edge actions. Let $\|\xi\|$ be the length
  of $\xi$ understood as the number of edges it crosses (clearly $\|\xi\| = |\tilde\xi|$).
  If a walk $\xi$ factorises as $\xi_0\xi_1\cdots\xi_m$ with $\xi_i$ bad $(q, q')$-walks
  for $0 < i < m$
  and $\tilde\xi = \sigma_1, \sigma_2, \dots, \sigma_n$,
  then a function
  $\eta\colon \{0, \dots, \|\xi\|\} \to Q \cup \{\,\sharp\,\}$
  is a \emph{labelling} of $\xi$ when
  \begin{enumerate}
    \item for all $k$, $\eta(k) = \sharp$ if and only if $k = \sum_{j=0}^i \|\xi_j\|$
    for some $0 \leqslant i < m$\\
    (boundary points are labelled with $\sharp$ marks),
    \item\label{run-labelling:same} for $i \geqslant \|\xi_0\|$,
    if $\eta(i), \eta(i+1) \in Q$, then $(\eta(i), \sigma_{i+1}, \eta(i+1)) \in \hat\Delta$,\\
    and if $\eta(i) = \eta(i+1) = \sharp$, then $(q, \sigma_{i+1}, q') \in \hat\Delta$,
    \item\label{run-labelling:different}
    if $\eta(i) = \sharp$, then $(\eta(i-1), \sigma_i, q') \in \hat\Delta$ for $i > \|\xi_0\|$, $\eta(i-1) \in Q$\\
    and $(q, \sigma_{i+1}, \eta(i+1)) \in \hat\Delta$ for $i < n$, $\eta(i+1) \in Q$.
  \end{enumerate}
  The labelling is called \emph{open} if (\ref{run-labelling:same}) holds for all $i$ and
  (\ref{run-labelling:different}) holds not just for $i > \|\xi_0\|$ but for all $i > 0$.
  The following proposition formalises the semantics.
  \begin{proposition}
    For all subtree types $\mathfrak t$, states $q_1, q_2$ and terminal strings $x$
    of the form $\tilde{\xi_0}\sharp\tilde{\xi_1}\sharp\cdots\sharp\tilde{\xi_m}$,
    \begin{itemize}
      \item $\Pref{\mathfrak t, \downarrow} \stackrel{*}{\Rightarrow} x$ if and only if
      there exists a walk $\xi$ that factorises as $\xi_0\xi_1\cdots\xi_m\xi_{m+1}$
      like in~Definition~\ref{def:trace}, such that $\xi_0$ uses only parent-to-child edges
      and $\xi$ is contained entirely within the subtree
      of type $\mathfrak t$ of its initial element;
      \item $\Pref{q_1, \mathfrak t, \updownarrow} \stackrel{*}{\Rightarrow} x$ if and only if
      there exists a cycle $\xi$ that factorises as $\xi_0\xi_1\cdots\xi_m\xi_{m+1}$
      like above, where $\xi_0$ uses only parent-to-child edges,
      with a labelling $\eta$ satisfying $\eta(\|\xi\|) = q_1$ (or possibly $\eta(\|\xi\|) = \sharp$
      if $q_1 = q$),
      and is contained entirely within the subtree of type $\mathfrak t$ of its initial element;
      \item $\Seg{q_1, \mathfrak t, \downarrow} \stackrel{*}{\Rightarrow} x$ if and only if
      there exists a walk $\xi$ that factorises as $\xi_0\xi_1\cdots\xi_m$
      with $\xi_i$ bad $(q, q')$-walks for $0 < i < m$, with an open labelling $\eta$
      satisfying $\eta(0) = q_1$ (or possibly $\eta(0) = \sharp$
      if $q_1 = q'$),
      and is contained entirely within the subtree of type $\mathfrak t$ of its initial element;
      \item $\Seg{q_1, q_2, \mathfrak t, \updownarrow} \stackrel{*}{\Rightarrow} x$ if and only if
      there exists a cycle $\xi$ that factorises as $\xi_0\xi_1\cdots\xi_m$ like above,
      with an open labelling $\eta$ satisfying $\eta(0) = q_1$ (or possibly $\eta(0) = \sharp$
      if $q_1 = q'$) and $\eta(\|\xi\|) = q_2$ (or possibly $\eta(\|\xi\|) = \sharp$ if $q_2 = q$),
      and is contained entirely within the subtree of type $\mathfrak t$ of its initial element.
    \end{itemize}
  \end{proposition}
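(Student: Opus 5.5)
I would prove the four equivalences simultaneously, each direction by induction, in the standard way that the semantics of a context-free grammar is verified against the objects it is designed to generate.

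For ($\Rightarrow$) I would induct on the length of a derivation $N \stackrel{*}{\Rightarrow} x$. The case split follows the last rule family applied at the root of the derivation tree: tree edges, anchored cycles (in the forward, reverse, full and turning-point variants), separation marks, self-loops and tests, and $\varepsilon$-rules. In each case the inductive hypothesis supplies walks, or cycles returning to the root of the corresponding child type, in the subtrees named by the nonterminals on the right-hand side. We fix a vertex $\elv$ of type $\mathfrak t$. Regularity lets us transport each of these witnesses to the actual child $\elv i$ or the actual vertex $\elv_i$ on the anchored cycle, via the subtree isomorphism; this is where we need the isomorphisms to respect back edges. We then splice the pieces with the terminal letters $\sigma$, $\tau$, $\sigma_j$, which are realised by the edges whose existence was assumed when the rule was included. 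The labellings glue because the side conditions $(q_j', \sigma_{j+1}, q_{j+1}) \in \hat\Delta$ are exactly clauses~(\ref{run-labelling:same}) and~(\ref{run-labelling:different}) at the splice points. For $\sharp$-rules, the requirement that the root fails $\ldem_{q,q'}$ yields that the boundary world fails the demand, as needed for bad walks and for the last world of $\xi_0$. Containment in the subtree of $\elv$ is preserved because every rule only descends into children or travels along an anchored cycle of $\elv$, which lies in that subtree.

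For ($\Leftarrow$) I would induct on $\|\xi\|$, with a secondary induction on the number of $\sharp$ marks. Given a walk (or cycle) $\xi$ inside the subtree of $\elv$, look at its first step. If it is a stationary letter, we use the self-loop rules. If it is a boundary, we use a $\sharp$ rule. If $\xi$ has length zero, we use the $\varepsilon$-rules. Otherwise it leaves $\elv$, either to a child $\elv i$ through a tree edge, or around an anchored cycle through $\elv\kappa(\delta)$ or its back edge. In the tree-edge case, we cut $\xi$ at its last return to $\elv$ from the subtree of $\elv i$, if any, before it next leaves that subtree. The cactus shape, namely that the only edges leaving the subtree of $\elv i$ are the tree edge to $\elv$ and back edges to ancestors of $\elv i$ ending at $\elv$ itself or above, forces this return through $\tau$ or through the anchored cycle. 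So $\xi$ decomposes as $\sigma\,\zeta\,\tau\,\xi'$ with $\zeta$ a cycle at $\elv i$, or as $\sigma\,\zeta$ with no return. Both pieces are shorter, and the inductive hypothesis gives derivations. In the anchored-cycle case, we similarly cut at the successive vertices $\elv_j$ of the cycle. Each maximal excursion into the side subtree of $\elv_j$ must return to $\elv_j$, by the same separation argument, and becomes a $\updownarrow$-nonterminal. The resulting pattern of traversal matches one of the listed anchored-cycle rules: full forward, reverse with break-off, entering and turning back, or a $\Pref$-variant if the first $\sharp$ occurs inside an excursion.

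The main obstacle is the completeness of this case analysis for walks along anchored cycles. I must show that any walk from $\elv$ entering the cycle through either of its two edges at $\elv$ moves along the cycle in a way captured by the rules: monotone, or with a single turning point, or with a break-off that never comes back. Multiple turns reduce to excursions absorbed by $\updownarrow$-nonterminals at intermediate vertices. I would argue this by considering the sequence of cycle vertices visited, modulo side excursions, and observing that returning to $\elv$ terminates the rule application, so at most one turning point occurs before the return. Bookkeeping of the $\sharp$ positions (whether the first mark, finishing $\xi_0$, falls before or inside the segment) is routine, and I would relegate it to a check against the open/non-open labelling clauses.
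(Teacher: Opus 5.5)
Your proposal is correct and follows essentially the same route as the paper: induction on the derivation for ($\Rightarrow$), and induction on $\|\xi\|$ for ($\Leftarrow$). In the latter you peel off a leading $\sharp$ or stationary letter and otherwise split at the first return to the initial vertex, where the cactus shape leaves only the tree-edge case and the four entry/exit combinations around an anchored cycle, with all back-and-forth absorbed into $\updownarrow$-nonterminals. The one thing to tidy is your cut point in the tree-edge case: it should be the \emph{first} exit from the subtree of $\elv i$, with $\zeta$ ending at the last visit to $\elv i$ before that exit, and analogously you should cut at first arrivals on the way up and last visits on the way down along the anchored path.
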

  \begin{proof}[Proof idea]
    The directions $\Rightarrow$ can be proven by straightforward induction
    on the length of~a~derivation. In~the induction step, we~show that the right-hand side
    of every rule corresponds to a~walk that actually exists in~$\strC$.

    The directions $\Leftarrow$ require a little more thought. 
    We~proceed by induction on~$\|\xi\|$. We~consider a~walk~$\xi$
    (with a~labelling~$\eta$) that fits entirely within the subtree of its initial element~$\elv_0$.
    We~select $\mathfrak t$ to be the type of this subtree. The~choice of $q_1, q_2$
    is determined by~$\eta$. We~must show that $\tilde\xi$ can be derived from
    the appropriate nonterminals (those associated with $\mathfrak t$ and possibly
    the states $q_1$ and $q_2$), inserting $\sharp$ marks where $\eta$ has them.
    If~$\eta(0) = \sharp$, we~use a~rule that inserts~$\sharp$ and consider new $\eta'$
    with $\eta'(0) = q$ and other values unchanged.
    Now, if~$\elv_0$ occurs only at the beginning, $\tilde\xi$ must be derived
    from a~$\downarrow$-nonterminal by a~rule
    that represents a~motion down the cactus, either via a~parent-to-child edge
    or via a~back edge. We~choose the appropriate rule to perform the first step of $\tilde\xi$
    and use the fact that the rest of $\tilde\xi$ is derivable from the last nonterminal
    in the right-hand side of this rule by inductive hypothesis.
    If, however, $\elv_0$ occurs at least twice, we~let $\xi = \elv_0 \zeta_1 \elv_0 \zeta_2$,
    where $\zeta_1$ contains no occurrences of~$\elv_0$ except at the endpoints.
    The~walk~$\zeta_1$ is a~cycle, so~we~must use a~rule that enters an anchored cycle
    that belongs to~$\elv_0$ and returns (unless $\zeta_1$ is stationary, in~which case
    we just apply a~self-loop rule). The~vertex $\elv_0$ is connected
    with its anchored cycle in two ways: via a~parent-to-child edge and via a~back edge,
    resulting in two possibilities for entry and two possibilities for exit, thus~four
    possibilities for traversal in~total. Each of them is made possible by an appropriate rule.
    Further detours from this anchored cycle taken by $\zeta_1$ are shorter walks,
    and~therefore their projections are derivable by inductive hypothesis.
    Similarly, $\tilde\zeta_2$ is derivable
    from the last nonterminal symbol in the right-hand side of the chosen rule
    by inductive hypothesis.
  \end{proof}
  From the above, we may conclude that all \emph{normalised} walks are derivable: if $\xi$
  factorises as $\xi_0\xi_1\cdots\xi_m\xi_{m+1}$ (again, like in~Definition~\ref{def:trace}), then
  $\Pref{\mathfrak t_\varepsilon, \downarrow} \stackrel{*}{\Rightarrow} \tilde\zeta\sharp\tilde{\xi_1}\sharp\cdots\sharp\tilde{\xi_m}\sharp\tilde{\xi_{m+1}}$
  for $\zeta$ a walk from the root to the initial element of $\xi_1$ that passes through
  parent-to-child edges only, which exists by the last part of Condition~\ref{phicactus:fini}.
  Conversely, every derivable $x$ is a projection
  of a normalised walk with $\sharp$ marks inserted as indicated by the walk's labelling.
  Therefore, the fact that there exist walks witnessing arbitrary values of $m$
  implies that words with an arbitrary number of $\sharp$ marks are derivable.
  Then, by Ogden's lemma, there exists a word of the form $sztyx$, where $zy$
  contains at least one $\sharp$ mark, such that $sz^nty^nx$ is derivable for every $n$.
  Clearly, a fragment of a walk is a walk as well, so $z^n$ and $y^n$ correspond
  to walks existing in $\strC$ for every $n$.
  By inspecting the proof of Ogden's lemma, we see that the derivation involves
  the steps $A \stackrel{*}{\Rightarrow} zAy \stackrel{*}{\Rightarrow} zty$.
  Because the nonterminal $A$ repeats, some subtree type repeats together with the automaton's
  state, thus $z$ and $y$ can be looped into $\omega$-walks that can be labelled with
  some repeating run of $\hat{\autA}_{q,q'}$.
  If $z$ contains a $\sharp$ mark, we get an $\omega$-walk with all $\xi_i$ bad,
  and if $y$ contains a $\sharp$ mark, we get an $\omega$-walk with all $\inv{\xi_i}$ bad.
  The walks $\xi_i$ are non-trivial, since bad $(q, q')$-walks have distinct endpoints
  in $\varphi$-cacti. This contradicts the foldability of~$\strC$.
\end{proof}
\section{Appendix for Section~\ref{sec:soundness}}\label{app:soundness}

\subsection{Proof of Lemma~\ref{lemma:invariant}}

\begin{proof}
  Types are clearly preserved: a fold-operation never changes the atomic
  propositions satisfied at a vertex.
  Fix a concatenation-closed bannable automaton $\autA_{q, q'}$ and let
  $((\elv, \elv j), (\elw, \elw j))$ be the fold-operation performed: it adds, to the
  appropriate transition relations, new edges between $\elw$ and $\elv j$, and possibly
  between $\elt$ and $\elw$ (appropriately directed), where $\elt$ is the anchored
  descendant of $\elv$. For the inclusion
  $\vtr_{q, q'}^{\strD'}(\elu) \subseteq \vtr_{q, q'}^{\strD}(\elu)$ we prove by
  induction on $n$: for every $\elu \in \domD'$, if a walk $\xi$ witnesses
  $(m, q'') \in \vtr_{q, q'}^{\strD'}(\elu)$ and contains~$n$ new edges, then
  $(m, q'') \in \vtr^\strD_{q,q'}(\elu)$.
  For $n = 0$ the claim is immediate, as $\xi$ exists in $\strD$. Assume it for $n$, and
  let $\xi$ contain $n+1$ new edges and witness $(m, q'') \in \vtr_{q, q'}^{\strD'}(\elu)$.
  By definition, $\xi$ factorises as $\xi_0\xi_1\cdots\xi_m\xi_{m+1}$, where every $\xi_i$
  with $1 \leqslant i \leqslant m$ is a bad $(q, q')$-walk and $\xi_{m+1}$ witnesses
  $\hat\autA_{q, q''}$. Let $\elu', \sigma, \elu''$ be the last single-edge fragment of $\xi$
  carrying a new edge. If it lies within $\xi_0$ there is nothing to do, as all bad walks
  exist in $\strD$; hence assume it lies inside $\xi_i$ with $i \geqslant 1$. In the
  corresponding run $(\elu_0, q), \sigma_1, (\elu_1, q_1), \dots, \sigma_\ell, (\elu_\ell, q_\ell)$,
  with $q_\ell$ equal to $q'$ or $q''$, let $(\elu', q_k), \sigma_{k+1}, (\elu'', q_{k+1})$ be the
  transition crossing $\elu', \sigma, \elu''$. The prefix of $\xi$ up to $\elu'$ witnesses
  $(i-1, q_k) \in \vtr^{\strD'}_{q, q'}(\elu')$ and satisfies the induction hypothesis, so
  $(i-1, q_k) \in \vtr^\strD_{q, q'}(\elu')$. The new edges join two pairs of vertices,
  leaving four cases.

  \begin{enumerate}
    \item $\elu' = \elw$ and $\elu'' = \elv j$.\\
      Since $\elw \approx \elv$, we have $(i-1, q_k) \in \vtr^\strD_{q, q'}(\elv)$; let $\zeta$
  witness this. By definition of the fold-operation, an edge identical to
  $\elw, \sigma, \elv j$ joins $\elv$ and $\elv j$, so extending $\zeta$ by it continues the
  run with the suffix of $\xi$ starting at $\elv j$. The resulting walk exists in $\strD$,
  so~$(m, q'') \in \vtr^\strD_{q, q'}(\elu)$.

  \item $\elu' = \elv j$ and $\elu'' = \elw$.\\
  As in the previous case, but using $\elw j \approx \elv j$.

  \item $\elu' = \elt$ and $\elu'' = \elw$.\\ 
    By definition of the fold-operation, an edge identical to $\elt, \sigma, \elw$ joins
  $\elt$ and $\elv$, so $(i-1, q_{k+1}) \in \vtr^{\strD}_{q, q'}(\elv)$, and since
  $\elv \approx \elw$ also $(i-1, q_{k+1}) \in \vtr^\strD_{q, q'}(\elw)$. Extending a
  witnessing walk by the suffix of $\xi$ starting at $\elw$ gives a walk of $\strD$ that
  continues the run, so~$(m, q'') \in \vtr^\strD_{q, q'}(\elu)$.

  \item $\elu' = \elw$ and $\elu'' = \elt$.\\
Use $\elw \approx \elv$, then replace $\elw, \sigma, \elt$ by the edge
  $\elv, \sigma, \elt$ to continue the run.\qedhere
  \end{enumerate}
\end{proof}

\subsection{Proof of Lemma~\ref{lemma:quasicactus}}
\begin{proof}
  We check the conditions of Definition~\ref{def:phi-cactus}, which a $\varphi$-quasicactus
  must satisfy with ``quasi-anchored'' in place of ``anchored'' (Definition~\ref{def:quasicactus}).
  Condition~\ref{phicactus:fini} is immediate, since the fold-operation adds no vertices
  and removes a whole subtree, so every remaining vertex keeps its parent and the $\Sigma_{\varphi}$-edge to it;
  and so is Condition~\ref{phicactus:bool}, since types are not modified.
  We clarify that the finiteness of traces indeed implies that $\strD'$ is foldable.
  By Lemma~\ref{lemma:invariant}, we assert that the values of $m$ are bounded, whereas
  if there existed a bad $\omega$-walk, it would have infinitely many walks of increasing
  trace as prefixes.

  \textbf{Walk demands~\ref{phicactus:walk}.} As for positive demands, the only vertex that
  could have lost a witness is $\elw$, but by definition of the operation $\elv i$
  becomes its new witness.
  As for negative demands, we prove the following by induction on $n$:
  for all vertices $\elv \in \rmD'$ and automata $\autA_{q,q'}$,
  if a walk starting at $\elv$ has $n$ occurrences of the edges introduced
  by the fold-operation and witnesses $\autA_{q,q'}$, then $\strD', \elv \models \wdem_{q,q'}$.
  The base case is obvious since the walk then exists in $\strD$ which
  is a $\varphi$-quasicactus. Assume the hypothesis for $n$ and
  take a walk $\xi$ in $\strD'$ starting at $\elv$, witnessing
  $\autA_{q,q'}$ and having $n+1$ occurrences of the new edges.
  Let $\elv_i, \sigma_{i+1}, \elv_{i+1}$ be the first occurrence of a new edge
  on $\xi$. Consider the transition $(q_1, \sigma_{i+1}, q_2)$ of the run
  of $\autA_{q,q'}$ on $\xi$ that corresponds to crossing this edge.
  The suffix of $\xi$ starting from $\elv_{i+1}$ satisfies the induction
  hypothesis, implying that $\strD', \elv_{i+1} \models \wdem_{q_2, q'}$.
  Since $\lambda^{\strD'}(\elv_{i+1}) = \lambda^{\strD}(\elv_{i+1})$,
  we have $\strD, \elv_{i+1} \models \wdem_{q_2, q'}$.
  Now, in four cases like those in the proof of Lemma~\ref{lemma:invariant},
  we use the similarity of edges to conclude that $\strD, \elv_i \models \wdem_{q_1, q'}$,
  and hence some walk $\zeta$ realises this demand at $\elv_i$ in $\strD$.
  Since the prefix of $\xi$ up to $\elv_i$ exists in $\strD$,
  it follows that it can be extended by the walk $\zeta$, yielding
  a walk that witnesses $\autA_{q,q'}$ at $\elv$ in $\strD$.
  Therefore $\strD, \elv \models \wdem_{q,q'}$, and thus $\strD', \elv \models \wdem_{q,q'}$
  again by preservation of types.

  \textbf{Cycle demands~\ref{phicactus:cycle}.} Positive demands are immediate
  from the definition of the fold-operation. Denoting the vertices as in
  Definition~\ref{def:fold-op}, the only vertex that could have lost its witness
  when the subtree $\rmD_{\elw i}$ was removed is $\elw$; but $\elv i$ serves as
  its new witness, thanks to the added edges $\elw, \sigma, \elv i$ and
  $\elv i 0^k, \sigma', \elv$.
  Negative demands were handled in the main body of the paper.
\end{proof}

\subsection{Proof of Lemma~\ref{lemma:soundness}}

We prove a helpful proposition first.

\begin{proposition}\label{prop:anchored-paths-bounded}
  In a regular $\varphi$-cactus with $t$ distinct subtree types,
  the length of anchored paths is bounded by $t+1$.
\end{proposition}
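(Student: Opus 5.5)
The plan is a pigeonhole argument on subtree types along the $0$-path part of an anchored path. Fix a regular $\varphi$-cactus $\strC$ with $t$ subtree types and an anchored path $\elv, \elv j, \elv j0, \ldots, \elv j0^{k}$. Its length is $k+2$ vertices, so it suffices to show $k \leqslant t-1$. Equivalently, the $0$-path $\elv j, \elv j0, \ldots, \elv j0^{k}$ has at most $t$ vertices.

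Suppose the $0$-path has more than $t$ vertices. Then two of them, $\elu \deff \elv j0^{a}$ and $\elu' \deff \elv j0^{b}$ with $a < b \leqslant k$, have the same subtree type. Let $\homoh$ be the isomorphism between the subtrees rooted at $\elu$ and $\elu'$. By the convention fixed in the appendix, it respects directions: $\homoh(\elu w) = \elu' w$. The vertex $\elu'$ is a descendant of $\elu$ via $0^{b-a}$, so $\elu 0^{b-a}$ lies in the subtree of $\elu$. Applying $\homoh$ gives $\elu' 0^{b-a} \in \domC$, and iterating yields $\elu 0^{n(b-a)} \in \domC$ for every $n$. Hence $\strC$ has an infinite $0$-path starting at $\elu$, which contradicts Condition~\ref{phicactus:fini}. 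Therefore $k+1 \leqslant t$, and the anchored path has at most $t+1$ vertices.

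The only delicate point is checking that the isomorphism of subtrees really transports the $0$-child structure, as opposed to merely abstract isomorphism of Kripke structures. I would cite the appendix convention that isomorphisms of tree-like and cactus-like structures preserve directions. If one prefers not to rely on it, I would instead use regularity of the encoding (Lemma~\ref{lemma:from-cacti-to-encoding-ready-trees}). There, the direction letters $\pD_0$ are part of the labels, so equal subtrees of $\enc(\strC)$ force the same pumping. Note that back edges play no role, since only the tree shape is used. The bound is off by at most one depending on whether length counts vertices or edges; either reading is covered by $t+1$.
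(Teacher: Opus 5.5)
Your proof is correct and is essentially the paper's argument: a pigeonhole on subtree types along the $0$-path $\elv j, \elv j0, \ldots, \elv j0^{k}$ forces an infinite $0$-path, contradicting Condition~\ref{phicactus:fini}, and counting vertices gives the bound $t+1$. What you add is the explicit pumping $\elu 0^{n(b-a)} \in \domC$, obtained via the appendix convention that subtree isomorphisms preserve directions, a step the paper only asserts.
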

\begin{proof}
  Assume toward a contradiction that there is an anchored path of length
  at least $t+2$. Its prefix is of the form $\elv, \elv i, \elv i0, \dots, \elv i0^t$.
  Observe that $\elv i, \elv i0, \dots, \elv i0^t$ is a $0$-path of length $t+1$.
  Since there are only $t$ distinct subtree types, some two vertices
  on this $0$-path have the same subtree type. This implies the existence
  of an infinite $0$-path, forbidden by~\ref{phicactus:fini}.
\end{proof}

\begin{proof}
  Let $\strC$ be a foldable regular $\varphi$-cactus (Definitions~\ref{def:phi-cactus}
  and~\ref{def:foldability}), let $t$ be the number of its subtree types, and let
  $n_\lambda = \left|\left\{ \lambda^\strC(\elu) \mid \elu \in \rmC \right\}\right|$,
  which is finite because $\Sigma_\varphi$ is.

  We make one more preprocessing assumption, which simplifies the arguments below.
  Namely, for every vertex~$\elv$ of~$\strC$ and every demand~$\delta$, we assume that $\elv$ has a $\kappa(\delta)$-th child only if the witness~$\xi_\delta^{\elv}$ exists and is anchored.
  This is without loss of generality.
  Whenever a pair $(\elv, \delta)$ violates the assumption, we remove the whole subtree rooted at the $\kappa(\delta)$-th child of~$\elv$.
  We claim that the resulting structure is still a foldable regular $\varphi$-cactus.
  Note that only whole subtrees strictly below the root are removed, so the remaining vertices keep their labels, and every remaining vertex other than the root keeps its parent and the $\Sigma_{\varphi}$-edge to it.
  Hence Conditions~\ref{phicactus:bool} and~\ref{phicactus:fini} of Definition~\ref{def:phi-cactus} are preserved.
  The ``otherwise'' clauses of Conditions~\ref{phicactus:walk} and~\ref{phicactus:cycle} are preserved as well.
  Every witnessing walk or cycle in the resulting structure is also one in~$\strC$, so a vertex that had none still has none.
  It remains to show that the positive parts of Conditions~\ref{phicactus:walk} and~\ref{phicactus:cycle} still hold.
  That is, every remaining vertex~$\elw$ that satisfies a demand~$\delta'$ keeps its witness~$\xi_{\delta'}^{\elw}$.
  If $\xi_{\delta'}^{\elw}$ is stationary, it visits no vertex other than~$\elw$, and we are done.
  Otherwise $\xi_{\delta'}^{\elw}$ is anchored, so apart from~$\elw$ itself it visits only vertices of the form $\elw\kappa(\delta')0^{i}$.
  Every removed vertex, in turn, has a prefix $\elv\kappa(\delta)$ for some violating pair $(\elv, \delta)$.
  Suppose that some vertex $\elw\kappa(\delta')0^{i}$ has such a prefix.
  Since $\elw$ was not removed, $\elv\kappa(\delta)$ is not a prefix of~$\elw$, and so $\elw$ is a proper prefix of~$\elv\kappa(\delta)$.
  Thus $\elv\kappa(\delta)$ is a prefix of $\elw\kappa(\delta')0^{i}$ that is longer than~$\elw$ and ends with the non-zero letter~$\kappa(\delta)$.
  The only such prefix is $\elw\kappa(\delta')$, hence $\elv = \elw$ and $\kappa(\delta) = \kappa(\delta')$, and so $\delta = \delta'$ because $\kappa$~is~injective.
  But then $\xi_{\delta}^{\elv}$ exists and is anchored, so the pair $(\elv, \delta)$ is not violating, which is a contradiction.
  Therefore $\xi_{\delta'}^{\elw}$ avoids every removed vertex, and it is still a witness in the resulting structure.

  We now continue with the actual proof.
  Let $n_\mathrm{pair}$ be the number of $(m, q)$-pairs belonging to
  $\vtr^\strC_{q, q'}(\elu)$ for some $\elu$ and some concatenation-closed
  bannable $\autA_{q, q'}$; this is a natural number by Lemma~\ref{lemma:trace-finiteness}.
  Put $n_\vtr \deff 2^{n_\mathrm{pair}|Q|^2}$ and
  \[ M \deff (t + 1)(2n_\vtr^2 n_\lambda^2 \cdot 2|Q|^2|\Sigma_\varphi| + 1). \]
  We construct a finite sequence $\strD_0, \strD_1, \dots, \strD_k$ of structures
  with $\strD_0 \deff \strC$ by applying successive fold-operations.
  We claim that, for every $i$, every path from the root to a vertex at depth $M$
  in $\strD_i$ carries a pair of similar edges admitting a fold-operation.
  Consider any such path $\xi = \varepsilon, \elv_1, \elv_2, \dots, \elv_M$.
  Recall that anchored paths have length at most $t + 1$
  (Proposition~\ref{prop:anchored-paths-bounded}). Hence non-zero children
  are separated by fragments of length at most $t + 1$
  on~$\xi$. Every child of the root is a non-zero child, so $\xi$ contains
  at least $2n_\vtr^2 n_\lambda^2 \cdot 2|Q|^2|\Sigma_\varphi| + 1$ non-zero children.
  A \emph{trace configuration} $C$ assigns to every concatenation-closed bannable
  automaton $\autA_{q, q'}$ a trace, \ie a subset of $\N \times Q$; every vertex
  $\elu \in \rmD_i$ carries the configuration $C_\elu$ with
  $C_\elu(\autA_{q, q'}) = \vtr_{q, q'}^{\strD_i}(\elu)$.
  Let the \emph{extended type} of a vertex $\elv i$ be
  $(\lambda^\strC(\elv i), \lambda^\strC(\elv), C_{\elv i}, C_\elv, i, \sigma)$,
  where the witnessing walk of $\elv$ going through $\elv i$ contains
  the fragment $\elv, \sigma, \elv i$ (\cf the initial transition
  in Definition~\ref{def:similar}).
  By Lemma~\ref{lemma:invariant}, fold-operations preserve types,
  but can ``dilute'' traces into smaller subsets.
  The bound $n_\vtr$ nonetheless works, as it counts all subsets of the
  $(m, q)$-pairs of the original $\varphi$-cactus $\strC$.
  So some extended type occurs at least three times on $\xi$, and we can pick
  two vertices $\elv i, \elw i$ of that type with $\elw \notin \elv i0^*$.
  Hence $(\elv, \elv i) \approx (\elw, \elw i)$, and
  $((\elv, \elv i), (\elw, \elw i))$ is a permissible fold-operation.

  So as long as $\strD_i$ contains a vertex at depth $M$, a fold-operation removing it
  is available, which yields the sequence $\strD_0, \strD_1, \dots, \strD_k$.
  It stops at some finite index $k$ because the number of vertices at depth $M$
  is finite, as $\rmC$ is a tree of bounded arity.
  By~Lemma~\ref{lemma:quasicactus}, every $\strD_i$ is a $\varphi$-quasicactus.
  The $\varphi$-quasicactus $\strD_k$ is finite, since it has finite degree and no
  path in it reaches depth $M$. Therefore $\strD_k$ is a finite model of $\varphi$.
\end{proof}

\section{Appendix for Section~\ref{sec:conclusions}}\label{app:conclusions}

\subsection{How robust is our current approach?}

Concatenation-closedness gives a simple correspondence between undesirable
cycles and infinite walks of a specific shape. Our strategy has been to ensure
that an infinite model admits no such walk, and to finitise it by a process
preserving this property (traces proved a helpful tool here). The crucial
application of concatenation-closedness is in the proof of
Lemma~\ref{lemma:if-finite-then-cactus-foldable}, which certifies that the
foldability condition is not \emph{too strong}---if a finite model exists, our
algorithm will detect it.

\subsection{Why does our approach not extend so easily to the full logic?}

Generalised to other languages, the condition does become too strong.
Consider the formula $\mlbox{\pe^*}(\mldiamond{\pe}\top \land \neg \lop(\pe(\pe\pe)^*))$.
It has a finite model: a cycle of even length. Yet every model of it contains
an infinite walk divisible into fragments of odd length, violating the
(generalised) foldability condition! One may try to modify the condition and
adapt the notion of trace accordingly. An interesting direction would be
avoiding walks that have an infinite subsequence of vertices, no two of which
have between them a walk acceptable to them as a cycle. The trace of a walk
could then be the size of the longest such subsequence. This generalisation,
however, troubles the finiteness of trace: no straightforward pumping argument
works, since enough repetitions of a ``bad'' walk may yield a ``good'' one, so
a walk of large trace need not be transformable into an infinite bad walk with
an infinite ``conflicting subsequence''. It remains open how to generalise
traces and the foldability condition so as to solve finite satisfiability of
$\LoopPDL$ in general.
We are actively working on this problem, and we hope to report on it in the near future.
\end{document}